\newtheorem{Def}{Definition}[section]
\newtheorem{Thm}{Theorem}[section]
\newtheorem{Lem}{Lemma}[section]
\newtheorem{Prop}{Proposition}[section]
\newtheorem{Ass}{Assumption}[section]
\newcommand{\beq}{\begin{align}}
\newcommand{\eeq}{\end{align}}
\newcommand{\Expect}[1]{\left\langle{#1}\right\rangle}
\newcommand{\bvec}[1]{\boldsymbol{#1}}
\newcommand{\no}{\nonumber}
\newcommand{\Natural}{\mathbb{N}}
\newcommand{\Real}{\mathbb{R}}
\newcommand{\pdfrac}[2]{\frac{\partial #1}{\partial #2}}
\newcommand{\q}{\quad}
\renewcommand{\H}{\mathcal{H}}
\newcommand{\F}{\mathcal{F}}
\DeclareMathOperator*{\op}{\oplus}
\DeclareMathOperator*{\wg}{\wedge}
\DeclareMathOperator*{\ot}{\otimes}
\DeclareMathOperator*{\hop}{\hat{\oplus}}
\DeclareMathOperator*{\hot}{\hat{\otimes}}
\newcommand{\R}{\mathbb{R}}
\newcommand{\C}{\mathbb{C}}
\newcommand{\kk}{\mathbf{k}}
\newcommand{\pp}{\mathbf{p}}
\newcommand{\xx}{\mathbf{x}}
\newcommand{\yy}{\mathbf{y}}
\newcommand{\bb}{\mathrm{b}}
\newcommand{\ff}{\mathrm{f}}
\renewcommand{\S}{\mathcal{S}}
\renewcommand{\rm}{\mathrm}
\newcommand{\fr}{\frac}
\renewcommand{\hat}{\widehat}
\begin{document}
%\title{Time-ordered exponential series for unbounded non symmetric Hamiltonians and its application to Lorenz-Gauge QED}
%\title{A power series construction of a solution of Schr\"odinger and Heisenberg equations with unbounded non-symmetric Hamiltonians}
%\title{Perturbative time evolution for unbounded non symmetric Hamiltonians and its application to Lorenz-Gauge QED }
%\title{Dynamics and Perturbative time evolution for unbounded non symmetric Hamiltonians, Lorenz-Gauge QED }
\title{Construction of dynamics and time-ordered exponential for unbounded non-symmetric Hamiltonians}

\maketitle
\begin{center}
Shinichiro Futakuchi and Kouta Usui
\vskip 2mm
\begin{small}
\textit{Department of Mathematics, Hokkaido University}\\
 \textit{060-0810, Sapporo, Japan.}

\end{small}
\end{center}
\abstract{We prove under certain assumptions that there exists a solution of
the Schr\"odinger or
the Heisenberg equation of motion generated by a linear operator 
$H$ acting in some complex
Hilbert space $\mathcal{H}$, which may be unbounded, \textit{not symmetric}, or \textit{not normal}.
We also prove that, under the same assumptions, 
 there exists a time evolution operator in the interaction picture and that the evolution operator
 enjoys a useful series expansion formula. This expansion is considered to be one of the mathematically rigorous 
 realizations of so called ``time-ordered exponential", which is familiar in the physics literature.  
We apply the general theory to prove the existence of dynamics for the mathematical model of Quantum Electrodynamics (QED)
quantized in the Lorenz gauge, the interaction Hamiltonian of which is not even symmetric or normal.

\section{Introduction}

Let $\mathcal{H}$ be a complex Hilbert space and $ H $ be a linear operator on $ \H $. We consider the initial value problem for the Schr\"odinger equation
\begin{align}\label{sch00}
\pdfrac{\xi(t)} t =-iH\xi(t),\quad \xi (0)=\xi , 
\end{align}
or for the Heisenberg equation 
\begin{align}\label{hei00}
\frac{dB(t)}{dt}=[iH, B(t)] ,\quad B(0)=B,
\end{align}
where $ B $ is a possibly unbounded linear operator on $ \H $, and $ [X,Y] := XY-YX $. In the context of quantum mechanics, 
the parameter $t\in\R$ represents time, and $H$ is regarded as a Hamiltonian of the quantum system under consideration.
At time $t\in\R$, $ \xi (t) $ or $ B(t) $ describes a time developed state vector or 
a time developed observable, respectively. Then, the general mathematical study of 
the initial value problems \eqref{sch00} or \eqref{hei00} is of great interest since it will reveal 
dynamics of a certain class of quantum systems.

In the ordinary formulation of quantum mechanics, a Hamiltonian $ H $ is assumed to be a self-adjoint operator. In this case, the solutions of these equations are given by 
\begin{align}
\xi (t) &=e^{-itH}\xi, \\
B(t)&=e^{itH}Be^{-itH},
\end{align}
with some suitable conditions for operator domains in the Heisenberg case (See \cite{MR2362899}, in detail). However, 
in some models, the Hamiltonian $ H $ may not be self-adjoint or not even normal. When $H$ is unbounded and not normal, the 
above time evolution operator $e^{-itH}$ does not immediately make sense since for unbounded $H$ it is usually defined through operational calculus. In such cases, it is not obvious at all that there exist solutions of these equations. 

The most important realistic examples 
that can cause this difficulty contain the mathematical model of Quantum Electrodymamics (QED) when it is quantized in a Lorentz covariant gauge such as Lorenz gauge \cite{MR2412280, MR2533876}. In the Lorenz-gauge QED, we have to adopt a vector space with an indefinite metric as a vector space of quantum mechanical state vectors, in order to realize the canonical commutation relations. Indefinite metric results in a non-symmetric Hamiltonian which is not even normal, and thus it is far from trivial that dynamics of the Lorenz-gauge QED really exists. To obtain dynamics for such models, one may apply the general theory of evolution equations or Cauchy problems by estimating the resolvent operators \cite{MR2108957,MR1721989}, but we will take another way to avoid hard resolvent estimates. The first motivation of the present study is to establish a general theory as to the existence of dynamics with Hamiltonians which is not symmetric and not even normal.

Another motivation of the present work also comes from quantum theory. We consider a system with a Hamiltonian of the type
\begin{align}\label{def-of-H}
H = H_0 +H_1,
\end{align}
where $ H_0 $ is a \textit{solvable} Hamiltonian (of which we already know the dynamics) and $ H_1 $ is an interaction Hamiltonian 
which causes unknown dynamics.
%In quantum mechanics, one can adopt several ``pictures" of time evolution, depending on 
%which variable, a state vector or an observable, should evolve in time. All ``pictures" certainly produce physically equivalent predictions. 
%In Schr\"odingier picture, for incetance, the state vector is considered to evolve in time. 
%In Heisenberg picture, an observable (which is represented by a self-adjoint
%operator) is considered to depend on time.
%Our existence theorem will be proved via another picture --- so called interaction picture ---, in which both a state vector and 
%an observable evolve in time. 
To study a quantum mechanical scattering problem with Hamiltonians of this form, 
it is often useful to employ the so called interaction picture, 
in which both state vectors and observables evolve in time. 
The evolution operator in the interaction picture from time $t'$ to time $t$ --- which is usually denoted by $U(t,t')$ --- is a solution of the differential equations
\begin{align}
\frac{\partial }{\partial t} U(t,t')  &= -i H_1 (t) U(t,t') ,\label{int-eqs} \\
\frac{\partial }{\partial t'} U(t,t')  &= i U(t,t') H_1 (t') , \label{int-eqs2}
\end{align}
with 
\begin{align}
H_1(t):=e^{itH_0}H_1e^{-itH_0}. 
\end{align}
It is easy to \textit{heuristically} derive the series expansion of $U(t,t')$
\begin{align}\label{formal-series-of-U}
U(t,t')=1+(-i)\int_{t'}^t d\tau_1\,H_1(\tau_1) +(-i)^2\int_{t'}^t  d\tau_1\int_{t'}^{\tau_1}d\tau_2\,H_1(\tau_1)H_1(\tau_2)+\dots.
\end{align}    
This expansion formula \eqref{formal-series-of-U} is well known to be quite useful in computing scattering amplitudes
of elementary particles such as electrons or photons, and the results dramatically agree with
the high energy experiments,
even though these computations contain a lot of mathematically unrigorous steps \cite{MR2148466,MR2148467,MR1402248}.  

The series expansion \eqref{formal-series-of-U} has already been rigorously analyzed,
in the case where $H_1$ is bounded (See, e.g., Refs.  \cite{MR552941}, \cite{MR0493420} Section X.12, \cite{MR0492656}, \cite{MR583242}, \cite{MR683026}).
However, in the case where $H_1$ is not bounded, it seems that there have been few mathematically rigorous studies of the series expansion \eqref{formal-series-of-U} in an abstract or a
general form.
The second motivation of the present work is to 
%obtain the expansion formula \eqref{formal-series-of-U} for $U(t,t')$ constructed from unbounded $ H_1 $. We will 
prove in mathematically rigorous manner with certain assumptions that there exists a time evolution operator $U(t,t')$ satisfying \eqref{int-eqs} and \eqref{int-eqs2} which possesses the series representation \eqref{formal-series-of-U} on certain dense subspace, including the case where $H_1$ is neither bounded nor normal. The solutions of Schr\"odinger or Heisenberg equation will be constructed by using the operator $U(t,t')$. Here, we stress that our proof does not only state the existence of the solutions abstractly but also derive an explicit series expansion \eqref{formal-series-of-U}
for them, which would be useful for the practical applications in mathematical physics.
 
This paper is organized as follows. 
In Section \ref{main-results}, we will summarize our results.
In Section \ref{construction}, a solution $U(t,t')$ of the differential equations \eqref{int-eqs}, \eqref{int-eqs2} is explicitly constructed. 
In Section \ref{properties}, we will derive several properties of the solution $U(t,t')$. 
In Section \ref{solutions}, we will construct solutions of Schr\"odinger and Heisenberg equations of motion. 
In Section \ref{QED}, QED quantized in the Lorenz gauge will be discussed and it will be proved that there exists a solution of the Heisenberg equations of motion for quantized fields.
In Appendix, some further mathematical properties of $U(t,t')$ will be studied.

\section{Main Results}\label{main-results}

The inner product and the norm of $ \H $ are denoted by $ \Expect{ \cdot , \cdot } _\H $ (anti-linear in the first variable) and $ \| \cdot \| _\H $ respectively. When there can be no danger of confusion, then the subscript $ \H $ in $ \Expect{ \cdot , \cdot } _\H $ and $ \| \cdot \| _\H $ is omitted. For a linear operator $ T $ in $ \H $, we denote its domain (resp. range) by $ D(T) $ (resp. $ R(T) $). We also denote the adjoint of $T$ by $ T^* $ and the closure by $ \bar{T} $ if these exist. For a self-adjoint operator $ T $,  $ E_T (\cdot ) $ denotes the spectral measure of $ T $.

Let $ H_0 $ be a self-adjoint operator on $ \H $ and $ H_1 $ be a densely defined closed operator on $ \H $. Set
\begin{align}\label{H}
H := H_0 + H_1 ,
\end{align}
with the domain $ D(H_0) \cap D(H_1) $.

First, we assume that there exists an operator $A$ in $\mathcal{H}$
satisfying the following conditions:
\begin{Ass}\label{ass1}
\begin{enumerate}[(I)]
\item $A$ is self-adjoint and non-negative.
\item $A$ and $H_0$ are strongly commuting.
\item $ H_1 $ is $A^{1/2} $- bounded, where $ A^{1/2} $ defined through operational calculus.
\item There exists a constant $b>0$ such that, for all $L\ge 0$, $\xi \in R( E_A([0,L]))$
implies $H_1 \xi \in R( E_A([0,L+b]))$.
\end{enumerate}
\end{Ass}
%\begin{Rem}
We remark that 
the above condition (IV) comes from the following
physical consideration.
%In quantum mechanics, a self-adjoint operator represents a physical observable
%quantity, and the spectrum of the self-adjoint operator 
%is considered to be the set of all possible 
%values obtained when the corresponding observable quantity is measured. 
Suppose that $H$ is the Hamiltonian of a certain quantum system.
The above self-adjoint operator $A$ is expected to be an observable quantity
of the quantum system under consideration, 
typically a particle number in application to quantum field theories (see 
application in Section \ref{QED}.).
%that there should exist a physical observable quantity $A$ such that, if the system is 
%in a state with which the value of the quantity $A$ is at most $L$ whenever it is measured, 
%then after the interaction the system should stay at a state with which 
%the value of $A$ is less than $L+b$. 
Roughly speaking, the condition (IV) says that the value of the observable $A$
increase at most $b$ by one interaction.
%\end{Rem}

Hereafter, we use the following notations:
\begin{align}
& V_L := R( E_A ([0, L])) , \q L \ge 0, \\
& D := \bigcup _{L\ge 0} V_L . 
\end{align}
Since $A$ is assumed to be self-adjoint, it follows that $D$ is a dense subspace in $\mathcal{H}$.

Our first result is:
\begin{Thm}\label{main-thm1} Under Assumption \ref{ass1}, for each $ t,t' \in \R , \, \xi \in D $, the series:
\begin{align}
U(t,t') \xi := \xi + (-i) \int _{t'} ^t d\tau _1 \, H_1(\tau _1) \xi + (-i)^2 \int _{t'} ^t d\tau _1 \int _{t'} ^{\tau _1} d\tau _2 \, H_1 (\tau _1) H_1 (\tau _2) \xi + \cdots 
\end{align}
converges absolutely, where each of integrals are strong integrals. Furthermore, $ U(t,t') $ forms an evolution operator on $ D $, that is, the following (i), (ii) hold.
\begin{enumerate}[(i)]
\item  For fixed $ t' \in \Real  $ and $ \xi \in D $, the vector valued function $ \R\ni t\mapsto U(t,t') \xi $ is strongly continuously differentiable, 
and $ U(t,t') \xi \in D(H_1 (t)) $. Moreover, $ U(t,t') \xi $ satisfies
\begin{align}\label{DE1}
\frac{\partial }{\partial t} U(t,t') \xi = -i H_1 (t) U(t,t') \xi . 
\end{align}
\item  For fixed $ t \in \Real  $ and $ \xi \in D $, the vector valued function $\R \ni t'\mapsto  U(t,t') \xi $ is strongly continuously differentiable,
 and satisfies
\begin{align}\label{DE2}
\frac{\partial }{\partial t'} U(t,t') \xi = i U(t,t') H_1 (t') \xi .
\end{align}
\end{enumerate}
 \end{Thm}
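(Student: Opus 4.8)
The plan is to first establish the absolute convergence of the series defining $U(t,t')\xi$ for $\xi\in D$, and then to differentiate the series term by term. The crucial point is that Assumption \ref{ass1} provides uniform control on the growth of norms. Fix $\xi\in V_L$ for some $L\ge 0$. By condition (IV), each application of $H_1$ (and hence of $H_1(\tau)=e^{i\tau H_0}H_1 e^{-i\tau H_0}$, since $H_0$ and $A$ strongly commute and therefore $e^{-i\tau H_0}$ preserves each spectral subspace $V_M$) raises the spectral subspace index by at most $b$: so after $n$ applications the vector lies in $V_{L+nb}$. Meanwhile, condition (III) gives an $A^{1/2}$-bound $\|H_1\eta\|\le a\|A^{1/2}\eta\|+c\|\eta\|$, and on $V_M$ one has $\|A^{1/2}\eta\|\le M^{1/2}\|\eta\|$. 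The first step, therefore, is to prove by induction the key estimate
\begin{align}\label{key-est}
\|H_1(\tau_1)\cdots H_1(\tau_n)\xi\|\le \prod_{k=1}^{n}\bigl(a\sqrt{L+kb}+c\bigr)\,\|\xi\|,
\end{align}
which holds uniformly in the times $\tau_k$ because $e^{\pm i\tau H_0}$ is unitary and commutes with the spectral projections $E_A$. The product on the right grows only like $(a\sqrt{b})^n\sqrt{n!}$, and the $n$-th iterated time integral over the simplex contributes a factor $|t-t'|^n/n!$. Combining these, the $n$-th term is bounded by $(a\sqrt{b}\,|t-t'|)^n\sqrt{n!}/n!\cdot(\text{const})$, whose sum converges by the ratio test (the $\sqrt{n!}/n!=1/\sqrt{n!}$ decay dominates). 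This yields absolute convergence, uniformly for $t,t'$ in compact sets, and simultaneously shows $U(t,t')$ maps $V_L$ into $\overline{\bigcup_M V_M}$ with the whole series staying inside a controlled range.

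With convergence in hand, the second step is to prove part (i), the differentiability in $t$ and the differential equation \eqref{DE1}. I would denote the $n$-th term of the series by $U_n(t,t')\xi$ and note that each $U_n$ is strongly continuously differentiable in $t$ with
\begin{align}
\frac{\partial}{\partial t}U_n(t,t')\xi=-i\,H_1(t)\,U_{n-1}(t,t')\xi,
\end{align}
by the fundamental theorem of calculus applied to the outermost integral $\int_{t'}^t d\tau_1$ (the integrand is strongly continuous in $\tau_1$, since each $H_1(\tau)$ is continuous on the relevant spectral subspace). Summing over $n$ formally gives exactly $-iH_1(t)U(t,t')\xi$. To justify interchanging the derivative with the infinite sum, I would verify that the series of derivatives $\sum_n \partial_t U_n$ converges uniformly on compact $t$-intervals — which follows from essentially the same estimate \eqref{key-est} applied with one extra factor of $H_1(t)$ — and invoke the standard theorem on term-by-term differentiation of uniformly convergent series of continuously differentiable vector-valued functions. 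The claim $U(t,t')\xi\in D(H_1(t))$ is built into this argument, since $H_1(t)$ acts continuously on each $V_M$ and the partial sums together with their images under $H_1(t)$ both converge. Part (ii), the differentiability in $t'$ with \eqref{DE2}, I would handle by the same method, differentiating instead the innermost variable: here the dependence on $t'$ enters every integral through its lower limit, so differentiating $\int_{t'}^{t}d\tau_1\cdots\int_{t'}^{\tau_{n-1}}d\tau_n$ in $t'$ produces, by the Leibniz rule summed over all $n$ nested lower limits and after reindexing the simplex, precisely the telescoped expression $+iU(t,t')H_1(t')\xi$.

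The main obstacle I anticipate is the careful bookkeeping in the term-by-term differentiation, specifically the justification that $\partial_{t'}U_n(t,t')\xi=+i\,U_{n-1}(t,t')H_1(t')\xi$. Differentiating an $n$-fold iterated simplex integral with respect to the common lower limit $t'$ is delicate because $t'$ appears in all $n$ limits simultaneously; one must apply the Leibniz differentiation rule to each nested integral, show that only one boundary term survives after the others recombine into a lower-order simplex integral, and control the continuity of the integrand up to the boundary $\tau_k\to t'$. The estimate \eqref{key-est} is uniform in the $\tau_k$, which is exactly what guarantees the boundary evaluations are well-defined and that the rearranged series again converges uniformly. Once this reindexing is executed cleanly, \eqref{DE2} follows, and the only remaining routine points are the strong continuity of $t\mapsto U(t,t')\xi$ and its derivative, which are immediate consequences of the uniform-on-compacts convergence established in the first step.
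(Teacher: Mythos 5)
Your convergence argument and your treatment of part (i) follow essentially the paper's own route: your key estimate is the paper's Lemma \ref{ess-estimate} (the product $\big(L_\xi+(n-1)b+1\big)^{1/2}\cdots\big(L_\xi+1\big)^{1/2}$ growing like $\sqrt{n!}$ against the simplex factor $|t-t'|^n/n!$), followed by the ratio test and term-by-term differentiation of the partial sums. However, there is one genuine flaw in your part (i): the justification that $U(t,t')\xi\in D(H_1(t))$. You assert this holds ``since $H_1(t)$ acts continuously on each $V_M$ and the partial sums together with their images under $H_1(t)$ both converge.'' Continuity on each $V_M$ proves nothing here, because $U(t,t')\xi$ is the limit of vectors lying in the \emph{increasing} family $V_{L_\xi+nb}$ and in general belongs to no single $V_M$; boundedness of $H_1(t)$ on each member of such a family says nothing about a limit point outside their union. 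The ingredient you need --- and the one the paper invokes at exactly this point --- is that $H_1$ is a \emph{closed} operator (a standing hypothesis of the paper), hence so is $H_1(t)=e^{itH_0}H_1e^{-itH_0}$: since $S_n(t,t')\xi\to U(t,t')\xi$ and $H_1(t)S_n(t,t')\xi$ converges by your uniform estimate, closedness gives $U(t,t')\xi\in D(H_1(t))$ and identifies the limit of the derivatives as $-iH_1(t)U(t,t')\xi$. You already have both convergence facts, so the repair is one line, but as written the stated reason is wrong.

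For part (ii) you take a genuinely different route from the paper. You propose to differentiate the nested integral directly in $t'$, which appears in all $n$ lower limits, and you correctly flag this as the delicate step. Two remarks on making it work: first, the boundary terms produced by the outer $n-1$ lower limits do not ``recombine'' --- they simply \emph{vanish}, because setting $\tau_k=t'$ for $k<n$ collapses the inner simplex $\int_{t'}^{t'}\cdots$ to zero; only the innermost boundary term survives, yielding $\partial_{t'}U_n(t,t')\xi=i\,U_{n-1}(t,t')H_1(t')\xi$ (the paper's \eqref{t'diff}), where you must also note $H_1(t')\xi\in D$, supplied by Assumption \ref{ass1} (II) and (IV). Second, you must justify differentiation under $n-1$ integral signs, which requires the joint strong continuity of $(\tau_1,\dots,\tau_n)\mapsto H_1(\tau_1)\cdots H_1(\tau_n)\xi$ and an induction on $n$. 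The paper avoids this bookkeeping entirely: using that same joint continuity (its Lemma \ref{multi-sconti}) it writes $U_n$ as a Bochner integral over the simplex, reorders the integration (Fubini) so that the variable pinned to $t'$ becomes the outermost one, and reads off the closed-form reverse recursion \eqref{inverse-rec}, $U_{n+1}(t,t')\xi=i\int_t^{t'}d\tau\,U_n(t,\tau)H_1(\tau)\xi$; differentiation in $t'$ is then a single application of the fundamental theorem of calculus. Both routes are viable and rest on the same continuity lemma; the paper's trades your Leibniz-rule induction for one interchange-of-integration step and is the cleaner of the two.
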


Next, we assume the following properties in addition.
\begin{Ass}\label{ass2}
\begin{enumerate}[(I)]
\item $H_1^* $ is $A^{1/2}$- bounded.
\item There exists a constant $b'>0$ such that, for all $L\ge 0$, $\xi \in R( E_A([0,L]))$
implies $H_1 ^* \xi \in R( E_A([0,L+b'])) $.
\end{enumerate}
\end{Ass}

Then, we have
\begin{Thm}\label{adjoint} Under Assumptions \ref{ass1}-\ref{ass2}, it follows that $ D\subset D(U(t,t') ^*) $, and for all $\xi\in D$, $U(t,t')^*\xi$ is strongly continuously differentiable with respect to $t$ and $t'$, and satisfies
\begin{align}
\label{adjoint4} & \frac{\partial }{\partial t} U(t,t') ^* \xi = i U(t,t')^* H_1 (t) ^* \xi , \\
\label{adjoint5} & \frac{\partial }{\partial t'} U(t,t') ^* \xi = -i H_1 (t') ^* U(t,t') ^* \xi .
\end{align}
In particular, $U(t,t')$ is closable.
\end{Thm}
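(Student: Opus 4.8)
The plan is to identify the adjoint $U(t,t')^*$, restricted to $D$, with a second time-ordered exponential built from $H_1^*$ in place of $H_1$. The crucial observation is that, by Assumption \ref{ass2}, the operator $H_1^*$ satisfies every hypothesis collected in Assumption \ref{ass1}, with the constant $b$ replaced by $b'$: it is densely defined and closed (because $H_1$ is closed), it is $A^{1/2}$-bounded by Assumption \ref{ass2}(I), and it raises the spectral range of $A$ by at most $b'$ by Assumption \ref{ass2}(II), while $A$ and $H_0$ are unchanged. Hence Theorem \ref{main-thm1} applies verbatim to the pair $(H_0, H_1^*)$ and produces an evolution operator on $D$, which I denote $U^\sharp(s,s')$, that converges absolutely, is strongly continuously differentiable in both arguments, and satisfies the analogues of \eqref{DE1}--\eqref{DE2} with $H_1^*(\cdot)$ in place of $H_1(\cdot)$.

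First I would set $W(t,t') := U^\sharp(t',t)$. Reading off the two differential equations of Theorem \ref{main-thm1} for $U^\sharp$ at the swapped arguments, and using $(H_1(t))^* = e^{itH_0}H_1^* e^{-itH_0} = H_1^*(t)$ (which holds since $e^{\pm itH_0}$ is unitary), I obtain, for $\eta \in D$,
\begin{align}
\frac{\partial}{\partial t} W(t,t')\eta = i\, W(t,t') H_1^*(t)\eta, \qquad \frac{\partial}{\partial t'} W(t,t')\eta = -i\, H_1^*(t') W(t,t')\eta,
\end{align}
where the right-hand side of the second equation is meaningful because Theorem \ref{main-thm1}(i), applied to $H_1^*$, gives $W(t,t')\eta = U^\sharp(t',t)\eta \in D(H_1^*(t'))$; both right-hand sides are strongly continuous in $(t,t')$.

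The heart of the argument is the pairing identity
\begin{align}
\langle \eta, U(t,t')\xi\rangle = \langle W(t,t')\eta, \xi\rangle, \qquad \eta, \xi \in D.
\end{align}
To prove it I would expand both operators into their defining series — both absolutely convergent by Theorem \ref{main-thm1} — and compare term by term. On the $n$-th term one moves each factor $H_1(\tau_j)$ across the inner product via $\langle \zeta, H_1(\tau_j)\,v\rangle = \langle H_1^*(\tau_j)\zeta, v\rangle$, which is legitimate because every partial product of the $H_1(\tau_j)$ applied to $\xi$, and of the $H_1^*(\tau_j)$ applied to $\eta$, stays inside $D \subseteq D(H_1)\cap D(H_1^*)$ thanks to conditions (IV) and \ref{ass2}(II). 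Reversing the operator order by $(AB\cdots)^* = \cdots B^* A^*$, converting the scalar $\overline{(-i)^n} = i^n$, and relabelling the simplex of integration in reverse then matches the $n$-th term of $\langle W(t,t')\eta, \xi\rangle$; the interchange of the infinite sum, the iterated strong integrals, and the inner product is justified by the absolute convergence of both series via Fubini and dominated convergence. Granting this identity, for fixed $\eta \in D$ the map $\xi \mapsto \langle \eta, U(t,t')\xi\rangle$ equals $\langle W(t,t')\eta, \xi\rangle$ and is therefore a bounded linear functional of $\xi$, so $\eta \in D(U(t,t')^*)$ with $U(t,t')^*\eta = W(t,t')\eta$. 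This proves $D \subseteq D(U(t,t')^*)$, and the differential equations for $W$ become exactly \eqref{adjoint4}--\eqref{adjoint5}. Finally, since $D$ is dense, $U(t,t')^*$ is densely defined, which is equivalent to $U(t,t')$ being closable.

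I expect the main obstacle to be the rigorous justification of the term-by-term adjoint computation underlying the pairing identity: one must simultaneously interchange an infinite sum with multiple strong (Bochner) integrals and with the inner product, and carry the unbounded operators $H_1(\tau_j)$ and $H_1^*(\tau_j)$ across the pairing without leaving the domains on which these manipulations hold. The enabling facts are the $D$-invariance (up to a bounded shift of the spectral parameter) supplied by conditions (IV) and \ref{ass2}(II), together with the absolute convergence established in Theorem \ref{main-thm1} for both series, which furnishes the dominating bounds. Everything else — the identification $W = U^\sharp(\,\cdot'\,,\cdot\,)$ and the resulting differential equations — is then inherited directly from Theorem \ref{main-thm1} applied to $H_1^*$.
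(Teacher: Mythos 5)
Your proof is correct, but it is organized differently from the paper's. The shared core is the pairing computation: the paper's Lemma \ref{int-rep-lem} moves the factors $H_1(\tau_j)$ across the inner product exactly as you do, using Assumption \ref{ass1} (IV) and Assumption \ref{ass2} (II) to stay inside the relevant domains, and thereby obtains $D\subset D(U_n(t,t')^*)$ together with an explicit integral representation of $U_n(t,t')^*\xi$. From there the two arguments diverge: the paper re-derives for $H_1^*$ the analytic machinery behind Theorem \ref{main-thm1} (Lemmas \ref{rel-bdd2}, \ref{multi-sconti^*}, \ref{ess-estimate^*}, \ref{U-estimate^*}), sums the adjoint series to get \eqref{expansion-of-U^*}, and then ``mimics the proof of Theorem \ref{main-thm1}'' to obtain \eqref{adjoint4}--\eqref{adjoint5}; you instead observe that Assumption \ref{ass2} says precisely that the densely defined closed operator $H_1^*$ satisfies Assumption \ref{ass1} with $b'$ in place of $b$, apply Theorem \ref{main-thm1} as a black box to the pair $(H_0,H_1^*)$ to produce $U^\sharp$, and identify $U(t,t')^*\supset U^\sharp(t',t)$ on $D$ by the same term-by-term pairing (your reversed relabelling of the simplex is exactly the second integral representation in Lemma \ref{int-rep-lem}), so that the differential equations are inherited rather than re-proved. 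Your route is more modular and makes explicit a structural fact the paper leaves implicit: the adjoint of the evolution operator extends the time-reversed evolution operator generated by $H_1^*$. What the paper's longer route buys is the explicit term-wise formulas for $U_n(t,t')^*$, which it reuses later, e.g.\ in the proof of Lemma \ref{eta-uni-vec} in the QED application. One small simplification to your justification: no Fubini or dominated convergence argument is really needed for the pairing identity, since taking the inner product against a fixed vector passes through strong Riemann/Bochner integrals and norm-convergent series automatically; the only genuine work is the domain bookkeeping you already describe.
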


The time evolution operator $U(t,t')$ has the following properties.
\begin{Thm}\label{main-thm2} Under Assumptions \ref{ass1}-\ref{ass2}, the following (i) and (ii) hold.
\begin{enumerate}[(i)]
\item For all $ \xi \in D , \, t,t',t'' \in \R $, $ U(t,t) \xi = \xi $ and the operator equality
\begin{align}\label{associative}
\overline{U(t,t')} U(t' , t'') = U(t,t'').
\end{align}
holds.
\item For any $ s, t,t' \in \Real $, the operator equality
\begin{align}
e^{isH_0} \overline{U(t,t')} e^{-isH_0} = \overline{U(t+s, t'+s)}
\end{align}
holds.
\end{enumerate}
\end{Thm}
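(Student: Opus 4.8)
The identity $ U(t,t)\xi=\xi $ is immediate: for $ t'=t $ every iterated integral in the defining series of Theorem \ref{main-thm1} is taken over a degenerate simplex and vanishes, leaving only the zeroth-order term. The substance of part (i) is the cocycle relation \eqref{associative} and of part (ii) the covariance relation, and in both the structural difficulty is the same: $ U(\cdot,\cdot) $ maps $ D $ outside of $ D $, so the composition has to be read through the closure $ \overline{U(t,t')} $ supplied by Theorem \ref{adjoint}. My plan is to establish each identity first as a genuine equality of absolutely convergent series on $ D $, and then promote it to the stated operator equality by a closure argument.

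For \eqref{associative} I would fix $ \xi\in V_L\subset D $ and expand $ U(t,t')\,U(t',t'')\xi $ as a product of the two time-ordered series. Since $ A $ and $ H_0 $ strongly commute we have $ e^{-i\tau H_0}V_M=V_M $, so $ H_1(\tau) $ carries $ V_M $ into $ V_{M+b} $ by Assumption \ref{ass1}; hence the $ n $-th term $ T_n $ of $ U(t',t'')\xi $ lies in $ V_{L+nb}\subset D $, each partial sum $ \zeta_N:=\sum_{n=0}^N T_n $ belongs to $ D $, and $ \zeta_N\to U(t',t'')\xi $. Applying $ U(t,t') $ to $ \zeta_N $ and merging the two iterated integrals produces, at each total order $ k=m+n $, the integrand $ H_1(\sigma_1)\cdots H_1(\sigma_m)H_1(\tau_1)\cdots H_1(\tau_n)\xi $ over $ \Delta_m(t,t')\times\Delta_{k-m}(t',t'') $, where $ \Delta_j $ is the time-ordered simplex. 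The decisive combinatorial fact is that, for $ N\ge k $, these blocks tile $ \Delta_k(t,t'') $ up to a null set, because a decreasing $ k $-tuple has a unique number $ m $ of entries exceeding $ t' $. Consequently $ U(t,t')\zeta_N\to\sum_k(-i)^k\int_{\Delta_k(t,t'')}H_1(u_1)\cdots H_1(u_k)\xi\,du=U(t,t'')\xi $, and since $ \zeta_N\in D $ with $ \zeta_N\to U(t',t'')\xi $ and $ U(t,t')\zeta_N\to U(t,t'')\xi $, the definition of the closure gives $ U(t',t'')\xi\in D(\overline{U(t,t')}) $ and $ \overline{U(t,t')}\,U(t',t'')\xi=U(t,t'')\xi $.

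The main obstacle here is exactly this passage through the closure, and it forces the partial-sum approximation above; the accompanying analytic point is the justification of the rearrangement of the double series and of the interchange of the unbounded $ H_1(\sigma_j) $ with the $ \tau $-integrations. For this I would reuse the estimates underlying Theorem \ref{main-thm1}, namely the bound $ \|H_1(u_1)\cdots H_1(u_j)\xi\|\le\prod_{i=0}^{j-1}(a\sqrt{L+ib}+c)\,\|\xi\| $ on $ V_L $, which follows from $ A^{1/2} $-boundedness and $ \|A^{1/2}\psi\|\le\sqrt{L}\,\|\psi\| $ for $ \psi\in V_L $; this makes the $ (m,n) $-indexed family summable and legitimizes both Fubini and the limit $ N\to\infty $. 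As an independent check, \eqref{associative} can be seen weakly from uniqueness for \eqref{int-eqs2}: for $ \eta,\xi\in D $ the scalar $ s\mapsto\langle U(t,s)^*\eta,\,U(s,t'')\xi\rangle $ has vanishing derivative by \eqref{DE1} and \eqref{adjoint5} together with $ \langle H_1(s)^*\phi,\psi\rangle=\langle\phi,H_1(s)\psi\rangle $, hence is constant and equal to $ \langle\eta,U(t,t'')\xi\rangle $; the strengthening of this weak form to the operator equality again needs the domain information $ U(s,t'')\xi\in D(\overline{U(t,s)}) $ obtained by the series argument.

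For part (ii) I would argue purely at the level of the series on $ D $. The generator transforms covariantly, $ e^{isH_0}H_1(\tau)e^{-isH_0}=H_1(\tau+s) $, directly from $ H_1(\tau)=e^{i\tau H_0}H_1e^{-i\tau H_0} $; inserting $ e^{-isH_0}e^{isH_0}=1 $ between consecutive factors in each term of $ U(t,t')e^{-isH_0}\xi $ and then shifting every integration variable $ \tau_j\mapsto\tau_j-s $, which sends $ \Delta_n(t,t') $ onto $ \Delta_n(t+s,t'+s) $, gives the term-by-term identity $ e^{isH_0}U(t,t')e^{-isH_0}\xi=U(t+s,t'+s)\xi $ for all $ \xi\in D $; here $ e^{-isH_0} $ commutes with $ E_A $, so $ e^{-isH_0}D=D $ and the right-hand side is again an absolutely convergent series. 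Thus $ e^{isH_0}U(t,t')e^{-isH_0} $ and $ U(t+s,t'+s) $ agree as operators with domain $ D $, and taking closures — using that conjugation by the unitary $ e^{isH_0} $ commutes with the closure operation, so that $ \overline{e^{isH_0}U(t,t')e^{-isH_0}}=e^{isH_0}\,\overline{U(t,t')}\,e^{-isH_0} $ — yields the claimed equality. The only delicate point throughout is the orientation of the simplices when $ t,t',t'' $ (resp. $ t,t' $) are not monotone, which is absorbed by reading the iterated integrals as oriented integrals, leaving the tiling and the change of variables unaffected.
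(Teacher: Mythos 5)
Your proposal is correct, and for part (i) it takes a genuinely different route from the paper's, while your part (ii) is essentially the paper's own argument (conjugate each factor $H_1(\tau_j)$ by $e^{isH_0}$, shift the integration variables, then pass to closures using $e^{-isH_0}D=D$; the paper phrases the last step as ``$D$ is a common core of $\overline{U(t,t')}$ and $\overline{U(t+s,t'+s)}$''). For part (i), the paper first proves a Proposition identical in content to your summability claim: for $\xi\in D$ the double series $\sum_{m,n}U_m(t,t')U_n(s,s')\xi$ converges absolutely, so that $U(s,s')\xi\in D(\overline{U(t,t')})$ and $\overline{U(t,t')}U(s,s')\xi$ equals the double sum. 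But the paper never identifies that double sum combinatorially; instead it proves \eqref{associative} by exactly the argument you offer as an ``independent check'': by \eqref{DE1} and \eqref{adjoint5} the function $t'\mapsto\langle U(t,t')^{*}\eta,U(t',t'')\xi\rangle$ has vanishing derivative, hence is constant, and density of $D$ upgrades the resulting weak identity to the vector identity, the domain statement coming from the Proposition. You instead identify $\lim_N U(t,t')\zeta_N$ directly via the simplex-tiling (Chapman--Kolmogorov) identity $U_k(t,t'')\xi=\sum_{m+n=k}U_m(t,t')U_n(t',t'')\xi$ and then invoke closability of $U(t,t')$ from Theorem \ref{adjoint}. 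What your route buys: it uses only closability and never the differential equations \eqref{adjoint4}--\eqref{adjoint5} for the adjoint, and it yields the explicit order-by-order composition law, which is of independent interest. What the paper's route buys: it avoids all combinatorial bookkeeping, in particular the orientation issue you flag when $t,t',t''$ are not monotone. On that one point your appeal to ``oriented integrals'' deserves an actual argument; the cleanest fix is an induction that settles all orderings at once: set $F_k(t):=U_k(t,t'')\xi-\sum_{m+n=k}U_m(t,t')U_n(t',t'')\xi$; then $F_0\equiv 0$, $F_k(t')=0$ because $U_m(t',t')=0$ for $m\ge 1$, and \eqref{rec-of-U_n} gives $\frac{\partial}{\partial t}F_k(t)=-iH_1(t)F_{k-1}(t)$, so $F_k\equiv 0$ by induction with no tiling or sign discussion needed. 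With that line added, your proof is complete.
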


If we assume in addition that $ H_1 $ is symmetric, then
Assumption \ref{ass1} implies Assumption \ref{ass2} and
stronger results follow:
\begin{Thm}\label{uniqueness} 
Suppose that Assumption \ref{ass1} holds, and let $H_1$ be a closed symmetric 
operator. Then, $ \overline{U(t,t')} $ is unitary and the following properties hold.
\begin{enumerate}[(i)] 
\item The operator $U(t,t')$ satisfies the following operator equalities:
\begin{align} 
\overline{U(t,t)} =I , \q \overline{U(t,t')} \; \overline{U(t',t'')} = \overline{U(t,t'')} ,
\end{align}
where $ I $ denotes the identity operator.
\item $U(t,t')$ is unique in the following sense. 
If there exist a dense subspace $\widetilde{D}$ in $\H$ and an operator valued function $ V (t,t') \, (t,t' \in \Real ) $ such that 
$\widetilde{D}\subset D(V(t,t'))$ for all $t,t'\in\R$ and for $ \xi \in \widetilde{D} $, $ V(t,t')\xi $ is strongly differentiable with respect to $ t $, and $ V(t,t') \xi \in D(H_1 (t)) $, which satisfies
\begin{align}
V (t,t)\xi =\xi , \q \frac{\partial }{\partial t} V(t,t') \xi = -i H_1 (t) V(t,t') \xi , \q \xi \in \widetilde{D} , \q t,t' \in \R ,
\end{align}
then $V(t,t')\upharpoonright \widetilde{D}$ is closable and $ \overline{V (t,t')\upharpoonright \widetilde{D}} = \overline{U(t,t')} $. In particular, if $ D(V(t,t')) =\H $ and $ V(t,t') $ is bounded for all $ t,t' \in \R $, then $ V(t,t') = \overline{U(t,t')} $. 
\end{enumerate}
\end{Thm}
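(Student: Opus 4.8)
The plan is to proceed in four stages: reduce the symmetric case to the machinery already built, establish isometry, upgrade isometry to unitarity, and finally prove the uniqueness clause. First I would verify that the symmetry of $H_1$ forces Assumption \ref{ass2} to hold, so that Theorems \ref{adjoint} and \ref{main-thm2} become available. The key observation is that $A^{1/2}$-boundedness of $H_1$ (Assumption \ref{ass1}(III)) gives $D(A^{1/2})\subset D(H_1)$, while symmetry gives $H_1\subset H_1^{*}$, so $H_1^{*}$ agrees with $H_1$ on $D(A^{1/2})$. Since every $\xi\in V_L$ lies in $D(A^{1/2})$, this immediately yields that $H_1^{*}$ is $A^{1/2}$-bounded, i.e. Assumption \ref{ass2}(I), and that $H_1^{*}\xi=H_1\xi\in V_{L+b}$ for $\xi\in V_L$, i.e. Assumption \ref{ass2}(II) with $b'=b$.

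Next I would prove that $U(t,t')$ is isometric on $D$. For $\xi\in D$, Theorem \ref{main-thm1}(i) guarantees that $t\mapsto U(t,t')\xi$ is strongly continuously differentiable with $U(t,t')\xi\in D(H_1(t))$ and that \eqref{DE1} holds; differentiating the squared norm gives
\begin{align*}
\frac{d}{dt}\|U(t,t')\xi\|^2 = 2\,\re\langle -iH_1(t)U(t,t')\xi,\,U(t,t')\xi\rangle .
\end{align*}
Because $H_1$ is symmetric, so is $H_1(t)=e^{itH_0}H_1e^{-itH_0}$, whence $\langle H_1(t)U(t,t')\xi,\,U(t,t')\xi\rangle$ is real and the right-hand side vanishes. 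With $U(t',t')\xi=\xi$ from Theorem \ref{main-thm2}(i) this gives $\|U(t,t')\xi\|=\|\xi\|$, so $\overline{U(t,t')}$ is a bounded isometry defined on all of $\H$. To obtain surjectivity I would invoke the composition law of Theorem \ref{main-thm2}(i): taking $t''=t$ gives $\overline{U(t,t')}\,U(t',t)=U(t,t)=I$ on $D$, and by boundedness and density $\overline{U(t,t')}\,\overline{U(t',t)}=I$; interchanging $t$ and $t'$ yields $\overline{U(t',t)}\,\overline{U(t,t')}=I$. Thus $\overline{U(t,t')}$ has the two-sided bounded inverse $\overline{U(t',t)}$ and, being a surjective isometry, is unitary. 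The relations in (i) then follow directly from $U(t,t)\xi=\xi$ and the composition law by passing to closures of bounded operators.

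Finally, for the uniqueness statement (ii), fix $t,t'$ and $\xi\in\widetilde D$. I would \emph{not} differentiate $\overline{U(t,s)}V(s,t')\xi$ directly, since $V(s,t')\xi$ need not lie in $D$, where the evolution equation \eqref{DE2} for $U$ in its second argument is valid; circumventing this mismatch of domains is the main obstacle. Instead I would pair against a test vector: for $\eta\in D$ set
\begin{align*}
G(s):=\langle U(s,t)\eta,\,V(s,t')\xi\rangle ,
\end{align*}
so that $s\mapsto U(s,t)\eta$ obeys \eqref{DE1}, which holds on $D$. Both factors are strongly continuously differentiable, and differentiating gives
\begin{align*}
G'(s)= i\langle H_1(s)U(s,t)\eta,\,V(s,t')\xi\rangle - i\langle U(s,t)\eta,\,H_1(s)V(s,t')\xi\rangle .
\end{align*}
Since $U(s,t)\eta,\ V(s,t')\xi\in D(H_1(s))$ and $H_1(s)$ is symmetric, the two terms coincide and $G'(s)\equiv 0$. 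Hence $G(t)=G(t')$, i.e. $\langle\eta,\,V(t,t')\xi\rangle=\langle U(t',t)\eta,\,\xi\rangle=\langle\eta,\,\overline{U(t,t')}\xi\rangle$, where the last equality uses $\overline{U(t',t)}^{*}=\overline{U(t,t')}$ from unitarity. As $\eta\in D$ is arbitrary and $D$ is dense, $V(t,t')\xi=\overline{U(t,t')}\xi$ for every $\xi\in\widetilde D$. Because $\overline{U(t,t')}$ is bounded and $\widetilde D$ is dense, $V(t,t')\upharpoonright\widetilde D$ is closable with closure $\overline{U(t,t')}$; and if $V(t,t')$ is everywhere defined and bounded, continuity forces $V(t,t')=\overline{U(t,t')}$.
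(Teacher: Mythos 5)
Your proposal is correct and follows essentially the same route as the paper: symmetry together with $A^{1/2}$-boundedness gives Assumption \ref{ass2}, isometry follows by differentiating $\| U(t,t')\xi \|^2$, unitarity from the composition law of Theorem \ref{main-thm2}, and uniqueness by differentiating an inner product pairing the two solutions and using symmetry of $H_1(s)$ to kill the derivative, followed by the identical closability argument. The only cosmetic difference is that you pair $U(s,t)\eta$ against $V(s,t')\xi$ and evaluate at both endpoints $s=t$ and $s=t'$, whereas the paper differentiates $\left\langle V(t,t')\eta , U(t,t')\xi \right\rangle$ in $t$ and evaluates at $t=t'$; both reduce to the same symmetry cancellation and the same appeal to unitarity of $\overline{U(t,t')}$.
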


We discuss the existence of the dynamics generated by $H$.
Let 
\begin{align}
W(t) := e^{-itH_0} \overline{U(t,0)} , \, t\in \R .
\end{align}
\begin{Thm}\label{sch-existence} Suppose that Assumptions \ref{ass1}-\ref{ass2} hold. Then, for each $ \xi \in D(H_0) \cap D $, the vector valued function $ t\mapsto \xi (t) := W(t )\xi $
 is a solution of the initial value problem for the Schr\"{o}dinger equation:
\begin{align}\label{sch30}
\frac{d}{dt} \xi (t) = -iH \xi (t) , \q \xi (0) = \xi .
\end{align}
\end{Thm}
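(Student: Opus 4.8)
\emph{Plan.} The plan is to work with the explicit representation $W(t)\xi = e^{-itH_0}\,\overline{U(t,0)}\,\xi$ and to differentiate it by the product rule. Since $U(t,0)$ is closable (Theorem \ref{adjoint}) and $\xi\in D\subset D(U(t,0))$, one has $\overline{U(t,0)}\xi = U(t,0)\xi$, so it suffices to analyse $e^{-itH_0}\eta(t)$ with $\eta(t):=U(t,0)\xi$. By Theorem \ref{main-thm1}(i), $\eta$ is strongly $C^1$ with $\eta'(t)=-iH_1(t)\eta(t)$ and $\eta(t)\in D(H_1(t))$. If I can also show $\eta(t)\in D(H_0)$, then splitting the difference quotient of $e^{-itH_0}\eta(t)$ into $e^{-i(t+h)H_0}\tfrac{\eta(t+h)-\eta(t)}{h}$ and $e^{-itH_0}\tfrac{e^{-ihH_0}-I}{h}\eta(t)$ and passing to the limit (strong continuity of the unitary group for the first piece, Stone's theorem for the second) gives
\begin{align}
\frac{d}{dt}W(t)\xi = -iH_0\, e^{-itH_0}\eta(t) + e^{-itH_0}(-i)H_1(t)\eta(t).\no
\end{align}
Because $e^{-itH_0}H_1(t)=H_1 e^{-itH_0}$ on $D(H_1(t))$ and $\eta(t)\in D(H_1(t))$, the second term equals $-iH_1 W(t)\xi$ and the first equals $-iH_0 W(t)\xi$, so the sum is $-iH W(t)\xi$. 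Along the way $W(t)\xi\in D(H_0)\cap D(H_1)=D(H)$ ($e^{-itH_0}$ preserves $D(H_0)$, and $\eta(t)\in D(H_1(t))$ means $e^{-itH_0}\eta(t)\in D(H_1)$), and $W(0)\xi=\xi$ is immediate from $U(0,0)=I$.

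Everything thus reduces to the single nontrivial point: $\eta(t)=U(t,0)\xi\in D(H_0)$ for $\xi\in D(H_0)\cap D$. I would establish this from the covariance relation of Theorem \ref{main-thm2}(ii), which for $t'=0$ reads $e^{isH_0}\,\overline{U(t,0)}\,e^{-isH_0}=\overline{U(t+s,s)}$, hence $e^{isH_0}\eta(t)=\overline{U(t+s,s)}\,e^{isH_0}\xi = U(t+s,s)\,e^{isH_0}\xi$ (the last step because $e^{isH_0}$ preserves each $V_L$ by Assumption \ref{ass1}(II), so $e^{isH_0}\xi\in D$). By Stone's theorem, to conclude $\eta(t)\in D(H_0)$ it is enough to show that $\Phi(s):=U(t+s,s)\,e^{isH_0}\xi$ is differentiable at $s=0$; the derivative is then $iH_0\eta(t)$.

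To differentiate $\Phi$ I would take $\xi\in V_L$ and split
\begin{align}
\Phi(h)-\Phi(0) = [U(t+h,h)-U(t,h)]e^{ihH_0}\xi + [U(t,h)-U(t,0)]e^{ihH_0}\xi + U(t,0)[e^{ihH_0}\xi-\xi],\no
\end{align}
divide by $h$, and let $h\to 0$, differentiating $U$ in each time slot via Theorem \ref{main-thm1}. The essential ingredient that makes the unbounded manipulations legitimate is that the series estimates underlying Theorem \ref{main-thm1} furnish a constant $C=C(t,L)$ with $\|U(t',t'')\psi\|\le C\|\psi\|$ for all $\psi\in V_L$ and all $(t',t'')$ near $(t,0)$; i.e.\ each $U(t',t'')$ is bounded on the fixed subspace $V_L$, and the scale $\{V_L\}$ is respected by $H_0$, $e^{isH_0}$ and $H_1$ (Assumption \ref{ass1}(II),(IV)). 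With this uniform bound, the third term over $h$ tends to $U(t,0)(iH_0\xi)$ since $\tfrac1h(e^{ihH_0}\xi-\xi)\to iH_0\xi$ \emph{inside} $V_L$, while in the first two terms the contribution of $(e^{ihH_0}\xi-\xi)$ is annihilated in the limit, leaving $-iH_1(t)U(t,0)\xi$ and $iU(t,0)H_1\xi$. Hence $\Phi'(0)$ exists (and equals $iU(t,0)H\xi - iH_1(t)U(t,0)\xi$), which is exactly what Stone's theorem needs.

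The hard part will be precisely this membership $\eta(t)\in D(H_0)$, and within it the interchange of the unbounded $U(t',t'')$ with the limits $h\to 0$: the argument works only because $U(t',t'')$ is bounded on each $V_L$ and because $\{V_L\}$ is jointly invariant under $H_0$, $e^{isH_0}$ and $H_1$. A subordinate technical point is the joint continuity of $(t',t'')\mapsto H_1(t')U(t',t'')\psi$ near $(t,0)$ needed to evaluate the first-term limit, which I expect to follow from the strong $C^1$-statements of Theorem \ref{main-thm1} combined with the above equicontinuity on $V_L$. Once the $D(H_0)$-membership is secured, the remaining product-rule computation is routine.
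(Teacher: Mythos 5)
Your proposal is correct, but it takes a genuinely different route from the paper's own proof. Both arguments hinge on the covariance relation of Theorem \ref{main-thm2}(ii), yet they use it in opposite directions. The paper rewrites $W(t)=\overline{U(0,-t)}\,e^{-itH_0}$, so that the unitary group acts directly on $\xi\in D(H_0)\cap D$; the product rule (together with Theorem \ref{main-thm1}(ii) for differentiation in the second time slot) then gives the strong derivative $\tfrac{d}{dt}W(t)\xi = W(t)(-iH)\xi$ \emph{without} knowing beforehand that $U(t,0)\xi\in D(H_0)$. Separately, it computes the weak derivative $\tfrac{d}{dt}\langle\eta,W(t)\xi\rangle=\langle iH_0\eta,W(t)\xi\rangle+\langle\eta,-iH_1W(t)\xi\rangle$ for $\eta\in D(H_0)$, and by comparing the two expressions reads off a posteriori that $W(t)\xi\in D(H_0)$ and $-iW(t)H\xi=-iHW(t)\xi$, which is \eqref{sch30}. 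You instead keep the factorization $W(t)=e^{-itH_0}U(t,0)$ and attack the domain membership $U(t,0)\xi\in D(H_0)$ head-on, via Stone's theorem applied to $s\mapsto U(t+s,s)e^{isH_0}\xi$, using a three-term difference-quotient decomposition together with the boundedness of $U(\cdot,\cdot)$ on each $V_L$ and the $V_L$-invariance under $e^{isH_0}$, $H_0$, $H_1$. This is sound: the uniform bounds you invoke are exactly the content of Lemmas \ref{ess-estimate} and \ref{U-estimate} (whose convergence is uniform on compact time sets and whose constants depend only on $L_\xi$ and $\|\xi\|$), and the joint continuity of $(t',t'')\mapsto H_1(t')U(t',t'')\psi$ that you flag does follow from Lemma \ref{multi-sconti} plus that uniform convergence. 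What the two approaches buy: yours is more self-contained in its treatment of the difference quotient and makes explicit the equicontinuity-on-$V_L$ mechanism that the paper's terse product-rule step also tacitly relies on; the paper's weak/strong comparison trick is shorter, bypasses Stone's theorem entirely, and delivers the intertwining relation \eqref{WH}, $W(t)H\xi=HW(t)\xi$, as a by-product that is reused later (e.g., in the proof of Theorem \ref{extra}).
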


If $H$ is symmetric, we obtain the following 
result:
\begin{Thm}\label{Arai-additional} Suppose that Assumption 2.1 holds and let $H_1$ be a closed symmetric operator.
Then there exists a unique self-adjoint operator $\widetilde H$ such that
\begin{align}
W(t)=e^{-it\widetilde H},\quad t\in \R.\label{Wt}
\end{align}
Moreover
\begin{align}
\overline{U(t,t')}=e^{itH_0}e^{-i(t-t')\widetilde H}e^{-it'H_0},\quad t,t'\in \R\label{Ut}
\end{align}
and
\begin{align}
\overline{H\upharpoonright D\cap D(H_0)} \subset  \widetilde H,\label{H}
\end{align}
where $H\upharpoonright D\cap D(H_0)$ denotes the restriction of $H$ 
to the subspace $D\cap D(H_0)$.
In particular, if $H$ is essentially self-adjoint on $D\cap D(H_0)$, then we have
\begin{align}
\overline{H}=\widetilde{H}.
\end{align}
\end{Thm}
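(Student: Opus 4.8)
The plan is to recognize $\{W(t)\}_{t\in\R}$ as a strongly continuous one-parameter unitary group and invoke Stone's theorem. Since $H_1$ is closed and symmetric, Assumption \ref{ass1} already entails Assumption \ref{ass2} (the remark preceding Theorem \ref{uniqueness}), so Theorems \ref{adjoint}, \ref{main-thm2}, \ref{uniqueness} and \ref{sch-existence} all apply. First I would note that each $W(t)=e^{-itH_0}\overline{U(t,0)}$ is unitary, being the product of the unitary $e^{-itH_0}$ and the unitary $\overline{U(t,0)}$ furnished by Theorem \ref{uniqueness}.

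For the group law, I would rewrite the covariance relation of Theorem \ref{main-thm2}(ii) as $\overline{U(t,0)}\,e^{-isH_0}=e^{-isH_0}\,\overline{U(t+s,s)}$ and combine it with the cocycle identity $\overline{U(t+s,s)}\,\overline{U(s,0)}=\overline{U(t+s,0)}$ of Theorem \ref{uniqueness}(i). A direct computation then gives $W(t)W(s)=e^{-i(t+s)H_0}\,\overline{U(t+s,0)}=W(t+s)$, while $W(0)=I$ follows from $\overline{U(0,0)}=I$. Strong continuity is obtained by a standard $\varepsilon/3$ argument: on the dense subspace $D$ the map $t\mapsto U(t,0)\xi$ is continuous (indeed $C^1$) by Theorem \ref{main-thm1}, and since the operators $\overline{U(t,0)}$ are uniformly bounded (all unitary), this continuity extends from $D$ to all of $\H$; multiplying by the strongly continuous $e^{-itH_0}$ preserves it. Stone's theorem then produces a unique self-adjoint $\widetilde H$ with $W(t)=e^{-it\widetilde H}$, which is \eqref{Wt}.

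Next I would read off the factorization \eqref{Ut}. From \eqref{Wt} we get $\overline{U(t,0)}=e^{itH_0}e^{-it\widetilde H}$, and inverting the cocycle gives $\overline{U(0,t')}=\bigl(\overline{U(t',0)}\bigr)^{-1}=e^{it'\widetilde H}e^{-it'H_0}$. Writing $\overline{U(t,t')}=\overline{U(t,0)}\,\overline{U(0,t')}$ (cocycle with middle index $0$) and using that $e^{-it\widetilde H}$ and $e^{it'\widetilde H}$ commute, I obtain $\overline{U(t,t')}=e^{itH_0}e^{-i(t-t')\widetilde H}e^{-it'H_0}$.

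Finally, the inclusion $\overline{H\upharpoonright D\cap D(H_0)}\subset\widetilde H$ follows from the generator characterization in Stone's theorem. For $\xi\in D\cap D(H_0)$, Theorem \ref{sch-existence} shows that $t\mapsto W(t)\xi$ is differentiable at $t=0$ with derivative $-iH\xi$; hence the limit $\lim_{t\to0}t^{-1}(W(t)\xi-\xi)$ exists and equals $-iH\xi$, so $\xi\in D(\widetilde H)$ and $\widetilde H\xi=H\xi$. As $\widetilde H$ is closed this yields the claimed inclusion. When $H$ is essentially self-adjoint on $D\cap D(H_0)$, the closure $\overline{H\upharpoonright D\cap D(H_0)}$ is self-adjoint and contained in the self-adjoint $\widetilde H$, and maximality of self-adjoint operators forces the equality $\overline{H}=\widetilde H$. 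I expect the only genuinely delicate point to be the generator identification, namely correctly passing from the $C^1$ solution supplied by Theorem \ref{sch-existence} to membership in $D(\widetilde H)$, together with keeping the strong-continuity extension clean; the algebraic relations themselves are operator identities on all of $\H$ and carry no domain subtleties.
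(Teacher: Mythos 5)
Your proposal is correct and follows essentially the same route as the paper: unitarity and strong continuity of $W(t)$ from Theorems \ref{main-thm1} and \ref{uniqueness}, the group law from Theorem \ref{main-thm2} (ii) together with Theorem \ref{uniqueness} (i), Stone's theorem for \eqref{Wt}, the cocycle factorization for \eqref{Ut}, and the generator identification via Theorem \ref{sch-existence} plus maximality of self-adjoint operators for the final equality. You merely spell out details (the $\varepsilon/3$ continuity argument, the explicit computation $W(t)W(s)=W(t+s)$) that the paper leaves implicit.
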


%\begin{Thm}\label{additional-Coro} Suppose the assumption \ref{ass1} holds, and let $ H_1 $ be a %closed symmetric operator. If $ H $ is essentially self-adjoint, and if $ D(H_0) \cap D $ is dense, then, $ %W(t) $ and $ U(t,t') \, (t,t' \in \R ) $ are unitary, and the operator equalities
%\begin{align}
%\label{additional1} W(t) & = e^{-it\overline{H} } , \\
%\label{additional2} \overline{U(t,t')} &= e^{itH_0} e^{-i(t-t')\overline{H}} e^{-it' H_0} ,%
%\end{align}
%hold.
%\end{Thm}

The existence of a solution of the Heisenberg equation \eqref{hei00} is ensured under the following assumptions:
\begin{Ass}\label{ass4} 
\begin{enumerate}[(I)]
\item $ B $ and $ B^* $ are $ A^{1/2} $-bounded and closed.
\item There exists a constant $ b_0 >0 $ such that, for all $L\ge 0$, $\xi\in V_L$
implies $B \xi , B^* \xi \in V_{L+b_0} $.
%\item The subspace $D\cap D(H_0)$ is dense.
\end{enumerate}
\end{Ass}
Then, we have
\begin{Thm}\label{scheq-and-heieq} 
Under Assumptions \ref{ass1}, \ref{ass2} and \ref{ass4}, it follows that $ D \subset D(W(-t) B W(t)) $ and the operator valued function $ B(t) $ defined as 
\begin{align}
D(B(t)) := D , \q B(t) \xi & := W(-t) B W(t) \xi , \q \xi \in D, \q t \in \R ,
\end{align}
is a solution of weak Heisenberg equation:
\begin{align}\label{weak-Heisenberg}
\mathrm{w} \text{-} \frac{d}{dt} B(t) = \mathrm{w} \text{-} [iH , B(t) ] \q \text{on} \q D(H_0) \cap D,
\end{align}
where \eqref{weak-Heisenberg} is the abbreviated notation for 
\begin{align}\label{wHei2}
\frac{d}{dt} \left\langle \eta , B(t) \xi \right\rangle = \left\langle (iH)^* \eta , B(t) \xi \right\rangle - \left\langle B(t) ^* \eta , iH \xi \right\rangle , \q  \xi , \eta \in D(H_0) \cap D .
\end{align}
\end{Thm}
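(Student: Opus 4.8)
The plan is to take the defining formula $B(t)\xi = W(-t)\,B\,W(t)\xi$ at face value and differentiate the scalar function $g(t):=\langle \eta, B(t)\xi\rangle$ directly, matching the result to the right-hand side of \eqref{wHei2}. Before differentiating I must secure the domain claim $D\subset D(W(-t)BW(t))$. For $\xi\in V_L\subset D$, the Dyson series of Theorem \ref{main-thm1} has its $n$-th term in $V_{L+nb}$, and the $A^{1/2}$-bound together with condition (IV) of Assumption \ref{ass1} bounds its norm by $\frac{|t|^n}{n!}\prod_{k=0}^{n-1}\!\big(a(L+kb)^{1/2}+c\big)\|\xi\|$; after dividing by $n!$ this leaves an overall $(n!)^{-1/2}$-type decay, so $\sum_n\|A^{m/2}(n\text{-th term})\|<\infty$ for every $m$ and $\overline{U(t,0)}\xi$, hence $W(t)\xi$, is a smooth vector for $A$ with super-exponentially decaying spectral tails. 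In particular $W(t)\xi\in D(A^{1/2})\subset D(B)$ by Assumption \ref{ass4}(I). The genuinely delicate point — what I expect to be the main obstacle — is $BW(t)\xi\in D(\overline{U(-t,0)})=D(W(-t))$: I would approximate $BW(t)\xi$ by $E_A([0,M])BW(t)\xi\in D$ and show the images under $U(-t,0)$ form a Cauchy net, using that the super-exponential decay of the spectral tail of $BW(t)\xi$ (only mildly degraded from $W(t)\xi$, since $B$ is $A^{1/2}$-bounded and shifts $V_L$ into $V_{L+b_0}$ by Assumption \ref{ass4}(II)) dominates the $e^{cM}$-type growth coming from the series estimate. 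These estimates also show that all the vectors arising below lie in a common class of super-exponentially decaying vectors on which every $\overline{U(\pm\tau,0)}$ is defined, so the pairings I manipulate are legitimate.

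Next I record the two differentiation inputs. Forward in time, Theorem \ref{sch-existence} gives $\frac{d}{dt}W(t)\xi=-iHW(t)\xi$ for $\xi\in D(H_0)\cap D$. Backward in time, I differentiate $W(s)^*\eta=U(s,0)^*e^{isH_0}\eta$: the product rule with Theorem \ref{adjoint} (which supplies $\partial_s U(s,0)^*=iU(s,0)^*H_1(s)^*$) and $\frac{d}{ds}e^{isH_0}\eta=iH_0e^{isH_0}\eta$, together with $H_1(s)^*e^{isH_0}=e^{isH_0}H_1^*$ and $H_0e^{isH_0}=e^{isH_0}H_0$, yields $\frac{d}{ds}W(s)^*\eta=iW(s)^*(H_0+H_1^*)\eta$, hence $\frac{d}{dt}W(-t)^*\eta=-iW(-t)^*(H_0+H_1^*)\eta$ for $\eta\in D(H_0)\cap D$. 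All intermediate vectors stay in $D$ because $H_0$ preserves each $V_L$ (it strongly commutes with $A$) while $H_1^*$ shifts $V_L$ into $V_{L+b'}$ by Assumption \ref{ass2}(II).

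The one structural fact I still need is the commutation $HW(t)\xi=W(t)H\xi$ on $D(H_0)\cap D$. I would obtain it not from uniqueness (unavailable in the non-symmetric case) but from the semigroup law: Theorem \ref{main-thm2} gives $W(t)W(s)\xi=W(t+s)\xi$ for $\xi\in D$, and $W(t)=e^{-itH_0}\overline{U(t,0)}$ is closed. Writing $W(t)\tfrac{W(s)\xi-\xi}{s}=\tfrac{W(t+s)\xi-W(t)\xi}{s}$ and letting $s\to0$, the argument tends to $-iH\xi$ while the right-hand side tends to $-iHW(t)\xi$ by Theorem \ref{sch-existence}; closedness of $W(t)$ then forces $H\xi\in D(W(t))$ and $W(t)H\xi=HW(t)\xi$.

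Finally I assemble the derivative from the split
\begin{align}
\frac{g(t+h)-g(t)}{h}=\Big\langle\tfrac{[W(-t-h)^*-W(-t)^*]\eta}{h},BW(t+h)\xi\Big\rangle+\Big\langle B^*W(-t)^*\eta,\tfrac{[W(t+h)-W(t)]\xi}{h}\Big\rangle,\no
\end{align}
where I have used $g(t)=\langle W(-t)^*\eta,BW(t)\xi\rangle$ and $W(-t)^*\eta\in D(A^{1/2})\subset D(B^*)$ (proved as in Step 1 with $H_1$ replaced by $H_1^*$). Letting $h\to0$, the first bracket requires continuity of $t\mapsto BW(t)\xi$, which follows from uniform-on-compacts convergence of $\sum_n A^{1/2}(n\text{-th term})$ and the $A^{1/2}$-bound on $B$; it tends to $\langle -iW(-t)^*(H_0+H_1^*)\eta,BW(t)\xi\rangle=i\langle(H_0+H_1^*)\eta,B(t)\xi\rangle=\langle(iH)^*\eta,B(t)\xi\rangle$, moving $W(-t)^*$ back across by the domain claim and using $H^*\eta=(H_0+H_1^*)\eta$ on $D(H_0)\cap D$. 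The second bracket tends to $\langle B^*W(-t)^*\eta,-iHW(t)\xi\rangle$; applying the commutation $HW(t)\xi=W(t)H\xi$ and then moving $B^*$ and $W(-t)^*$ back across (legitimate since $H\xi\in D$, whence $W(t)H\xi\in D(B)$ and $BW(t)H\xi\in D(W(-t))$) converts it into $-i\langle\eta,W(-t)BW(t)H\xi\rangle=-i\langle\eta,B(t)H\xi\rangle=-\langle B(t)^*\eta,iH\xi\rangle$. Summing the two limits reproduces \eqref{wHei2} exactly, which proves the claim.
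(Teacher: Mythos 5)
Your overall architecture matches the paper's: rewrite $\langle\eta,B(t)\xi\rangle=\langle W(-t)^{*}\eta,\,BW(t)\xi\rangle$, differentiate by a product-rule split using Theorem \ref{adjoint} on the left factor and Theorem \ref{sch-existence} on the right, and move operators back across the pairing. Your substitutes for the paper's intermediate steps are legitimate variants: spectral truncation $E_{A}([0,M])$ in place of truncation of the Dyson series for the domain claims; the semigroup law plus closedness of $W(t)$ for the commutation $HW(t)\xi=W(t)H\xi$, which the paper instead obtains as \eqref{WH} inside the proof of Theorem \ref{sch-existence}; and a telescoping split that needs only strong \emph{continuity} of $t\mapsto BW(t)\xi$, thereby avoiding the paper's Lemma \ref{heiLem} (iv) (strong differentiability of $BW(t)\xi$). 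The genuine gap is your final equality $-i\langle\eta,B(t)H\xi\rangle=-\langle B(t)^{*}\eta,iH\xi\rangle$: it presupposes $\eta\in D(B(t)^{*})$, which you never establish, and which must be proved for the right-hand side of \eqref{wHei2} to make sense at all. Since $B(t)=W(-t)BW(t)$ is an unbounded operator defined only on $D$, this is not automatic; it is exactly the content of the paper's Lemma \ref{heiLem} (iii), which shows $D\subset D(W(t)^{*}B^{*}W(-t)^{*})$ and then invokes the general inclusion $B(t)^{*}=(W(-t)BW(t))^{*}\supset W(t)^{*}B^{*}W(-t)^{*}$. You do prove half of the needed input, namely $W(-t)^{*}\eta\in D(B^{*})$, but the other half, $B^{*}W(-t)^{*}\eta\in D(W(t)^{*})$ with $W(t)^{*}=U(t,0)^{*}e^{itH_{0}}$, is not covered by your estimates as stated: your ``common class of super-exponentially decaying vectors'' is asserted only for the operators $\overline{U(\pm\tau,0)}$, not for the adjoints $U(\pm\tau,0)^{*}$. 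The repair is within your own method: by Assumption \ref{ass2}, $U_{n}(t,0)^{*}\psi\in V_{L_{\psi}+nb'}$ with the same factorial bounds (the adjoint analogues of Lemmas \ref{ess-estimate} and \ref{U-estimate}), and $U(t,0)^{*}$ is closed, so every vector with super-exponentially decaying $A$-spectral tails, in particular $e^{itH_{0}}B^{*}W(-t)^{*}\eta$, lies in its domain. Without this addition you have proved only $\frac{d}{dt}\langle\eta,B(t)\xi\rangle=\langle(iH)^{*}\eta,B(t)\xi\rangle-i\langle\eta,B(t)H\xi\rangle$, not the stated \eqref{wHei2}.

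A secondary, also repairable, flaw sits in your Cauchy-net argument for $BW(t)\xi\in D(W(-t))$. As phrased, you compare the spectral tail of $BW(t)\xi$ at level $M$ against the growth of the series estimate, but the growth factor enters at the level $M'$ of the \emph{other} truncation: the naive bound is $\|U(-t,0)E_{A}((M,M'])BW(t)\xi\|\le F(M')\,\|E_{A}((M,\infty))BW(t)\xi\|$, where $F(L):=\sum_{n}\frac{(C|t|)^{n}}{n!}\prod_{k=0}^{n-1}(L+kb+1)^{1/2}$ bounds the series on $V_{L}$, and this does not tend to $0$ uniformly in $M'>M$. You must either slice $(M,M']$ into unit spectral intervals and sum the products of $F(k+1)$ against the tail at $k$ (this sum converges, since the super-exponential decay beats $F(k+1)$), or decompose $E_{A}((M,M'])BW(t)\xi$ along the Dyson series of $W(t)\xi$, which reduces your estimate to the paper's double-series bound \eqref{WBW}. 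With these two repairs your proof is complete, and where it departs from the paper's it is a genuine and in places more economical variant.
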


Finally, we discuss the existence of a strong solution of the Heisenberg equation \eqref{hei00}.
Let 
\begin{align}
B_0(t) := e^{itH_0} B e^{-itH_0} ,\q t\in \R . 
\end{align}
If the operator valued function $ \R \ni t \mapsto B_0(t) $
satisfies the following conditions, then the stronger result holds.

\begin{Ass}\label{ass5}
\begin{enumerate}[(I)]
\item For each $ \xi \in D(A^{1/2}) $, $ B_0(t) \xi $ is strongly continuously differentiable.
\item The strong derivative of $ B_0(t) $ on $D(A^{1/2})$,
\[ B_0'(t) \xi := \frac{d}{dt} B_0(t) \xi , \q \xi \in D(A^{1/2})  \]
is closable for all $t\in \R$.
\item $ B_0'(t) \; (t\in \R ) $ are uniformly $ A^{1/2} $-bounded. That is, there exist 
constants $ c_0, c_1 \ge 0 $ such that for all $ t \in \R $ and $ \xi \in D(A^{1/2}) $,
\[ \| B_0 ' (t)\xi \| \le c_0 \| A^{1/2} \xi \| + c_1 \| \xi \| . \]
\end{enumerate}
\end{Ass}

%We denote the closure of $ B_0'(t) $ by the same symbol. 
The next theorem is concerned with the   
existence of a strong solution of the Heisenberg equation of motion \eqref{hei00}.

\begin{Thm}\label{extra} Under Assumptions \ref{ass1}, \ref{ass2}, \ref{ass4} and \ref{ass5}, it follows that for each $ \xi \in D $, the function $ \R \ni t \mapsto B(t) \xi $ is strongly continuously differentiable, and satisfies
\begin{align}\label{extra5}
\frac{d}{dt} B(t) \xi = W(-t) [iH_1 , B] W(t) \xi + U(0,t) \overline{B_0'(t)} U(t,0)\xi .
\end{align}
Moreover, for each $ \xi \in D(H_0) \cap D $, the equality
\begin{align}
 \frac{d}{dt} B(t) \xi&=  [iH , B(t)] \xi. \label{extra6.5}
\end{align}
holds.
\end{Thm}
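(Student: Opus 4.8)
The plan is to recast $B(t)$ in a form that removes the bare factors $e^{\pm itH_0}$ (which obstruct differentiation when $\xi\notin D(H_0)$), then to prove weak differentiability with the candidate derivative given by the right-hand side of \eqref{extra5}, and finally to upgrade this to strong differentiability by showing the candidate is strongly continuous. Writing $W(\pm t)=e^{\mp itH_0}\overline{U(\pm t,0)}$ and using the covariance relation of Theorem \ref{main-thm2}(ii) (which gives $e^{itH_0}\overline{U(-t,0)}e^{-itH_0}=\overline{U(0,t)}$) together with the cocycle identity \eqref{associative}, one checks on $D$ the operator identity
\begin{align}
B(t)\xi = \overline{U(0,t)}\,B_0(t)\,U(t,0)\xi ,
\end{align}
where $B_0(t)=e^{itH_0}Be^{-itH_0}$. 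The value of this reformulation is that each factor is differentiable in a sense already available: $U(t,0)\xi$ through \eqref{DE1}, $B_0(t)$ through Assumption \ref{ass5}, and $\overline{U(0,t)}$ through its adjoint via Theorem \ref{adjoint}.

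Next I would compute the derivative weakly. Fix $\xi,\eta\in D$ and pair against $\eta$, writing $\langle \eta , B(t)\xi\rangle = \langle U(0,t)^*\eta , B_0(t) U(t,0)\xi\rangle$. By Theorem \ref{adjoint} (the analogue of \eqref{adjoint5} with first argument $0$), $t\mapsto U(0,t)^*\eta$ is strongly $C^1$ with derivative $-iH_1(t)^*U(0,t)^*\eta$; and, using $U(t,0)\xi\in D(A^{1/2})$ together with Assumption \ref{ass5} and \eqref{DE1}, the vector $t\mapsto B_0(t)U(t,0)\xi$ is strongly $C^1$ with derivative $\overline{B_0'(t)}U(t,0)\xi - iB_0(t)H_1(t)U(t,0)\xi$. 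Differentiating the inner product by the product rule and using $\langle H_1(t)^*u,w\rangle=\langle u,H_1(t)w\rangle$ to move $H_1(t)^*$ back onto the right factor, the two $H_1(t)$-terms combine into the commutator $i[H_1(t),B_0(t)]=e^{itH_0}[iH_1,B]e^{-itH_0}$. Reabsorbing $e^{\pm itH_0}$ and the factor $\overline{U(0,t)}$ (via $(\,\overline{U(0,t)}\,)^*=U(0,t)^*$) then yields
\begin{align}
\frac{d}{dt}\langle \eta , B(t)\xi\rangle = \langle \eta , R(t)\xi\rangle , \quad R(t)\xi := W(-t)[iH_1,B]W(t)\xi + \overline{U(0,t)}\,\overline{B_0'(t)}\,U(t,0)\xi ,
\end{align}
which is exactly the right-hand side of \eqref{extra5}.

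To pass from this weak statement to the strong one, I would show that $t\mapsto R(t)\xi$ is strongly continuous; granting this, for $\eta\in D$ dense one integrates the weak derivative to get $\langle\eta, B(t)\xi-B(s)\xi\rangle=\langle\eta,\int_s^t R(r)\xi\,dr\rangle$, hence $B(t)\xi-B(s)\xi=\int_s^t R(r)\xi\,dr$, and the fundamental theorem of calculus upgrades this to strong continuous differentiability with derivative $R(t)\xi$, establishing \eqref{extra5}. For the identification \eqref{extra6.5} on $D(H_0)\cap D$, I would compare $R(t)\xi$ with the weak Heisenberg equation \eqref{wHei2} already supplied by Theorem \ref{scheq-and-heieq}: since $D(H_0)\cap D\subset D(H)$ and one verifies $B(t)\xi\in D(H)$, the right-hand side of \eqref{wHei2} equals $\langle\eta,[iH,B(t)]\xi\rangle$, and matching the two expressions for the now strong derivative gives $R(t)\xi=[iH,B(t)]\xi$.

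The main obstacle is the domain and convergence bookkeeping underlying every manipulation above. Concretely, one must show that $U(t,0)$ (and $U(0,t)$ through its closure) maps $D$ into $D(A^{1/2})$ with the defining series converging in the $A^{1/2}$-norm, locally uniformly in $t$ and differentiably so. This is what legitimizes applying $B_0(t)$ and $\overline{B_0'(t)}$ to $U(t,0)\xi$, what places the intermediate vectors such as $i[H_1(t),B_0(t)]U(t,0)\xi$ in $D(\overline{U(0,t)})$ so that $U(0,t)^*$ may be moved across, and what ultimately yields the strong continuity of $R(t)\xi$ despite $\overline{U(0,t)}$ being unbounded. These estimates rest on the $A^{1/2}$-boundedness in Assumptions \ref{ass1}(III) and \ref{ass2}(I) and on the finite level-raising in Assumptions \ref{ass1}(IV), \ref{ass2}(II), and \ref{ass4}(II), in the same spirit as the convergence proof of Theorem \ref{main-thm1}.
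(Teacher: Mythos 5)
Your treatment of \eqref{extra5} is correct in outline but follows a genuinely different route from the paper's. You differentiate $\langle U(0,t)^*\eta,\,B_0(t)U(t,0)\xi\rangle$ weakly, using the adjoint evolution from Theorem \ref{adjoint}, and then upgrade to a strong derivative via continuity of the candidate and the fundamental theorem of calculus. The paper instead differentiates the doubly truncated expression $S_m(0,t)B_0(t)S_n(t,0)\xi$, in which every vector stays inside some finite level $V_L$, and passes to the limit $m,n\to\infty$ using the absolute, locally uniform convergence of the double series supplied by Lemma \ref{extra1} (iii); this is why the paper's proof of \eqref{extra5} never needs $U(t,0)\xi\in D(A^{1/2})$ at all. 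Your route needs strictly more than Theorem \ref{App1} provides: to differentiate $B_0(t)U(t,0)\xi$ you must know that $t\mapsto (A+1)^{1/2}U(t,0)\xi$ is $C^1$, so that the difference quotients of $U(t,0)\xi$ converge in the $A^{1/2}$-graph norm, against which the uniformly $A^{1/2}$-bounded family $B_0(s)$ is continuous. That strengthened regularity is indeed provable by the same ratio-test estimates used for Theorem \ref{main-thm1} and in the appendix, so this half of your plan is workable; you are trading the paper's double-series bookkeeping for a graph-norm differentiability lemma the paper never requires.

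The genuine gap is in your derivation of \eqref{extra6.5}: the phrase ``one verifies $B(t)\xi\in D(H)$'' conceals the entire difficulty of the second half of the theorem. Here $D(H)=D(H_0)\cap D(H_1)$; membership in $D(H_1)$ reduces to membership in $D(A^{1/2})$ and can be attacked with your series estimates, but membership in $D(H_0)$ admits no such direct attack --- nothing in the assumptions makes $B$ or $\overline{U(0,t)}$ preserve $D(H_0)$, and Theorem \ref{sch-existence} controls $H_0$ only on vectors $W(t)\zeta$ with $\zeta\in D\cap D(H_0)$, whereas $B_0(t)U(t,0)\xi$ does not lie in $D$. The paper never verifies $B(t)\xi\in D(H)$ as an \emph{input}; it produces the action of $H$ on $B(t)\xi$ as an \emph{output} of a chain of weak arguments: Lemma \ref{extra1} (iv) identifies $B_0'(t)$ with $[iH_0,B_0(t)]$ on $D(A^{1/2})\cap D(H_0)$ --- and this step, by pairing against $D(H_0)$ and using self-adjointness of $H_0$, is exactly where $B_0(t)U(t,0)\xi\in D(H_0)$ gets established --- turning \eqref{extra5} into \eqref{extra24}; then the intertwining relation \eqref{WH} and its adjoint counterpart \eqref{sch8} from Lemma \ref{heiLem} (i) allow $H$ to be pulled across $W(\pm t)$ by duality, in \eqref{extra22} and \eqref{extra23}. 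Without an argument of this kind your matching step against the weak Heisenberg equation \eqref{wHei2} cannot even begin, because the statement \eqref{extra6.5} presupposes that $HB(t)\xi$ is defined. To repair the plan, replace ``one verifies'' by the paper's mechanism: prove Lemma \ref{extra1} (iv), apply it with $\zeta=U(t,0)\xi$ using Theorems \ref{App1} and \ref{sch-existence}, and transport $H$ through $W(\pm t)$ via \eqref{WH} and \eqref{sch8}.
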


\section{Iterative construction of an evolution operator $U(t,t')$}\label{construction}
In this section, we prove Theorem \ref{main-thm1} and Theorem \ref{adjoint}.
\begin{Lem}\label{rel-bdd} Under Assumption \ref{ass1}, $H_1 (t) (A+1)^{-1/2}$ is bounded and there exists a constant $ C \ge 0 $ independent of $t\in \Real$ such that 
\begin{align}
\| H_1 (t) (A+1)^{-1/2} \| \le C  ,\quad t\in \Real.
\end{align}
\end{Lem}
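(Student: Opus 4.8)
The plan is to reduce the uniform bound to the single estimate that $H_1(A+1)^{-1/2}$ is a bounded operator, and then to transfer that estimate to every $t$ by conjugating with the unitaries $e^{\pm itH_0}$. First I would establish boundedness at $t=0$. By Assumption \ref{ass1} (III), $H_1$ is $A^{1/2}$-bounded, so there are constants $a,b\ge 0$ with $D(A^{1/2})\subset D(H_1)$ and $\|H_1\xi\|\le a\|A^{1/2}\xi\|+b\|\xi\|$ for all $\xi\in D(A^{1/2})$. For arbitrary $\psi\in\H$ set $\xi:=(A+1)^{-1/2}\psi$; operational calculus gives $\xi\in D(A^{1/2})\subset D(H_1)$, so $H_1(A+1)^{-1/2}$ is everywhere defined. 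Applying the relative bound together with the elementary spectral estimates $\|A^{1/2}(A+1)^{-1/2}\|\le 1$ and $\|(A+1)^{-1/2}\|\le 1$ (both valid since $\lambda/(\lambda+1)\le 1$ and $(\lambda+1)^{-1}\le 1$ on $\sigma(A)\subset[0,\infty)$, as $A\ge 0$), I obtain
\[
\|H_1(A+1)^{-1/2}\psi\|\le a\|A^{1/2}(A+1)^{-1/2}\psi\|+b\|(A+1)^{-1/2}\psi\|\le (a+b)\|\psi\|,
\]
so $H_1(A+1)^{-1/2}$ is bounded with norm at most $C:=a+b$.

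Next I would pass to general $t$. Because $A$ and $H_0$ are strongly commuting (Assumption \ref{ass1} (II)), every bounded Borel function of $H_0$ commutes with every bounded Borel function of $A$; in particular the unitary $e^{-itH_0}$ commutes with $(A+1)^{-1/2}$ and leaves $D(A^{1/2})$ invariant, with $A^{1/2}e^{-itH_0}=e^{-itH_0}A^{1/2}$ on $D(A^{1/2})$. Hence for $\psi\in\H$ the vector $e^{-itH_0}(A+1)^{-1/2}\psi=(A+1)^{-1/2}e^{-itH_0}\psi$ lies in $D(A^{1/2})\subset D(H_1)$, so $H_1(t)(A+1)^{-1/2}$ is everywhere defined and, using the definition $H_1(t)=e^{itH_0}H_1e^{-itH_0}$ followed by the commutation,
\[
H_1(t)(A+1)^{-1/2}=e^{itH_0}H_1(A+1)^{-1/2}e^{-itH_0}=e^{itH_0}\bigl(H_1(A+1)^{-1/2}\bigr)e^{-itH_0}.
\]
Since $H_0$ is self-adjoint, $e^{\pm itH_0}$ are unitary, so taking operator norms yields $\|H_1(t)(A+1)^{-1/2}\|=\|H_1(A+1)^{-1/2}\|\le C$ for every $t\in\R$, which is precisely the claim with a $t$-independent constant.

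The only genuinely delicate point is the commutation step: one must invoke the joint functional calculus for the strongly commuting self-adjoint pair $A,H_0$ to justify both $e^{-itH_0}(A+1)^{-1/2}=(A+1)^{-1/2}e^{-itH_0}$ and the invariance of $D(A^{1/2})$ under $e^{-itH_0}$, so that the middle factor $H_1(A+1)^{-1/2}$ can be split off cleanly and the conjugation by unitaries leaves the norm unchanged. I expect no further obstacle; note in particular that condition (IV) of Assumption \ref{ass1}, which controls how the interaction shifts the spectral support of $A$, plays no role in this lemma and is reserved for the convergence and $D$-invariance arguments in the iterative construction that follows.
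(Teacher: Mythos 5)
Your proof is correct and follows essentially the same route as the paper's: both arguments rest on the relative bound from Assumption \ref{ass1} (III), the strong commutativity of $A$ and $H_0$ to move functions of $A$ past $e^{-itH_0}$, and the unitarity of $e^{\pm itH_0}$. The only cosmetic differences are that the paper estimates $\| H_1(t)\xi \| \le C \| (A+1)^{1/2}\xi \|$ vectorwise, combining the two terms of the relative bound via $(a+b)^2 \le 2a^2+2b^2$, whereas you reduce everything to $t=0$ through the operator identity $H_1(t)(A+1)^{-1/2}=e^{itH_0}H_1(A+1)^{-1/2}e^{-itH_0}$ and absorb the constants using $\| A^{1/2}(A+1)^{-1/2}\| \le 1$ and $\| (A+1)^{-1/2}\| \le 1$.
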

\begin{proof} 
Since $ H_1 $ is $ A^{1/2} $-bounded, there exist constants $ c_0 , c_1 \ge 0 $ satisfying
\begin{align}
\| H_1 \xi \| \le c_0 \| A^{1/2} \xi \| + c_1 \| \xi\| , \q \xi \in D.
\end{align}
Hence, for each $ \xi \in D $, we obtain by operational calculus
\begin{align}
\| H_1 (t )\xi \| & = \| H_1 e^{-itH_0} \xi \| \no \\
& \le c_0\| A^{1/2} e^{-itH_0} \xi \| + c_1 \|  \xi \| \no  \\
& = c_0\| A^{1/2}  \xi \| + c_1 \|  \xi \| \no  .
\end{align}
On the other hand, we have
\[ ||(A+1)^{1/2}\xi||^2=||A^{1/2}\xi||^2+||\xi||^2 .\]
Thus, from the elementary inequality 
\[ (a+b)^2\le 2a^2+2b^2 ,\quad a,b\ge 0 ,\]
we obtain
\begin{align}
\| H_1 (t )\xi \| & \le C \| (A+1)^{1/2} \xi \|  .
\end{align}
This implies that $H_1(t)(A+1)^{-1/2}$ is bounded and 
\[ ||H_1(t)(A+1)^{-1/2} || \le C .\]
\end{proof}
Define
\[ L_\xi := \inf \{ L\ge 0 \, | \, \xi \in V_L \} , \q \xi \in D . \]
For $t,t'\in\Real$, we define a sequence of operators $U_n(t,t')$ for $n=0,1,2,\dots$ in 
the following way:
For $n=0$, put
\begin{align}
D(U_0 (t,t')) = D , \q U_0 (t,t') \xi =\xi , \q \xi \in D .
\end{align}
For $ n\ge 1 $, we inductively define
\begin{align}
D(U_n (t,t')) = D , \q U_n (t,t') \xi = -i \int _{t'} ^t d\tau H_1 (\tau ) U_{n-1} (\tau , t') \xi  , \q \xi \in D , 
\end{align}
where the integration is understood as a strong Riemann integral.
It should be confirmed that $ U_n (t,t') $ is certainly well defined. This
follows from the following lemma: 
\begin{Lem}\label{well-def-U_n}
Suppose that Assumption \ref{ass1} holds. Let $t,t'\in\Real$. Then, for $n= 0,1,2,\dots $, there uniquely exists an operator $U_n(t,t')$ such that 
\begin{enumerate}[(i)]
\item $D(U_n (t,t')) = D$,
\item $U_n(t,t')\xi$ is strongly continuous in $t$,
\item For all $t,t'\in\Real$,
 \[ U_n(t,t')\xi \in V_{L_\xi+nb} ,\]
\item $H_1(t)U_n(t,t')\xi$ is strongly continuous in $t$,
\end{enumerate}
and satisfies the recursion relations
\begin{align}\label{rec-of-U_n}
U_{n+1} (t,t') \xi = -i \int _{t'} ^t d\tau H_1 (\tau ) U_{n} (\tau , t') \xi  , \q \xi \in D , 
\end{align}
for $n=0,1,\dots$, where the integration is a strong Riemann integral.
\end{Lem}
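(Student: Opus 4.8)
The plan is to argue by induction on $n$, where the four properties (i)--(iv) are deliberately coupled: property (iv) at level $n$ is exactly what guarantees that the integrand defining $U_{n+1}$ is strongly continuous, hence that the strong Riemann integral in \eqref{rec-of-U_n} exists, while property (iii) is what propagates the spectral support control. Before the induction I would record three standing facts. Since $A$ is self-adjoint and non-negative, for $\xi\in V_L$ one has $\|A^{1/2}\xi\|\le L^{1/2}\|\xi\|$, so $V_L\subset D(A^{1/2})$ and therefore $D\subset D(A^{1/2})\subset D(H_1)$ by Assumption \ref{ass1} (III). Because $A$ and $H_0$ strongly commute, $E_A([0,L])$ commutes with $e^{\pm itH_0}$, so each $e^{\pm itH_0}$ maps $V_L$ into itself and commutes with $A^{1/2}$; and the infimum $L_\xi$ is attained, i.e. $\xi\in V_{L_\xi}$, since $L\mapsto E_A([0,L])$ is strongly right-continuous.

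For the base case $n=0$, properties (i)--(iii) are immediate, (iii) reading $\xi\in V_{L_\xi}$. For (iv) I would write $H_1(t)\xi=e^{itH_0}H_1 e^{-itH_0}\xi$ and prove strong continuity by splitting and invoking the $A^{1/2}$-bound together with the commutation: $\|H_1(e^{-itH_0}-e^{-it_0H_0})\xi\|\le c_0\|(e^{-itH_0}-e^{-it_0H_0})A^{1/2}\xi\|+c_1\|(e^{-itH_0}-e^{-it_0H_0})\xi\|\to 0$ as $t\to t_0$ by strong continuity of the unitary group applied to the fixed vectors $A^{1/2}\xi$ and $\xi$, and then multiplying by the strongly continuous unitary $e^{itH_0}$.

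For the inductive step, assume $U_n$ satisfies (i)--(iv). Then property (iv) for $U_n$ makes $\tau\mapsto H_1(\tau)U_n(\tau,t')\xi$ strongly continuous, so the strong Riemann integral defining $U_{n+1}(t,t')\xi$ exists for every $\xi\in D$, giving (i); continuity (and indeed differentiability) in the endpoint $t$ of an integral of a continuous integrand gives (ii). Property (iv) for $U_{n+1}$ I would establish exactly as in the base case, but with $g(t):=U_{n+1}(t,t')\xi$ now varying: I split $H_1 e^{-itH_0}g(t)-H_1 e^{-it_0H_0}g(t_0)$ into $H_1 e^{-itH_0}(g(t)-g(t_0))$ and $H_1(e^{-itH_0}-e^{-it_0H_0})g(t_0)$, controlling the first term through the $A^{1/2}$-bound by $\|g(t)-g(t_0)\|$ (using that $g(t)-g(t_0)$ lies in the fixed subspace $V_{L_\xi+(n+1)b}$, on which $A^{1/2}$ is bounded, and that this norm tends to $0$ by (ii)), and the second term by strong continuity of $e^{-itH_0}$ applied to the fixed vector $A^{1/2}g(t_0)$.

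The crux is property (iii). Here I would fix $\tau$, set $\eta:=U_n(\tau,t')\xi\in V_{L_\xi+nb}$ by the inductive hypothesis, and chase the three factors of $H_1(\tau)=e^{i\tau H_0}H_1 e^{-i\tau H_0}$ through the subspaces: $e^{-i\tau H_0}\eta\in V_{L_\xi+nb}$ by commutation, then Assumption \ref{ass1} (IV) pushes $H_1 e^{-i\tau H_0}\eta$ into $V_{L_\xi+nb+b}=V_{L_\xi+(n+1)b}$, and finally $e^{i\tau H_0}$ keeps it there. Thus the integrand is valued in the closed subspace $V_{L_\xi+(n+1)b}$ for every $\tau$; since the bounded projection $E_A([0,L_\xi+(n+1)b])$ commutes with the strong integral and acts as the identity on each integrand, it fixes $U_{n+1}(t,t')\xi$, which is (iii). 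Uniqueness is then automatic: the recursion \eqref{rec-of-U_n} together with the prescribed domain $D$ determines the value $U_n(t,t')\xi$ for every $\xi\in D$, so any two operators meeting all the stated requirements coincide. I expect the only genuine subtlety to be the interplay in (iii) between condition (IV), which is phrased for $H_1$, and the conjugated operator $H_1(\tau)$, which the strong commutativity resolves cleanly.
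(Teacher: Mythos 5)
Your proposal is correct and takes essentially the same route as the paper's own proof: induction on $n$, with property (iv) at level $n$ securing the strong Riemann integral that defines $U_{n+1}$, the subspace chase through strong commutation and Assumption \ref{ass1} (IV) for property (iii), and the same two-term splitting (relative bound plus strong continuity of the unitary group, with $A^{1/2}$ bounded on the fixed subspace $V_{L_\xi+(n+1)b}$) for property (iv). The only cosmetic differences are that the paper phrases the continuity estimates via the bounded operator $H_1(A+1)^{-1/2}$ where you invoke the relative bound and commutation directly, and it concludes (iii) from closedness of $V_{L_\xi+(n+1)b}$ where you commute the spectral projection through the integral --- these are interchangeable.
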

\begin{proof}
If they exist, the uniqueness is obvious by \eqref{rec-of-U_n}.

We prove the existence by induction. Let $n=0$. If we define $U_0(t,t')$ as above
\begin{align}
D(U_0 (t,t')) = D , \q U_0 (t,t') \xi =\xi , \q \xi \in D ,
\end{align}
then (i), (ii), and (iii) clearly hold. To prove (iv), we note that 
\[ H_1(t)\xi = e^{itH_0}H_1(A+1)^{-1/2} e^{-itH_0}(A+1)^{1/2}\xi .\]
The right-hand side is strongly continuous since $H_1(A+1)^{-1/2}$ is bounded. This proves (iv)
in the case where $n=0$.

Suppose that the lemma is true for $n=0,1,2,\dots,k$ for some $k\ge 0$. Then, we can define
$U_{k+1}(t,t')$ via strong Riemann integral as 
\begin{align}
D(U_{k+1} (t,t')) = D , \q U_{k+1} (t,t') \xi = -i \int _{t'} ^t d\tau H_1 (\tau ) U_{k} (\tau , t') \xi  , \q \xi \in D , 
\end{align}
due to (iv). The operator $U_{k+1}(t,t')$ clearly satisfies (i) and (ii). 
Let us prove $U_{k+1}(t,t')$ satisfies (iii). From the induction hypothesis (iii), and
Assumption \ref{ass1} (II), (IV), we find
\[ H_1(t)U_k(t,t')\xi \in V_{L_\xi+(n+1)b}. \]   
Since the subspace $V_{L_\xi+(k+1)b}$ is closed, the strong Riemann integral
\[ \int_{t'}^t  d\tau H_1 (\tau ) U_{k} (\tau , t') \xi \]
also belongs to $V_{L_\xi+(k+1)b}$. This proves (iii) for $n=k+1$. The condition (iv) is
proved as follows.
Since $U_{k+1}(t,t')\xi$ is strongly differentiable with respect to $t$,
it is strongly continuous. 
On the other hand, the map
\[ t\mapsto e^{itH_0} H_1(A+1)^{-1/2} e^{-itH_0}(A+1)^{1/2} \]
is continuous in the strong operator topology.
Thus, we have for $h\in\Real$
\begin{align}
&|| H_1(t+h)U_{k+1}(t+h,t')\xi - H_1(t)U_{k+1}(t,t')\xi || \no \\
&\quad  \le 
|| (e^{i(t+h)H_0} H_1(A+1)^{-1/2} e^{-i(t+h)H_0} - e^{itH_0} H_1(A+1)^{-1/2} e^{-itH_0})(A+1)^{1/2}U_{k+1}(t,t')\xi || \no\\
&\quad \q+ ||H_1(A+1)^{-1/2} ||\, || (A+1)^{1/2}(U_{k+1}(t+h,t')-U_{k+1}(t,t'))\xi ||,
\end{align}
which shows (iv).
The formula \eqref{rec-of-U_n} holds by the construction. Therefore,
the lemma remains true for $n=k+1$ and this completes the proof.
\end{proof}
\begin{Lem}\label{ess-estimate}
Suppose that Assumption \ref{ass1} holds. Let $t,t'\in\Real$ and $\xi\in D$. Then, we can estimate
\begin{align}\label{estimate4}
\| U_n (t,t') \xi \| \le \frac{|t-t'| ^n }{n!} C^n \big( L_\xi + (n-1) b+1\big) ^{1/2}   \cdots \big(  L_\xi +1\big)^{1/2}   \| \xi \| ,
\end{align}
for $n=0,1,2,\dots$.
\end{Lem}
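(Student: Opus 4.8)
The plan is to establish \eqref{estimate4} by induction on $n$, combining three ingredients that are already available: the recursion relation \eqref{rec-of-U_n}, the $t$-uniform operator bound $\|H_1(\tau)(A+1)^{-1/2}\|\le C$ from Lemma \ref{rel-bdd}, and the spectral localization $U_n(\tau,t')\xi\in V_{L_\xi+nb}$ proved in Lemma \ref{well-def-U_n}(iii). The role of the last fact is decisive: it is what lets us replace the unbounded factor $(A+1)^{1/2}$ arising from Lemma \ref{rel-bdd} by an explicit scalar depending only on the spectral window $[0,L_\xi+nb]$.

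For the base case $n=0$ one has $U_0(t,t')\xi=\xi$, while the right-hand side of \eqref{estimate4} reduces, with the empty product read as $1$, to $\|\xi\|$; so equality holds. For the inductive step, assume \eqref{estimate4} for some $n\ge 0$. Inserting $(A+1)^{-1/2}(A+1)^{1/2}$ in front of $U_n(\tau,t')\xi$ and invoking Lemma \ref{rel-bdd} gives, for each $\tau$, the bound $\|H_1(\tau)U_n(\tau,t')\xi\|\le C\,\|(A+1)^{1/2}U_n(\tau,t')\xi\|$. Since $U_n(\tau,t')\xi\in V_{L_\xi+nb}=R(E_A([0,L_\xi+nb]))$, the spectral theorem yields
\[ \|(A+1)^{1/2}\eta\|^2=\int_{[0,L_\xi+nb]}(\lambda+1)\,d\|E_A(\lambda)\eta\|^2\le (L_\xi+nb+1)\,\|\eta\|^2 \]
for $\eta=U_n(\tau,t')\xi$, so $\|H_1(\tau)U_n(\tau,t')\xi\|\le C(L_\xi+nb+1)^{1/2}\|U_n(\tau,t')\xi\|$. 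Applying the induction hypothesis to $\|U_n(\tau,t')\xi\|$ and bounding the norm of the strong Riemann integral in \eqref{rec-of-U_n} by the integral of the norm, I obtain
\[ \|U_{n+1}(t,t')\xi\| \le C^{n+1}(L_\xi+nb+1)^{1/2}\cdots(L_\xi+1)^{1/2}\,\|\xi\|\,\frac{1}{n!}\Bigl|\int_{t'}^t |\tau-t'|^n\,d\tau\Bigr|. \]
The elementary identity $\bigl|\int_{t'}^t|\tau-t'|^n\,d\tau\bigr|=|t-t'|^{n+1}/(n+1)$ converts the final factor into $|t-t'|^{n+1}/(n+1)!$, which is exactly the claimed bound at level $n+1$.

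I do not anticipate a genuine obstacle; the points demanding care are purely bookkeeping. First, one must invoke condition (IV) of Assumption \ref{ass1} at every stage — it is precisely this hypothesis, routed through Lemma \ref{well-def-U_n}(iii), that keeps $U_n(\tau,t')\xi$ inside $V_{L_\xi+nb}$ and thereby generates the telescoping product of factors $(L_\xi+kb+1)^{1/2}$. Second, both orderings $t\ge t'$ and $t<t'$ must be treated uniformly, which is why the estimate and the integrals are written with $|t-t'|$ and absolute values. The emerging product $\prod_{k=0}^{n-1}(L_\xi+kb+1)^{1/2}$ grows only like $O((n!)^{1/2})$ times a geometric factor, so it will later be dominated by the $1/n!$ to force absolute convergence of the series in Theorem \ref{main-thm1}; that subsequent use is the real payoff of stating the bound in this sharp telescoping form.
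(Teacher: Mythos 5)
Your proof is correct and follows essentially the same route as the paper: induction on $n$ via the recursion \eqref{rec-of-U_n}, inserting $(A+1)^{-1/2}(A+1)^{1/2}$, using Lemma \ref{rel-bdd} together with the spectral localization $U_n(\tau,t')\xi\in V_{L_\xi+nb}$ from Lemma \ref{well-def-U_n}(iii), and then integrating. The only differences are cosmetic --- you spell out the spectral-theorem estimate $\|(A+1)^{1/2}\eta\|\le (L_\xi+nb+1)^{1/2}\|\eta\|$ explicitly, where the paper applies it silently, and you treat both time orderings through absolute values rather than by remarking that the case $t<t'$ is similar.
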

\begin{proof}
We prove by induction.
If $ n=0 $, this is obvious. Assume that \eqref{estimate4} holds for some $ n\ge 0 $. 
Then, if $t'\le t$, we have
\begin{align}
\| U_{n+1} (t,t') \xi  \| & \le \int _{t'} ^t d\tau \| H_1 (\tau ) U_n (\tau , t') \xi \| \no \\
&= \int _{t'} ^t d\tau \| H_1 (\tau )(A+1)^{-1/2} ||\,||(A+1)^{1/2} U_n (\tau , t') \xi \| \no \\
& \le \int _{t'} ^t d\tau\,C\cdot \big( L_\xi + nb+1\big) ^{1/2}  \frac{|\tau - t'| ^{n}}{n!}C^n \big( L_\xi + (n-1) b+1\big) ^{1/2}   \cdots \big(  L_\xi +1\big)^{1/2}   \| \xi \| \no  \\
& \le \frac{|t-t'| ^{n+1} }{(n+1)!} C^{n+1} \big( L_\xi + nb+1\big) ^{1/2}   \cdots \big(  L_\xi +1\big)^{1/2}   \| \xi \| .
\end{align}
By a similar computation, one finds that this estimate is also true in the case where $t<t'$. 
Thus the induction completes and we obtain \eqref{estimate4}.
\end{proof}
\begin{Lem}\label{U-estimate} Under Assumption \ref{ass1}, for all $ t,t' \in \Real $ and $ \xi \in D $, the followings hold.
\begin{align}
\label{estimate1} & \sum _{n=0} ^\infty \| U_n (t,t') \xi \| <\infty ,\\
\label{estimate2} & \sum _{n=0} ^\infty \| H_1 (t) U_n (t,t') \xi \| <\infty , \\
\label{estimate3} & \sum _{n=0} ^\infty \| U_n (t,t') H_1 (t') \xi \| <\infty ,
\end{align}
Furthermore, these convergences are uniform in $(t,t')$ on any compact subset in $\Real^2$.
\end{Lem}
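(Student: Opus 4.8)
The plan is to establish the three absolute-convergence statements in Lemma \ref{U-estimate} by leveraging the explicit bound on $\|U_n(t,t')\xi\|$ already proved in Lemma \ref{ess-estimate}, together with the uniform operator bound on $H_1(t)(A+1)^{-1/2}$ from Lemma \ref{rel-bdd}. The governing observation is that the estimate \eqref{estimate4} has the shape $\frac{|t-t'|^n}{n!}C^n P_n(L_\xi)\|\xi\|$, where $P_n(L_\xi) = \prod_{j=0}^{n-1}(L_\xi + jb + 1)^{1/2}$ is a product of $n$ square-root factors, each of which grows only \emph{linearly} in $n$. The whole game is to show that this polynomial-in-$n$ prefactor cannot defeat the factorial $n!$ in the denominator, so that the series converges by a ratio-test-type argument.

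First I would handle \eqref{estimate1} directly. Writing $R := |t-t'|$ and fixing $\xi \in D$ (so $L_\xi$ is a fixed finite number), I would examine the ratio of consecutive terms $a_n := \frac{R^n}{n!}C^n P_n(L_\xi)\|\xi\|$ from Lemma \ref{ess-estimate}. The ratio is
\begin{align}
\frac{a_{n+1}}{a_n} = \frac{RC}{n+1}\big(L_\xi + nb + 1\big)^{1/2},\notag
\end{align}
and since $(L_\xi + nb + 1)^{1/2}$ grows like $n^{1/2}$ while the denominator grows like $n$, this ratio tends to $0$ as $n\to\infty$. Hence $\sum_n a_n < \infty$ by the ratio test, and since $\|U_n(t,t')\xi\| \le a_n$, \eqref{estimate1} follows.

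Next I would treat \eqref{estimate2}. The key is that $H_1(t)U_n(t,t')\xi = H_1(t)(A+1)^{-1/2}(A+1)^{1/2}U_n(t,t')\xi$, so that $\|H_1(t)U_n(t,t')\xi\| \le C\,\|(A+1)^{1/2}U_n(t,t')\xi\|$. By Lemma \ref{well-def-U_n}(iii), $U_n(t,t')\xi \in V_{L_\xi+nb}$, which means $(A+1)^{1/2}U_n(t,t')\xi$ has norm at most $(L_\xi+nb+1)^{1/2}\|U_n(t,t')\xi\|$ via the spectral theorem for $A$. Multiplying the bound \eqref{estimate4} by the extra factor $C(L_\xi+nb+1)^{1/2}$ simply appends one more linearly-growing square-root factor to $P_n$; the resulting series again passes the ratio test by the same computation as above (the extra factor only multiplies the ratio by a term of order $n^{1/2}/\text{const}$, which still leaves the limiting ratio zero). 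For \eqref{estimate3}, I would note that $H_1(t')\xi \in V_{L_\xi + b} \subset D$ whenever $\xi\in D$, so $U_n(t,t')H_1(t')\xi$ is just $U_n$ applied to a new vector $\eta := H_1(t')\xi \in D$ with $L_\eta \le L_\xi + b$; applying \eqref{estimate1} to $\eta$ (and bounding $\|\eta\| = \|H_1(t')\xi\| \le C(L_\xi+1)^{1/2}\|\xi\|$ using Lemma \ref{rel-bdd}) gives the claim.

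Finally, for the uniformity statement, I would observe that all the bounds above depend on $(t,t')$ only through $R = |t-t'|$, which is bounded on any compact $K\subset\Real^2$ by some $R_K < \infty$. Replacing $R$ by $R_K$ in each $a_n$ gives a convergent majorizing series of constants independent of $(t,t')\in K$, so the Weierstrass $M$-test yields uniform convergence on $K$. I expect the main obstacle to be purely bookkeeping: verifying that each of the three series retains the factorial denominator after the various extra $(L+\cdots)^{1/2}$ factors are inserted, and confirming that finitely many such factors never change the $n\to\infty$ ratio limit from zero. There is no deep difficulty here — the whole lemma rests on the elementary fact that $n!$ dominates $c^n$ times any fixed power of a product of linearly growing terms.
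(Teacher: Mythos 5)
Your proposal is correct and follows essentially the same route as the paper: bound each series by the explicit estimate of Lemma \ref{ess-estimate} and apply d'Alembert's ratio test, with uniformity on compacts coming from the fact that the bounds depend on $(t,t')$ only through $|t-t'|$. In fact you supply the details for \eqref{estimate2} (inserting $(A+1)^{-1/2}(A+1)^{1/2}$ and using $U_n(t,t')\xi\in V_{L_\xi+nb}$) and \eqref{estimate3} (viewing $H_1(t')\xi$ as a new vector $\eta\in V_{L_\xi+b}$) that the paper compresses into ``proved in a similar way,'' and these details are exactly right.
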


\begin{proof} 
From Lemma \ref{ess-estimate}, we know
\begin{align}
\sum _{n=0} ^\infty \| U_n (t,t') \xi \| \le \sum _{n=0} ^\infty \frac{|t-t'| ^n }{n!} C^n \big( L_\xi + (n-1) b+1\big) ^{1/2}   \cdots \big(  L_\xi +1\big)^{1/2}   \| \xi \|  .\label{ratio-test-U}
\end{align}
Let $a_n(t,t') $ be the $n$-th term of the summation in the right-hand side of \eqref{ratio-test-U}. One can see that 
\[ \lim_{n\to\infty}\left|\frac{a_{n+1}(t,t')}{a_n(t,t')}\right| = 0, \]
uniformly in $(t,t')$ on any compact subset in the plane.
By using d'Alembert's ratio test, the right hand side converges
uniformly in $(t,t')$ on any compact subset, and obtain \eqref{estimate1}.

The convergence of the other two series' \eqref{estimate2} and \eqref{estimate3} are also proved in a similar way,
and we omit the proof. 
\end{proof}

\begin{Lem}\label{multi-sconti}
Suppose that Assumption \ref{ass1} holds. Let $\xi \in D$ and $n=1,2,\dots$. Then, the $n$-variable function 
from $\Real^n$ into $\mathcal{H}$
\[ \Real^n\ni (t_1,\dots,t_n)\mapsto H_1(t_1)\dots H_1(t_n)\xi \in D \]
is continuous on $\Real^n$ with respect to the usual topology in $\Real^n$ and the strong topology in $\mathcal{H}$.
\end{Lem}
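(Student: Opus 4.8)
The plan is to prove the claim by induction on $n$, the crux being to reduce joint continuity of the product to the already-available continuity of a single factor by exploiting the fact that each $H_1(t)$ acts \emph{boundedly, with a bound uniform in $t$}, on the closed subspaces $V_M$. First I would record two preliminary facts. \textbf{(a)} By Assumption \ref{ass1} (II), (IV) together with the strong commutativity of $A$ and $H_0$, the unitary $e^{-itH_0}$ leaves each $V_L$ invariant, so $H_1(t)=e^{itH_0}H_1e^{-itH_0}$ maps $V_L$ into $V_{L+b}$; iterating, the function takes values in $V_{L_\xi+nb}\subset D$ and is thus well defined into $D$. \textbf{(b)} For $\eta\in V_M$ one has, by Lemma \ref{rel-bdd},
\[ \|H_1(t)\eta\| = \|H_1(t)(A+1)^{-1/2}(A+1)^{1/2}\eta\| \le C(M+1)^{1/2}\|\eta\|, \]
uniformly in $t$, so $H_1(t)\!\upharpoonright\! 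V_M$ is bounded with norm controlled by $M$ alone. I would also recall, from the proof of Lemma \ref{well-def-U_n}, that $t\mapsto e^{itH_0}H_1(A+1)^{-1/2}e^{-itH_0}$ is strongly continuous (as $H_1(A+1)^{-1/2}$ is bounded and $e^{\pm itH_0}$ is strongly continuous); this is exactly the base case $n=1$, namely that $t_1\mapsto H_1(t_1)\xi$ is continuous for each fixed $\xi\in D$.

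For the inductive step I set $\Phi_n(t_1,\dots,t_n):=H_1(t_1)\cdots H_1(t_n)\xi$ and abbreviate $\Psi_t:=H_1(t_2)\cdots H_1(t_n)\xi$, $\Psi_s:=H_1(s_2)\cdots H_1(s_n)\xi$; by the inductive hypothesis $(s_2,\dots,s_n)\mapsto\Psi_s$ is jointly continuous and takes values in $V_{L_\xi+(n-1)b}$. The difference splits as
\[ H_1(t_1)\Psi_t - H_1(s_1)\Psi_s = H_1(t_1)(\Psi_t-\Psi_s) + (H_1(t_1)-H_1(s_1))\Psi_s. \]
The first term is bounded, by fact (b), by $C(L_\xi+(n-1)b+1)^{1/2}\|\Psi_t-\Psi_s\|$, which tends to $0$ as $(s_2,\dots,s_n)\to(t_2,\dots,t_n)$ by the inductive hypothesis. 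For the second I write $\Psi_s=\Psi_t+(\Psi_s-\Psi_t)$: the contribution $(H_1(t_1)-H_1(s_1))\Psi_t$ involves the \emph{fixed} vector $\Psi_t$ and tends to $0$ as $s_1\to t_1$ by the $n=1$ continuity, while $(H_1(t_1)-H_1(s_1))(\Psi_s-\Psi_t)$ is bounded by $2C(L_\xi+(n-1)b+1)^{1/2}\|\Psi_s-\Psi_t\|\to0$, again by fact (b) and the inductive hypothesis. Adding up shows $\Phi_n$ is jointly continuous at $(t_1,\dots,t_n)$, completing the induction.

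The main obstacle is precisely the unboundedness of the factors $H_1(t_k)$: one cannot directly estimate a product of unbounded operators, nor transfer continuity factor-by-factor as for bounded operators. The device that resolves this is the pair of structural facts above — the level-raising property (IV), which confines every intermediate vector to a fixed subspace $V_M$, and Lemma \ref{rel-bdd}, which makes each $H_1(t)$ act there as a uniformly bounded operator. Once the product is thereby reduced to a composition of uniformly bounded maps on a nested family of closed subspaces, joint continuity follows from single-variable continuity by the standard add-and-subtract estimate carried out above.
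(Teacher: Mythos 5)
Your proof is correct and takes essentially the same approach as the paper's: induction on $n$, with the base case given by the strong continuity of $t\mapsto e^{itH_0}H_1(A+1)^{-1/2}e^{-itH_0}$ applied to $(A+1)^{1/2}\xi$, and the inductive step handled by an add-and-subtract argument in which Assumption \ref{ass1} (IV) (together with the strong commutativity of $A$ and $H_0$) confines all intermediate vectors to $V_{L_\xi+(n-1)b}$ and Lemma \ref{rel-bdd} makes each $H_1(t)$ act there as a uniformly bounded operator. The only difference is cosmetic: your three-term splitting is algebraically equivalent to the paper's two-term decomposition $(H_1(t_1)-H_1(s_1))\Psi_t + H_1(s_1)(\Psi_t-\Psi_s)$.
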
 

\begin{proof}
Fix $\xi \in D$.
We prove by induction with respect to $n$.

Set $n=1$. We will prove 
\[ t\mapsto H_1(t)\xi \]
is strongly continuous. But, this has already been proved in the proof of Lemma \ref{well-def-U_n}.

Suppose that the assertion is valid for some $n\ge 1$. We prove the $(n+1)$-variable function
\[  (t_1,\dots,t_{n+1})\mapsto H_1(t_1)\dots H_1(t_{n+1})\xi \] 
is strongly continuous at any $(t_1,\dots,t_{n+1} )\in\Real^{n+1}$.
We use the abbreviated notations such as
\[ \bvec t=(t_2,\dots,t_{n+1})\in\Real^n,\quad (t_1,\bvec t)=(t_1,\dots,t_{n+1})\in \Real^{n+1},  \]
and $|\bvec t - \bvec s|$ denotes the standard Euclidean distance. 
Choose arbitrary $\epsilon > 0$. By the induction hypothesis, there is a $\delta(\bvec t,\epsilon)>0$
such that for all $\bvec s=(s_2,\dots,s_{n+1})\in\Real^n$ with $|\bvec t -\bvec s | <\delta(\bvec t,\epsilon)$,
\[ ||( H_1(t_2)\dots H_1(t_{n+1}) -H_1(s_2)\dots H_1(s_{n+1}))\xi || < \frac{\epsilon}{2C(L_\xi+ nb+1)^{1/2}} .\]
On the other hand, since the mapping from $\Real$ to the set of bounded linear operators
in $\mathcal{H}$
 \[ t\mapsto e^{itH_0} H_1(A+1)^{-1/2} e^{-itH_0} \] 
 is strongly continuous, there is a $\delta'(t_1,\bvec t,\epsilon)>0$ such that for all $s_1\in\Real$ with
 $|t_1-s_1|<\delta'(t_1,\bvec t,\epsilon)$
 \begin{align}
 || (H_1(t_1)-H_1(s_1)) H_1(t_2)\dots H_1(t_{n+1})\xi || 
 <\frac{\epsilon}{2},
 \end{align}
 because for any $\psi\in D(A^{1/2})$, we have
 \[ (H_1(t_1)-H_1(s_1))\psi =(e^{it_1H_0}H_1(A+1)^{-1/2} e^{-it_1H_0}-
 e^{is_1H_0}H_1(A+1)^{-1/2} e^{-is_1H_0})\cdot (A+1)^{1/2}\psi .\]
 From these estimates, one finds that for all $(s_1,\bvec s)\in\Real^{n+1}$ with 
 $|(t_1,\bvec t) - (s_1,\bvec s)| < \mathrm{min}(\delta(\bvec t,\epsilon),\delta'(t_1,\bvec t,\epsilon)) $
\begin{align}
&||H_1(t_1)\dots H_1(t_{n+1})\xi -H_1(s_1)\dots H_1(s_{n+1})\xi || \no\\
&\le || (H_1(t_1)-H_1(s_1)) H_1(t_2)\dots H_1(t_{n+1})\xi || +\no\\ 
&\quad +|| H_1(s_1)( H_1(t_2)\dots H_1(t_{n+1}) -H_1(s_2)\dots H_1(s_{n+1}))\xi || \no \\
&\le || (H_1(t_1)-H_1(s_1)) H_1(t_2)\dots H_1(t_{n+1})\xi || + \no\\
&\quad +C(L_\xi+ nb+1)^{1/2}||( H_1(t_2)\dots H_1(t_{n+1}) -H_1(s_2)\dots H_1(s_{n+1}))\xi ||\no\\
&<\frac{\epsilon}{2}+\frac{\epsilon}{2C(L_\xi+ nb+1)^{1/2}}\cdot C(L_\xi+ nb+1)^{1/2}=\epsilon , 
\end{align}
where we have used the fact that the vector $(H_1(t_2)\dots H_1(t_{n+1}) -H_1(s_2)\dots H_1(s_{n+1}))\xi $
belongs to $V_{L_\xi+nb}$. This proves the lemma.
\end{proof}

From Lemma \ref{multi-sconti}, we can define a strong Bochner integral 
in $\Real^n$ 
\begin{align}\label{boch-int}
\int_A d^n \tau \,H_1(\tau_1)\dots H_1(\tau_n)\xi , \quad \xi \in D, 
\end{align}
for any Borel measurable bounded subset $A\subset \Real^n$, where $ d^n \tau $ denotes $ n $-dimensional Lebesgue measure. In particular, since $H_1$ is
closed,
we obtain for $t'\le t$ and $\xi\in D$,
\begin{align}
\int_{t\ge \tau_1\ge \dots \ge \tau_n \ge t'} d^n \tau \,H_1(\tau_1)\dots H_1(\tau_n)\xi
&=\int_{t'}^t d\tau_1\,\int_{t'}^{\tau_1} d\tau_2\,\dots\int_{t'}^{\tau_{n-1}} d\tau_n\,H_1(\tau_1)H_1(\tau_2)\dots H_1(\tau_n)\xi \no\\
&=\int_{t'}^t d\tau_1\,H_1(\tau_1)\int_{t'}^{\tau_1} d\tau_2\,H_1(\tau_2)\dots\int_{t'}^{\tau_{n-1}} d\tau_n\,H_1(\tau_n)\xi\no\\
&=\int_{t'}^t d\tau_1\,H_1(\tau_1)\dots \int_{t'}^{\tau_{n-2}}i H_1(\tau_{n-1})U_1(\tau_{n-1},t')\xi\no\\
&=\dots 
=i^nU_{n}(t,t')\xi,\label{time-order-of-U_n}
\end{align}
where \eqref{rec-of-U_n} was used in the third equality.

\begin{proof}[Proof of Theorem \ref{main-thm1}]
Let $\xi \in D$ and define
\[S_n (t,t') \xi := \sum _{j=0} ^n U_j(t,t')\xi , \q n\ge 0 , \q t,t' \in \R.\]
It is clear from Lemma \ref{U-estimate} \eqref{estimate1} that $\{S_n(t,t')\xi \}_n$ is Cauchy in $\mathcal{H}$. Thus, we can define 
\begin{align}
D(U(t,t')) = D , \q U(t,t') \xi = \lim_{n\to\infty} S_n(t,t')\xi =\sum _{n=0} ^\infty U_n (t,t') \xi , \q \xi \in D , \q t,t' \in \Real .
\end{align}
This infinite summation converges absolutely and uniformly in $(t,t')$ on any compact set 
$ K \subset \R ^2 $. Note that $U_k(t,t')\xi $ ($k=0,1,\dots ,n$) are strongly differentiable with respect to $t$, so is $S_n(t,t')\xi $. The derivative of $S_n(t,t')$ with respect to $t$ becomes
\begin{align}
\pdfrac{S_n(t,t')\xi}{t} &= \sum _{j=0} ^n \pdfrac{U_j(t,t')\xi }{t} \no\\
 &=-i \sum _{j=1} ^n H_1(t)U_{j-1}(t,t')\xi \no\\
 &=-i H_1(t) \sum _{j=0} ^{n-1} U_{j}(t,t')\xi\no\\
 &= -iH_1(t) S_{n-1}(t,t') \xi .
\end{align}
By \eqref{estimate2}, one finds that $\{ H_1(t)S_n(t,t')\xi \}_n$ is --- and therefore
$\{ (\partial /\partial t) S_n(t,t')\xi \}_n$ is --- Cauchy.
Hence, the limit
\[ \lim_{n\to\infty} \pdfrac{S_n(t,t')\xi}{t} = -i \lim_{n\to\infty } H_1(t) S_{n-1}(t,t') \]
exists. Due to the fact that $ H_1 (t) $ is closed, this implies $ U(t,t') \xi \in D(H_1 (t)) $ and
\begin{align}
\frac{\partial }{\partial t} S_n (t,t') \xi \rightarrow -i H_1(t) U(t,t') \xi , \q (n\rightarrow \infty ) ,
\end{align}
uniformly in $(t,t')$ on any compact set $ K \subset \R ^2 $.
Since the function $ t \mapsto (\partial/\partial t) \,S_n (t,t') \xi $ is strongly continuous, so
is its uniform limit $ -i H_1(t) U(t,t') \xi $. Then, by exchanging limit and integration, we have
\[ \lim_{n\to\infty}\int _{t'} ^t d\tau \frac{\partial }{\partial \tau } S_n (\tau ,t') \xi = -i \int _{t'} ^t d\tau H_1(\tau ) U(\tau ,t') \xi, \] 
where the convergence is uniform on $ K $. Since
\begin{align}
\int _{t'} ^t d\tau \frac{\partial }{\partial \tau } S_n (\tau ,t') \xi = S_n (t,t') \xi - \xi \rightarrow U(t,t') \xi - \xi , \q (n\rightarrow \infty ) ,
\end{align}
we have
\begin{align}
U(t,t') \xi = \xi -i \int _{t'} ^t d\tau H_1(\tau ) U(\tau ,t') \xi  ,
\end{align}
which implies that $ U(t,t') \xi $ is strongly continuously differentiable with respect to $t$ at all $ (t,t') \in K $. Since $ K $ is arbitrary, one concludes that \eqref{DE1} holds.

Next, we prove \eqref{DE2}. Let $t'\le t$. By interchanging the order of 
integrations, we have from \eqref{time-order-of-U_n}
\begin{align}
U_n (t,t') \xi & = (-i)^n \int _{t' \le \tau _n \le \cdots \le \tau _2 \le \tau _1 \le t} d^n \tau \, H_1 (\tau _1) H_1 (\tau _2) \cdots H_1 (\tau _n) \xi \no\\
& = (-i)^n \int _{t'} ^t d\tau _n \int _{\tau _n } ^t d\tau _{n-1} \cdots \int _{\tau _2} ^t d\tau _1 \, H_1 (\tau _1) \cdots H_1 (\tau _{n-1} ) H_1 (\tau _n) \no \xi \\
& = i^n \int _t ^{t'} d\tau _n \int _t ^{\tau _n} d\tau _{n-1} \cdots \int _t ^{\tau _2} d\tau _1 H_1 (\tau _1) \cdots H_1 (\tau _{n-1}) H_1 (\tau _n) \xi ,
\end{align}
and this implies
\begin{align}\label{inverse-rec}
U_{n+1} (t,t') \xi = i \int _t ^{t'} d\tau U_n (t,\tau ) H_1 (\tau ) \xi .
\end{align}
One can check in the same manner that \eqref{inverse-rec} remains valid even if 
$t<t'$. Hence we find that $ U_n (t,t') \xi  $ is differentiable with respect to $ t' $, and 
\begin{align}\label{t'diff}
\frac{\partial }{\partial t'} U_{n+1}(t,t') \xi = i U_n (t,t') H_1 (t') \xi .
\end{align}
By using \eqref{estimate3}, we can repeat a discussion similar to the one 
in the previous paragraph to obtain \eqref{DE1}, to learn that $U(t,t')\xi $ is strongly continuously differentiable with respect to
 $t'$ at all $ (t,t') \in \Real^2 $, and satisfies \eqref{DE2}.
\end{proof}

In the rest of the present section, we also use Assumption \ref{ass2} in order to obtain 
more detailed results. 
We can derive the following lemmas in the same manner as before.

\begin{Lem}\label{rel-bdd2} Under Assumption \ref{ass2}, $H_1 (t)^* (A+1)^{-1/2}$ is bounded and there exists a constant $ C'\ge 0 $ independent of $t\in \Real$ such that 
\begin{align}
\| H_1 (t)^* (A+1)^{-1/2} \| \le C'  ,\quad t\in \Real.
\end{align}
\end{Lem}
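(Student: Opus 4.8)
The plan is to repeat verbatim the argument of Lemma~\ref{rel-bdd}, with $H_1$ replaced throughout by $H_1^*$ and Assumption~\ref{ass1}(III) replaced by Assumption~\ref{ass2}(I). The only genuinely new ingredient is an explicit identification of the adjoint $H_1(t)^*$; once that is in hand, the relative bound drops out of exactly the same estimate as before, and the fact that $C'$ is independent of $t$ comes for free.

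First I would establish the operator identity
\[ H_1(t)^* = e^{itH_0} H_1^* e^{-itH_0}, \q t\in\R. \]
Since $H_0$ is self-adjoint, $e^{\pm itH_0}$ are everywhere-defined unitary operators with $(e^{itH_0})^* = e^{-itH_0}$. Writing $V := e^{itH_0}$, we have $H_1(t) = V H_1 V^*$ with $H_1$ densely defined and closed, so $D(H_1(t)) = V D(H_1)$. Conjugation by a unitary commutes with the adjoint operation: for $x \in V D(H_1)$ and $y$ with $V^* y \in D(H_1^*)$ one checks directly that $\Expect{V H_1 V^* x , y} = \Expect{H_1 V^* x, V^* y} = \Expect{x , V H_1^* V^* y}$, and unitarity of $V$ forces the domain of the adjoint to be exactly $V D(H_1^*)$, giving $(V H_1 V^*)^* = V H_1^* V^*$. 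This is the step I expect to require the most care, since it is the only place where the closedness of $H_1$ and the invertibility (not mere boundedness) of $e^{\pm itH_0}$ are genuinely used; everything downstream is a mechanical repetition of Lemma~\ref{rel-bdd}.

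With this identity in hand, Assumption~\ref{ass2}(I) supplies constants $c_0', c_1' \ge 0$ with $\| H_1^* \eta \| \le c_0' \| A^{1/2}\eta \| + c_1' \|\eta\|$ for $\eta \in D(A^{1/2})$. Then for $\xi \in D \subset D(A^{1/2})$,
\begin{align}
\| H_1(t)^* \xi \| &= \| H_1^* e^{-itH_0}\xi \| \no\\
&\le c_0' \| A^{1/2} e^{-itH_0}\xi \| + c_1' \|\xi\| \no\\
&= c_0' \| A^{1/2}\xi \| + c_1' \|\xi\|,
\end{align}
where the last equality uses Assumption~\ref{ass1}(II): since $A$ and $H_0$ strongly commute, $A^{1/2}$ commutes with the unitary $e^{-itH_0}$, so the $A^{1/2}$-norm is left invariant and, crucially, the resulting bound is independent of $t$.

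Finally I would invoke $\| (A+1)^{1/2}\xi \|^2 = \| A^{1/2}\xi\|^2 + \|\xi\|^2$ together with the elementary inequality $(a+b)^2 \le 2a^2 + 2b^2$, exactly as in Lemma~\ref{rel-bdd}, to obtain $\| H_1(t)^*\xi \| \le C' \| (A+1)^{1/2}\xi\|$ with $C'$ independent of $t$. As in that lemma, this bound on the dense domain $D$ shows that $H_1(t)^*(A+1)^{-1/2}$ is bounded and $\| H_1(t)^*(A+1)^{-1/2}\| \le C'$, which is the assertion.
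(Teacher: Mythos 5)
Your proposal is correct and takes essentially the same route as the paper, which omits the proof of this lemma entirely, remarking only that it can be derived ``in the same manner as before'' (i.e., as Lemma \ref{rel-bdd}); your argument is exactly that repetition, with the conjugation identity $H_1(t)^*=e^{itH_0}H_1^*e^{-itH_0}$ that the paper leaves implicit made explicit. One cosmetic point: that adjoint identity requires only dense definedness of $H_1$ and unitarity of $e^{\pm itH_0}$, not closedness of $H_1$.
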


\begin{Lem}\label{multi-sconti^*}
Under Assumption \ref{ass2},  for all $\xi \in D$, the $n$-variable function 
\[ \Real^n\ni (t_1,\dots,t_n)\mapsto H_1(t_1)^* \dots H_1(t_n)^* \xi \in D \]
is strongly continuous on $\Real^n$.
\end{Lem}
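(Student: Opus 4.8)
The plan is to mirror the inductive proof of Lemma \ref{multi-sconti} almost verbatim, replacing $H_1$ by $H_1^*$, the constant $b$ of Assumption \ref{ass1}(IV) by the constant $b'$ of Assumption \ref{ass2}(II), and the bound $C$ of Lemma \ref{rel-bdd} by the bound $C'$ of Lemma \ref{rel-bdd2}. The one preliminary observation I would record is that, since $H_0$ is self-adjoint and $e^{\pm itH_0}$ are unitary, the adjoint of $H_1(t)=e^{itH_0}H_1e^{-itH_0}$ is $H_1(t)^*=e^{itH_0}H_1^*e^{-itH_0}$, so that for every $\psi\in D(A^{1/2})$ one has the factorization
\begin{align}
H_1(t)^*\psi = e^{itH_0}\,H_1^*(A+1)^{-1/2}\,e^{-itH_0}\,(A+1)^{1/2}\psi , \no
\end{align}
and the bounded-operator-valued map $t\mapsto e^{itH_0}H_1^*(A+1)^{-1/2}e^{-itH_0}$ is strongly continuous by Lemma \ref{rel-bdd2} together with the strong continuity of $t\mapsto e^{\pm itH_0}$. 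Since $D=\bigcup_{L\ge 0}V_L\subset D(A^{1/2})$, this factorization applies to every $\xi\in D$, and the base case $n=1$ is precisely this strong continuity of $t\mapsto H_1(t)^*\xi$, obtained by applying the strongly continuous family above to the fixed vector $(A+1)^{1/2}\xi$.

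For the inductive step I would assume strong continuity of the $n$-variable map and examine the $(n+1)$-variable map at a point $(t_1,\bvec t)$ with $\bvec t=(t_2,\dots,t_{n+1})$. The decisive structural fact, which is the adjoint analogue of the argument in Lemma \ref{multi-sconti}, is that applying Assumption \ref{ass2}(II) $n$ times, and using Assumption \ref{ass1}(II) so that each $V_L$ is invariant under $e^{\pm itH_0}$, the inner vector $H_1(t_2)^*\cdots H_1(t_{n+1})^*\xi$ lies in $V_{L_\xi+nb'}$ for all choices of $t_2,\dots,t_{n+1}$; in particular it lies in $D(A^{1/2})$, with $\|(A+1)^{1/2}H_1(t_2)^*\cdots H_1(t_{n+1})^*\xi\|$ bounded by $(L_\xi+nb'+1)^{1/2}$ times its norm. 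I would then estimate
\begin{align}
&\|H_1(t_1)^*\cdots H_1(t_{n+1})^*\xi - H_1(s_1)^*\cdots H_1(s_{n+1})^*\xi\| \no\\
&\q\le \|(H_1(t_1)^*-H_1(s_1)^*)\,H_1(t_2)^*\cdots H_1(t_{n+1})^*\xi\| \no\\
&\q\q + \|H_1(s_1)^*\bigl(H_1(t_2)^*\cdots H_1(t_{n+1})^* - H_1(s_2)^*\cdots H_1(s_{n+1})^*\bigr)\xi\| . \no
\end{align}
The first term tends to $0$ as $s_1\to t_1$ by the base-case strong continuity applied to the fixed vector $H_1(t_2)^*\cdots H_1(t_{n+1})^*\xi\in D(A^{1/2})$. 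For the second term I would insert $(A+1)^{-1/2}(A+1)^{1/2}$ after $H_1(s_1)^*$, bound $\|H_1(s_1)^*(A+1)^{-1/2}\|\le C'$ uniformly in $s_1$ via Lemma \ref{rel-bdd2}, and use that the difference vector lies in $V_{L_\xi+nb'}$ so that the $(A+1)^{1/2}$ factor contributes at most $(L_\xi+nb'+1)^{1/2}$; the residual vector norm is then made small by the induction hypothesis. Choosing $|s_1-t_1|$ and $|\bvec s-\bvec t|$ small enough yields strong continuity at $(t_1,\bvec t)$, completing the induction.

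I do not expect a genuine obstacle here: the text already signals that this is proved "in the same manner as before," and the argument is a line-by-line transcription of Lemma \ref{multi-sconti}. The only point deserving care — and the only place the new hypotheses enter — is the uniformity step, namely that the iterated adjoints keep the vector inside a single finite spectral subspace $V_{L_\xi+nb'}$ independently of the time arguments. This is exactly what licenses the insertion of $(A+1)^{-1/2}(A+1)^{1/2}$ and the uniform control of the second term, and it relies on Assumption \ref{ass2}(II) in place of Assumption \ref{ass1}(IV). Given this, I would likely present the proof in abbreviated form, merely indicating the substitutions $H_1\rightsquigarrow H_1^*$, $b\rightsquigarrow b'$, $C\rightsquigarrow C'$.
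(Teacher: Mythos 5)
Your proposal is correct and takes exactly the approach the paper intends: the paper omits the proof of this lemma, remarking only that it can be derived ``in the same manner as before,'' i.e., by the same induction as Lemma \ref{multi-sconti} with $H_1$ replaced by $H_1^*$, $b$ by $b'$, and $C$ by $C'$. Your transcription supplies precisely that argument, including the two points that genuinely need checking: the identity $H_1(t)^*=e^{itH_0}H_1^*e^{-itH_0}$ (valid since $e^{itH_0}$ is unitary) and the fact that Assumption \ref{ass2}(II) together with the strong commutativity of $A$ and $H_0$ keeps the iterated vectors inside the fixed spectral subspace $V_{L_\xi+nb'}$, which licenses the insertion of $(A+1)^{-1/2}(A+1)^{1/2}$ and the uniform bound from Lemma \ref{rel-bdd2}.
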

Lemma \ref{multi-sconti^*} ensures the existence of a strong Bochner integral 
\begin{align}
\int_A d^n \tau \,H_1(\tau_1)^*\dots H_1(\tau_n)^*\xi , \quad \xi \in D, 
\end{align}
for any bounded Borel set $A\subset\R^n$
and allows us to perform computations such as \eqref{time-order-of-U_n}
with $H_1$ replaced by $H_1^*$. 
 
\begin{Lem}\label{int-rep-lem}
Let Assumption \ref{ass1} and Assumption \ref{ass2} hold. Then, $D\subset D(U_n(t,t')^*)$
 and for all $\xi\in D$,
\begin{align}
U_n(t,t')^*\xi & = i^n \int _{t'} ^t d\tau _1 \int _{t'} ^{\tau _1} d\tau _2 \cdots \int _{t'} ^{\tau _{n-1}} d\tau _n \, H_1 (\tau _n)^* \cdots H_1 (\tau _2) ^* H_1 (\tau _1) ^* \xi \label{U_n^*-int-rep} \no \\
& = (-i)^n \int _{t} ^{t'} d\tau _n \int _{t} ^{\tau _n} d\tau _{n-1} \cdots \int _{t} ^{\tau _2} d\tau _1 \, H_1 (\tau _n)^* H_1 (\tau _{n-1}) ^* \cdots H_1 (\tau _1) ^* \xi . 
\end{align}
In particular, $ U_n(t,t') ^* \xi $ is strongly continuously differentiable with respect to $ t $ and $ t' $.
\end{Lem}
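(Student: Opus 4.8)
The plan is to obtain the adjoint by computing the sesquilinear form $\left\langle \eta , U_n(t,t') \xi \right\rangle$ for $\eta , \xi \in D$ and reading off a representative of $U_n(t,t')^*\eta$. I would start from the iterated strong-integral representation of $U_n(t,t')$ that is already available from the recursion \eqref{rec-of-U_n} (equivalently from \eqref{time-order-of-U_n}),
\[ U_n(t,t') \xi = (-i)^n \int _{t'} ^t d\tau _1 \int _{t'} ^{\tau _1} d\tau _2 \cdots \int _{t'} ^{\tau _{n-1}} d\tau _n \, H_1 (\tau _1) \cdots H_1 (\tau _n) \xi , \]
and pair it against $\eta$. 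Since $\left\langle \eta , \cdot \right\rangle$ is a bounded linear functional, it commutes with the Bochner integral, so the problem reduces to transposing the finite product $H_1(\tau_1)\cdots H_1(\tau_n)$ across the inner product for fixed $(\tau_1,\dots,\tau_n)$.

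The key step is the transposition identity $\left\langle \eta , H_1(\tau_1)\cdots H_1(\tau_n)\xi \right\rangle = \left\langle H_1(\tau_n)^* \cdots H_1(\tau_1)^* \eta , \xi \right\rangle$, and the only real work is to check that each adjoint may legitimately be moved one factor at a time without leaving the relevant domains. Here I would use that $H_1(\tau)^* = e^{i\tau H_0} H_1^* e^{-i\tau H_0}$, that $e^{\pm i\tau H_0}$ leaves every $V_L$ invariant (because $A$ and $H_0$ strongly commute, so $E_A([0,L])$ commutes with $e^{-i\tau H_0}$), and that under Assumption \ref{ass2} the operator $H_1^*$ is $A^{1/2}$-bounded (hence $D \subset D(A^{1/2}) \subset D(H_1^*)$) and sends $V_L$ into $V_{L+b'}$. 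Consequently $H_1(\tau)^*$ maps $V_L$ into $V_{L+b'}$, so after applying $H_1(\tau_1)^*$ to $\eta \in V_M$ one stays in $D$, and the chain $H_1(\tau_n)^* \cdots H_1(\tau_1)^* \eta$ is well defined, each transposition being nothing more than the defining relation of the adjoint applied to a vector in its domain. Pulling the scalar $(-i)^n$ into the first slot as its conjugate $i^n$ and taking the integral back out of the first slot (its integrand is strongly continuous by Lemma \ref{multi-sconti^*}, hence Bochner integrable) yields $\left\langle \eta , U_n(t,t')\xi \right\rangle = \left\langle w , \xi \right\rangle$ with $w$ equal to the first claimed expression applied to $\eta$. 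Because $D(U_n(t,t')) = D$ is dense, this is exactly the statement that $\eta \in D(U_n(t,t')^*)$ with $U_n(t,t')^*\eta = w$, which is the first formula.

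The second representation follows from the first by reversing the order of the iterated integration over the simplex $\{ t' \le \tau_n \le \cdots \le \tau_1 \le t \}$: rewriting it with $\tau_n$ as the outermost variable and then flipping all $n$ limits of integration contributes a factor $(-1)^n$, turning $i^n$ into $(-i)^n$ and reproducing the stated nested integral with the same operator ordering (the $t<t'$ case is handled by the same bookkeeping on orientations). Finally, strong continuous differentiability in $t$ and in $t'$ is read off from either representation by the fundamental theorem of calculus for Bochner integrals: differentiating in the upper limit $t$ evaluates the outer integrand at $\tau_1 = t$, which is strongly continuous by Lemma \ref{multi-sconti^*}, and the $t'$-derivative is obtained analogously. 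I expect the transposition and domain-bookkeeping step of the second paragraph to be the main obstacle, since it is where Assumption \ref{ass2} (the $A^{1/2}$-boundedness and the $V_L \to V_{L+b'}$ shift for $H_1^*$) is essential to guarantee that the successive adjoints keep the vectors inside $D$ and that the resulting operator products are densely defined and strongly continuous.
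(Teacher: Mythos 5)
Your proposal is correct and follows essentially the same route as the paper's proof: pair the iterated strong-integral representation of $U_n(t,t')$ against a fixed vector in $D$, pull the inner product through the Bochner integral, transpose the factors $H_1(\tau_j)$ one at a time using Assumption \ref{ass2} (so that $H_1(\tau)^*=e^{i\tau H_0}H_1^*e^{-i\tau H_0}$ maps $V_L$ into $V_{L+b'}$ and the successive adjoints stay in $D$), and conclude from the definition of the adjoint; the second representation and the differentiability then follow by reordering the simplex integration and the fundamental theorem of calculus, just as in the paper. In fact your write-up makes explicit the domain bookkeeping that the paper leaves implicit, so there is nothing to correct.
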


\begin{proof}
Choose arbitrary $\xi,\eta \in D$. Then
\begin{align}
\Expect{U_n(t,t')\eta,\xi} &= i^n\int_{t'}^t d\tau_1 \dots \int_{t'}^{\tau_{n-1}} d\tau_{n} \Expect{H_1(\tau_1)\dots H_1(\tau_n)\eta,\xi} \no\\
 &= i^n\int_{t'}^t d\tau_1 \dots \int_{t'}^{\tau_{n-1}} d\tau_{n} \Expect{\eta,H_1(\tau_n)^*\dots H_1(\tau_1)^*\xi} \no\\
 &=\Expect{\eta,i^n\int_{t'}^t d\tau_1 \dots \int_{t'}^{\tau_{n-1}} d\tau_{n}\, H_1(\tau_n)^*\dots H_1(\tau_1)^*\xi}. 
\end{align}
This means $\xi\in D(U_n(t,t')^*)$ and 
\eqref{U_n^*-int-rep}.
\end{proof}
 
From Lemma \ref{rel-bdd2}, and Lemma \ref{int-rep-lem}, we can derive the following estimation 
 for $\| U_n (t,t')^* \xi \|$, whose proof will be omitted since it is very similar to that of Lemma
 \ref{ess-estimate}.
\begin{Lem}\label{ess-estimate^*}
Let Assumption \ref{ass1} and Assumption \ref{ass2} hold. Let $t,t'\in\Real$ and $\xi\in D$. Then, the estimate
\begin{align}\label{estimate4^*}
\| U_n (t,t')^* \xi \| \le \frac{|t-t'| ^n }{n!} C'^n \big( L_\xi + (n-1) b+1\big) ^{1/2}   \cdots \big(  L_\xi +1\big)^{1/2}   \| \xi \| ,
\end{align}
holds for $n=0,1,2,\dots$.
\end{Lem}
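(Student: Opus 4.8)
The plan is to run the same induction as in the proof of Lemma~\ref{ess-estimate}, with $H_1$ replaced everywhere by $H_1^*$ and the bound $C$ of Lemma~\ref{rel-bdd} replaced by the bound $C'$ of Lemma~\ref{rel-bdd2}. The one structural ingredient needed is a recursion for $U_n(t,t')^*$ playing the role that \eqref{rec-of-U_n} played before. Starting from the first integral representation in Lemma~\ref{int-rep-lem} and isolating the outermost integration variable $\tau_1$ --- which carries the \emph{rightmost} operator $H_1(\tau_1)^*$, the one acting first on $\xi$ --- the remaining nested integral over $\tau_2,\dots,\tau_{n+1}$ is, after the relabelling $\sigma_j=\tau_{j+1}$, precisely $U_n(\tau_1,t')^*$ applied to $H_1(\tau_1)^*\xi$. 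This gives
\[ U_{n+1}(t,t')^*\xi = i\int_{t'}^t d\tau\, U_n(\tau,t')^* H_1(\tau)^*\xi ,\q \xi\in D , \]
which is the formal adjoint of \eqref{rec-of-U_n} and is the backbone of the estimate.

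Two quantitative inputs then make the induction mechanical. First, the level bookkeeping: since $A$ and $H_0$ strongly commute (Assumption~\ref{ass1}~(II)), the unitaries $e^{\pm i\tau H_0}$ leave each $V_L$ invariant, so from $H_1(\tau)^*=e^{i\tau H_0}H_1^*e^{-i\tau H_0}$ and Assumption~\ref{ass2}~(II) one gets $H_1(\tau)^*\colon V_M\to V_{M+b'}$; in particular $H_1(\tau)^*\xi\in V_{L_\xi+b'}$, so its level parameter is at most $L_\xi+b'$. Second, the spectral bound $\|(A+1)^{1/2}\psi\|\le (M+1)^{1/2}\|\psi\|$ for $\psi\in V_M$ combined with Lemma~\ref{rel-bdd2} gives $\|H_1(\tau)^*\xi\|\le C'(L_\xi+1)^{1/2}\|\xi\|$. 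With these in hand: the base case $n=0$ is immediate since $U_0(t,t')^*=I$; assuming \eqref{estimate4^*} for $n$, I apply it to the vector $H_1(\tau)^*\xi$ (replacing $L_\xi$ by the larger value $L_\xi+b'$ in the monotone product of square roots), multiply by $\|H_1(\tau)^*\xi\|\le C'(L_\xi+1)^{1/2}\|\xi\|$, and integrate $|\tau-t'|^n/n!$ over $\tau\in[t',t]$ (the case $t<t'$ being entirely symmetric). The integration supplies the factor $1/(n+1)$, and one reads off exactly \eqref{estimate4^*} with $n$ replaced by $n+1$.

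The only genuine care point --- the place where the argument departs from a verbatim copy of Lemma~\ref{ess-estimate} --- is the index bookkeeping used to extract the recursion from Lemma~\ref{int-rep-lem}: in the reversed chain $H_1(\tau_n)^*\cdots H_1(\tau_1)^*$ it is the rightmost factor whose variable is the outermost one, and one must check that the residual simplex $\{t'\le\tau_{n+1}\le\cdots\le\tau_2\le\tau_1\}$ is indexed exactly so as to reproduce $U_n(\tau_1,t')^*$. I also flag that for the adjoint chain the relevant level increment is the constant $b'$ of Assumption~\ref{ass2}~(II) rather than the $b$ of Assumption~\ref{ass1}~(IV); accordingly the $b$ written in \eqref{estimate4^*} is to be read as $b'$ (or as any common upper bound for $b$ and $b'$). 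No resolvent estimates or further analytic input are required: everything reduces to the single uniform bound $C'$ and the closedness of the subspaces $V_L$. As an equivalent route, one may bypass the recursion altogether and bound the integrand of the iterated integral in Lemma~\ref{int-rep-lem} uniformly by $C'^n(L_\xi+(n-1)b'+1)^{1/2}\cdots(L_\xi+1)^{1/2}\|\xi\|$, then multiply by the simplex volume $|t-t'|^n/n!$.
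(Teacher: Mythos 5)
Your proposal is correct and follows essentially the route the paper indicates for this lemma (whose detailed proof it omits as ``very similar'' to that of Lemma \ref{ess-estimate}): an induction on $n$ built from the adjoint recursion extracted from the integral representation of Lemma \ref{int-rep-lem}, the uniform bound $C'$ of Lemma \ref{rel-bdd2}, and the level bookkeeping in the subspaces $V_L$. Your flag about the level increment is also well taken: since the chain consists of factors $H_1(\tau)^*$, the increment is governed by Assumption \ref{ass2} (II) and equals $b'$, so the constant $b$ appearing in \eqref{estimate4^*} should indeed be read as $b'$ (or as any common upper bound of $b$ and $b'$) --- a harmless slip in the paper's statement, since the later application in Lemma \ref{U-estimate^*} only needs the ratio test, for which linear-in-$n$ growth inside the square roots suffices.
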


Once this estimate is obtained, the corresponding statement to Lemma \ref{U-estimate}
is proved:
\begin{Lem}\label{U-estimate^*}
Under Assumptions \ref{ass1}-\ref{ass2}, the following summations converge
uniformly in $(t,t')$ on any compact set in the plane.
\begin{align}
\label{adjoint1} & \sum _{n=0} ^\infty \| U_n (t,t') ^* \xi \| < \infty , \\
\label{adjoint2} & \sum _{n=0} ^\infty \| U_n (t,t') ^* H_1(t) ^* \xi \| < \infty , \\
\label{adjoint3} & \sum _{n=0} ^\infty \| H_1 (t') ^* U_n (t,t') ^* \xi \| < \infty .
\end{align}
\end{Lem}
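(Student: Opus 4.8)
The plan is to copy the proof of Lemma \ref{U-estimate} almost verbatim, feeding in the starred bound of Lemma \ref{ess-estimate^*} wherever Lemma \ref{ess-estimate} was used, and to reduce the two weighted series \eqref{adjoint2} and \eqref{adjoint3} to the bare series \eqref{adjoint1} applied to suitably shifted vectors. For \eqref{adjoint1} I would sum the estimate of Lemma \ref{ess-estimate^*} directly: writing $a_n(t,t')$ for the $n$-th majorant term $\frac{|t-t'|^n}{n!}C'^n(L_\xi+(n-1)b+1)^{1/2}\cdots(L_\xi+1)^{1/2}\|\xi\|$, one checks that $|a_{n+1}(t,t')/a_n(t,t')| = |t-t'|\,C'\,(L_\xi+nb+1)^{1/2}/(n+1)\to 0$, and that this convergence to $0$ is uniform for $(t,t')$ in any compact $K\subset\R^2$ because $|t-t'|$ is bounded on $K$. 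D'Alembert's ratio test then gives \eqref{adjoint1} with uniform convergence on $K$, exactly as in Lemma \ref{U-estimate}.

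For \eqref{adjoint2} the key observation is that $H_1(t)^*\xi$ again lies in $D$, so Lemma \ref{ess-estimate^*} applies to it. Indeed, Assumption \ref{ass2} (II) together with the strong commutativity of $A$ and $H_0$ (Assumption \ref{ass1} (II)) yields $H_1(t)^*\xi\in V_{L_\xi+b'}$ whenever $\xi\in V_{L_\xi}$, since $e^{-itH_0}$ preserves each $V_L$ and $e^{itH_0}$ preserves $V_{L+b'}$; hence $L_{H_1(t)^*\xi}\le L_\xi+b'$. Moreover Lemma \ref{rel-bdd2} bounds $\|H_1(t)^*\xi\|\le C'\|(A+1)^{1/2}\xi\|\le C'(L_\xi+1)^{1/2}\|\xi\|$ uniformly in $t$. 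Applying Lemma \ref{ess-estimate^*} to the vector $H_1(t)^*\xi$ and summing then produces a majorant of the same $1/n!$ type, with $L_\xi$ shifted to $L_\xi+b'$ and multiplied by the single uniformly bounded factor $C'(L_\xi+1)^{1/2}$, so the ratio test again closes the argument uniformly on $K$.

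For \eqref{adjoint3} I would instead push the outer $H_1(t')^*$ onto the $A$-weight. The integral representation \eqref{U_n^*-int-rep} of Lemma \ref{int-rep-lem} shows $U_n(t,t')^*\xi\in V_{L_\xi+nb'}$ (each factor $H_1^*$ raises the $A$-level by at most $b'$, and $V_{L_\xi+nb'}$ is closed under the Bochner integral), so by Lemma \ref{rel-bdd2},
\[ \|H_1(t')^* U_n(t,t')^*\xi\|\le C'\|(A+1)^{1/2}U_n(t,t')^*\xi\|\le C'(L_\xi+nb'+1)^{1/2}\|U_n(t,t')^*\xi\|, \]
after which Lemma \ref{ess-estimate^*} bounds the last norm. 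The resulting majorant differs from that of \eqref{adjoint1} only by the polynomially growing prefactor $C'(L_\xi+nb'+1)^{1/2}$, which leaves the limit of the ratio $|a_{n+1}/a_n|$ unchanged at $0$, so \eqref{adjoint3} likewise converges uniformly on compact sets.

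The argument is genuinely routine once Lemma \ref{ess-estimate^*} is available; there is no obstacle here comparable to the one met in constructing the $U_n$ themselves. The only points demanding care are the $A$-level bookkeeping for the starred operators---namely that $H_1(t)^*$ maps $V_L$ into $V_{L+b'}$, and consequently $U_n(t,t')^*\xi\in V_{L_\xi+nb'}$---which I would draw from Assumption \ref{ass2} (II), Assumption \ref{ass1} (II), and the integral representation of Lemma \ref{int-rep-lem}, together with the observation that all the bounding factors entering the ratio test are uniform in $t,t'$ over any compact set.
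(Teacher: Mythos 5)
Your proposal is correct and follows exactly the route the paper intends: the paper gives no explicit proof of this lemma, simply remarking that once Lemma \ref{ess-estimate^*} is available the statement follows as in Lemma \ref{U-estimate} (whose own proof handles only the first series by the ratio test and omits the weighted ones), and your write-up supplies precisely those omitted details --- the ratio test on the majorant of Lemma \ref{ess-estimate^*} plus the $A$-level bookkeeping showing $H_1(t)^*\xi\in V_{L_\xi+b'}$ and $U_n(t,t')^*\xi\in V_{L_\xi+nb'}$. The only cosmetic difference is that you carry the constant $b'$ through the starred estimates where the paper's statement of Lemma \ref{ess-estimate^*} writes $b$, which is if anything the more careful reading.
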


\begin{proof}[Proof of Theorem \ref{adjoint}] 
From Lemma \ref{U-estimate^*} \eqref{adjoint1}, one finds
\begin{align}
\sum_{n=0}^N U_n(t,t')^*\xi,\q \xi\in D
\end{align}
absolutely converges uniformly in $(t,t')$ on any compact set.
For all $ \xi , \eta \in D $, we obtain
\begin{align}
\left\langle \eta , U(t,t') \xi \right\rangle & = \sum _{n=0} ^\infty \left\langle \eta , U_n (t,t') \xi \right\rangle \no \\
& = \sum _{n=0} ^\infty \left\langle U_n (t,t') ^* \eta ,  \xi \right\rangle \no \\
& = \left\langle \sum _{n=0} ^\infty U_n (t,t') ^* \eta ,  \xi \right\rangle ,
\end{align}
since $ \eta \in D (U_n(t,t') ^*) $ for all $ n $. Thus, we obtain $ D \subset D(U(t,t') ^*) $ and
\begin{align}\label{expansion-of-U^*}
U(t,t') ^* \xi = \sum _{n=0} ^\infty U_n (t,t') ^* \xi , \q \xi \in D .
\end{align}

From \eqref{expansion-of-U^*}, we can mimic the proof of Theorem \ref{main-thm1} by
using \eqref{adjoint2} and \eqref{adjoint3}, to
obtain \eqref{adjoint4} and \eqref{adjoint5}.
\end{proof}

\section{Properties of time evolution operator}\label{properties}
In the present section, we prove several properties of the time evolution operator $U(t,t')$, and prove Theorems \ref{main-thm2} and
\ref{uniqueness}.

\begin{Prop} Suppose that Assumptions \ref{ass1} and \ref{ass2} hold. If $ \xi \in D $, then for each $ t,t', s,s' \in \Real $, $ U(s,s') \xi \in D(\overline{U(t,t')}) $ and
\begin{align}\label{product}
\overline{U(t,t')} U(s,s') \xi = \sum _{m,n =0} ^\infty U_m (t,t') U_n (s,s') \xi ,
\end{align}
where the right hand side converges absolutely, and does not depend upon the summation order.
\end{Prop}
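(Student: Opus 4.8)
The plan is to produce an explicit approximating sequence inside $D$ that witnesses membership of $U(s,s')\xi$ in $D(\overline{U(t,t')})$, thereby reducing the whole statement to a single double-series estimate. Fix $\xi\in D$ and set $\psi_k:=\sum_{n=0}^k U_n(s,s')\xi$. By Lemma~\ref{well-def-U_n}(iii) each $U_n(s,s')\xi\in V_{L_\xi+nb}\subset D$, so $\psi_k\in D=D(U(t,t'))$, and Theorem~\ref{main-thm1} gives $\psi_k\to U(s,s')\xi$. Since $U(t,t')$ is linear on $D$ and acts on each vector of $D$ by its defining series, $U(t,t')\psi_k=\sum_{n=0}^k\sum_{m=0}^\infty U_m(t,t')U_n(s,s')\xi$. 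Thus everything follows once the double series $\sum_{m,n}U_m(t,t')U_n(s,s')\xi$ is shown to converge absolutely.

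The heart of the argument is the norm bound. Because $\eta:=U_n(s,s')\xi$ lies in $V_{L_\xi+nb}$, I would apply Lemma~\ref{ess-estimate} to $U_m(t,t')\eta$ at the level $L_\eta\le L_\xi+nb$ and combine it with the Lemma~\ref{ess-estimate} bound for $\|U_n(s,s')\xi\|$. As the factors $(L+jb+1)^{1/2}$ increase in $L$, the product generated by $U_m$ is dominated by $\prod_{j=0}^{m-1}(L_\xi+(n+j)b+1)^{1/2}$, which telescopes with the product $\prod_{j=0}^{n-1}(L_\xi+jb+1)^{1/2}$ coming from $U_n$ into the single product $\prod_{i=0}^{m+n-1}(L_\xi+ib+1)^{1/2}$. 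This gives the target estimate
\begin{align*}
\|U_m(t,t')U_n(s,s')\xi\|\le \frac{|t-t'|^m}{m!}\,\frac{|s-s'|^n}{n!}\,C^{m+n}\prod_{i=0}^{m+n-1}(L_\xi+ib+1)^{1/2}\|\xi\|.
\end{align*}

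To sum it, I would group by $k=m+n$ and apply the binomial identity $\sum_{m+n=k}\frac{|t-t'|^m}{m!}\frac{|s-s'|^n}{n!}=\frac{(|t-t'|+|s-s'|)^k}{k!}$. The double sum then collapses to exactly the single series controlled in Lemma~\ref{U-estimate}, with $|t-t'|$ replaced by $|t-t'|+|s-s'|$, and converges by the same d'Alembert ratio test; absolute convergence also delivers independence of the summation order. Finally, absolute convergence of $\sum_{m,n}\|U_m(t,t')U_n(s,s')\xi\|$ forces $\sum_n U(t,t')U_n(s,s')\xi$ to converge, so $U(t,t')\psi_k$ converges as $k\to\infty$ while $\psi_k\to U(s,s')\xi$; since $U(t,t')$ is closable by Theorem~\ref{adjoint}, this places $U(s,s')\xi$ in $D(\overline{U(t,t')})$ and identifies $\overline{U(t,t')}U(s,s')\xi$ with the double sum, proving \eqref{product}. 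The main obstacle is the estimate itself: one must see that the level shift $L_\xi\mapsto L_\xi+nb$ induced by $U_n(s,s')\xi$ telescopes cleanly against the $U_m$-product, so that the double series is majorized by a single exponential-type series rather than by a genuinely two-parameter quantity.
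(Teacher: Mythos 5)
Your proposal is correct and follows essentially the same route as the paper: approximate $U(s,s')\xi$ by the partial sums, reduce everything to absolute convergence of the double series, derive the bound $\|U_m(t,t')U_n(s,s')\xi\|\le \frac{|t-t'|^m}{m!}\frac{|s-s'|^n}{n!}C^{m+n}\prod_{i=0}^{m+n-1}(L_\xi+ib+1)^{1/2}\|\xi\|$ from Lemma~\ref{ess-estimate} via the level shift $V_{L_\xi}\to V_{L_\xi+nb}$, regroup by $m+n$ with the binomial identity, apply the ratio test, and invoke closability (Theorem~\ref{adjoint}) to conclude. Your write-up actually makes explicit the telescoping of the level products that the paper compresses into the phrase ``By using \eqref{estimate4}'', but the argument is the same.
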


\begin{proof}
For all $ \xi \in D $ and all $(t,t'),(s,s')\in\Real^2$, it is clear that $ S_n (s,s') \xi \in D(U (t,t')) $. 
Since $S_n(s,s')\xi$ converges to $U(t,t')\xi$ as $n$ tends to infinity, it suffices to prove
that
$U(t,t')S_n(s,s')\xi$ converges as $n\to\infty$.
We have already know that
\[ U(t,t')S_n(s,s')\xi =\sum _{m=0} ^\infty \sum _{j=0} ^n  U_m (t,t') U_j (s,s') \xi , \]
therefore, it is sufficient to derive
\[ \sum _{m=0} ^\infty \sum _{j=0} ^\infty \| U_m (t,t') U_j (s,s') \xi \| <\infty . \] 
By using \eqref{estimate4},
\begin{align}
&\sum _{m=0} ^\infty \sum _{j=0} ^\infty \| U_m (t,t') U_j (s,s') \xi \|\no\\
&\quad \le \sum _{m=0} ^\infty \sum _{j=0} ^\infty \frac{|t-t'| ^m |s-s'| ^j}{m!j!} C^{m+j}(L_\xi +(m+j-1) b+1)^{1/2}  \cdots ( L_\xi  + 1 ) ^{1/2}\| \xi \| \no\\
&\quad = \sum_{N=0}^\infty \sum_{m=0}^N 
\frac{|t-t'| ^m |s-s'| ^{N-m}}{m!(N-m)!} C^{N}(L_\xi +(N-1) b+1)^{1/2}  \cdots ( L_\xi  + 1 ) ^{1/2}\| \xi \|\no\\
&\quad = \sum_{N=0}^\infty \frac{1}{N!}
\left(C(|t-t'| + |s-s'|)\right)^N (L_\xi +(N-1) b+1)^{1/2}  \cdots ( L_\xi  + 1 ) ^{1/2}\| \xi \|
\end{align}
From the d'Alembert's ratio test, this is finite, which proves \eqref{product}.
\end{proof}

\begin{proof}[Proof of Theorem \ref{main-thm2}] 
We first prove (i). Take arbitrary $\xi,\eta\in D$. $ U(t,t) \xi =\xi $ is obvious. By Theorems \ref{main-thm1} and \ref{adjoint}, the function $ t' \mapsto \Expect{ \eta , \overline{U(t,t')} U(t' ,t'') \xi } =\Expect{ U(t,t')^*\eta , U(t' ,t'') \xi}$ is differentiable and
\begin{align}
& \frac{\partial }{\partial t'} \left\langle \eta , \overline{U(t,t')} U(t' , t'') \xi \right\rangle \no \\
 &\quad = \left\langle -i H_1 (t' ) ^* U (t,t') ^* \eta , U(t' ,t'') \xi \right\rangle + \left\langle U (t,t') ^* \eta , -i H_1 (t' ) U(t' ,t'') \xi \right\rangle\no  \\
&\quad =  0 .
\end{align}
Thus $\Expect{ \eta , \overline{U(t,t')} U(t' ,t'') \xi }$ is independent of $t'$, which implies
\begin{align}
\left\langle \eta , \overline{U(t,t')} U(t' , t'') \xi \right\rangle &= \left\langle \eta , \overline{U(t,t'')} U(t'',t'') \xi \right\rangle \no\\
 &= \left\langle \eta , U(t,t'') \xi \right\rangle .
\end{align}
Since $ \eta \in D $ is arbitrary, it follows that 
\begin{align}
\overline{U(t,t')} U(t' , t'') \xi=U(t,t'') \xi.
\end{align}
Hence, we obtain \eqref{associative} because $\xi\in D$ is arbitrary and 
$D=D(\overline{U(t,t')} U(t' , t''))$.

Next, we prove (ii). Observe $ e^{isH_0} H_1(t) e^{-isH_0} = H_1(t +s) $ by definition. Suppose $ t \ge t' $. Then, for each $ n \in \Natural , \, \xi \in D $,
we obtain
\begin{align}
& e^{isH_0} U_n(t,t') e^{-isH_0} \xi \no \\
= & \int_{t\ge \tau_1\ge \dots \ge \tau_n \ge t'} d^n \tau \, H_1(\tau_1+s ) \dots H_1(\tau_n +s )\xi \no \\
= & \int_{t+s\ge \tau_1\ge \dots \ge \tau_n \ge t'+s} d^n \tau \, H_1(\tau_1 ) \dots H_1(\tau_n  )\xi \no \\
\label{parallel1} = & U_n (t+s, t'+s) \xi .
\end{align}
The relation \eqref{parallel1} remains valid in the case where $ t<t' $. Thus we have for all $ (t,t') \in \R ^2 $,
\begin{align*}
e^{isH_0} U(t,t') e^{-isH_0} \xi & = \sum _{n=0} ^\infty e^{isH_0} U_n(t,t') e^{-isH_0} \xi \\
& = \sum _{n=0} ^\infty U_n(t+s, t'+s) \xi \\
& = U(t+s, t'+s) \xi .
\end{align*}
Since $ D $ is common core of $ \overline{U(t,t')} $ and $ \overline{U(t+s, t'+s)} $, we obtain the desired result.
\end{proof}

\begin{proof}[Proof of Theorem \ref{uniqueness}] 
We first prove that under the present situation, Assumption \ref{ass1} implies Assumption \ref{ass2}.
Let $H_1$ be symmetric and Assumption \ref{ass1} hold.
Then, by the fact that $H_1$ is symmetric and $A^{1/2}$- bounded, we find 
\begin{align}
D(A^{1/2})\subset D(H_1) \subset D(H_1^*),  
\end{align}
which means that $H_1^*$ is also $A^{1/2}$- bounded. Moreover, for each $\xi\in V_L$, one finds
\begin{align}
H_1^* \xi = H_1\xi \in V_{L+b}.
\end{align}
Thus, Assumption \ref{ass2} is satisfied.

Next, we prove the unitarity. Since $H_1$ is symmetric, 
one obtains for all $\xi \in D$
\begin{align}
\frac{\partial}{\partial t} \| U(t,t') \xi \| ^2 = & \left\langle -i H_1 (t) U(t,t') \xi , U(t,t') \xi \right\rangle + \left\langle U(t,t') \xi , -i H_1 (t) U(t,t') \xi \right\rangle\no \\
= & 0 .
\end{align}
Therefore, $ \overline{U(t,t')} $ is isometry, in particular, bounded. By using Theorem \ref{main-thm2}, one finds the operator equality
\begin{align}
\overline{U(t,t')} \; \overline{U(t',t)} = I , \q t,t' \in \R ,
\end{align}
which implies that $ U(t,t') $ is surjective. Hence, it is unitary.

The statement (i) is directly follows from Theorem \ref{main-thm2}. 

We prove (ii). For each $ \xi \in D, \,  \eta \in \widetilde{D} $ and $ t\in \R $, we have
\begin{align}
\frac{\partial}{\partial t}  \left\langle V (t,t') \eta , U(t,t') \xi \right\rangle & = \left\langle -iH_1 (t) V (t,t') \eta , U(t,t') \xi \right\rangle + \left\langle V (t,t') \eta , -iH_1 (t) U(t,t') \xi \right\rangle \no \\
& = 0 .
\end{align}
Thus, we obtain
\begin{align}\label{additional7}
\left\langle V (t,t') \eta , U(t,t') \xi \right\rangle = \left\langle \eta , \xi \right\rangle , \q \xi \in D, \; \eta \in \widetilde{D} . 
\end{align}
Since $ D $ is dense in $\H$ and since $\overline{U(t,t')}$ is unitary and satisfies $\overline{U(t,t')}\,\overline{U(t',t)}=I$, \eqref{additional7} yields for all $\eta\in \widetilde{D}$, 
\begin{align}\label{above}
 V (t,t') \eta = \overline{U(t,t')} \eta ,\quad \eta\in \widetilde{D}.
 \end{align}
 Suppose that a sequence $\{\eta_n\}_n\subset \widetilde{D}$ satisfies that $\eta_n \to 0$ as $n$ tends to infinity. 
 Then, \eqref{above} shows that $V(t,t')\eta_n$ converges to $0$,
 which means that $V(t,t')$ is closable.
 Take arbitrary $\psi\in\H$. Then, there is a sequence $\{\eta_n\}_n$ which converges to $\psi$ as $n\to\infty$, 
since $ \widetilde{D} $ is dense in $\H$. Then, \eqref{above} implies $\psi\in D(\overline{V(t,t')\upharpoonright \widetilde{D}})$ and
$ \overline{V (t,t')\upharpoonright \widetilde{D}}\psi = \overline{U(t,t')} \psi$ for all $\psi\in \H$.
\end{proof}

\section{Schr\"odinger and Heisenberg equations of motion}\label{solutions}
In this section, we construct solutions of the Schr\"odinger and Heisenberg equations of motion
via the time evolution operator $U(t,t')$, and prove Theorems \ref{sch-existence} and \ref{scheq-and-heieq}.
Throughout this section, we use Assumptions \ref{ass1} and \ref{ass2}.
Hereafter, we denote the closure of $ U(t,t') $ by the same symbol. 
Recall that $ W(t) = e^{-itH_0} U(t,0) , \, t\in \R $. Put 
\[ D':=D\cap D(H_0). \]
We remark that
$D'$ is dense in $\H$ under Assumption \ref{ass1} (I) and (II). This can be seen as follows.
Let $\psi\in D(H_0)$ and, for $n\in\Natural$,
\[ \psi_n:=E_A([0,n])\psi. \]
Then $\psi\in D$. By Assumption \ref{ass1} (II), $\psi_n\in D(H_0)$. Hence, $\psi_n\in D'$.
It is clear that $\psi_n\to\psi$ as $n$ tends to infinity. Thus $D'$ is dense in $D(H_0)$ and
then also in $\H$.

\begin{proof}[Proof of Theorem \ref{sch-existence}] For all $ \eta \in D(H_0) $, 
\begin{align}\label{sch1}
\frac{d}{dt} \left\langle \eta , W(t) \xi \right\rangle & = \frac{d}{dt} \left\langle e^{itH_0} \eta , U(t,0) \xi \right\rangle \no \\
& = \left\langle iH_0 \eta , W(t) \xi \right\rangle + \left\langle \eta , -iH_1 W(t) \xi \right\rangle .  
\end{align}
By Theorem \ref{main-thm2} (ii), $ W(t) $ can be rewritten as $ U(0, -t) e^{-itH_0} $. Since, for all $ \xi \in D' $, the functions $ e^{-itH_0} \xi $ and $ U(0,-t) \xi $ are strongly differentiable and since $ H_0 D' \subset D $, it follows that the  function $ W(t) \xi  $ is also strongly differentiable and the derivative becomes
\begin{align}\label{sch2}
\frac{d}{dt} W(t) \xi & = U(0, -t) \big( -iH_1(-t) - iH_0 \big) e^{-itH_0} \xi \no \\
& = W(t) (-iH) \xi .
\end{align}
Hence, by \eqref{sch1} and \eqref{sch2}, we have
\begin{align}
\left\langle iH_0 \eta , W(t) \xi \right\rangle + \left\langle \eta , -iH_1 W(t) \xi \right\rangle = \left\langle \eta , W(t) (-iH) \xi \right\rangle , \q \eta \in D(H_0) ,
\end{align}
which implies that $ W(t) \xi \in D(H_0) $
\begin{align} \label{WH}
-i W(t) H\xi = -iH W(t) \xi 
\end{align}
Therefore, we obtain \eqref{sch30}.
\end{proof}

\begin{proof}[Proof of Theorem \ref{Arai-additional}]. By Theorems \ref{main-thm1} and \ref{uniqueness}, for all $t\in \R$, $W(t)$ is unitary with $W(0)=I$ and strongly continuous in $t\in \R$. 
In the present case, Assumption \ref{ass2} holds too. Hence, by Theorem \ref{main-thm2} (ii), we have
$$
W(t)W(s)=W(t+s), \quad s,t\in \R.
$$ 
Thus $\{W(t)\}_{t\in \R}$ is a strongly continuous one-parameter unitary  group.
Hence, by Stone's theorem, 
the first statement  of the theorem holds. 

By (\ref{Wt}), we have for all $t\in \R$
$$
\overline{U(t,0)}=e^{itH_0}e^{-it\widetilde H}.
$$
By this equation and Theorem \ref{uniqueness} (i), we obtain (\ref{Ut}).

It follows from Theorem \ref{sch-existence} that $D'=D\cap D(H_0)\subset D(\widetilde H)$ and $H\xi=\widetilde H\xi$, $\xi\in D '$. Hence (\ref{H}) follows. 

If $H$ is essentially self-adjoint on the subspace $D'$, then one finds
\[ \overline{H}\subset \widetilde{H}. \]
But since both $\overline{H}$ and $\widetilde{H}$ are self-adjoint, we have the equality. 
\end{proof}

Next, we prepare some lemmas to prove Theorem \ref{scheq-and-heieq}.

\begin{Lem}\label{rel-bdd^*} Under Assumption \ref{ass4}, $ B (A+1)^{-1/2} $ and $ B^* (A+1)^{-1/2} $ are bounded and there exists a constant $ C_0 \ge 0 $ such that 
\begin{align}\label{Bestimate}
\| B (A+1)^{-1/2} \| , \, \| B^* (A+1)^{-1/2} \| \le C_0  . 
\end{align}
\end{Lem}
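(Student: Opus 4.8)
The plan is to mimic the proof of Lemma \ref{rel-bdd}, replacing $H_1(t)$ first by $B$ and then by $B^*$. The only input I need from Assumption \ref{ass4} is part (I): that $B$ and $B^*$ are $A^{1/2}$-bounded. Concretely, there exist constants $c_0, c_1 \ge 0$ with
\[ \|B\xi\| \le c_0\|A^{1/2}\xi\| + c_1\|\xi\|, \quad \|B^*\xi\| \le c_0\|A^{1/2}\xi\| + c_1\|\xi\|, \quad \xi \in D(A^{1/2}), \]
where I have absorbed both pairs of relative bounds into a single pair $(c_0,c_1)$ by taking the larger constants; this is legitimate because the $A^{1/2}$-boundedness guarantees the domain inclusion $D(A^{1/2}) \subset D(B) \cap D(B^*)$, so the same test vectors $\xi$ may be used for both operators.

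Next I would convert these relative bounds into bounds by $(A+1)^{1/2}$, exactly as in Lemma \ref{rel-bdd}. Using the identity $\|(A+1)^{1/2}\xi\|^2 = \|A^{1/2}\xi\|^2 + \|\xi\|^2$ together with the elementary inequality $(a+b)^2 \le 2a^2 + 2b^2$, I obtain $\|A^{1/2}\xi\| + \|\xi\| \le \sqrt{2}\,\|(A+1)^{1/2}\xi\|$. Hence, setting $C_0 := \sqrt{2}\,\max(c_0,c_1)$, I get $\|B\xi\| \le C_0\|(A+1)^{1/2}\xi\|$ and $\|B^*\xi\| \le C_0\|(A+1)^{1/2}\xi\|$ for all $\xi \in D(A^{1/2})$.

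Finally, I would pass to the operators $B(A+1)^{-1/2}$ and $B^*(A+1)^{-1/2}$. Since $A$ is self-adjoint and non-negative, $(A+1)^{-1/2}$ is bounded with $\Ran((A+1)^{-1/2}) = D((A+1)^{1/2}) = D(A^{1/2})$, and $(A+1)^{1/2}(A+1)^{-1/2} = I$ on $\H$. Substituting $\xi = (A+1)^{-1/2}\psi$ for arbitrary $\psi \in \H$ into the two estimates of the previous step then yields $\|B(A+1)^{-1/2}\psi\| \le C_0\|\psi\|$ and likewise $\|B^*(A+1)^{-1/2}\psi\| \le C_0\|\psi\|$, which gives the asserted bound \eqref{Bestimate}.

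There is no genuine obstacle in this lemma; it is a routine relative-bound computation identical in structure to Lemma \ref{rel-bdd}. The only points demanding a little care are bookkeeping ones: invoking the inclusion $D(A^{1/2}) \subset D(B) \cap D(B^*)$ so that a single constant $C_0$ serves both $B$ and $B^*$, and noting that Assumption \ref{ass4} (II)---the $V_L$-shift property---plays no role here and is reserved for the later construction of the Heisenberg operator $B(t)$.
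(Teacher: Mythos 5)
Your proof is correct and follows essentially the same route as the paper, which simply proves this lemma ``in the same way as Lemma \ref{rel-bdd}'': use the $A^{1/2}$-bounds from Assumption \ref{ass4} (I), convert them to $(A+1)^{1/2}$-bounds via $\|(A+1)^{1/2}\xi\|^2=\|A^{1/2}\xi\|^2+\|\xi\|^2$ and $(a+b)^2\le 2a^2+2b^2$, and then substitute $\xi=(A+1)^{-1/2}\psi$. Your bookkeeping remarks (a single constant $C_0$ via taking maxima, and the fact that Assumption \ref{ass4} (II) is not needed here) are also accurate.
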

\begin{proof}
This can be proved in the same way as Lemma \ref{rel-bdd}
\end{proof}

\begin{Lem}\label{heiLem} Under Assumptions \ref{ass1}, \ref{ass2} and \ref{ass4}, the followings hold.
\begin{enumerate}[(i)]
\item For all $ \xi \in D \cap D(H_0) $, the function $ W(t) ^* \xi $ is strongly differentiable and satisfies
\begin{align}\label{sch8}
\frac{d}{dt} W(t) ^* \xi = iH^* W(t) ^* \xi = i W(t) ^* H^* \xi .
\end{align}

\item $ D \subset D(W(-t) B W(t)) $.

\item $ D \subset D(W(t) ^* B^* W(-t) ^*) $ and 
\begin{align}\label{B*}
B(t)^* \xi = W(t) ^* B^* W(-t) ^* \xi , \q \xi \in D ,
\end{align}
hold.

\item For all $ \xi \in D' $, the function $ B W(t) \xi $ is strongly differentiable and satisfies
\begin{align}\label{BWdiff}
\frac{d}{dt} B W(t) \xi = -i BW(t) H \xi .
\end{align}
\end{enumerate}
\end{Lem}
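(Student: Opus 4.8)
The argument rests on one preparatory estimate, which I would establish first. For $\xi\in D$ the series $\overline{U(t,0)}\xi=\sum_n U_n(t,0)\xi$ converges not only in $\H$ but in the graph norm of $A^{1/2}$: since $U_n(t,0)\xi\in V_{L_\xi+nb}$ (Lemma \ref{well-def-U_n}) one has $\|A^{1/2}U_n(t,0)\xi\|\le(L_\xi+nb+1)^{1/2}\|U_n(t,0)\xi\|$, and Lemma \ref{ess-estimate} together with the ratio test makes $\sum_n\|A^{1/2}U_n(t,0)\xi\|$ converge, uniformly on compact $t$-intervals. As $A^{1/2}$ is closed this yields $\overline{U(t,0)}\xi\in D(A^{1/2})$ with $t\mapsto A^{1/2}\overline{U(t,0)}\xi$ strongly continuous, and since $e^{-itH_0}$ commutes with $A$ the same holds for $W(t)$; in particular $W(t)D\subset D(A^{1/2})\subset D(B)\cap D(B^*)$. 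I would likewise record that $H_0$ and $H_1$ carry $V_L$ into $V_L$ and $V_{L+b}$ respectively (Assumption \ref{ass1} (II),(IV)), so that $H\xi\in D$ for $\xi\in D'$, and that $H^*\xi=(H_0+H_1^*)\xi$ on $D'$.

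The differentiation statements (i) and (iv) then follow by standard manipulations. For (i) I would differentiate $W(t)^*\xi=U(t,0)^*e^{itH_0}\xi$ by the Leibniz rule, which is legitimate because $\partial_t U(t,0)^*\psi=iU(t,0)^*H_1(t)^*\psi$ (Theorem \ref{adjoint}) and $U(t,0)^*$ is strongly continuous on $D$; using $H_1(t)^*e^{itH_0}\xi=e^{itH_0}H_1^*\xi$ this gives $\frac{d}{dt}W(t)^*\xi=iW(t)^*H^*\xi$. For the remaining equality I would note that under Assumptions \ref{ass1}-\ref{ass2} the pair $(H_1^*,H_1)$ satisfies the same hypotheses, so the whole construction applies to $H^*=H_0+H_1^*$ and produces a full evolution $\tilde W$; Lemma \ref{int-rep-lem} identifies $U(t,0)^*$ with the dual evolution and Theorem \ref{main-thm2} (ii) gives $W(t)^*=\tilde W(-t)$, whence the dual of Theorem \ref{sch-existence} supplies $\frac{d}{dt}W(t)^*\xi=iH^*W(t)^*\xi$ as well. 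For (iv), the proof of Theorem \ref{sch-existence} gives, for $\xi\in D'$, that $W(t)\xi$ is strongly $C^1$ with derivative $-iW(t)H\xi$ and that $H\xi\in D$; by the preparatory estimate $W(t)\xi,\,W(t)H\xi\in D(A^{1/2})\subset D(B)$ and $t\mapsto BW(t)H\xi$ is continuous, so the closed operator $B$ may be pulled through the derivative to yield $\frac{d}{dt}BW(t)\xi=-iBW(t)H\xi$.

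The domain statements (ii) and (iii) I would prove by approximation and closedness. Setting $\psi_N=e^{-itH_0}S_N(t,0)\xi\in D$, we have $\psi_N\to W(t)\xi$ and $A^{1/2}\psi_N\to A^{1/2}W(t)\xi$, so $B\psi_N\to BW(t)\xi$ because $B(A+1)^{-1/2}$ is bounded (Lemma \ref{rel-bdd^*}); moreover $B\psi_N\in D$ since $B$ raises the $A$-level by at most $b_0$ (Assumption \ref{ass4} (II)), so $W(-t)B\psi_N$ is defined. I would then show $\{W(-t)B\psi_N\}_N$ is Cauchy by bounding the double series $\sum_{n,m}\|U_m(-t,0)Be^{-itH_0}U_n(t,0)\xi\|$: Lemma \ref{ess-estimate} and Lemma \ref{rel-bdd^*} bound its $(n,m)$-term by $\frac{(|t|C)^{n+m}}{n!\,m!}C_0(L_\xi+(n+m)b+b_0+1)^{(n+m+1)/2}\|\xi\|$, and collecting $n+m=N$ gives $\sum_N\frac{(2|t|C)^N}{N!}(L_\xi+Nb+b_0+1)^{(N+1)/2}\|\xi\|<\infty$. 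Closedness of $W(-t)=e^{itH_0}\overline{U(-t,0)}$ then places $BW(t)\xi$ in $D(W(-t))$, which is (ii). Part (iii) is this same statement for $(B^*,\tilde W(-t))$ in place of $(B,W(t))$; the identity $B(t)^*\xi=W(t)^*B^*W(-t)^*\xi$ follows by pairing with $\eta\in D$ and moving each factor across the inner product via the adjoint relations just established, using $\langle W(t)^*B^*W(-t)^*\xi,\eta\rangle=\langle\xi,W(-t)BW(t)\eta\rangle=\langle\xi,B(t)\eta\rangle$.

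The main obstacle is precisely the convergence in (ii) and (iii): one must verify that inserting $B$ between the two evolution series and then applying the generally unbounded closure $\overline{U(-t,0)}$ still produces a Cauchy sequence. Structurally this repeats the mechanism of the product-formula Proposition in Section \ref{properties}: each additional factor of $H_1$ or $B$ raises the $A$-level only linearly and so contributes a single power $(\text{level})^{1/2}$ to the norm, whence the product of order $N$ such square roots grows like $(Nb)^{N/2}$, which is defeated by the $N!$ coming from the nested time integrals. Confirming that this survives the extra factor $B$, and tracking the domain memberships ($W(t)D\subset D(A^{1/2})$, $H\xi\in D$ on $D'$, and $H^*\xi=(H_0+H_1^*)\xi$) that justify every passage, is where the real work lies; given those, (i) and (iv) are routine consequences of Theorems \ref{adjoint}, \ref{main-thm2}, and \ref{sch-existence}.
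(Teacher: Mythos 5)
Your proposal is correct, and on the two substantive domain statements (ii)--(iii) it is the same argument as the paper's: approximate $W(t)\xi$ by $e^{-itH_0}S_N(t,0)\xi$, dominate the double series $\sum_{m,n}\|U_m(-t,0)Be^{-itH_0}U_n(t,0)\xi\|$ by level counting (your uniform bound $(L_\xi+Nb+b_0+1)^{(N+1)/2}$ with the factor $2^N/N!$ from collecting $m+n=N$ is a cruder but equivalent form of the paper's display following \eqref{BeS}), and finish with closedness of $W(-t)$, respectively the general adjoint inclusion $B(t)^*\supset W(t)^*B^*W(-t)^*$ for (iii). Where you genuinely depart from the paper is in (i) and (iv). In (i), the paper obtains the equality $iH^*W(t)^*\xi=iW(t)^*H^*\xi$ by comparing the weak derivative $\frac{d}{dt}\left\langle \eta, W(t)^*\xi\right\rangle=\left\langle -iH\eta,W(t)^*\xi\right\rangle$, $\eta\in D(H)$, against the strong derivative \eqref{sch7}, reading off $W(t)^*\xi\in D(H^*)$; you instead exploit the symmetry of Assumptions \ref{ass1}--\ref{ass2} under $H_1\leftrightarrow H_1^*$, construct the dual evolution $\tilde{W}$ generated by $H_0+H_1^*$, identify $W(t)^*=\tilde{W}(-t)$ on $D$ through Lemma \ref{int-rep-lem}, and apply Theorem \ref{sch-existence} to that dual system --- a cleaner structural route which even yields the stronger membership $W(t)^*\xi\in D(H_0)\cap D(H_1^*)$ rather than just $D(H^*)$. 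In (iv), the paper reruns the series machinery (term-by-term differentiation of $BS_n(0,-t)e^{-itH_0}\xi$ via \eqref{t'diff} and uniform convergence of the derivatives), whereas you reuse the already-established identity $\frac{d}{dt}W(t)\xi=-iW(t)H\xi$ with $H\xi\in D$ and pass the closed operator $B$ through the derivative; note that this step must be implemented via the integral representation $W(t+h)\xi-W(t)\xi=-i\int_t^{t+h}W(s)H\xi\,ds$ together with the continuity of $s\mapsto BW(s)H\xi$ supplied by your preparatory graph-norm estimate (essentially Theorem \ref{App1} with $\alpha=1$), since closedness alone does not commute with difference quotients. Similarly, your Leibniz step for $U(t,0)^*e^{itH_0}\xi$ tacitly requires the local uniform boundedness of $U(s,0)^*$ on $V_{L_\xi}$, available from Lemma \ref{ess-estimate^*}, not merely pointwise strong continuity --- but this is exactly the level of gloss the paper permits itself at \eqref{sch2} and \eqref{sch7}, so neither point is a gap.
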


\begin{proof} 
\begin{enumerate}[(i)]
\item Note that $ W(t) ^* $ can be rewritten as 
\begin{align}
W(t)^* = U(t,0) ^* e^{itH_0} = e^{itH_0} U(0,-t) ^* 
\end{align}
since $ e^{itH_0} $ is unitary. By using Theorem \ref{adjoint}, for all $ \eta \in D(H) $ and $ \xi \in D' $, we have
\begin{align}\label{sch6}
\frac{d}{dt} \left\langle \eta , W(t) ^* \xi \right\rangle = \left\langle -iH \eta , W(t) ^* \xi \right\rangle .
\end{align}
On the other hand, we can see that $ W(t)^* \xi $ is strongly differentiable and the derivative becomes
\begin{align}\label{sch7}
\frac{d}{dt}  W(t) ^* \xi  = U(t,0) ^* (iH_1 (t) ^* + iH_0)e^{itH_0} \xi = i W(t)^* H^* \xi ,
\end{align}
in the same way as \eqref{sch2}. Hence, by \eqref{sch6} and \eqref{sch7}, we have
\begin{align}
\left\langle -iH \eta , W(t) ^* \xi \right\rangle = \left\langle \eta , i W(t) ^* H^* \xi  \right\rangle ,
\end{align}
which implies that $ W(t)^* \xi \in D(H^*) $ and $ iW(t)^* H^*\xi = iH^* W(t)^*\xi $. Therefore, we obtain \eqref{sch8}.
\item Firstly, we show that $ W(t) \xi \in D(B) $ for each $ \xi \in D $. By using \eqref{Bestimate} and Lemma \ref{ess-estimate}, one finds 
\begin{align}\label{BeS}
\| B e^{-itH_0} S_n (t,0)  \xi \| & \le \sum _{j=0} ^n \| B e^{-itH_0}U_j (t,0) \xi \| \no \\
& \le \sum _{j=0} ^\infty C_0 (L_\xi +jb_0 +1) ^{1/2} \frac{|t| ^j }{j!} C^j (L_\xi + (j-1)b +1) ^{1/2} \cdots (L_\xi +1) ^{1/2} \| \xi \| \no \\
& < \infty ,
\end{align}
where the convergence is uniform in $ t $ on any compact set $ K \subset \R $. Since $ W(t) \xi = \sum _{n=0} ^\infty e^{-itH_0}U_n (t,0) \xi $, it follows that $ W(t) \xi \in D(B) $ and that 
\[ B W(t) \xi = \sum _{n=0} ^\infty B e^{-itH_0}U_n (t,0) \xi = \lim _{n\rightarrow \infty }B e^{-itH_0}S_n (t,0) \xi \] 
from the closedness of $ B $. 
Next, we show that $ BW(t)\xi \in D(W(-t)) $. Since $ W(-t) $ is closed, it is sufficient to prove that the sequence 
\[  W(-t) B e^{-itH_0}S_n (t,0) \xi \] 
converges. But, this follows because
\begin{align}
 &\|W(-t) B e^{-itH_0}S_n (t,0) \xi \| \no \\
 &\q= \| e^{itH_0}U (-t,0) B e^{-itH_0}S_n (t,0) \xi \| \no\\
&\q\le \sum _{m,j=0} ^\infty \| e^{itH_0}U_m (-t,0) B e^{-itH_0}U_j (t,0) \xi \| \no\\
& \q \le\sum _{m,j=0} ^\infty \frac{|t| ^m }{m!} C^m (L_\xi + (m+j-1)b + b_0 +1) ^{1/2} \cdots (L_\xi + jb +b_0 +1) ^{1/2}\no \\
& \q \q\times C_0  (L_\xi + jb +1) ^{1/2} \frac{|t| ^j }{j!} C^j (L_\xi + (j-1)b +1) ^{1/2} \cdots (L_\xi +1) ^{1/2} \| \xi \| \no\\
 & \q\le \sum _{N=0} ^\infty \frac{(2|t|) ^N }{N!} C^N (L_\xi + (N-1)b + b_0 +1) ^{1/2} \cdots (L_\xi + b_0 +1) ^{1/2}  C_0  (L_\xi + Nb +1) ^{1/2} \| \xi \| \no\\
&\q < \infty .
\end{align}
Hence, it follows that $ B W(t) \xi \in D(W(-t)) $ and
\begin{align}\label{WBW}
W(-t) B W(t) \xi = \sum _{m,j =0} ^\infty e^{itH_0}U_m (-t,0) B e^{-itH_0} U_j (t,0) \xi , 
\end{align}
where the right hand side converges absolutely. This means that $ D\subset D(W(-t) BW(t)) $.
\item By using \eqref{Bestimate} and Lemma \ref{U-estimate^*}, we get the desired conclusion in the same way as (ii), since $B(t)^*\supset W(-t)^*B^*W(t)^*$ in general.
\item By Assumption \ref{ass4} (II) and \eqref{t'diff}, the derivative of $ B S_n (0,-t) e^{-itH_0} \xi $ becomes
\begin{align}
\frac{d}{dt} B S_n (0,-t) e^{-itH_0} \xi = -i B S_{n-1} (0,-t) e^{-itH_0} H \xi .
\end{align}
From the estimation \eqref{BeS}, one finds that $ \{ (d/dt) B S_n (0,-t) e^{-itH_0} \xi \} _n $ is Cauchy for all $ \xi \in D' $. Hence, the limit $ \lim _{n\to \infty } (d/dt) B S_n (0,-t) e^{-itH_0} \xi = -i \lim _{n\to \infty } B S_{n-1} (0,-t) e^{-itH_0} H \xi $ exists. Due to the fact that $ B $ is closed, this implies $ W(t) H \xi  \in D(B) $ and 
\begin{align}
\frac{d}{dt} B S_n (0,-t) e^{-itH_0} \xi \to -i B W(t) H \xi , \q (n\to \infty ) ,
\end{align}
uniformly on any finite interval $ K $. Since the function $ \frac{d}{dt} B S_n (0,-t) e^{-itH_0} \xi $ is strongly continuous, so is its uniform limit $ -i B W(t) H \xi $. 
Then, by exchanging limit and integration, we have 
\begin{align}
\lim _{n \to \infty } \int _0 ^t d\tau\, \frac{d}{d\tau } B S_n (0,-\tau ) e^{-i\tau H_0} \xi = -i \int _0 ^t d\tau  B W(\tau ) H \xi  ,
\end{align}
where the convergence is uniform on $ K $. Since the left hand side is equal to $ B W(t) \xi - B\xi $, we 
obtain
\begin{align}
B W(t) \xi - B\xi = -i \int _0 ^t d\tau  B W(\tau ) H \xi  ,
\end{align}
which implies that $ BW(t) \xi  $ is strongly continuously differentiable in $t\in K $. Since $ K $ is arbitrary, one concludes that \eqref{BWdiff} holds.
\end{enumerate}
\end{proof}
\begin{proof}[Proof of Theorem \ref{scheq-and-heieq}] By Lemma \ref{heiLem}, for each $ \xi , \eta \in D' $, we know that the functions $ W(-t) ^* \eta $ and $ BW(t) \xi $ are strongly differentiable. Hence we have
\begin{align}
\frac{d}{dt} \left\langle \eta , B(t) \xi \right\rangle &= \frac{d}{dt} \left\langle W(-t) ^* \eta , B W(t) \xi \right\rangle \no\\
& = \left\langle -i W(-t) ^* H^* \eta , B W(t) \xi \right\rangle + \left\langle W(-t)^* \eta , -iBW(t) H \xi \right\rangle \no\\
& = \left\langle (iH)^* \eta , W(-t) B W(t) \xi \right\rangle - \left\langle W(t)^* B^* W(-t) ^* \eta , iH \xi \right\rangle .
\end{align}
Therefore, by \eqref{B*}, we obtain \eqref{wHei2}.
\end{proof}

In the rest of this section, we give a proof of Theorem \ref{extra}. We denote the closure of $ B_0'(t) $ by the same symbol.

\begin{Lem}\label{extra1} Under Assumptions \ref{ass1}, \ref{ass2}, \ref{ass4} and \ref{ass5}, the following (i)-(iv) hold.
\begin{enumerate}[(i)]
\item $ B_0' (t) (A+1) ^{-1/2} $ is bounded and there exists 
a constant $ C_1 \ge 0 $ independent of $ t\in \R $ such that
\[ \| B_0 ' (t) (A+1) ^{-1/2} \| \le C_1. \]

\item For all $ t\in \R $ and $ L\ge 0 $, $ \xi \in V_L $ implies $ B_0 ' (t) \xi \in V_{L+b_0} $, where $ b_0 $ is given in Assumption \ref{ass4} (II).

\item For all $ t\in \R $, $ D\subset D(U(0,t) H_1 (t) B(t) U(t,0)) \cap D(U(0,t) B(t) H_1 (t) U(t,0)) \cap D(U(0,t) B_0 ' (t) U(t,0)) $, and for each $ \xi \in D $,
\begin{align}
U(0,t) H_1 (t) B(t) U(t,0) \xi &= \sum _{m,n =0} ^\infty U_m(0,t) H_1 (t) B(t) U_n(t,0) \xi ,\\
U(0,t) B(t) H_1 (t) U(t,0) \xi &= \sum _{m,n =0} ^\infty U_m(0,t) B(t) H_1 (t) U_n(t,0) \xi ,\\
U(0,t) B_0 ' (t) U(t,0) \xi & = \sum _{m,n =0} ^\infty U_m(0,t) B_0 ' (t) U_n(t,0) \xi ,
\end{align}
where the right hand side in each equations above converges absolutely and uniformly in $ t $ on any compact interval, and does not depend upon the summation order. 

\item For all $ \xi \in D(A^{1/2}) \cap D(H_0) $ and $ t\in \R $, 
\begin{align}
B_0 ' (t) \xi = e^{itH_0} [iH_0 , B] e^{-itH_0} \xi .
\end{align}
%\item For all $ \xi \in D' $ and $ t\in \R $, it follows that $ \xi \in D(W(t)) $
\end{enumerate}
\end{Lem}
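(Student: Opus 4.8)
The plan is to prove the four assertions of Lemma~\ref{extra1} essentially in the order stated, reducing each to machinery already developed for $H_1(t)$ and $B$.

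For (i), I would mimic the proof of Lemma~\ref{rel-bdd} verbatim. Assumption~\ref{ass5} (III) gives the $A^{1/2}$-bound $\| B_0'(t)\xi\| \le c_0\| A^{1/2}\xi\| + c_1\|\xi\|$ uniformly in $t$; combining this with $\| A^{1/2}\xi\|^2 + \|\xi\|^2 = \| (A+1)^{1/2}\xi\|^2$ and the inequality $(a+b)^2\le 2a^2+2b^2$ yields $\| B_0'(t)\xi\| \le C_1\|(A+1)^{1/2}\xi\|$ for a $t$-independent $C_1$, which is exactly the claim. For (iv), I would compute directly: for $\xi\in D(A^{1/2})\cap D(H_0)$ the function $B_0(t)\xi = e^{itH_0}Be^{-itH_0}\xi$ is strongly differentiable (by Assumption~\ref{ass5} (I)), and a formal differentiation under the assumption $\xi\in D(H_0)$ gives $B_0'(t)\xi = e^{itH_0}(iH_0 B - B\,iH_0)e^{-itH_0}\xi = e^{itH_0}[iH_0,B]e^{-itH_0}\xi$. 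The point needing care is that $e^{-itH_0}\xi$ stays in the appropriate domain so that the product rule applies; since $A$ and $H_0$ strongly commute (Assumption~\ref{ass1} (II)) and $e^{-itH_0}$ leaves $D(H_0)$ invariant, this is legitimate.

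The genuinely new work is (ii), and I expect it to be the main obstacle. For $H_1$ the spread control came from Assumption~\ref{ass1} (IV) as a hypothesis; here I must \emph{derive} the analogous statement $\xi\in V_L \Rightarrow B_0'(t)\xi\in V_{L+b_0}$ for the derivative. The natural route is to use the representation from (iv) on the dense set $D(A^{1/2})\cap D(H_0)$: there $B_0'(t)\xi = e^{itH_0}[iH_0,B]e^{-itH_0}\xi = iH_0 B_0(t)\xi - B_0(t)iH_0\xi$, and since $B$ and $B^*$ satisfy Assumption~\ref{ass4} (II) while $H_0$ and $e^{\pm itH_0}$ preserve every $V_L$ (strong commutativity with $A$), each term lands in $V_{L+b_0}$. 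The subtlety is extending this from the core $D(A^{1/2})\cap D(H_0)$ to all of $V_L$: I would take $\xi\in V_L$, approximate it by $\xi_k\in V_L\cap D(H_0)$ (e.g.\ spectral truncation in $H_0$, which keeps vectors in $V_L$ by strong commutativity), use the $A^{1/2}$-boundedness from (i) together with closedness of the closure $\overline{B_0'(t)}$ to pass to the limit, and invoke closedness of $V_{L+b_0}$ to conclude $\overline{B_0'(t)}\xi\in V_{L+b_0}$.

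Finally, for (iii), I would treat $\overline{B_0'(t)}$, $H_1(t)B(t)$, and $B(t)H_1(t)$ as operators that are all $A^{1/2}$-bounded and that raise the $A$-level by a fixed finite amount, then run the double-series estimate exactly as in the Proposition preceding the proof of Theorem~\ref{main-thm2}. Concretely, writing $U(0,t)X U(t,0)\xi = \sum_{m,n}U_m(0,t)\,X\,U_n(t,0)\xi$ with $X$ any of the three middle operators, the norm of each summand is bounded using Lemma~\ref{ess-estimate}, the bound in (i) (together with Lemma~\ref{rel-bdd} and Lemma~\ref{rel-bdd^*} for the $H_1(t)$ and $B$ factors), and the level-raising properties of (ii), Assumption~\ref{ass1} (IV), and Assumption~\ref{ass4} (II). Collecting powers as a single sum over $N=m+n$ gives a convergent exponential-type series by d'Alembert's ratio test, which simultaneously establishes absolute convergence, uniform convergence on compact $t$-intervals, independence of summation order, and membership $\xi\in D$ of the relevant domains via closedness of the operators involved.
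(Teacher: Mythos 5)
Your treatments of (i) and (iii) match the paper: (i) is the Lemma \ref{rel-bdd} argument applied to Assumption \ref{ass5} (III), and (iii) is the double-series estimate of Lemma \ref{heiLem} (ii) rerun with the middle factor replaced by an $A^{1/2}$-bounded, level-raising operator. The problems are in (iv) and, as a consequence, in (ii).

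For (iv), the ``formal differentiation'' by the product rule is precisely the step that cannot be taken for granted, and the invariance facts you cite (that $e^{-itH_0}$ leaves $D(H_0)$ and $D(A^{1/2})$ invariant) do not address it. To differentiate $e^{itH_0}\bigl(Be^{-itH_0}\xi\bigr)$ and produce the term $iH_0\,Be^{-itH_0}\xi$, you must already know that $Be^{-itH_0}\xi \in D(H_0)$; and to produce the term $-\,e^{itH_0}B\,iH_0e^{-itH_0}\xi$ you must know that $t\mapsto Be^{-itH_0}\xi$ is strongly differentiable with derivative $B(-iH_0)e^{-itH_0}\xi$, which (via closedness and $A^{1/2}$-boundedness of $B$) would require convergence of $A^{1/2}$ applied to the difference quotients, i.e.\ essentially $H_0\xi\in D(A^{1/2})$. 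None of this is among the hypotheses. The paper avoids the issue entirely: it computes the scalar function $\frac{d}{dt}\left\langle \eta , B_0(t)\xi\right\rangle$ for $\eta\in D(H_0)$ in two ways, namely \eqref{extra7} via the unitary group and \eqref{extra8} via Assumption \ref{ass5}, and from the equality of the two expressions it \emph{derives} the domain fact $B_0(t)\xi\in D(H_0)$ together with the identity $B_0'(t)\xi=[iH_0,B_0(t)]\xi$. In other words, $B_0(t)\xi\in D(H_0)$ is a conclusion of the weak-derivative argument, whereas in your proposal it is an unproved input.

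For (ii), your route (the representation from (iv) on the core $D(A^{1/2})\cap D(H_0)$, then approximation by $\xi_k=E_{H_0}([-k,k])\xi$) could be made to work if (iv) were available in the strengthened form you need, but it inherits the gap above and inverts the logical order for no gain. The paper's proof of (ii) is three lines and independent of (iv): for $\xi\in V_L$ and every $s$ one has $B_0(s)\xi = e^{isH_0}Be^{-isH_0}\xi \in V_{L+b_0}$, because $e^{\pm isH_0}$ preserves each $V_M$ (strong commutation of $H_0$ and $A$) and $B$ maps $V_L$ into $V_{L+b_0}$ (Assumption \ref{ass4} (II)); hence every difference quotient $\delta^{-1}\bigl(B_0(t+\delta)\xi - B_0(t)\xi\bigr)$ lies in the closed subspace $V_{L+b_0}$, and so does its strong limit $B_0'(t)\xi$. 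So the part you identified as the main obstacle is in fact immediate, while the part you treated as routine, (iv), is where the real work lies.
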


\begin{proof} The statement (i) follows from Assumption \ref{ass5} (III). 

We prove (ii). Let $ L \ge 0 $ and $ \xi \in V_L $. By the definition of $ B_0' (t) $,
\begin{align}
B_0 ' (t) \xi = \lim _{\delta \to 0} \frac{B_0(t+\delta ) \xi - B_0 (t)\xi }{\delta} , \q t\in \R .
\end{align}
Since $ H_0 $ and $ A $ are strongly commuting, it follows that $ B_0 (t) \xi \in V_{L+b_0} $ for all $ t $. Hence, $ B_0 '(t) \xi \in V_{L+b_0} $ by the closedness of $ V_{L+b_0} $.

By using (i)-(ii), Lemma \ref{rel-bdd} and \ref{rel-bdd^*}, we can check (iii) in the same manner as in the proof of Lemma \ref{heiLem} (ii).

We prove (iv). For each $ \eta \in D(H_0) $ and $ \xi \in D(A^{1/2}) \cap D(H_0) $, we have
\begin{align}\label{extra7}
\frac{d}{dt} \Expect{ \eta , B_0 (t) \xi } = \Expect{ -iH_0 \eta , B_0 (t) \xi } + \Expect{ \eta , B_0 (t) (-iH_0) \xi } .
\end{align}
On the other hand, By Assumption \ref{ass5} and the definition of $ B_0' (t) $, we have
\begin{align}\label{extra8}
\frac{d}{dt} \Expect{ \eta , B_0 (t) \xi } = \Expect{ \eta , B' _0 (t) \xi }.
\end{align}
Comparing \eqref{extra7} and \eqref{extra8}, we obtain $ B_0 (t) \xi \in D(H_0) $ and 
\begin{align}\label{extra9}
B' _0(t) \xi = [iH_0 , B_0(t)]\xi = e^{itH_0} [iH_0 , B] e^{-itH_0} \xi .
\end{align}

\end{proof}

\begin{proof}[Proof of Theorem \ref{extra}] By using Lemma \ref{extra1} (i) and (ii), it follows that for each $ \xi \in D $ and $ m,n =0,1,2,\dots $, the function $ \R \ni t \mapsto S_m (0, t) B_0(t) S_n (t,0) \xi $ is strongly differentiable and the derivative becomes
\begin{align}\label{extra2}
\frac{d}{dt} S_m (0, t) B_0(t) S_n (t,0) \xi =&  S_{m-1} (0, t) i H_1 (t) B_0(t) S_n (t,0) \xi + S_{m} (0, t) B_0(t) (-iH_1(t)) S_{n-1} (t,0) \xi \no \\
& \q + S_m (0, t) B_0' (t) S_n (t,0) \xi ,
\end{align}
where $ S_{-1} (\cdot , \cdot ) := 0 $. From Lemma \ref{extra1} (iii), one finds that
\begin{align}\label{extra3}
\lim _{m,n \to \infty } \frac{d}{dt} S_m (0, t) B_0(t) S_n (t,0) \xi = U(0 ,t) [iH_1 (t), B_0(t)] U(t,0) \xi + U(0,t) B_0'(t) U(t,0)\xi ,
\end{align}
uniformly in $ t $ on any compact set $ K\subset \R $. Since the right hand side of \eqref{extra2} is strongly continuous by Assumption \ref{ass5} (I), so is the left hand side of \eqref{extra3}. Hence, by exchanging limit and integration, we get
\begin{align}\label{extra4}
\lim _{m,n \to \infty } \int _0 ^t d\tau \, \frac{d}{d\tau } S_m (0, \tau ) B_0(\tau ) S_n (\tau ,0) \xi = \int _0 ^t d\tau \,\Big(  U(0 ,\tau ) [iH_1 (\tau ) , B_0(\tau )] U(\tau , 0) \xi + U(0,\tau ) B_0'(\tau ) U(\tau ,0) \xi \Big) ,
\end{align}
where the convergence is uniform on the compact set $ K $. Since the left hand side of \eqref{extra4} is equal to $ B(t) \xi - B\xi $ due to \eqref{WBW}, one concludes that $ B(t)\xi  $ is strongly continuously differentiable in $t\in K $, and the derivative becomes
\begin{align}
\frac{d}{dt} B(t) \xi & = U(0 ,t) [iH_1 (t ) , B_0(t )] U(t , 0) \xi + U(0,t ) B_0'(t) U(t ,0) \xi \no \\
& =  W(-t) [iH_1 , B] W(t) \xi + U(0,t ) B_0'(t) U(t ,0) \xi . 
\end{align}
Therefore, \eqref{extra5} follows from the arbitrariness of $ K $.

It remains to prove \eqref{extra6.5}. Let $ \xi \in D' $. Note that for all $ t\in \R $, we have $ U(t,0) \xi \in D(A^{1/2}) $ from Theorem \ref{App1}, and $ U(t,0)\xi \in D(H_0) $ from Theorem \ref{sch-existence}. From these facts and Lemma \ref{extra1} (iv) and \eqref{extra5}, we obtain
\begin{align}\label{extra24}
\frac{d}{dt} B(t) \xi = W(-t) [iH , B] W(t) \xi .  
\end{align}
From \eqref{WH} in the proof of Theorem \ref{sch-existence}, we have
\begin{align}\label{extra22}
W(-t) BHW(t) \xi = W(-t) B W(t) H \xi = B(t) H \xi .
\end{align}
By using Lemma \ref{heiLem} (i), we have for all $ \eta \in D' $
\begin{align} 
\Expect{ \eta , W(-t) H BW(t) \xi } & = \Expect{ H^* W(-t) ^* \eta , BW(t) \xi } \no \\
& = \Expect{ W(-t) ^* H^* \eta , BW(t) \xi } \no \\
& = \Expect{ \eta , H W(-t) BW(t) \xi } . \no 
\end{align}
Thus 
\begin{align}\label{extra23}
W(-t) HBW(t) \xi = H W(-t) B W(t) \xi = H B(t) \xi ,
\end{align}
because $ D' $ is dense. Hence, \eqref{extra6.5} follows from \eqref{extra24}, \eqref{extra22} and \eqref{extra23}. 
\end{proof}

\section{Application to QED in Lorenz gauge}\label{QED}
In this section, we apply the general theory obtained in the preceding sections to a mathematical model 
of QED, quantized in the Lorenz gauge. As we emphasized in Introduction, our construction of $ U(t,t') $ does not require that $ H $ be self-adjoint, 
 and Theorems \ref{sch-existence}, and
\ref{scheq-and-heieq} are independent of the self-adjointness of $ H $. 
This method is particularly valid for analyzing Lorenz-gauge QED, whose Hamiltonian is not self-adjoint and not even normal. 
We expect that our theory would be applicable to a wider class of mathematical models of quantum systems.
Other possible applications, including to a model with ordinary self-adjoint Hamiltonians and 
a more detailed analysis of Lorenz-gauge QED, are in progress and will be presented in separated papers.

QED describes a system 
in which the quantum radiation field and the quantum Dirac field are minimally interacting. 
It is well known that, in the Coulomb gauge, one can employ a state space constructed by usual Fock spaces, 
which equip a positive definite metric, at the cost of the Lorentz covariance. 
In this formulation, the Hamiltonian $ H $ is self-adjoint \cite{MR2541206}, hence, there clearly 
exists the time evolution operator $ e^{-itH} $ 
such that $ \xi (t)= e^{-itH}\xi $ or $ B(t) = e^{itH} B e^{-itH} $ is the unique solutions 
of the initial value problems \eqref{sch00} or \eqref{hei00}, respectively.
In contrast to the case of Coulomb gauge, in the Lorenz gauge, the Hamiltonian is neither self-adjoint nor 
normal in consequence of the inevitability of an indefinite metric \cite{MR2412280}, 
and hence the time evolution operator $ e^{-itH} $ does not necessarily exist. As a result, 
even the existence of solutions of \eqref{sch00} and \eqref{hei00} becomes a highly nontrivial problem. It does not seem to be easy to apply the general theory of evolution operators 
through hard analyses of the resolvent of the Hamiltonian of Lorenz-gauge QED. But our general theory works well to construct an appropriate time evolution as we will see in the present section.

\subsection{Radiation fields}\label{radiation}
We introduce the photon field quantized in the Lorenz gauge. 

We adopt as the one-photon Hilbert space
\begin{align}
\H _\mathrm{ph} := L^2 (\R ^3 _\kk ; \C ^4) .
\end{align}
The above $ \R ^3 _\kk := \{ \mathbf{k} = (k^1,k^2,k^3) \, | \, k^j \in \R , \, j=1,2,3 \} $ physically represents the momentum space of photons. If there is no danger of confusion, we omit the subscript $ \kk $ in $ \R ^3 _\kk $. $ \H _\rm{ph} $ can be identified as $ \oplus ^4 L^2 (\R ^3 _\kk ) $. We freely use this identification. The Hilbert space for the quantized radiation field in the Lorenz gauge is given by 
\begin{align}
\F _\rm{ph}:= \op _{n=0} ^\infty \ot _\rm{s} ^n \H _\rm{ph} = \Big\{ \Psi = \{ \Psi ^{(n)} \} _{n=0} ^\infty \, \Big| \, \Psi ^{(n)} \in \ot _\rm{s} ^n \H _\rm{ph} , \, \, \left\| \Psi \right\| ^2 _{\F _\rm{ph}} = \sum _{n=0} ^\infty \left\| \Psi ^{(n)} \right\| _{\ot _\rm{s} ^n \H _\rm{ph} } ^2 <\infty \Big\} ,
\end{align}
the Boson Fock space over $ \H _\mathrm{ph} $, where $\textstyle \ot _\rm{s} ^n $ denotes the $ n $-fold symmetric tensor product with the convention $\textstyle \ot _\rm{s} ^0 \H _\rm{ph} := \C $.

One-photon Hamiltonian in $ \H _\rm{ph} $ is the multiplication operator by the function $ \omega (\kk ) := |\kk | \, (\kk \in \R ^3) $.
We also denote by $\omega$ the matrix valued function 
\begin{align}
 \kk\mapsto \begin{pmatrix}
 \omega(\kk) & 0 & 0& 0\\
 0 & \omega(\kk) & 0 & 0\\
 0 & 0 & \omega(\kk) & 0 \\
 0 & 0&0&\omega(\kk)
 \end{pmatrix}
 \end{align}
 and the multiplication operator by it.
 Then, the free Hamiltonian of the quantum radiation field is given by
\begin{align}
H_\mathrm{ph} := \mathrm{d}\Gamma _\bb (\omega ) := \op _{n=0} ^\infty \omega ^{(n)} , 
\end{align}
the second quantization of $ \omega $, where $ \omega ^{(0)} := 0 $ and $ \omega ^{(n)} \, (n\ge 1) $ is defined by
\begin{align}
\omega ^{(n)} := \overline{ \Big( \sum _{j=1} ^n I \otimes \dots \otimes  I \otimes \stackrel{j\text{-th}}{\omega} \otimes I \otimes \dots \otimes I \Big) \upharpoonright \hot ^n D(\omega ) } : \ot _\rm{s} ^n \H _\rm{ph} \to \ot _\rm{s} ^n \H _\rm{ph} ,
\end{align}
and $ \hat{\otimes } $ denotes the algebraic tensor product.

Let $ g = (g_{\mu\nu} ) _{\mu , \nu =0,1,2,3} $ be the $ 4\times 4 $ matrix given by
\begin{align}
g= \begin{pmatrix} 1 &0&0&0 \\ 0&-1&0&0 \\ 0&0&-1&0 \\ 0&0&0&-1 \end{pmatrix} .
\end{align}
Now we introduce the indefinite metric on $ \F _\mathrm{ph} $. The matrix $ g $ naturally defines the unitary operator acting on $ L^2 (\R ^3 _\kk ; \C ^4) $. We denote it by the same symbol $ g $. We define $ \eta $ by the second quantization of $ -g $, i.e.,
\begin{align}
\eta := \Gamma _\bb (-g) := \op _{n=0} ^\infty \ot ^n \, (-g) : \F _\mathrm{ph} \to \F _\mathrm{ph} .
\end{align}
Then $ \eta $ is unitary and satisfies $ \eta ^* = \eta , \, \eta ^2 =I $. By using $ \eta $ we introduce an indefinite metric on $ \F _\mathrm{ph} $ by 
\begin{align}\label{metric}
\left\langle \Psi | \Phi  \right\rangle := \left\langle \Psi , \eta \Phi  \right\rangle _{\F _\rm{ph}}, \q \Psi , \Phi \in \F _\rm{ph} .
\end{align}
In order to define the adjoint with respect to indefinite metric \eqref{metric}, we introduce the $ \eta $-adjoint. For a densely defined linear operator $ T $ on $ \F _\rm{ph} $, the adjoint operator $ T^\dagger $ with respect to the metric $ \left\langle \cdot | \cdot \right\rangle $ is defined by 
\begin{align}
T^\dagger := \eta T^* \eta .
\end{align}
It follows that
\begin{align}
\Expect{ \Psi | T\Phi } = \Expect{ T ^\dagger \Psi | \Phi } , \q \Psi \in D(T^\dagger ) , \q \Phi \in D(T) .
\end{align}

We introduce notions of $ \eta $-symmetry, $ \eta $-self-adjointness and $ \eta $-unitarity \cite{MR2533876,MR2412280} below.

\begin{Def}\label{eta-sa}
\begin{enumerate}[(i)] 
\item A densely defined linear operator $ T $ is $ \eta $-symmetric if $ T \subset T^\dagger $.

\item A densely defined linear operator $ T $ is $ \eta $-self-adjoint if $ T^\dagger =T $.

\item A densely defined linear operator $ T $ is essentially $ \eta $-self-adjoint if $ \overline{T} $ is $ \eta $-self-adjoint.

\item A densely defined linear operator $ T $ is $ \eta $-unitary if $ T $ is injective and $ T ^\dagger = T^{-1} $.
\end{enumerate}
\end{Def}

\begin{Lem}\label{eta-lem} 
\begin{enumerate}[(i)] 
\item $ T $ is $ \eta $-symmetric if and only if $ \eta T $ is symmetric.

\item $ T $ is $ \eta $-self-adjoint if and only if $ \eta T $ is self-adjoint. 

\item $ T $ is essentially $ \eta $-self-adjoint if and only if $ \eta T $ is essentially self-adjoint.

\item If $ T $ is $ \eta $-symmetric then $ T $ is closable.

\item Let $ T $ be $ \eta $-self-adjoint and $ \eta T $ is essentially self-adjoint on a subspace $ D $, Then $ D $ is a core of $ T $.
\end{enumerate}
\end{Lem}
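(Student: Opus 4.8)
The plan is to reduce all five assertions to a single operator identity relating the $\eta$-adjoint with the ordinary adjoint, after which each statement becomes a short formal manipulation. Two facts are worth recording at the outset, both relying only on $\eta$ being bounded, everywhere defined, and boundedly invertible (indeed $\eta^{-1}=\eta$, since $\eta^2=I$). First, for any densely defined $T$ one has $(\eta T)^*=T^*\eta$, because left multiplication by the everywhere-defined bounded operator $\eta$ does not shrink the domain and the bounded factor can be moved across the adjoint (using $\eta^*=\eta$). Second, left multiplication by the bijection $\eta$ preserves operator inclusions, $S\subset R\Rightarrow \eta S\subset \eta R$, and commutes with closure, $\overline{\eta S}=\eta\overline{S}$ (so in particular $T$ is closable iff $\eta T$ is); the latter holds because, $\eta$ being a homeomorphism, the conditions $\phi_n\to\psi,\ \eta S\phi_n\to\chi$ are equivalent to $\phi_n\to\psi,\ S\phi_n\to\eta\chi$. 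Combining $(\eta T)^*=T^*\eta$ with the definition $T^\dagger=\eta T^*\eta$ yields the master identity
\[ (\eta T)^* = \eta\, T^\dagger , \]
which drives everything that follows.

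Parts (i) and (ii) are then immediate. For (i), $T$ is $\eta$-symmetric means $T\subset T^\dagger$; left-multiplying by the bijection $\eta$ turns this into $\eta T\subset \eta T^\dagger=(\eta T)^*$, i.e.\ $\eta T$ is symmetric, and left-multiplying by $\eta$ a second time recovers the converse. For (ii) the identical computation with equality in place of inclusion shows $T=T^\dagger$ if and only if $\eta T=(\eta T)^*$, i.e.\ $\eta T$ is self-adjoint.

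For (iii), I would apply (ii) to the closed operator $\overline{T}$: essential $\eta$-self-adjointness of $T$ means $\overline{T}$ is $\eta$-self-adjoint, which by (ii) is equivalent to $\eta\overline{T}$ being self-adjoint, and since $\eta\overline{T}=\overline{\eta T}$ by the closure-transfer fact, this is exactly essential self-adjointness of $\eta T$. Part (iv) follows because $T^\dagger=\eta T^*\eta$ is closed (a conjugate of the closed operator $T^*$ by the bounded invertible $\eta$), so an $\eta$-symmetric $T\subset T^\dagger$ possesses a closed extension and is therefore closable. For (v), (ii) gives that $\eta T$ is self-adjoint; the hypothesis that $\eta T$ is essentially self-adjoint on $D$ says $\overline{\eta T\upharpoonright D}$ is self-adjoint and, since $D\subset D(T)$, contained in the self-adjoint operator $\eta T$, so maximality of self-adjoint operators among symmetric operators forces $\overline{\eta T\upharpoonright D}=\eta T$, i.e.\ $D$ is a core of $\eta T$. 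Transporting through $\eta$ via $\overline{\eta T\upharpoonright D}=\overline{\eta(T\upharpoonright D)}=\eta\,\overline{T\upharpoonright D}$ and cancelling the bijection $\eta$ then yields $\overline{T\upharpoonright D}=T$, so $D$ is a core of $T$.

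No genuinely deep step arises: the content is entirely the interplay of the bounded involution $\eta$ with adjoints and closures. The only places demanding care—and hence the main thing to get right—are the two transfer lemmas for $\eta$ (that $(\eta T)^*=T^*\eta$ and $\overline{\eta T}=\eta\overline{T}$, with their implicit equivalence of closability and of essential self-adjointness), together with the correct invocation of the maximality of self-adjoint operators in (v); once these are in hand the proofs are purely algebraic.
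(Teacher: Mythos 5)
Your proof is correct in all five parts, but there is nothing in the paper to compare it against: the paper's entire ``proof'' of this lemma is the single line ``See \cite{MR2533876}'', deferring to prior work. What your argument supplies, then, is a self-contained replacement, and it is sound. The master identity $(\eta T)^*=T^*\eta=\eta\,T^\dagger$ is exactly the right reduction; it is legitimate because $\eta$ is bounded, everywhere defined, self-adjoint and involutive, so that $(BT)^*=T^*B^*$ holds with $B=\eta$ (the key point being that $D(\eta T)=D(T)$, so dense-definedness and domains transfer trivially). With this, (i) and (ii) are one-line equivalences, and together with the closure-transfer fact $\overline{\eta S}=\eta\overline{S}$ (your graph argument, using that $\eta$ is a homeomorphism, is correct) you get (iii), including the implicit closability bookkeeping that ``essentially ($\eta$-)self-adjoint'' presupposes. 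For (iv), $T^\dagger=\eta T^*\eta$ is indeed closed as a two-sided composition of the closed operator $T^*$ with bounded invertible operators, and an operator with a closed extension is closable. In (v) you correctly read the hypothesis as: $(\eta T)\upharpoonright D$ is essentially self-adjoint with $D\subset D(T)$, so $\overline{(\eta T)\upharpoonright D}\subset \eta T$ (the latter closed, being self-adjoint by (ii)), and maximality of self-adjoint operators within symmetric ones ($A\subset B$ self-adjoint forces $B=B^*\subset A^*=A$) gives equality; cancelling the bijection $\eta$ via $\overline{\eta(T\upharpoonright D)}=\eta\,\overline{T\upharpoonright D}$ then yields $\overline{T\upharpoonright D}=T$. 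No step fails; the write-up could only be tightened by stating these three transfer facts as a preliminary sublemma, which is presumably how the cited reference organizes the same argument.
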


\begin{proof} See \cite{MR2533876}.
\end{proof}

Note that the free Hamiltonian $ H_\rm{ph} $ is self-adjoint and $ \eta $-self-adjoint.

The annihilation operator $ a(F) $ with $ F \in \H _\mathrm{ph} $ is defined to be a densely defined closed linear operator on $ \F _\mathrm{ph} $ whose adjoint is given by
\begin{align}
(a(F) ^* \Psi ) ^{(0)} = 0 , \q (a(F) ^* \Psi ) ^{(n)} = \sqrt{n} S_n (F \otimes \Psi ^{(n-1)}) , \q n \ge 1 , \q \Psi \in D(a(F)^*), 
\end{align}
where $ S_n $ denotes the symmetrization operator on $ \otimes ^n \H _\rm{ph} $, i.e. $ S_n (\otimes ^n \H _\rm{ph}) = \otimes _\rm{s} ^n \H _\rm{ph} $. We note that $ a(F) $ is anti-linear in $ F $ and $ a(F')^* $ linear in $ F' $. As is well known, the creation and annihilation operators leave the finite particle subspace 
\begin{align}
\F _{\bb , 0} (\H _\rm{ph}) := \Big\{ \{ \Psi ^{(n)} \} _{n=0} ^\infty \in \F _\rm{ph} \, \Big| \,  \Psi ^{(n)} =0 \, \text{for all $ n\ge n_0 $ with some $n_0$} \Big\} 
\end{align}
invariant and satisfy the canonical commutation relations:
\begin{align}
[a(F) , a (F' ) ^* ] = \left\langle F, F' \right\rangle _{\H _\rm{ph}} , \q [a(F) , a(F' ) ] = [a (F) ^* , a(F' )^* ]=0,
\end{align}
on $ \F _{\bb, 0} (\H _\rm{ph}) $. For each $ f \in L^2 (\R _\kk ^3) $, we use the notation:
\begin{align}
& a^{(0)} (f) := a (f,0,0,0) , \q a^{(1)} (f) := a (0,f,0,0) , \\
& a^{(2)} (f) := a (0,0,f,0) , \q a^{(3)} (f) := a (0,0,0,f).
\end{align}
Then, the operator equalities
\begin{align}
& a^{(0)} (f) ^\dagger = -a^{(0)} (f) ^* , \\ 
& a^{(j)} (f) ^\dagger = a^{(j)} (f) ^* , \q j=1,2,3,
\end{align}
hold. For each $ f\in L^2 (\R _\kk ^3) $ and $ \mu =0,1,2,3 $, we define
\begin{align}
& a_\mu (f) := a \big( f e _\mu ^{(0)} ,\,  f e _\mu ^{(1)} ,\, f e _\mu ^{(2)} ,\, f e _\mu ^{(3)} \big) , 
\end{align}
where $ e^{(\lambda  )} (\kk ) = (e^{(\lambda )} _\mu (\kk ) ) _{\mu =0} ^3 \in \C ^4 \, (\lambda = 0,1,2,3) $ are the polarization vectors satisfying 
\begin{align}
\sum _{ \lambda , \lambda ' =0} ^3 e ^{ ( \lambda ) } _\mu (\kk ) g_{\lambda \lambda '}  e _\nu ^{(\lambda ' ) } (\kk ) ^* &= g_{\mu \nu } , \q \text{a.e.} \;  \kk \in \R ^3 , \q \mu , \nu =0,1,2,3.
\end{align}
In this paper, for simplicity, we choose the special Lorentz frame, that is, 
\begin{align}
& e^{(0)} (\kk ) := (1,0,0,0) , \q e^{(3)} := (0, \frac{\kk }{|\kk |}) , \\
& e^{(r )} (\kk ) := (0, \mathbf{e} ^{(r ) } (\kk) ) , \q r =1,2, 
\end{align}
where $ \mathbf{e} ^{(r )} \, (r =1,2) $ are $ \R ^3 $-valued continuous functions on the nonsimply connected space $ K_0 := \R ^3 \backslash \{ (0,0,k^3) \, | \, k^3 \in \R \} $ such that, for all $ \kk \in K_0 $,
\begin{align}
\mathbf{e} ^{(r )} (\kk ) \cdot \mathbf{e} ^{(r ' )} (\kk ) = \delta _{rr'} , \q \mathbf{e} ^{(r )} (\kk ) \cdot \kk =0, \q r,r' =1,2.
\end{align}
We often use the notation $ a_\mu ^\dagger (f) = a_\mu (f) ^\dagger $. Then, $ a_\mu (f) $ and $ a_\mu ^\dagger (f) $ are closed, and satisfy the commutation relations:
\begin{align*}
[a_\mu (f) , a_\nu ^\dagger (g) ] & = -g _{\mu\nu } \left\langle f,g \right\rangle _{L^2 (\R ^3)} , \\
[a_\mu (f) , a_\nu (g)] & = [a_\mu ^\dagger (f) , a_\nu ^\dagger (g) ] =0 ,
\end{align*}
on $ \F _{\bb , 0} (\H _\rm{ph}) $.

For all $ f \in L^2 (\R ^3 _\xx ) $ satisfying $ \hat{f}/\sqrt{\omega} \in L^2 (\R ^3 _\kk ) $, we set 
\begin{align}
A_\mu (0,f) := a_\mu \Big( \frac{\hat{f^*}}{\sqrt{2\omega }} \Big) + a_\mu ^\dagger \Big( \frac{\hat{f}}{\sqrt{2\omega }} \Big) ,
\end{align}
where $ \hat{f} $ denotes the Fourier transform of $ f $, and $ f^* $ denotes the complex conjugate of $ f $. The functional $ \S (\R ^3 _\xx ) \ni f \mapsto A_\mu (0,f) $ gives an operator-valued distribution (Cf. \cite{AraiFock} Definition 7-1)
 acting on $ (\F _\rm{ph} , \F _{\bb , 0} (\H _\rm{ph})) $ and it is called the quantized radiation field at time $ t=0 $.
Now, fix $ \chi _\rm{ph} \in L^2 (\R ^3_\xx ) $ such that it is real and satisfies $ \hat{\chi _\rm{ph}} / \sqrt{\omega} \in L^2 (\R ^3 _\kk ) $. We set
\begin{align}
& A_\mu (\xx ) := A_\mu (0, \chi _\rm{ph} ^\xx ) , \\
& \chi _\rm{ph} ^\xx (\yy ) := \chi _\rm{ph} (\yy - \xx ) , \q \yy \in \R ^3 .
\end{align}
$ A_\mu (\xx ) $ is called the point-like quantized radiation field with momentum cutoff $ \hat{\chi _\mathrm{ph}} $.
As will be seen later, for real-valued $ f $, the closures of $ A_\mu (f) , \, \mu =0,1,2,3, $ are $ \eta $-self-adjoint but not even normal.

\subsection{Dirac fields}

Next, we define the quantized Dirac field. We adopt as the one-electron Hilbert space 
\begin{align}
\H _\mathrm{el} := L^2 (\R ^3 _\pp ; \C ^4) ,
\end{align}
where $ \R ^3 _\pp := \{ \mathbf{p} = (p^1,p^2,p^3) \, | \, p^j \in \R , \, j=1,2,3 \} $ physically represents the momentum space of electrons. The Hilbert space for the quantized Dirac field is given by 
\begin{align}
\F _\mathrm{el} := \op _{n=0} ^\infty \wg ^n \H _\rm{el} = \Big\{ \Psi = \{ \Psi ^{(n)} \} _{n=0} ^\infty \, \Big| \, \Psi ^{(n)} \in \wg ^n \H _\rm{el} , \, \, \left\| \Psi  \right\| ^2 _{\F _\rm{el}} = \sum _{n=0} ^\infty \left\| \Psi ^{(n)} \right\| _{\wg ^n \H _\rm{el} } ^2 <\infty \Big\} , 
\end{align}
the Fermion Fock space over $ \H _\mathrm{el} $, where $ \wedge ^n $ denotes the $ n $-fold anti-symmetric tensor product with the convention $ \wedge ^0 \H _\rm{el} := \C $.

We denote the mass of the Dirac particle by $ M>0 $. One-electron Hamiltonian in $ \H _\rm{el} $ is the multiplication operator by the function $ \textstyle E_M (\pp ) := \sqrt{\pp ^2 + M^2} \, \, (\pp \in \R ^3) $. The Hamiltonian of the free quantum Dirac field is given by
\begin{align}
H_\mathrm{el} := \mathrm{d}\Gamma _\ff (E_M ) := \op _{n=0} ^\infty E_M ^{(n)} , 
\end{align}
where
\begin{align}
E_M ^{(n)} := \overline{ \Big( \sum _{j=1} ^n I \otimes \dots \otimes  I \otimes \stackrel{j\text{-th}}{E_M} \otimes I \otimes \dots \otimes I \Big) \upharpoonright \hot ^n D(E_M ) } : \wg ^n \H _\rm{el} \to \wg ^n \H _\rm{el} .
\end{align}
The operator $ H_\rm{el} $ is self-adjoint and non-negative.

Let $ \gamma ^\mu \, (\mu =0,1,2,3) $ be $ 4\times 4 $ gamma matrices, i.e., $ \gamma ^0 $ is hermitian and $ \gamma ^j \, (j=1,2,3) $ are anti-hermitian, satisfying
\begin{align}
\{ \gamma ^\mu , \gamma ^\nu \} =2g^{\mu\nu}, \q \mu, \nu =0,1,2,3,
\end{align}
where $ \{ X,Y \} := XY+YX $. Let $ \alpha ^\mu := \gamma ^0 \gamma ^\mu , \, \beta := \gamma ^0 $, and let $ s_1 := \frac{i}{2} \gamma ^2 \gamma ^3 , \, s_2 := \frac{i}{2} \gamma ^3 \gamma ^1 , \, s_3 := \frac{i}{2} \gamma ^1 \gamma ^2 $. Let $ u_s (\pp ) = (u_s ^l (\pp )) _{l=1} ^4 \in \C ^4 $ describe the positive energy part with spin $ s = \pm 1/2 $ and $ v_s (\pp ) = (v_s ^l  (\pp )) _{l=1} ^4 \in \C ^4 $ the negative energy part with spin $ s $, that is,
\begin{align}
& ( { \boldsymbol \alpha } \cdot \pp + \beta M) u_s (\pp ) = E_M (\pp ) u_s (\pp ) , \q ( \mathbf{s} \cdot \pp ) u_s (\pp ) = s|\pp | u_s (\pp ) , \\
& ( { \boldsymbol \alpha } \cdot \pp + \beta M) v_s (\pp ) = -E_M (\pp ) v_s (\pp ) , \q ( \mathbf{s} \cdot \pp ) v_s (\pp ) = s|\pp | v_s (\pp ) , \q \pp \in \R ^3 .
\end{align}
These form an orthogonal base of $ \C ^4 $,
\begin{align}
u_s (\pp ) ^* u_{s'} (\pp ) = v_s (\pp ) ^* v_{s'} (\pp ) = \delta _{ss'} E_M (\pp ) , \q u_s (\pp ) ^* v_{s'} (\pp ) = 0 ,
\end{align}
and satisfy the completeness,
\begin{align*}
\sum _s \big( u^l_s (\pp ) u^{l' } _s (\pp ) ^* + v^l _s(\pp ) v ^{l' } _s(\pp ) ^*  \big) = 2 \delta _{ll' } E_M (\pp ) .
\end{align*}

The annihilation operator $ B(G) $ with $ G \in \H _\mathrm{el} $ is defined to be a \textit{bounded} operator on $ \F _\mathrm{el} $ whose adjoint is given by
\begin{align}
(B(G) ^* \Psi ) ^{(0)} = 0, \q (B(G) ^* \Psi ) ^{(n)} = \sqrt{n} A_n (G \otimes \Psi ^{(n-1)}) , \q n \ge 1 ,  \q \Psi = \{ \Psi ^{(n)} \} _{n=0} ^\infty \in \F _\rm{el} ,
\end{align}
where $ A_n $ denotes the anti-symmetrization operator on $ \otimes ^n \H _\rm{el} $, i.e. $ A_n (\otimes ^n \H _\rm{el}) = \wedge ^n \H _\rm{el} $. $ B(G) $ is anti-linear in $ G $ and $ B(G')^* $ linear in $ G' $. It is well known that the operator norm of $ B(G) $ and $ B(G') ^* $ is given by \cite{AraiFock}
\begin{align}\label{fermi-est}
\| B(G) \| = \| G \| _{\H _\rm{el}}  , \q \| B(G')^* \| = \| G' \| _{\H _\rm{el}} .
\end{align}
For each $ g \in L^2 (\R _\pp ^3) $, we use the notation 
\begin{align*}
& b _{1/2} (g) := B(g,0,0,0) , && b_{-1/2} (g) := B(0,g,0,0) , \\
& d _{1/2} (g) := B(0,0,g,0) , && d_{-1/2} (g) := B(0,0,0,g) ,
\end{align*}
and $ b_s^* (g) := b_s (g)^* $, $ d_s ^* (g) := d_s (g)^* , \, (s= \pm 1/2) $. Then, we have the canonical anti-commutation relations: 
\begin{align}
& \{ b_s (g) , b_{s'}^* (g') \} =\{ d_s (g) , d_{s'} ^* (g') \} = \delta _{ss'} \left\langle g , g' \right\rangle _{L^2 (\R ^3 _\pp )} , \\
& \{ b_s (g) , b_{s'} (g') \} =\{ d_s (g) , d_{s'} (g') \} = \{ b_s (g) , d_{s'} (g') \} =\{ b_s (g) , d_{s'} ^* (g') \} =0 .
\end{align}

For all $ g \in L^2 (\R ^3 _\xx ) $, we set 
\begin{align}
\psi _l (0, g) := & \sum _{s= \pm 1/2} \Bigg( b_s \Bigg( \frac{\hat{g^*} \cdot u^l _s {}^* }{\sqrt{2E_M}} \Bigg) + d^*_s \Bigg( \frac{\hat{g} \cdot  \widetilde{v} ^l_s }{\sqrt{2E_M}} \Bigg) \Bigg) ,
\end{align}
where $ \widetilde{v} _s ^l (\pp ) := v_s ^l (-\pp ) $. The functional $ \S (\R ^3 _\xx ) \ni g \mapsto \psi _l (0,g) $ is called the quantized Dirac field at time $ t=0 $. 
Now, fix $ \chi _\rm{el} \in L^2 (\R ^3 _\xx ) $. We set
\begin{align}
& \psi _l (\xx ) := \psi_l (0, \chi _\rm{el} ^\xx ) , \\
& \chi _\rm{el} ^\xx (\yy ) := \chi _\rm{el} (\yy - \xx ) , \q \yy \in \R ^3 .
\end{align}
$ \psi _l (\xx ) $ is called the point-like quantized Dirac field with momentum cutoff $ \hat{\chi _\mathrm{el}} $.
For each $ \xx \in \R ^3 $ and $ \mu =0,1,2,3 $, we define the current operator $ j^\mu (\xx ) $ by
\begin{align}
j^\mu (\xx ) := \sum _{l,l' =1} ^4  \psi  _l (\xx )^* \alpha ^\mu _{ll'} \psi _{l'} (\xx ) .   
\end{align}
Then $ j^\mu (\xx ) $ is bounded and self-adjoint.

\subsection{Total Hamiltonian}

The Hilbert space of state vectors for QED in Lorenz gauge is taken to be 
\begin{align}
\F _{\mathrm{tot}} := \F _\rm{el} \ot \F _\rm{ph} .
\end{align}
This Hilbert space can be identified as 
\begin{align}
\mathcal{F}_\mathrm{tot}=\op_{n=0}^\infty \left( \F _\rm{el} \ot\left( \ot_{s}^n \mathcal{H}_\mathrm{ph}\right) \right).
\end{align}
We freely use this identification.
The free Hamiltonian is 
\begin{align}
H_\rm{fr} := H_\rm{el} \ot I + I \ot H_\rm{ph} ,
\end{align}
where the subscript $ \rm{fr} $ in $ H_\rm{fr} $ means \textit{free}.

We introduce an indefinite metric on $ \F _{\mathrm{tot}} $ by
\begin{align}
\left\langle \Psi | \Phi \right\rangle := \left\langle \Psi , I \otimes \eta \Phi  \right\rangle _{\F _\mathrm{tot}}, \q \Psi , \Phi \in \F _{\mathrm{tot}} .
\end{align}
Then, $ \eta $-adjointness, $ \eta $-symmetricity, $ \eta $-self-adjointness and $ \eta $-unitarity are defined on $ \F _\rm{tot} $, in the same way as in subsection \ref{radiation}, by replacing $ \eta $ with $ I \otimes \eta $.

We introduce the minimal interaction between the quantized Dirac field and the quantized radiation field. We denote the charge of the Dirac particle by $ e \in \Real $. Let  $ N_\bb := \mathrm{d}\Gamma _\bb (I _{H_\rm{ph}}) $ be
the photon number operator and $ \chi _\rm{sp} \in L^1 (\R ^3) $ be a real-valued function on $ \R ^3 $ playing the role of a spacial cut-off.
Our interaction Hamiltonian $ H_\rm{int} $ is defined as 
\begin{align}
& D(H_\rm{int}) = D(I \otimes N_\bb ^\fr{1}{2}) , \\
& H_\rm{int} \Psi = e \int _{\R ^3} d\xx \, \chi _{\mathrm{sp}} (\xx ) j^\mu (\xx ) \ot A_\mu (\xx ) \Psi , \q \Psi \in D(H_\rm{int}) ,
\end{align}
where the integral on the right hand side is taken in the sense of strong Bochner integral, and we used the standard Einstein notation in which 
the summation over repeated indices with one upper and the other lower is understood.  As will be seen in later, the operator $ H_\rm{int} $ is well-defined since 
\begin{align}
\int _{\R ^3} d\xx \, | \chi _\rm{sp} (\xx )| \,  \| j^\mu (\xx ) \otimes A_\mu (\xx ) \Psi  \|  <\infty , \q \Psi \in D(I \otimes N_\bb ^\fr{1}{2}) .
\end{align}
Moreover, $ H_\rm{int} $ is essentially $ \eta $-self-adjoint.

The quantum system under consideration is described by the Hamiltonian
\begin{align}
H _\rm{QED} := H_\mathrm{fr} + \overline{H_\mathrm{int}}  \q \text{in} \q  \F _\rm{tot} .
\end{align}
The time evolution of the quantum fields $ A_\mu , \psi _l $ is generated by the Heisenberg equations:
\begin{align}
& \frac{d}{dt} A^\mu (t,f) = [iH_\rm{QED}  , A^\mu (t,f)] , \\
& \frac{d}{dt} \psi _l (t,g) = [iH_\rm{QED}  , \psi _l (t,g)] .
\end{align}
It is easy to find \textit{formal} solutions of these equations:
\begin{align}
& A^\mu (t,f ) = e^{itH_\rm{QED} } A^\mu (0,f) e^{-itH_\rm{QED} }, \\
& \psi _l (t, g ) = e^{itH_\rm{QED} } \psi _l (0,g) e^{-itH_\rm{QED} } .
\end{align}
However, this does not immediately make sense because our QED Hamiltonian in Lorenz gauge is neither self-adjoint nor normal, and therefore we cannot define the time evolution operational $ e^{-itH_\rm{QED} } $ through the operational calculus.

\subsection{$\eta $-self-adjointness}

In this subsection, we will prove the $ \eta $-self-adjointness of $ A_\mu (0,f) , \, H_\rm{int} $, and $ H_\rm{QED} $ under some suitable conditions.

\begin{Lem}\label{a-est}
\begin{enumerate}[(i)]
\item For all $ f\in L^2 (\R ^3 _\kk ) $ and $ \Psi \in D(N_\bb ^{1/2}) $, 
\begin{align}
\left\| a_\mu (f) \Psi \right\| _{\F _\rm{ph}} & \le \left\| f \right\| _{L^2 (\R ^3 _\kk )} \, \left\| N_\bb ^{1/2} \Psi \right\| _{\F _\rm{ph}} , \\
\left\| a_\mu ^\dagger (f) \Psi \right\| _{\F _\rm{ph}} & \le \left\| f \right\| _{L^2 (\R ^3 _\kk )} \, \left\| (N_\bb +1)^{1/2} \Psi \right\| _{\F _\rm{ph}} .
\end{align}

\item For all $ f\in D(\omega ^{-1/2}) $ and $ \Psi \in D(H_\rm{ph} ^{1/2}) $, 
\begin{align}
\left\| a_\mu (f) \Psi \right\| _{\F _\rm{ph}} & \le \left\| f / \sqrt{\omega } \right\| _{L^2 (\R ^3 _\kk )}  \, \left\| H_\rm{ph} ^{1/2} \Psi \right\| _{\F _\rm{ph}}, \\
\left\| a_\mu ^\dagger (f) \Psi \right\| _{\F _\rm{ph}} & \le \left\| f / \sqrt{\omega } \right\| _{L^2 (\R ^3 _\kk )}  \, \left\| H_\rm{ph} ^{1/2} \Psi \right\| _{\F _\rm{ph}} + \left\| f \right\| _{L^2 (\R ^3 _\kk )} \, \left\| \Psi \right\| _{\F _\rm{ph}}.
\end{align}
\end{enumerate}
\end{Lem}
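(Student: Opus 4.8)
The plan is to reduce both parts to the standard ``basic estimates'' for creation and annihilation operators on the Boson Fock space $\F_\rm{ph}$, applied to one test vector in $\H_\rm{ph}$, and then to compute the two relevant norms of that vector via a polarization identity. By the definition of $a_\mu(f)$ we have $a_\mu(f) = a(F_\mu)$ with
\begin{align}
F_\mu := \big( f e_\mu^{(0)},\, f e_\mu^{(1)},\, f e_\mu^{(2)},\, f e_\mu^{(3)} \big) \in \H_\rm{ph},
\end{align}
so that the whole problem is governed by the two numbers $\|F_\mu\|_{\H_\rm{ph}}$ and $\|\omega^{-1/2} F_\mu\|_{\H_\rm{ph}}$.

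First I would record the standard Fock-space estimates for a general $F \in \H_\rm{ph}$ (see, e.g., \cite{AraiFock}): on $D(N_\bb^{1/2})$,
\begin{align}
\|a(F)\Psi\|_{\F_\rm{ph}} \le \|F\|_{\H_\rm{ph}} \|N_\bb^{1/2}\Psi\|_{\F_\rm{ph}}, \q
\|a(F)^*\Psi\|_{\F_\rm{ph}} \le \|F\|_{\H_\rm{ph}} \|(N_\bb+1)^{1/2}\Psi\|_{\F_\rm{ph}},
\end{align}
and, for $F$ with $\omega^{-1/2}F \in \H_\rm{ph}$, on $D(H_\rm{ph}^{1/2})$,
\begin{align}
\|a(F)\Psi\|_{\F_\rm{ph}} &\le \|\omega^{-1/2}F\|_{\H_\rm{ph}} \|H_\rm{ph}^{1/2}\Psi\|_{\F_\rm{ph}}, \\
\|a(F)^*\Psi\|_{\F_\rm{ph}} &\le \|\omega^{-1/2}F\|_{\H_\rm{ph}} \|H_\rm{ph}^{1/2}\Psi\|_{\F_\rm{ph}} + \|F\|_{\H_\rm{ph}} \|\Psi\|_{\F_\rm{ph}}.
\end{align}
These are proved sector by sector: on $\ot_\rm{s}^n \H_\rm{ph}$ the annihilation operator contracts $F$ against one slot, and Cauchy--Schwarz together with $\mathrm{d}\Gamma_\bb(I) = N_\bb$ and $\mathrm{d}\Gamma_\bb(\omega) = H_\rm{ph}$ yields the two annihilation bounds; the extra term $\|F\|\|\Psi\|$ for the creation operator comes from the canonical commutation relation $[a(F),a(F)^*] = \|F\|_{\H_\rm{ph}}^2$, which gives the exact identity $\|a(F)^*\Psi\|_{\F_\rm{ph}}^2 = \|a(F)\Psi\|_{\F_\rm{ph}}^2 + \|F\|_{\H_\rm{ph}}^2 \|\Psi\|_{\F_\rm{ph}}^2$, followed by $\sqrt{a^2+b^2}\le a+b$.

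The only model-specific point is then the identity
\begin{align}\label{pol-id}
\sum_{\lambda=0}^3 \big| e_\mu^{(\lambda)}(\kk) \big|^2 = 1, \q \mu = 0,1,2,3, \q \text{a.e. } \kk \in \R^3.
\end{align}
In the special Lorentz frame fixed above, $e^{(0)} = (1,0,0,0)$ has only a time component, while the spatial parts $\mathbf{e}^{(1)}(\kk), \mathbf{e}^{(2)}(\kk), \kk/|\kk|$ of $e^{(1)}, e^{(2)}, e^{(3)}$ form an orthonormal basis of $\R^3$. Hence for $\mu=0$ only $\lambda=0$ contributes, giving $1$, while for $\mu=j\in\{1,2,3\}$ the sum equals the squared norm of the $j$-th column of the $3\times 3$ orthogonal matrix whose rows are $\mathbf{e}^{(1)}(\kk), \mathbf{e}^{(2)}(\kk), \kk/|\kk|$, hence again $1$. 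Using \eqref{pol-id} and $\omega(\kk)=|\kk|$,
\begin{align}
\|F_\mu\|_{\H_\rm{ph}}^2 = \int_{\R^3} |f(\kk)|^2 \sum_{\lambda=0}^3 \big|e_\mu^{(\lambda)}(\kk)\big|^2 \, d\kk = \|f\|_{L^2(\R^3_\kk)}^2,
\end{align}
and likewise $\|\omega^{-1/2} F_\mu\|_{\H_\rm{ph}} = \|f/\sqrt{\omega}\|_{L^2(\R^3_\kk)}$. Substituting these two norms into the four standard estimates gives exactly the four inequalities of (i) and (ii).

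The genuine content is thus concentrated in \eqref{pol-id}; everything else is an invocation of known Fock-space bounds, so I do not expect a serious obstacle. The only point requiring care is to keep the Euclidean sum $\sum_\lambda |e_\mu^{(\lambda)}|^2$ that enters the Hilbert-space norm distinct from the Minkowski combination $\sum_{\lambda\lambda'} e_\mu^{(\lambda)} g_{\lambda\lambda'} e_\nu^{(\lambda')*} = g_{\mu\nu}$ appearing in the covariance relation for the polarization vectors.
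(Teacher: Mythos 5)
Your proposal is correct and follows the same route the paper intends: the paper's own proof of Lemma \ref{a-est} is just a citation of standard Fock-space bounds (\cite{AraiFock} Proposition 4-24, \cite{MR2097788} Section 13.3) with all details omitted, and your argument supplies exactly those details, the only model-specific ingredient being the identity $\sum_{\lambda=0}^3|e_\mu^{(\lambda)}(\kk)|^2=1$ for a.e.\ $\kk$, which your orthogonal-matrix argument (rows $\mathbf{e}^{(1)}(\kk),\mathbf{e}^{(2)}(\kk),\kk/|\kk|$) establishes correctly. One step does need patching, and it is precisely the kind of distinction this paper is otherwise careful about: the lemma concerns $a_\mu^\dagger(f)=\eta\,a_\mu(f)^*\eta$, not $a_\mu(f)^*=a(F_\mu)^*$, so your ``four standard estimates'' do not literally apply to the operator in the statement (indeed the paper records $a^{(0)}(f)^\dagger=-a^{(0)}(f)^*$, so the two operators genuinely differ for $\mu=0$). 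The repair is one line: $\eta=\Gamma_\bb(-g)$ is unitary and strongly commutes with $N_\bb$ and $H_\mathrm{ph}$ (because $-g$ is unitary and commutes with the matrix-valued function $\omega$), hence
\begin{align}
\left\| a_\mu^\dagger(f)\Psi\right\| = \left\| a_\mu(f)^*\,\eta\Psi\right\|
\le \left\| F_\mu\right\| \,\left\| (N_\bb+1)^{1/2}\eta\Psi\right\|
= \left\| f\right\| _{L^2(\R^3_\kk)}\, \left\| (N_\bb+1)^{1/2}\Psi\right\| ,
\end{align}
and the same conjugation transfers the $H_\mathrm{ph}^{1/2}$ bound. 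Equivalently, observe that $a_\mu^\dagger(f)=a(-gF_\mu)^*$ with $\left\|-gF_\mu\right\|=\left\|F_\mu\right\|$ and $\left\|\omega^{-1/2}(-g)F_\mu\right\|=\left\|\omega^{-1/2}F_\mu\right\|$, so the standard creation-operator estimates apply verbatim to the test vector $-gF_\mu$. With this line added, your proof is complete and matches what the paper's citation stands in for.
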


\begin{proof} The estimates (i) and (ii) are easily proved by applying well known estimations (see \cite{AraiFock} Proposition 4-24, \cite{MR2097788} Section 13.3), and we omit the proof.
\end{proof}

Note that the spectrum of the photon number operator $ N_\bb $ is a purely discrete set $ \{ 0, 1,2,\dots \} $, and that for all integer $ N\ge 0 $, 
\begin{align}
R (E_{N_\bb} ([0, N])) = \op _{n=0} ^N \ot _\rm{s} ^n \H _\rm{ph} \subset \F _\rm{ph} .
\end{align}
For each $ \Psi = \{ \Psi ^{(n)} \} _{n=0} ^\infty \in \F _{\bb ,0} (\H _\rm{ph}) $, we denote by $ N_\Psi $ the maximum photon number of $ \Psi $, that is,
\begin{align}
N_\Psi := \max \{ n\ge 0 \, \big| \, \Psi ^{(n)} \neq 0 \} < \infty.
\end{align}

\begin{Lem}\label{A-eta} Let $ f  $ be a real-valued function satisfying $ f \in L^2 (\R ^3 _\xx ) $ and $ \hat{f} / \sqrt{\omega } \in L^2 (\R_\kk ^3) $. Then, the quantized radiation field $ A_\mu (0, f) $ is essentially $ \eta $-self-adjoint, i.e., $ \overline{A_\mu (0,f)} = \Big( \overline{A_\mu (0,f)} \Big) ^\dagger $.
\end{Lem}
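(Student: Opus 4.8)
The plan is to transport the problem to the genuine (positive definite) Hilbert space $\F_\rm{ph}$ and then invoke Nelson's analytic vector theorem. By Lemma \ref{eta-lem} (iii), the field $A_\mu(0,f)$ is essentially $\eta$-self-adjoint if and only if $\eta A_\mu(0,f)$ is essentially self-adjoint in the ordinary sense, so the indefinite metric can be eliminated at the outset and replaced by a standard Fock-space argument.

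First I would record the explicit form of the field. Since $f$ is real, $\widehat{f^*}=\hat f$, so $A_\mu(0,f)=a_\mu(g)+a_\mu^\dagger(g)$ with $g:=\hat f/\sqrt{2\omega}\in L^2(\R^3_\kk)$, the membership following from the hypothesis $\hat f/\sqrt\omega\in L^2(\R^3_\kk)$. On the finite particle subspace $\F_{\bb,0}(\H_\rm{ph})$ both $a_\mu(g)$ and $a_\mu^\dagger(g)=a_\mu(g)^\dagger$ are well defined and leave $\F_{\bb,0}(\H_\rm{ph})$ invariant. Taking $\eta$-adjoints and using $\eta^2=I$ gives $A_\mu(0,f)^\dagger\supset a_\mu^\dagger(g)+a_\mu(g)=A_\mu(0,f)$, so $A_\mu(0,f)$ is $\eta$-symmetric; equivalently $\eta A_\mu(0,f)$ is symmetric on $\F_{\bb,0}(\H_\rm{ph})$ by Lemma \ref{eta-lem} (i), and $A_\mu(0,f)$ is closable by Lemma \ref{eta-lem} (iv).

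It then remains to show that $T:=\eta A_\mu(0,f)$, symmetric on the dense domain $\F_{\bb,0}(\H_\rm{ph})$, is essentially self-adjoint, for which I would exhibit analytic vectors. The decisive structural fact is that $\eta=\Gamma_\bb(-g)$ is unitary and commutes with the number operator $N_\bb$, hence preserves each $n$-particle sector and leaves all norm estimates untouched. Using Lemma \ref{a-est} (i), any $\Phi$ with $N_\Phi\le m$ satisfies $\|A_\mu(0,f)\Phi\|\le C\sqrt{m+1}\,\|\Phi\|$ with $C:=2\|g\|_{L^2(\R^3_\kk)}$, and $T\Phi$ lives in particle sectors of number at most $m+1$. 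Iterating from $\Psi\in\F_{\bb,0}(\H_\rm{ph})$ with $N_\Psi\le N$ yields
\begin{align}
\| T^n \Psi \| \le C^n \sqrt{\frac{(N+n)!}{N!}}\,\| \Psi \| , \q n=0,1,2,\dots .
\end{align}
Since the series $\sum_n (t^n/n!)\sqrt{(N+n)!/N!}$ has infinite radius of convergence (the ratio of consecutive terms behaves like $Ct/\sqrt n\to0$), every vector of $\F_{\bb,0}(\H_\rm{ph})$ is an analytic, indeed entire, vector for $T$. Nelson's analytic vector theorem then shows that $T=\eta A_\mu(0,f)$ is essentially self-adjoint, and Lemma \ref{eta-lem} (iii) delivers $\overline{A_\mu(0,f)}=(\overline{A_\mu(0,f)})^\dagger$.

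I expect the main point to watch is the bookkeeping behind the analytic-vector estimate: one must check that conjugation by the indefinite-metric operator $\eta$ genuinely respects the particle-number grading, so that the clean factor $\sqrt{(N+n)!/N!}$ survives the iteration. Once the commutation of $\eta$ with $N_\bb$ and its norm-preserving character are exploited, the estimate reduces to the familiar bound for Segal-type field operators and the rest is routine.
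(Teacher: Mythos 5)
Your proposal is correct and follows essentially the same route as the paper's own proof: reduction to essential self-adjointness of $\eta A_\mu(0,f)$ via Lemma \ref{eta-lem} (iii), verification of $\eta$-symmetry from the adjoint relations for $a_\mu$ and $a_\mu^\dagger$, and Nelson's analytic vector theorem applied on $\F_{\bb,0}(\H_\rm{ph})$ using the bound of Lemma \ref{a-est} (i) together with the fact that $\eta$ commutes with $N_\bb$ and the field shifts the particle number by at most one. Your factor $\sqrt{(N+n)!/N!}$ is exactly the paper's product $(N_\Psi+n)^{1/2}\cdots(N_\Psi+1)^{1/2}$, so the two arguments coincide step for step.
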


\begin{proof} By Lemma \ref{eta-lem} (iii), it is sufficient to prove that $ \eta A_\mu (0,f) $ is essentially self-adjoint. 

Put $ f_- := \hat{f}/ \sqrt{2\omega } $. Since $ \eta ^* = \eta $, we have
\begin{align}
(\eta A_\mu (0,f)) ^* & = A_\mu (0,f) ^* \eta \no \\
& \supset \big( a_\mu ( f_- ) ^*  + a_\mu ^\dagger ( f_- ) ^* \big) \eta \no \\
& = \eta \eta \big( a_\mu ( f_- ) ^*  + \big( \eta a_\mu ^* ( f_- ) \eta \big) ^* \big) \eta \no \\
& = \eta \big( a_\mu ^\dagger ( f_- )  + a_\mu ( f_- ) \big) \no \\ 
& = \eta A_\mu (0,f) ,
\end{align}
which means that $ A_\mu (0,f) $ is $ \eta $-symmetric. 

We prove the $ \eta $-self-adjointness by Nelson's analytic vector theorem (\cite{MR0493420}, Theorem X.39 and
its corollaries). Clearly, $ A_\mu (0,f) $ and $ \eta $ leaves $ \F _{\bb ,0} (\H _\rm{ph}) $ invariant. By Lemma \ref{a-est} (i), we have for all $ \Psi \in D(N_\bb ^{1/2}) $,
\begin{align}
\| A_\mu (0,f) \Psi \| \le 2 \| f_- \| \, \| (N_\bb +1)^{1/2} \Psi \| .
\end{align}
By the fact that $ \eta $ and $ N_\bb  $ are strongly commuting, one finds for each $ \Psi \in \F _{\bb ,0} (\H _\rm{ph}) $ and $ n=1,2, \dots $,
\begin{align}
\| ( \eta A_\mu (0,f)) ^n \Psi \| \le (N_\Psi + n) ^{1/2} \dots (N_\Psi + 1) ^{1/2} (2\| f_- \| )^n \| \Psi \| .
\end{align}
Thus, one obtains for all $ t>0 $,
\begin{align}
\sum _{n=0} ^\infty \frac{t^n}{n!} \| (\eta A_\mu (0,f)) ^n \Psi \| & \le \sum _{n=0} ^\infty \frac{t^n}{n!} (N_\Psi + n) ^{1/2} \dots (N_\Psi + 1) ^{1/2} (2\| f_- \| )^n \| \Psi \| \no\\
& < \infty ,
\end{align}
by using d'Alembert's ratio test. Therefore, $ \F _{\bb ,0} (\H _\rm{ph}) $ is a $ \eta A_\mu (0,f) $-invariant analytic vector space, and we have the essentially self-adjointness of $ \eta A_\mu (0,f) $ by Nelson's analytic vector theorem.
By Lemma \ref{eta-lem} (v), the assertion follows.
\end{proof}

We denote the closure of $ A_\mu (0, f) $ by the same symbol.

\begin{Lem}\label{int-est} Let $ \Psi \in D((I \otimes N_\bb ) ^{1/2}) $. Then, the followings hold.
\begin{enumerate}[(i)]
\item 
\begin{align}\label{jA}
\| j^\mu (\xx ) \otimes A_\mu (\xx ) \Psi \| \le M_\rm{el} M_\rm{ph} \| (I \otimes N_\bb +1)^{1/2} \Psi \|  ,
\end{align}
where,
\begin{align}
M_\rm{el} := \sum_{\mu=0}^3\sup _{\xx \in\Real^3} \| j^\mu (\xx ) \| , \q M_\rm{ph} := 2 \left\| \frac{ \hat{\chi _\rm{ph}}}{\sqrt{2\omega }} \right\| .
\end{align}

\item The vector-valued function $ \xx \mapsto j^\mu (\xx ) \otimes A_\mu (\xx ) \Psi $ is strongly continuous.

\item Let $ \chi _\rm{sp} $ be a real-valued function satisfying $ \chi _\rm{sp} \in L^1 (\R ^3 _\xx ) $. Then,
\begin{align}\label{spjA}
\int _{\R ^3} d\xx \, |\chi _\rm{sp} (\xx ) | \, \| j^\mu (\xx ) \otimes A_\mu (\xx ) \Psi \| \le \| \chi _\rm{sp} \| _{L^1 (\R ^3)} M_\rm{el} M_\rm{ph} \| (I \otimes N_\bb +1)^{1/2} \Psi \|   <\infty .
\end{align}
\end{enumerate}
\end{Lem}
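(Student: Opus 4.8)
The plan is to reduce all three assertions to the single photon-field estimate of Lemma~\ref{a-est}~(i) together with the boundedness of the current operator $j^\mu(\xx)$, the whole argument hinging on the observation that spatial translation acts on the photon form factor only through a phase in momentum space.

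The key remark is that, since $\chi_\mathrm{ph}$ is real and $\chi_\mathrm{ph}^\xx(\yy)=\chi_\mathrm{ph}(\yy-\xx)$, one has $\widehat{\chi_\mathrm{ph}^\xx}(\kk)=e^{-i\kk\cdot\xx}\,\hat{\chi_\mathrm{ph}}(\kk)$, so the form factor $\widehat{\chi_\mathrm{ph}^\xx}/\sqrt{2\omega}$ has $L^2(\R^3_\kk)$-norm equal to $\|\hat{\chi_\mathrm{ph}}/\sqrt{2\omega}\|$ for every $\xx$. Applying Lemma~\ref{a-est}~(i) to the annihilation and creation parts of $A_\mu(\xx)=A_\mu(0,\chi_\mathrm{ph}^\xx)$, and using $\|N_\bb^{1/2}\Psi_\mathrm{ph}\|\le\|(N_\bb+1)^{1/2}\Psi_\mathrm{ph}\|$, I obtain
\begin{align}
\|A_\mu(\xx)\Psi_\mathrm{ph}\|_{\F_\mathrm{ph}}\le M_\mathrm{ph}\,\|(N_\bb+1)^{1/2}\Psi_\mathrm{ph}\|_{\F_\mathrm{ph}},\qquad \Psi_\mathrm{ph}\in D(N_\bb^{1/2}),\notag
\end{align}
with $M_\mathrm{ph}=2\|\hat{\chi_\mathrm{ph}}/\sqrt{2\omega}\|$ independent of $\xx$. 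This estimate lifts to $\F_\mathrm{tot}=\F_\mathrm{el}\ot\F_\mathrm{ph}$, giving $\|(I\ot A_\mu(\xx))\Psi\|\le M_\mathrm{ph}\|(I\ot N_\bb+1)^{1/2}\Psi\|$, where I used $I\ot(N_\bb+1)=I\ot N_\bb+1$. Writing $j^\mu(\xx)\ot A_\mu(\xx)=(j^\mu(\xx)\ot I)(I\ot A_\mu(\xx))$ and using that $j^\mu(\xx)$ is bounded then yields the bound $\|j^\mu(\xx)\|\,M_\mathrm{ph}\|(I\ot N_\bb+1)^{1/2}\Psi\|$ for each $\mu$; summing over $\mu$ and using $\sum_\mu\|j^\mu(\xx)\|\le M_\mathrm{el}$ proves (i).

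For (ii) I would combine norm-continuity of $\xx\mapsto j^\mu(\xx)$ with strong continuity of $\xx\mapsto A_\mu(\xx)\Psi_\mathrm{ph}$. The latter holds because $e^{-i\kk\cdot\xx}\hat{\chi_\mathrm{ph}}/\sqrt{2\omega}$ is $L^2(\R^3_\kk)$-continuous in $\xx$ (dominated convergence with majorant $|\hat{\chi_\mathrm{ph}}/\sqrt{2\omega}|$); by linearity of $a_\mu,a_\mu^\dagger$ and Lemma~\ref{a-est}~(i), $\|(A_\mu(\xx)-A_\mu(\xx'))\Psi_\mathrm{ph}\|$ is then dominated by the (vanishing) $L^2$-difference of the form factors times $\|(N_\bb+1)^{1/2}\Psi_\mathrm{ph}\|$. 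Norm-continuity of $j^\mu(\xx)$ follows similarly, since each $\psi_l(\xx)$ is bounded with norm controlled by the $L^2$-norms of its phase-shifted form factors and $j^\mu(\xx)$ is a finite sum of products of such operators. Splitting $j^\mu(\xx)\ot A_\mu(\xx)\Psi-j^\mu(\xx')\ot A_\mu(\xx')\Psi$ into $(j^\mu(\xx)-j^\mu(\xx'))\ot A_\mu(\xx)\Psi$ plus $j^\mu(\xx')\ot(A_\mu(\xx)-A_\mu(\xx'))\Psi$, and estimating the first by the operator-norm difference of $j^\mu$ times the locally bounded $\|(I\ot A_\mu(\xx))\Psi\|$, the second by $\|j^\mu(\xx')\|$ times the strong convergence just established, gives the claim.

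Assertion (iii) is then immediate: integrating the pointwise bound from (i) against $|\chi_\mathrm{sp}|$, and using that its right-hand side does not depend on $\xx$, produces the stated inequality, with finiteness guaranteed by $M_\mathrm{ph}<\infty$, $\chi_\mathrm{sp}\in L^1(\R^3)$ and $\Psi\in D((I\ot N_\bb)^{1/2})$; the measurability needed for the Bochner integral is supplied by (ii). The only genuinely delicate point --- and the crux of the lemma --- is the uniformity in $\xx$ of the constants $M_\mathrm{el},M_\mathrm{ph}$. This rests entirely on the translation-to-phase identity, which must be checked equally for the electron form factors entering $j^\mu(\xx)$; there one uses $|u^l_s(\pp)|,|v^l_s(\pp)|\lesssim\sqrt{E_M(\pp)}$, so that the relevant form factors lie in $L^2(\R^3_\pp)$ whenever $\chi_\mathrm{el}\in L^2(\R^3)$, ensuring $M_\mathrm{el}=\sum_\mu\sup_\xx\|j^\mu(\xx)\|<\infty$.
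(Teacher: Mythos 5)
Your proposal is correct, and for parts (i) and (iii) it follows essentially the paper's own route: factorize $j^\mu(\xx)\ot A_\mu(\xx)=(j^\mu(\xx)\ot I)(I\ot A_\mu(\xx))$, apply Lemma \ref{a-est} (i) together with the observation that $\widehat{\chi_\mathrm{ph}^\xx}=e^{-i\kk\cdot\xx}\hat{\chi_\mathrm{ph}}$ has $\xx$-independent $L^2$-norm, and then integrate the resulting $\xx$-independent pointwise bound against $|\chi_\mathrm{sp}|$. (The paper first proves the estimate on $\F_\mathrm{el}\hat{\otimes}D(N_\bb^{1/2})$ and extends it to all of $D((I\ot N_\bb)^{1/2})$ by a core argument; your phrase ``this estimate lifts'' glosses over exactly this step, but it is routine.) Where you genuinely diverge is part (ii). The paper introduces the total momentum operator $\mathbf{P}$ and exploits translation covariance, writing $j^\mu(\xx)\ot A_\mu(\xx)=e^{-i\mathbf{P}\cdot\xx}\,j^\mu(\mathbf{0})\ot A_\mu(\mathbf{0})\,e^{i\mathbf{P}\cdot\xx}$, so that strong continuity in $\xx$ follows at once from strong continuity of the unitary group $e^{i\mathbf{P}\cdot\xx}$, strong commutativity of $\mathbf{P}$ with $I\ot N_\bb$, and the number-operator bound (i). You instead prove continuity by hand: dominated convergence gives $L^2$-continuity of the phase-shifted form factors, hence strong continuity of $\xx\mapsto(I\ot A_\mu(\xx))\Psi$ by applying Lemma \ref{a-est} (i) to the difference of form factors, and norm continuity of $\xx\mapsto j^\mu(\xx)$ via $\|B(G)\|=\|G\|_{\H_\mathrm{el}}$; a two-term splitting then finishes. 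Both arguments are sound. The paper's is shorter and reuses (i) wholesale, but relies on the extra structure of $\mathbf{P}$ (and, as written, on $(I\ot N_\bb)^{-1/2}$, which strictly should be $(I\ot N_\bb+1)^{-1/2}$ since $0$ is in the spectrum of $N_\bb$); yours is more elementary, needs no momentum operator, and has the side benefit of actually justifying the assertion, stated without proof in the paper, that $M_\mathrm{el}=\sum_\mu\sup_\xx\|j^\mu(\xx)\|<\infty$, via the phase-shift identity for the electron form factors and the bound $|u_s^l(\pp)|,|v_s^l(\pp)|\le\sqrt{E_M(\pp)}$.
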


\begin{proof} First of all, notice that the operator identity $ (I \otimes N_\bb ) ^{1/2} = I\otimes N_\bb ^{1/2} $ follows from operational calculus.

\begin{enumerate}[(i)] \item Note that the current operator $j^\mu(\xx)$ is uniformly bounded in $ \xx \in \R ^3 $, and thus $ M_\rm{el} <\infty $. By using Lemma \ref{a-est} (i), one finds for all $ \Psi \in \F _\rm{el} \hat{\otimes} D(N_\bb ^{1/2}) $,
\begin{align}
\| j^\mu (\xx ) \otimes A_\mu (\xx ) \Psi  \| & = \| \big( j^\mu (\xx ) \ot I \big) \big( I \otimes A_\mu (\xx )\big) \Psi  \|\no \\
& \le M_\rm{el} M_\rm{ph} \| (I \otimes N_\bb +1 )^{1/2} \Psi \| .
\end{align}
Since $ \F _\rm{el} \hat{\otimes} D(N_\bb ^{1/2}) $ is a core of $ (I \otimes N_\bb ) ^{1/2} $, we obtain \eqref{jA}.

\item Let $ \mathbf{P} := (P^1, P^2 , P^3) $, $ P^j := \overline{ \mathrm{d}\Gamma _\ff (p^j) \otimes I + I \otimes \mathrm{d}\Gamma _\bb (k^j) } $, where $ p^j $ and $ k^j $ are the multiplication operators in $ \H _\rm{el} $ and $ \H _\rm{ph} $ respectively. Then, $ P^j , \, j=1,2,3 $ are self-adjoint. Note that the operator $ j^\mu (\xx ) \otimes A_\mu (\xx ) $ can be rewritten as $ e^{-i\mathbf{P} \cdot \xx } j^\mu (\mathbf{0} ) \otimes A_\mu (\mathbf{0}) e^{i\mathbf{P} \cdot \xx } $. Since $ P^j $ and $ I \otimes N_\bb  $ are strongly commuting, we have for all $ \Psi \in D((I \otimes N_\bb ) ^{1/2}) $,
\begin{align}
j^\mu (\xx ) \otimes A_\mu (\xx ) \Psi = e^{-i\mathbf{P} \cdot \xx } j^\mu (\mathbf{0} ) \otimes A_\mu (\mathbf{0}) (I \otimes N_\bb ) ^{-1/2} e^{i \mathbf{P} \cdot \xx } (I \otimes N_\bb ) ^{1/2} \Psi ,
\end{align}
and the right hand side is strongly continuous since $j^\mu (\mathbf{0} ) \otimes A_\mu (\mathbf{0}) (I \otimes N_\bb ) ^{-1/2}$ is a bounded operator by \eqref{jA}.
 
\item The inequality \eqref{spjA} immediately follows from \eqref{jA}. \end{enumerate}\end{proof}

Set
\begin{align}
\F _{\bb , 0} := \hop _{n=0} ^\infty \big( \F _\mathrm{el} \ot (\ot _{s} ^n \H _\mathrm{ph} ) \big) 
\end{align}
where $ \hop _{n=0} ^\infty $ denotes the algebraic direct sum. Let $ N_\Psi $ denote the maximum number of photons of $ \Psi \in \F _{\bb,0} $, namely, for $ \Psi = \{ \Psi ^{(N)} \} _{N=0} ^\infty \in \op _{n=0} ^\infty ( \F _\mathrm{el} \otimes (\ot _\rm{s} ^n H_\rm{ph})) $, $ N_\Psi $ is the largest natural number $ N $ satisfying $ \Psi ^{(N)} \neq 0 $. Define
\begin{align}
\F _N := \op _{n=0} ^N \big( \F _\mathrm{el} \ot (\ot _{s} ^n \H _\mathrm{ph} ) \big) , \q N =0,1,2, \dots .
\end{align}
Then, for all integer $ N\ge 0 $, one finds 
\begin{align}
R (E_{I \otimes N_\bb } ([0,N]) = \F _N .
\end{align}
Since the photon field $A_\mu(\xx)$ creates at most one photon, it follows that if $ \Psi \in \F _{\bb ,0} $ belongs to $ \F _N $, then $ A_\mu(\xx) \Psi \in \F _{N+1} $ for all $\xx \in \R^3$. Thus, since $\F_{N+1}$ is a closed subspace,
we find that $H_\rm{int}\F_{N}$ is contained in $\F_{N+1}$.

\begin{Lem}\label{int-eta} The interaction Hamiltonian $ H_\rm{int} $ is essentially $ \eta $-self-adjoint.
\end{Lem}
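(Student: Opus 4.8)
The plan is to reduce the essential $\eta$-self-adjointness of $H_\rm{int}$ to ordinary essential self-adjointness of $(I\otimes\eta)H_\rm{int}$ via the analogue of Lemma \ref{eta-lem} (iii) on $\F_\rm{tot}$, and then to run exactly the same Nelson analytic-vector argument as in the proof of Lemma \ref{A-eta}, now carried out on the finite-particle subspace $\F _{\bb , 0}$.

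First I would establish that $H_\rm{int}$ is $\eta$-symmetric. Since $j^\mu(\xx)$ is bounded and self-adjoint on $\F_\rm{el}$, the operator $j^\mu(\xx)\otimes I$ is $\eta$-self-adjoint, while $I\otimes A_\mu(\xx)$ is $\eta$-self-adjoint because $\overline{A_\mu(\xx)}$ is $\eta$-self-adjoint by Lemma \ref{A-eta} (recall that $\chi_\rm{ph}^\xx$ is real since $\chi_\rm{ph}$ is). As these two factors commute, their product $j^\mu(\xx)\otimes A_\mu(\xx)$ is $\eta$-symmetric on the finite-particle domain; equivalently, $(I\otimes\eta)(j^\mu(\xx)\otimes A_\mu(\xx)) = j^\mu(\xx)\otimes\eta A_\mu(\xx)$ is symmetric there, being the product of the two commuting symmetric operators $j^\mu(\xx)\otimes I$ and $I\otimes\eta A_\mu(\xx)$. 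Integrating against the real weight $\chi_\rm{sp}$ and summing over $\mu$ then shows that $(I\otimes\eta)H_\rm{int}$ is symmetric on $\F _{\bb , 0}$.

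Next I would verify the analytic-vector estimate on $\F _{\bb , 0}$. Because $A_\mu(\xx)$ creates at most one photon we have already observed that $H_\rm{int}\F_N\subset\F_{N+1}$, and since $I\otimes\eta$ preserves the photon number it leaves each $\F_N$ invariant; hence $K:=(I\otimes\eta)H_\rm{int}$ maps $\F _{\bb , 0}$ into itself and raises the maximal photon number by at most one per application. Combining this with the number-operator bound \eqref{spjA}, namely $\| H_\rm{int}\Phi\|\le\kappa\,\|(I\otimes N_\bb+1)^{1/2}\Phi\|$ with $\kappa:=|e|\,\|\chi_\rm{sp}\|_{L^1(\R^3)}M_\rm{el}M_\rm{ph}$, together with the unitarity of $I\otimes\eta$, an induction on $n$ gives
\begin{align}
\| K^n\Psi\|\le\kappa^n\,(N_\Psi+n)^{1/2}\cdots(N_\Psi+1)^{1/2}\,\|\Psi\|,\q\Psi\in\F _{\bb , 0}.
\end{align}
Consequently $\sum_{n=0}^\infty (t^n/n!)\,\| K^n\Psi\|<\infty$ for every $t>0$ by d'Alembert's ratio test, so every vector of the dense invariant subspace $\F _{\bb , 0}$ is analytic for the symmetric operator $K$. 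Nelson's analytic vector theorem then yields essential self-adjointness of $K=(I\otimes\eta)H_\rm{int}$, and the analogue of Lemma \ref{eta-lem} (iii) gives the claimed essential $\eta$-self-adjointness of $H_\rm{int}$.

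I expect the main obstacle to be the bookkeeping in the $\eta$-symmetry step, specifically justifying that the strong Bochner integral defining $H_\rm{int}$ transports the pointwise $\eta$-symmetry of $j^\mu(\xx)\otimes A_\mu(\xx)$ to $H_\rm{int}$ on $\F _{\bb , 0}$, and checking that $\F _{\bb , 0}$ lies in the relevant operator domains so that the commuting-product argument is literally valid rather than merely formal. The analytic-vector estimate itself is routine once the one-photon-per-interaction structure is exploited.
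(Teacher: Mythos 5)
Your strategy coincides with the paper's own proof in its two main steps: show that $(I\otimes\eta)H_\rm{int}$ is symmetric, then apply Nelson's analytic vector theorem on the photon-finite subspace $\F_{\bb,0}$, exploiting that each application of $H_\rm{int}$ raises the photon number by at most one together with the bound \eqref{spjA}; your inductive estimate $\|K^n\Psi\|\le\kappa^n(N_\Psi+n)^{1/2}\cdots(N_\Psi+1)^{1/2}\|\Psi\|$ is literally the paper's estimate with $\kappa=M_\rm{int}$. Where you genuinely differ is the symmetry step: the paper obtains the identity $\Expect{\Psi,\eta H_\rm{int}\Phi}=\Expect{\eta H_\rm{int}\Psi,\Phi}$ by ``direct calculation'' on $\F_\rm{el}\hat{\otimes}D(N_\bb^{1/2})$, while you deduce it structurally from Lemma \ref{A-eta} ($\eta$-self-adjointness of the radiation field, using that $\chi_\rm{ph}^{\xx}$ is real), the self-adjointness of the bounded operator $j^\mu(\xx)$, and the commuting tensor-factor structure, pushing the bounded operator $I\otimes\eta$ through the strong Bochner integral. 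That derivation is sound --- $j^\mu(\xx)\otimes I$ leaves $\F_{\bb,0}$ invariant, so the commuting-product computation is legitimate there --- and it is a tidy alternative to the paper's unspecified calculation.

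There is, however, one genuine gap at the end. The lemma asserts essential $\eta$-self-adjointness of $H_\rm{int}$ \emph{with its declared domain} $D((I\otimes N_\bb)^{1/2})$, whereas your argument establishes symmetry, and hence (via Nelson) essential self-adjointness, only for the restriction $(I\otimes\eta)H_\rm{int}\upharpoonright\F_{\bb,0}$. These are not automatically the same statement: a self-adjoint operator may admit proper closed (necessarily non-symmetric) extensions, so a priori $\overline{(I\otimes\eta)H_\rm{int}}$ could strictly extend the self-adjoint operator $\overline{(I\otimes\eta)H_\rm{int}\upharpoonright\F_{\bb,0}}$, in which case $\overline{H_\rm{int}}$ would \emph{not} be $\eta$-self-adjoint. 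You must therefore prove that the two closures coincide, equivalently that the symmetry identity holds on all of $D((I\otimes N_\bb)^{1/2})$. This is precisely what the paper's additional sentence supplies: since $H_\rm{int}$ is $(I\otimes N_\bb)^{1/2}$-bounded by \eqref{spjA} and $\F_{\bb,0}$ (and $\F_\rm{el}\hat{\otimes}D(N_\bb^{1/2})$) is a core of $(I\otimes N_\bb)^{1/2}$, one can approximate any $\Psi\in D((I\otimes N_\bb)^{1/2})$ by $\Psi_n:=E_{I\otimes N_\bb}([0,n])\Psi\in\F_{\bb,0}$; then $\Psi_n\to\Psi$ and $(I\otimes N_\bb+1)^{1/2}\Psi_n\to(I\otimes N_\bb+1)^{1/2}\Psi$, hence $H_\rm{int}\Psi_n\to H_\rm{int}\Psi$ by \eqref{spjA}, so the graph inclusion $H_\rm{int}\subset\overline{H_\rm{int}\upharpoonright\F_{\bb,0}}$ holds and the closures agree. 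With this patch --- which uses only ingredients already on your table --- your proof is complete.
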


\begin{proof} Firstly, we show the $ \eta $-symmetricity of $ H_\rm{int} $. By direct calculation, one finds for each $ \Psi , \Phi \in \F _\rm{el} \hat{\otimes } D(N_\bb ^{1/2}) $, 
\begin{align}\label{etaint-eq}
\Expect{ \Psi , \eta H_\rm{int} \Phi } =  \Expect{ \eta H_\rm{int} \Psi , \Phi } .
\end{align}
Since $ \F _\rm{el} \hat{\otimes } D(N_\bb ^{1/2}) $ is a core of $ (I \otimes N_\bb )^{1/2} $, and since $ H_\rm{int} $ is $ (I \otimes N_\bb )^{1/2} $- bounded, \eqref{etaint-eq} holds for all $ \Psi , \Phi \in D((I \otimes N_\bb )^{1/2}) $, thus $ H_\rm{int} $ is $ \eta $-symmetric.

Next, we show the $ \eta $-self-adjointness by Nelson's analytic vector theorem, similarly to the proof of Lemma \ref{A-eta}. Note that $ H_\rm{int} $ and $ I \otimes \eta $ leaves $ \F _{\bb ,0} $ invariant. Put $ M_\rm{int} :=|e|\, \| \chi _\rm{sp} \| _{L^1 (\R ^3 _\xx )} M_\rm{el} M_\rm{ph} $. By Lemma \ref{int-est} (iii), and by the fact that $ I\otimes \eta $ and $ I \otimes N_\bb  $ are strongly commuting, one finds for each $ \Psi \in \F _{\bb ,0} $ and $ n=1,2, \dots $, 
\begin{align}
\| ( \eta H_\rm{int} )^n \Psi \| \le (N_\Psi + n) ^{1/2} \dots (N_\Psi + 1) ^{1/2} M_\rm{int} ^n \| \Psi \| .
\end{align}
Thus, one obtains for all $ t>0 $,
\begin{align}
\sum _{n=0} ^\infty \frac{t^n}{n!} \| (\eta H_\rm{int}) ^n \Psi \| & \le \sum _{n=0} ^\infty \frac{t^n}{n!} (N_\Psi + n) ^{1/2} \dots (N_\Psi + 1) ^{1/2} M_\rm{int} ^n \| \Psi \| \no \\
& < \infty ,
\end{align}
by using d'Alembert's ratio test. Therefore, $ \F _{\bb ,0} $ is an $ \eta H_\rm{int} $-invariant analytic vector space, and the essentially self-adjointness of $ \eta H_\rm{int} $ follows.
\end{proof}

We denote the closure of $ H_\rm{int} $ by the same symbol. 

\begin{Prop} Let $ \hat{\chi _\rm{ph}} \in D(\omega ^{-1}) $. Then, $ H_\rm{QED} $ is $ \eta $-self-adjoint.
\end{Prop}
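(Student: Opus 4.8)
The plan is to reduce the $\eta$-self-adjointness of $H_\rm{QED}$ to ordinary self-adjointness and then apply the Kato--Rellich theorem. By Lemma \ref{eta-lem} (ii), $H_\rm{QED}$ is $\eta$-self-adjoint if and only if $(I\otimes\eta)H_\rm{QED}$ is self-adjoint, so it suffices to prove the latter. Writing
\begin{align}
(I\otimes\eta)H_\rm{QED}=(I\otimes\eta)H_\rm{fr}+(I\otimes\eta)\overline{H_\rm{int}},
\end{align}
I would first record that each summand is self-adjoint. Since $H_\rm{fr}$ is self-adjoint and $I\otimes\eta$ is a unitary involution that commutes with $H_\rm{fr}$ --- recall $\eta=\Gamma_\bb(-g)$ commutes with $H_\rm{ph}=\mathrm{d}\Gamma_\bb(\omega)$ because the constant matrix $g$ commutes with the scalar multiplication by $\omega$, while $I\otimes\eta$ trivially commutes with $H_\rm{el}\otimes I$ --- the product $(I\otimes\eta)H_\rm{fr}$ is self-adjoint (for a self-adjoint $A$ and a commuting unitary involution $U$ one has $(UA)^*=A^*U^*=AU=UA$ on $D(A)$). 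Moreover $(I\otimes\eta)\overline{H_\rm{int}}$ is self-adjoint, hence symmetric, by Lemma \ref{int-eta} together with Lemma \ref{eta-lem} (ii).

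The core of the argument is a relative bound, and this is exactly where the strengthened hypothesis $\hat{\chi_\rm{ph}}\in D(\omega^{-1})$ enters. I would rerun the estimate of Lemma \ref{int-est} but invoke Lemma \ref{a-est} (ii) in place of part (i): this bounds $\|A_\mu(\xx)\Psi\|$ by $\|\hat{\chi_\rm{ph}}/(\sqrt{2}\,\omega)\|\,\|H_\rm{ph}^{1/2}\Psi\|$ plus a lower-order term, the leading constant being finite \emph{precisely} because $\hat{\chi_\rm{ph}}/\omega\in L^2$ (translation in $\xx$ multiplies $\hat{\chi_\rm{ph}}$ only by a phase and preserves this $L^2$ norm). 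Integrating against $\chi_\rm{sp}\in L^1$ then produces constants $a_0,b_0\ge0$ with
\begin{align}
\|H_\rm{int}\Psi\|\le a_0\,\|(I\otimes H_\rm{ph})^{1/2}\Psi\|+b_0\,\|\Psi\|,\qquad \Psi\in D(H_\rm{fr}).
\end{align}

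Finally I would convert this form bound into an infinitesimal relative bound. Because $H_\rm{el}\otimes I\ge0$ gives $H_\rm{fr}\ge I\otimes H_\rm{ph}$, one has $\|(I\otimes H_\rm{ph})^{1/2}\Psi\|\le\|H_\rm{fr}^{1/2}\Psi\|$, and the elementary spectral inequality $\lambda^{1/2}\le\epsilon\lambda+\tfrac1{4\epsilon}$ yields $\|H_\rm{fr}^{1/2}\Psi\|\le\epsilon\|H_\rm{fr}\Psi\|+\tfrac1{4\epsilon}\|\Psi\|$ for every $\epsilon>0$. Hence $H_\rm{int}$, and therefore $(I\otimes\eta)\overline{H_\rm{int}}$, is $(I\otimes\eta)H_\rm{fr}$-bounded with relative bound $a_0\epsilon$, which is made smaller than $1$ by choosing $\epsilon$ small. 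The Kato--Rellich theorem then shows $(I\otimes\eta)H_\rm{fr}+(I\otimes\eta)\overline{H_\rm{int}}=(I\otimes\eta)H_\rm{QED}$ is self-adjoint on $D(H_\rm{fr})$, and Lemma \ref{eta-lem} (ii) delivers the $\eta$-self-adjointness of $H_\rm{QED}$. The main obstacle is the relative-bound estimate: one must reprove the field estimates with the $H_\rm{ph}^{1/2}$-type bound rather than the $N_\bb^{1/2}$-type bound used earlier, and it is here that the condition $\hat{\chi_\rm{ph}}\in D(\omega^{-1})$ --- strictly stronger than the $\hat{\chi_\rm{ph}}/\sqrt{\omega}\in L^2$ needed merely to define the field --- is indispensable; the passage to relative bound zero afterward is routine.
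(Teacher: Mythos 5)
Your proposal is correct and is essentially the paper's own argument: reduce via Lemma \ref{eta-lem} to self-adjointness of $(I\otimes\eta)H_\rm{QED}$, derive the relative bound $\|H_\rm{int}\Psi\|\le a_0\|(I\otimes H_\rm{ph})^{1/2}\Psi\|+b_0\|\Psi\|$ from Lemma \ref{a-est} (ii) (exactly where $\hat{\chi_\rm{ph}}\in D(\omega^{-1})$ enters), observe that $I\otimes H_\rm{ph}^{1/2}$ is infinitesimally $H_\rm{fr}$-bounded, and conclude with Kato--Rellich. The only refinement the paper adds is a domain bookkeeping step you gloss over: since $D(H_\rm{fr})\not\subset D(H_\rm{int})=D(I\otimes N_\bb^{1/2})$ (soft photons prevent $N_\bb$ from being $H_\rm{ph}$-bounded), the estimate is first proved on $\F_\rm{el}\hat{\otimes}\big(D(H_\rm{ph}^{1/2})\cap D(N_\bb^{1/2})\big)$, which is a core of $I\otimes H_\rm{ph}^{1/2}$, and then extended to the closure $\overline{H_\rm{int}}$ on $D(I\otimes H_\rm{ph}^{1/2})\supset D(H_\rm{fr})$.
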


\begin{proof} By the Kato-Rellich theorem and Lemma \ref{eta-lem}, it is sufficient to show that the symmetric operator $ \eta H_\rm{int} $ is $ \eta H_\rm{fr} $- bounded with a relative bound less than 1. 

By Lemma \ref{a-est}, we have for $ \Psi \in D(I \otimes H_\rm{ph} ^{1/2}) $, 
\begin{align}
\| I \otimes A_\mu (\xx ) \Psi \| \le \sqrt{2} \| \hat{\chi _\rm{ph}} / \omega \| _{L^2 (\R ^3 _\kk )} \, \| I \otimes H_\rm{ph} ^{1/2} \Psi \| + \frac{1}{\sqrt{2}}\| \hat{\chi _\rm{ph}} / \sqrt{\omega } \| _{L^2 (\R ^3_\kk )} \, \| \Psi \| .
\end{align}
Thus, for all $ \Psi \in \F _\rm{el} \hat{\otimes } D(H_\rm{ph} ^{1/2}) $, one finds
\begin{align}\label{jA2}
\| j^\mu (\xx ) \otimes A_\mu (\xx ) \Psi \| & = \| ( j^\mu (\xx ) \otimes I)( I \otimes A_\mu (\xx ) ) \Psi \| \no \\
& \le M_\rm{el} \left( \sqrt{2} \| \hat{\chi _\rm{ph}} / \omega \| \, \| I \otimes H_\rm{ph} ^{1/2} \Psi \| +
\frac{1}{\sqrt{2}} \| \hat{\chi _\rm{ph}} / \sqrt{\omega } \| \, \| \Psi \|  \right) .
\end{align}
By \eqref{jA2}, we find that for all $ \Psi \in \F _\rm{el} \hat{\otimes } \big( D(H_\rm{ph} ^{1/2}) \cap D(N_\bb ^{1/2})\big)  \; \big( \subset D(H_\rm{int})\big) $,
\begin{align}
\|  H_\rm{int} \Psi \| \le |e| \,\| \chi _\rm{sp} \| _{L^1 (\R ^3 _\xx )} \, M_\rm{el} \left( \sqrt{2} \| \hat{\chi _\rm{ph}} / \omega \| \, \| I \otimes H_\rm{ph} ^{1/2} \Psi \| +\frac{1}{\sqrt{2}} \| \hat{\chi _\rm{ph}} / \sqrt{\omega } \| \, \| \Psi \| \right) .
\end{align}
Since $ \F _\rm{el} \hat{\otimes } \big( D(H_\rm{ph} ^{1/2}) \cap D(N_\bb ^{1/2})\big)  $ is a core of $ I \otimes H_\rm{ph} ^{1/2} $, we conclude that $ H_\rm{int} $ is $ I \otimes H_\rm{ph} ^{1/2}  $- bounded.
As is easily proved, one finds $ I \otimes H_\rm{ph} ^{1/2} $ is infinitesimally small with respect to $ H_\rm{fr} $. 
Therefore, it follows that $ \eta H_\rm{int} $ is infinitesimally small with respect to $ \eta H_\rm{fr} $, hence we get the desired result.
\end{proof}

\subsection{Existence of dynamics}

In order to prove that there exists dynamics of these quantum fields, we have only to see that Theorem \ref{scheq-and-heieq} can be applied to our case by checking that Assumptions \ref{ass1}, \ref{ass2} and \ref{ass4} are valid. We see in what follows that this is indeed the case with $ A = I \ot N_\bb $, $ H_0 = H_\rm{fr} $, $ H_1 = H_\rm{int} $, $ D= \F _{\bb ,0} $.

Hereafter, we omit trivial tensor product
like $ \ot I $ or $ I \ot $ when no confusion may occur and just write, for instance, $ H_\rm{el} $ instead of $ H_\rm{el} \ot I $ and so forth.

\begin{Lem}\label{QEDass1}\begin{enumerate}[(I)]
\item $ N_\bb $ is self-adjoint and non-negative.
\item $ N_\bb $ and $ H_\rm{fr} $ are strongly commuting.
\item The interaction Hamiltonian $ H_\rm{int} $ and its adjoint $ H_\rm{int} ^* $ are relatively bounded with respect to $  N_\bb ^{1/2} $. 
\item For all $ L \ge 0 $ and $ \Psi \in R (E_ {N_\bb } ([0,L])) $, $ H_\rm{int} \Psi $ and $H_\rm{int}^*\Psi $ belong to $ R (E_ {N_\bb } ([0,L+1])) $.
\end{enumerate}
\end{Lem}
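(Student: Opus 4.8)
The plan is to verify the four conditions in turn, noting at the outset that (III) and (IV) simultaneously discharge the relevant clauses of \emph{both} Assumption \ref{ass1} and Assumption \ref{ass2}, since they treat $H_\rm{int}$ and $H_\rm{int}^*$ together with the common operator $A = I\otimes N_\bb$. Conditions (I) and (II) are structural facts about second quantization. For (I), $N_\bb = \mathrm{d}\Gamma_\bb(I)$ is the second quantization of the identity, hence self-adjoint with spectrum $\{0,1,2,\dots\}$ and in particular non-negative. For (II), I would write $H_\rm{fr} = H_\rm{el}\otimes I + I\otimes H_\rm{ph}$: the summand $H_\rm{el}\otimes I$ acts on a tensor factor disjoint from $I\otimes N_\bb$ and so commutes trivially, while $H_\rm{ph} = \mathrm{d}\Gamma_\bb(\omega)$ and $N_\bb = \mathrm{d}\Gamma_\bb(I)$ are second quantizations of the strongly commuting multiplication operators $\omega$ and $I$; since $N_\bb$ is a scalar on each $n$-photon sector, strong commutativity on all of $\F_\rm{ph}$ is immediate.

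For (III), the $N_\bb^{1/2}$-boundedness of $H_\rm{int}$ is already contained in Lemma \ref{int-est}: combining part (iii) of that lemma with the definition of $H_\rm{int}$ gives $\|H_\rm{int}\Psi\| \le |e|\,\|\chi_\rm{sp}\|_{L^1} M_\rm{el} M_\rm{ph}\,\|(I\otimes N_\bb + 1)^{1/2}\Psi\|$, and since $\|(I\otimes N_\bb+1)^{1/2}\Psi\|^2 = \|N_\bb^{1/2}\Psi\|^2 + \|\Psi\|^2$ this is exactly a relative bound with respect to $N_\bb^{1/2}$. The corresponding estimate for $H_\rm{int}^*$ is where I would invoke the essential $\eta$-self-adjointness proved in Lemma \ref{int-eta}: after passing to closures, $H_\rm{int} = H_\rm{int}^\dagger = \eta H_\rm{int}^*\eta$ yields $H_\rm{int}^* = \eta H_\rm{int}\eta$. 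Because $\eta = \Gamma_\bb(-g)$ acts as a unitary on each $n$-photon sector, on which $N_\bb$ is scalar, $\eta$ strongly commutes with $N_\bb$ and hence with $(I\otimes N_\bb + 1)^{1/2}$, and it is norm-preserving; transporting the $H_\rm{int}$-bound through the two unitaries $\eta$ then produces the identical bound for $H_\rm{int}^*$.

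For (IV), the statement for $H_\rm{int}$ is precisely the sector-shift recorded in the paragraph preceding Lemma \ref{int-eta}: since $A_\mu(\xx)$ creates at most one photon, $H_\rm{int}$ maps $\F_N = R(E_{I\otimes N_\bb}([0,N]))$ into $\F_{N+1}$. One subtlety I would address is that $L$ ranges over all nonnegative reals; but the spectrum of $N_\bb$ being $\{0,1,2,\dots\}$ forces $R(E_{N_\bb}([0,L])) = \F_{\lfloor L\rfloor}$, and $\lfloor L+1\rfloor = \lfloor L\rfloor + 1$ makes the inclusion into $R(E_{N_\bb}([0,L+1]))$ hold with $b = 1$. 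For $H_\rm{int}^*$ I would again use $H_\rm{int}^* = \eta H_\rm{int}\eta$ together with the fact that $\eta$ leaves each $\F_N$ invariant (it acts sector-wise): from $\eta\Psi \in \F_N$ one gets $H_\rm{int}\eta\Psi \in \F_{N+1}$ and finally $\eta H_\rm{int}\eta\Psi \in \F_{N+1}$.

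The routine parts are (I), (II), and the direct half of (III); the part demanding genuine care, and the main obstacle, is handling $H_\rm{int}^*$ in (III) and (IV), for which there is no explicit formula available. The entire argument there rests on converting $H_\rm{int}^*$ into $\eta H_\rm{int}\eta$ via the $\eta$-self-adjointness of Lemma \ref{int-eta}, and on the two complementary properties that $\eta$ commutes with $N_\bb$ and preserves the finite-photon subspaces $\F_N$, which together let every relevant property of $H_\rm{int}$ be transferred verbatim to its adjoint.
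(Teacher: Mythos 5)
Your proof is correct and follows essentially the same route as the paper: (I)--(II) are dismissed as standard facts, the bound for $H_\rm{int}$ comes from the Lemma~\ref{int-est} estimate, and both the relative bound and the sector-shift property for $H_\rm{int}^*$ are obtained exactly as in the paper, by writing $H_\rm{int}^* = \eta H_\rm{int}\eta$ via the essential $\eta$-self-adjointness of Lemma~\ref{int-eta} and using that $\eta$ strongly commutes with $N_\bb$. Your explicit treatment of non-integer $L$ through $\lfloor L\rfloor$ matches the paper's use of $[L]$, so there is nothing to add.
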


\begin{proof} The statements (I) and (II) are well known facts. 

We prove (III). As in the proof of Lemma \ref{int-est}, it can be seen that $ H_\rm{int} $ is $  N_\bb ^{1/2} $- bounded. We prove that $H_\text{int}^*$ is also $H_\bb ^{1/2}$- bounded. Take arbitrary $\Psi\in \F _{\bb ,0}$. By Lemma \ref{int-eta}, we obtain 
\begin{align}
\| H_\text{int} ^* \Psi \| & =  \| H_\text{int} \eta \Psi \| \no \\
& \le M_\rm{int} \| (N_\bb +1) ^{\fr{1}{2}} \eta \Psi \| \no \\
& \le M_\rm{int} \| (N_\bb +1) ^{\fr{1}{2}} \Psi \| .
\end{align}
Since $ \F _{\bb ,0} $ is a core of $ N_\bb^{1/2} $, we have the assertion (III).

Finally, we prove (IV). Notice that the spectrum of the photon number operator $N_\bb$ is a purely discrete set $\{0,1,2,\dots\}$,
and that for all $L\ge 0$ and $[L]$,  the largest integer satisfying $[L]\le L$,
\begin{align}
R(E_{N_\bb}([0,L])) =R(E_{N_\bb}([0,[L]])) =\mathcal{F}_{[L]}.
\end{align}
Suppose $\Psi$ belongs to
$R(E_{N_\bb}([0,L]))$. Then, it is clear that $\Psi$ belongs to $\mathcal{F}_{[L]+1} 
=R(E_{N_\bb}([0,L+1]))$, since the interaction Hamiltonian creates at most one photon.   
By Lemma \ref{int-eta}, and the fact that $\eta$ and $N_\bb$ are strongly commuting and thus $\eta$ preserves
photon number,  it immediately follows that
$H_\rm{int}^*\Psi\in \F_{[L]+1} = R(E_{N_\bb}([0,L+1]))$.
\end{proof}
From Lemma \ref{QEDass1}, we can apply the general theory constructed in the
earlier sections to obtain: 
\begin{Thm}\label{QEDmain1} For each $ \Psi \in \F _{\bb , 0} $, the series
\begin{align}
U(t,t') \Psi := \Psi + (-i) \int _{t'} ^t d\tau _1 \, H_\rm{int}(\tau _1) \Psi + (-i)^2 \int _{t'} ^t d\tau _1 \int _{t'} ^{\tau _1} d\tau _2 \, H_\rm{int} (\tau _1) H_\rm{int} (\tau _2) \Psi + \cdots 
\end{align}
converges absolutely, where each of integrals is strong integral. Furthermore, $ U(t,t') $ has properties stated in Theorems \ref{main-thm1}, \ref{adjoint}, \ref{main-thm2} and \ref{sch-existence}, with $ H_1 $ replaced by $ H_\rm{int} $, $ H_0 $ by $ H_\mathrm{fr} $ and $ D $ by $ \F _{\bb, 0} $. 
\end{Thm}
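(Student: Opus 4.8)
The plan is to obtain Theorem \ref{QEDmain1} as an immediate application of the abstract results of Sections \ref{main-results}--\ref{solutions}: essentially all of the analytic labor has already been carried out in Lemma \ref{QEDass1}, so what remains is to match the QED objects with the abstract data $(A,H_0,H_1,D)$ and then quote the general theorems. Concretely, I would set $A := I \otimes N_\bb$, $H_0 := H_\rm{fr}$, $H_1 := H_\rm{int}$ (closure denoted by the same symbol), and take the dense subspace to be $\F_{\bb,0}$. The free Hamiltonian $H_\rm{fr}$ is self-adjoint, and $\overline{H_\rm{int}}$ is densely defined and closed by the remark following Lemma \ref{int-eta}, so the standing hypotheses on $H_0$ and $H_1$ from Section \ref{main-results} are met.

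The first genuine check is that the \emph{abstract} dense domain $D = \bigcup_{L\ge 0} V_L$ built from $A = I\otimes N_\bb$ coincides with the finite-particle space $\F_{\bb,0}$ named in the statement. Since the spectrum of $N_\bb$ is the discrete set $\{0,1,2,\dots\}$, one has $V_L = R(E_{I\otimes N_\bb}([0,L])) = \F_{[L]}$ for every $L\ge 0$, whence $D = \bigcup_{N\ge 0}\F_N = \F_{\bb,0}$, exactly the identification promised by the theorem. Next I would read off from Lemma \ref{QEDass1} that its conditions (I)--(IV) are precisely Assumption \ref{ass1}(I)--(IV) with shift constant $b=1$, \emph{and simultaneously} Assumption \ref{ass2}(I)--(II) with $b'=1$: indeed (III) supplies the $N_\bb^{1/2}$-boundedness of both $H_\rm{int}$ and $H_\rm{int}^*$, while (IV) supplies the one-photon shift property for both operators. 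Thus Assumptions \ref{ass1} and \ref{ass2} hold in full for the QED data.

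With Assumptions \ref{ass1} and \ref{ass2} verified, the conclusions follow by direct citation. Theorem \ref{main-thm1} yields the absolutely convergent series defining $U(t,t')$ on $\F_{\bb,0}$ together with the two differential equations in $t$ and $t'$; Theorem \ref{adjoint} controls $U(t,t')^*$ and gives closability; Theorem \ref{main-thm2} gives the cocycle identity and the covariance $e^{isH_\rm{fr}}\overline{U(t,t')}e^{-isH_\rm{fr}} = \overline{U(t+s,t'+s)}$; and Theorem \ref{sch-existence} produces the Schr\"odinger dynamics through $W(t) = e^{-itH_\rm{fr}}\overline{U(t,0)}$ on $\F_{\bb,0}\cap D(H_\rm{fr})$. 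Throughout one simply substitutes $H_1 \to H_\rm{int}$, $H_0 \to H_\rm{fr}$, $A \to I\otimes N_\bb$ and $D \to \F_{\bb,0}$; note that no version of Assumption \ref{ass4} is invoked here, since none of the four cited theorems requires it.

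Because the theorem is a pure transcription, there is no substantive new obstacle at this stage; the only points demanding care are the two isolated above, namely confirming $D = \F_{\bb,0}$ via the discreteness of the photon-number spectrum, and observing that Lemma \ref{QEDass1} discharges \emph{both} abstract assumptions at once (so that $b = b' = 1$). The real difficulties have been absorbed earlier: the relative $N_\bb^{1/2}$-bound on $H_\rm{int}^*$, which in Lemma \ref{QEDass1}(III) is derived from $H_\rm{int}^* = \eta H_\rm{int}\eta$ together with the $\eta$-invariance of the photon number, and the uniform-in-$\xx$ estimate of Lemma \ref{int-est} underlying the $N_\bb^{1/2}$-bound itself. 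Once these are in hand, the abstract machinery applies verbatim.
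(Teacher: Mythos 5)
Your proposal is correct and follows the paper's own route exactly: the paper likewise deduces Theorem \ref{QEDmain1} directly from Lemma \ref{QEDass1} (which verifies Assumptions \ref{ass1} and \ref{ass2} with $A = I\otimes N_\bb$, $H_0 = H_\rm{fr}$, $H_1 = H_\rm{int}$, $b=b'=1$) and then cites the abstract theorems, with the identification $D=\bigcup_N \F_N = \F_{\bb,0}$ coming from the discreteness of the photon-number spectrum. Your two points of care --- that $D$ really equals $\F_{\bb,0}$ and that Assumption \ref{ass4} is not needed for the four cited theorems --- are both accurate and consistent with the paper.
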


As we have already seen in the general theory, once $U(t,t')$ is constructed, we 
immediately obtain a time evolution which is generated by 
$W(t):=e^{-itH_\mathrm{fr}} U(t,0)$. Here, recall that we omit the overline to mean the operator closure.
  In the present application to physics,
it should be made sure that this time evolution is physically acceptable, that is,
the probability amplitude is conserved 
\[ \Expect{\Psi | \Phi} = \Expect{W(t)\Psi| W(t)\Phi}, \]
for suitable vectors $\Psi, \Phi$. The following theorem 
is concerned with this aspect:

\begin{Thm}\label{eta-unitary}
The time evolution operator
\begin{align}
W(t):=e^{-itH_\mathrm{fr}}U(t,0)
\end{align}
satisfies
\begin{align}
 W(t)^\dagger \supset W(t)^{-1}. \label{eta-uni1}
\end{align}
In particular, 
\begin{align}
\Expect{W(t)\Psi|W(t)\Phi}=\Expect{\Psi|\Phi} \label{eta-uni2}
\end{align}
for all $\Psi,\Phi \in D(W(t))$.
\end{Thm}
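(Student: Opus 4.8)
The plan is to derive the $\eta$-unitarity of $W(t)$ from the $\eta$-self-adjointness of $H_\mathrm{int}$, by first identifying the $\eta$-adjoint of $U(t,t')$ with $U(t',t)$ and then combining this with the composition rules of Theorem \ref{main-thm2}. Throughout, write $\eta := I\otimes\eta_\mathrm{ph}$ and keep the section's convention that $U(t,t')$ already denotes its closure. Two structural facts are used repeatedly: $\eta$ is bounded, self-adjoint, satisfies $\eta^2=I$, commutes strongly with $N_\bb$ and with $H_\mathrm{fr}$ (hence with $e^{\pm isH_\mathrm{fr}}$), so that $\eta$ maps each $\F_N$, and in particular $D=\F_{\bb,0}$, into itself; and $H_\mathrm{int}$ is $\eta$-self-adjoint (Lemma \ref{int-eta}), i.e. $H_\mathrm{int}^*=\eta H_\mathrm{int}\eta$.

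First I would check that every time translate is $\eta$-self-adjoint: since $e^{-i\tau H_\mathrm{fr}}$ is unitary and commutes with $\eta$,
\begin{align}
H_\mathrm{int}(\tau)^* = e^{i\tau H_\mathrm{fr}}H_\mathrm{int}^*e^{-i\tau H_\mathrm{fr}} = e^{i\tau H_\mathrm{fr}}\eta H_\mathrm{int}\eta\, e^{-i\tau H_\mathrm{fr}} = \eta H_\mathrm{int}(\tau)\eta, \qquad \tau\in\R .
\end{align}
Then I would insert $\eta^2=I$ between the factors in the integral representation of $U_n(t,t')^*$ from Lemma \ref{int-rep-lem}: conjugation by $\eta$ replaces each $H_\mathrm{int}(\tau_j)^*$ by $H_\mathrm{int}(\tau_j)$ and reverses the operator order, so that for $t'\le t$ and $\xi\in D$, after the simplex-reversing substitution $\sigma_j=\tau_{n+1-j}$,
\begin{align}
\eta\, U_n(t,t')^*\,\eta\,\xi = i^n\int_{t'\le\sigma_1\le\cdots\le\sigma_n\le t} d^n\sigma\; H_1(\sigma_1)\cdots H_1(\sigma_n)\,\xi .
\end{align}
The right-hand side is exactly the time-ordered representation of $U_n(t',t)\xi$ obtained from the computation \eqref{time-order-of-U_n} with $t$ and $t'$ interchanged. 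Summing over $n$, using $\eta\xi\in D$ and the absolutely convergent expansion \eqref{expansion-of-U^*}, gives $U(t,t')^\dagger=\eta U(t,t')^*\eta=U(t',t)$ on $D$; since the left-hand side is closed, $U(t,t')^\dagger\supset U(t',t)$.

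With this in hand I would compute $W(t)^\dagger$, using that $e^{-itH_\mathrm{fr}}$ is bounded and commutes with $\eta$:
\begin{align}
W(t)^\dagger = \eta\big(e^{-itH_\mathrm{fr}}U(t,0)\big)^*\eta = \eta\,U(t,0)^*e^{itH_\mathrm{fr}}\eta = U(t,0)^\dagger e^{itH_\mathrm{fr}} \supset U(0,t)\,e^{itH_\mathrm{fr}} .
\end{align}
It then remains to see that $U(0,t)e^{itH_\mathrm{fr}}$ extends $W(t)^{-1}$. For $\Psi\in D(W(t))=D(U(t,0))$ and $\phi:=W(t)\Psi$ one has $e^{itH_\mathrm{fr}}\phi=U(t,0)\Psi$, so the claim reduces to $U(0,t)U(t,0)\Psi=\Psi$. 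This holds on $D$ by Theorem \ref{main-thm2}(i) and extends to all of $D(U(t,0))$ by a routine closedness argument: approximate $\Psi$ by $\xi_k\in D$ with $U(t,0)\xi_k\to U(t,0)\Psi$, note $U(0,t)U(t,0)\xi_k=\xi_k\to\Psi$, and invoke closedness of $U(0,t)$. Hence $W(t)^{-1}\subset U(0,t)e^{itH_\mathrm{fr}}\subset W(t)^\dagger$, which is \eqref{eta-uni1}. Finally \eqref{eta-uni2} follows, since for $\Psi,\Phi\in D(W(t))$ the vector $W(t)\Psi\in\Ran W(t)=D(W(t)^{-1})\subset D(W(t)^\dagger)$, whence by the defining property of the $\eta$-adjoint
\begin{align}
\langle W(t)\Psi\,|\,W(t)\Phi\rangle = \langle W(t)^\dagger W(t)\Psi\,|\,\Phi\rangle = \langle W(t)^{-1}W(t)\Psi\,|\,\Phi\rangle = \langle\Psi\,|\,\Phi\rangle .
\end{align}

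The main obstacle is the bookkeeping in the second paragraph: arranging that the order-reversal produced by $\eta$-conjugation cancels precisely against the reversal of the time simplex, so that $U_n(t,t')^\dagger$ lands on $U_n(t',t)$ and not on a variant with the time arguments or operator order transposed; and then controlling operator domains when lifting the identities valid on $D$ to the closed operators $W(t)^\dagger$ and $W(t)^{-1}$.
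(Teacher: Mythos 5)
Your proposal is correct and is essentially the paper's own argument: the paper's Lemma \ref{eta-uni-vec} proves exactly your $\eta$-conjugation identity ($\eta U_n(t,t')\eta\Psi = U_n(t',t)^*\Psi$ on $\F_{\bb,0}$, via Lemma \ref{int-rep-lem} and the reversal of the time simplex), and the theorem is then assembled, just as you do, from the composition rule of Theorem \ref{main-thm2} together with core/closedness arguments (Propositions \ref{U-inj} and \ref{W-inverse-prop}). The only organizational difference is that the paper establishes the full operator equalities $U(t,t')=U(t',t)^{-1}$ and $W(t)^{-1}=W(-t)$ --- proving injectivity of $U(t,t')$ by a separate argument using non-degeneracy of the indefinite metric --- whereas you prove only the inclusions $W(t)^{-1}\subset U(0,t)e^{itH_{\mathrm{fr}}}\subset W(t)^{\dagger}$ that the theorem actually needs, getting injectivity of $W(t)$ for free from the left-inverse identity; this is a mild streamlining of the same method, not a different route.
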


To prove Theorem \ref{eta-unitary}, we prepare some facts.

 \begin{Lem}\label{eta-uni-vec} 
 For each $ \Psi \in \F _{\bb , 0} $,
\begin{align}\label{eta-unitarity}
\eta U(t,t') \eta \Psi = U(t',t) ^* \Psi
\end{align}
holds.
\end{Lem}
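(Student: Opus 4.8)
The plan is to prove the identity termwise in the Dyson series and then sum. By Lemma~\ref{QEDass1}, Assumptions~\ref{ass1} and~\ref{ass2} hold with $A=N_\bb$, $H_0=H_\rm{fr}$, $H_1=H_\rm{int}$ and $D=\F_{\bb,0}$, so all the constructions of Sections~\ref{construction}--\ref{properties} are available; in particular the operators $U_n(t,t')$ obey the forward and backward recursions \eqref{rec-of-U_n} and \eqref{inverse-rec}, they map $\F_{\bb,0}$ into $\F_{\bb,0}$, and the adjoints satisfy the expansion \eqref{expansion-of-U^*}. The goal is therefore to establish
\begin{align}\label{termwise-eta}
\eta\, U_n(t,t')\,\eta\,\Psi = U_n(t',t)^*\Psi,\qquad \Psi\in\F_{\bb,0},\ n=0,1,2,\dots,
\end{align}
and then to pass to the limit $n\to\infty$.

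The first key step is the operator identity $\eta H_\rm{int}(t)\eta=H_\rm{int}(t)^*$ on $\F_{\bb,0}$. I would obtain it as follows. The metric operator $\eta$ commutes with $H_\rm{fr}$ (since it acts only on the photon factor through $\Gamma_\bb(-g)$, and $g$ commutes with $\omega$), hence with $e^{\pm itH_\rm{fr}}$, and both $\eta$ and $e^{\pm itH_\rm{fr}}$ leave each $\F_N$, and so $\F_{\bb,0}$, invariant. Because $H_\rm{int}$ is essentially $\eta$-self-adjoint by Lemma~\ref{int-eta}, the $\eta$-symmetry relation \eqref{etaint-eq} gives $\eta H_\rm{int}\eta\,\Psi=H_\rm{int}^*\Psi$ for every $\Psi\in\F_{\bb,0}$ (both sides are defined there since $H_\rm{int}^*$ is $N_\bb^{1/2}$-bounded by Lemma~\ref{QEDass1}(III)). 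Conjugating by $e^{\pm itH_\rm{fr}}$, commuting $\eta$ through the free propagators, and using $H_\rm{int}(t)^*=e^{itH_\rm{fr}}H_\rm{int}^*e^{-itH_\rm{fr}}$ together with $e^{-itH_\rm{fr}}\Psi\in\F_{\bb,0}$, then yields $\eta H_\rm{int}(t)\eta\,\Psi=H_\rm{int}(t)^*\Psi$ for all $\Psi\in\F_{\bb,0}$.

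Given this, I would prove \eqref{termwise-eta} by induction on $n$, the case $n=0$ being trivial since $U_0=I$ and $\eta^2=I$. For the inductive step I would start from \eqref{rec-of-U_n}, move the bounded operator $\eta$ inside the strong integral, insert $\eta^2=I$ between $H_\rm{int}(\tau)$ and $U_n(\tau,t')$, and apply the key identity together with the induction hypothesis $\eta U_n(\tau,t')\eta\Psi=U_n(t',\tau)^*\Psi$ to obtain
\begin{align}
\eta\,U_{n+1}(t,t')\,\eta\,\Psi = -i\int_{t'}^{t} d\tau\; H_\rm{int}(\tau)^*\,U_n(t',\tau)^*\Psi .
\end{align}
On the other hand, taking the adjoint of the backward recursion \eqref{inverse-rec} with the roles of $t$ and $t'$ interchanged produces exactly $U_{n+1}(t',t)^*\Psi=-i\int_{t'}^{t} d\tau\,H_\rm{int}(\tau)^*U_n(t',\tau)^*\Psi$, which closes the induction. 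The only points to verify are that every integrand stays in $\F_{\bb,0}$ (so the strong integrals and the substitution $\eta^2=I$ are legitimate, and the key identity applies), which holds because $U_n$ and $H_\rm{int}(\tau)$ preserve $\F_{\bb,0}$, and that $\eta$ may be pulled through the integral, which is immediate from its boundedness.

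Finally I would sum over $n$. Both series $\sum_n U_n(t,t')\eta\Psi$ and $\sum_n U_n(t',t)^*\Psi$ converge absolutely by Lemmas~\ref{U-estimate} and~\ref{U-estimate^*}, and $U(t',t)^*\Psi=\sum_n U_n(t',t)^*\Psi$ by \eqref{expansion-of-U^*}; since $\eta$ is bounded it commutes with the summation, so \eqref{termwise-eta} gives $\eta U(t,t')\eta\Psi=\sum_n\eta U_n(t,t')\eta\Psi=\sum_n U_n(t',t)^*\Psi=U(t',t)^*\Psi$. I expect the main obstacle to be the careful justification of the operator identity $\eta H_\rm{int}(t)\eta=H_\rm{int}(t)^*$ at the level of domains, namely establishing it on the invariant core $\F_{\bb,0}$ and checking that the conjugation by $e^{\pm itH_\rm{fr}}$ behaves as claimed; once that is in place, the induction and the summation are routine.
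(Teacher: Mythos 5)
Your proposal is correct and takes essentially the same approach as the paper: both arguments rest on the key identity $\eta H_{\mathrm{int}}(t)\eta\Psi = H_{\mathrm{int}}(t)^{*}\Psi$ on $\F_{\bb,0}$ (coming from the essential $\eta$-self-adjointness of $H_{\mathrm{int}}$, Lemma \ref{int-eta}), the termwise relation $\eta U_n(t,t')\eta\Psi = U_n(t',t)^{*}\Psi$, and then summation of the absolutely convergent series. The only difference is bookkeeping: the paper proves the termwise relation in a single step by inserting $\eta^{2}=I$ into the time-ordered multiple-integral representation of $U_n$ and identifying the result with $U_n(t',t)^{*}\Psi$ through Lemma \ref{int-rep-lem} (a relabeling of integration variables), whereas you reach it by induction via the recursions \eqref{rec-of-U_n} and \eqref{inverse-rec}; the adjoint recursion $U_{n+1}(t',t)^{*}\Psi=-i\int_{t'}^{t}d\tau\,H_{\mathrm{int}}(\tau)^{*}U_n(t',\tau)^{*}\Psi$ you need is exactly the second formula of Lemma \ref{int-rep-lem} read recursively (or follows from a routine weak-pairing argument on the dense core $\F_{\bb,0}$), so your induction closes.
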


\begin{proof} From Lemma \ref{int-eta}, the operator identities $ \eta H_\rm{int} ^* \eta = H_\rm{int} $ and $ \eta H_\rm{fr} \eta =H_\rm{fr} $ hold, which imply
\begin{align}\label{eta-t}
\eta H_\rm{int}(t) ^* \eta = H_\rm{int} (t) . 
\end{align}
We claim that \eqref{eta-t} implies the identity
\begin{align}\label{induction3}
\eta U_n (t,t') \eta \Psi = U_n (t',t) ^* \Psi 
\end{align}
for all $ n =0,1,2,\dots $ and $ \Psi \in \F _{\bb ,0} $. In fact, by Lemma \ref{int-rep-lem}, one finds
\begin{align}
\eta U_n (t,t') \eta \Psi &= (-i)^n \int _{t'} ^t d\tau _1 \dots \int _{t'} ^{\tau _{n-1}} d\tau _n \, H_\rm{int}(\tau _1)^* \dots H_\rm{int} (\tau _n)^* \Psi \no \\
&= (-i)^n \int _{t'} ^t d\tau _n \dots \int _{t'} ^{\tau _{2}} d\tau _1 \, H_\rm{int}(\tau _n)^* \dots H_\rm{int} (\tau _1)^* \Psi \no \\
& = U _n (t',t) ^* \Psi .
\end{align}
Hence \eqref{induction3} is true for all $ n $, and by summing up over $ n=0,1,2,\dots $, we obtain \eqref{eta-unitarity}.
\end{proof}

\begin{Prop}\label{U-inj}
For all $t,t'\in \Real$, $U(t,t')$ is invertible and the operator equality
\begin{align}
 U(t,t')=U(t',t)^{-1}
 \end{align}
 holds.
\end{Prop}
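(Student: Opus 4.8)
The plan is to read the asserted operator equality with the running convention of this section, under which $U(t,t')$ denotes its closure $\overline{U(t,t')}$ (closability is guaranteed here by Theorem \ref{adjoint}, whose hypotheses hold in this setting by Lemma \ref{QEDass1} and Theorem \ref{QEDmain1}). This reading is forced: the genuinely non-closed operator maps a finite-photon vector $\Psi$ to $\sum_n U_n(t,t')\Psi$, whose $n$-th term lives in ever higher photon sectors, so $U(t,t')\Psi$ generically escapes $\F_{\bb,0}$ and the literal inverse of the non-closed map cannot have domain $\F_{\bb,0}$. I write $P:=\overline{U(t,t')}$ and $Q:=\overline{U(t',t)}$, both closed, and note that no $\eta$-structure is needed; everything follows from the cocycle relation of Theorem \ref{main-thm2}(i). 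Taking $t''=t$ there yields $\overline{U(t,t')}\,U(t',t)=U(t,t)=I$, and interchanging $t,t'$ and taking $t''=t'$ yields $\overline{U(t',t)}\,U(t,t')=U(t',t')=I$. Since $U(t',t)\Psi=Q\Psi$ and $U(t,t')\Psi=P\Psi$ for $\Psi\in\F_{\bb,0}$, these say $PQ\Psi=\Psi$ and $QP\Psi=\Psi$ for all $\Psi\in\F_{\bb,0}$.

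The second step is to propagate these two-sided identities from the core $\F_{\bb,0}$ to the full domains $D(Q)$ and $D(P)$ by closedness. For $\Phi\in D(Q)$ I choose $\Psi_n\in\F_{\bb,0}$ realizing the closure, i.e. $\Psi_n\to\Phi$ and $Q\Psi_n\to Q\Phi$; then $P(Q\Psi_n)=PQ\Psi_n=\Psi_n\to\Phi$ while $Q\Psi_n\to Q\Phi$, so closedness of $P$ forces $Q\Phi\in D(P)$ and $PQ\Phi=\Phi$. This extends $PQ=I$ to all of $D(Q)$ and shows $R(Q)\subset D(P)$. The symmetric computation, with the roles of $P$ and $Q$ exchanged and using closedness of $Q$, extends $QP=I$ to all of $D(P)$ and shows $R(P)\subset D(Q)$; in particular every $\Psi\in D(P)$ equals $Q(P\Psi)$, so $D(P)\subset R(Q)$.

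Combining the two inclusions gives $R(Q)=D(P)$. Injectivity of $Q$ is then immediate from $PQ=I$ on $D(Q)$: if $Q\Phi=0$ then $\Phi=PQ\Phi=0$. Hence $Q^{-1}$ is a well-defined linear operator with $D(Q^{-1})=R(Q)=D(P)$, and for $\Psi=Q\Phi\in R(Q)$ one computes $P\Psi=PQ\Phi=\Phi=Q^{-1}\Psi$, so $P=Q^{-1}$ on their common domain. This is exactly $\overline{U(t,t')}=\overline{U(t',t)}^{-1}$, i.e. the claimed $U(t,t')=U(t',t)^{-1}$. The relation $QP=I$ on $D(P)$ likewise shows $P$ is injective, so $U(t,t')$ is invertible as asserted.

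I expect the only delicate point to be the domain bookkeeping in the second paragraph: one must keep the algebraic identities $PQ=I$ and $QP=I$ applied strictly on $\F_{\bb,0}$, where Theorem \ref{main-thm2} licenses them, and pass to $D(P),D(Q)$ only through graph-convergent approximating sequences together with the closedness of $P$ and $Q$. Once this is done carefully, the injectivity and the identification of the inverse are purely formal, and—unlike the classical evolution-equation approach—no resolvent estimate or spectral information about the (non-normal) Hamiltonian enters.
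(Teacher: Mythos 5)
Your proof is correct, and it diverges from the paper's exactly where a real idea is needed: injectivity. The paper proves injectivity of $U(t,t')$ first, using the indefinite-metric structure --- the relation $\eta U(t,t')\subset U(t',t)^{*}\eta$ from Lemma \ref{eta-uni-vec} --- via a duality argument: if $U(t,t')\Phi=0$ then $\Expect{U(t',t)U(t,t')\Psi,\eta\Phi}=\Expect{\Psi,\eta\Phi}=0$ for all $\Psi\in\F_{\bb,0}$, whence $\Phi=0$; only after that can it form $U(t',t)^{-1}$ and prove the two inclusions $U(t,t')\subset U(t',t)^{-1}$ (core argument) and $U(t,t')\supset U(t',t)^{-1}$ (the same graph-convergence-plus-closedness argument you use in your second paragraph). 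You instead extend \emph{both} one-sided identities $PQ=I$ and $QP=I$ from $\F_{\bb,0}$ to $D(Q)$ and $D(P)$ by closedness before ever mentioning inverses, after which injectivity is an algebraic triviality ($Q\Phi=0$ implies $\Phi=PQ\Phi=0$) and the identification $P=Q^{-1}$, including the domain equality $D(P)=R(Q)$, follows formally. Your bookkeeping is accurate, and your reading of $U(t,t')$ as the closure matches the paper's standing convention from Section \ref{solutions}. What each route buys: yours is more elementary and strictly more general --- it uses only Theorem \ref{main-thm2} and closedness, so the proposition actually holds in the abstract setting of Assumptions \ref{ass1}--\ref{ass2}, with no Krein structure at all, a point the paper's QED-specific argument obscures; the paper's route is less economical for this proposition, but the $\eta$-relation it leans on is developed anyway for Theorem \ref{eta-unitary}, where the stronger identification $\eta U(t,t')\eta=U(t',t)^{*}$ on the core (not mere invertibility) is what ultimately delivers $\eta$-unitarity of the dynamics.
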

\begin{proof}
Fix $t,t'$.
Firstly, we prove that $U(t,t')$ is injective.
Note that if for all $\Psi\in \mathcal{F}_{\bb,0}$, $\Expect{\Psi|\Phi}=0$, then
it follows that $\Phi=0$. From Lemma \ref{eta-uni-vec}, we have the operator relation
\begin{align}\label{U-rel}
\eta U(t,t') \subset U(t',t)^*\eta,
\end{align}
since $\mathcal{F}_{\bb,0}$ is a core of $U(t,t')$, and since $U(t,t')^*$ is closed.
Now, suppose that $\Phi$ satisfies $U(t,t')\Phi=0$. Let $\Psi\in\mathcal{F}_{\bb,0}$ be arbitrary. It follows that
\begin{align}
0 &= \Expect{U(t,t')\Psi|U(t,t')\Phi} \no\\
&= \Expect{U(t,t')\Psi, \eta U(t,t')\Phi} \no\\
&= \Expect{U(t,t')\Psi, U(t',t)^*\eta \Phi} \no\\
&=\Expect{U(t',t)U(t,t')\Psi | \Phi}\no\\
&=\Expect{\Phi|\Psi},
\end{align}
where we have used \eqref{U-rel} in the third equality, and Theorem \ref{main-thm2} \eqref{associative}
in the last equality. Since $\Psi \in \mathcal{F}_{\bb,0}$ is arbitrarily taken, we find $\Phi=0$. 
This proves that $U(t,t')$ is injective.

Secondly, we prove 
\begin{align}\label{included}
U(t,t')\subset U(t',t)^{-1}.
\end{align}
Take arbitrary $\Psi\in\mathcal{F}_{\bb,0}$. Then, we have from Theorem \ref{main-thm2}, \eqref{associative}
\[ U(t',t)U(t,t')\Psi = \Psi, \]
which implies $\Psi\in D(U(t',t)^{-1})$ and
\[ U(t,t')\Psi = U(t',t)^{-1}\Psi, \q\Psi\in \mathcal{F}_{\bb,0}. \]
But since $\mathcal{F}_{\bb,0}$ is a core of $U(t,t')$, the relation \eqref{included}
follows.

Finally, we prove  
\begin{align}\label{including}
U(t,t')\supset U(t',t)^{-1}.
\end{align}
Let $\Psi\in D(U(t',t)^{-1})=R(U(t',t))$. Then, there is some $\Phi \in D(U(t',t))$
with $\Psi = U(t,t')\Phi $. Since $\mathcal{F}_{\bb,0}$ is a core of $U(t',t)$,
we can choose a sequence $\{\Phi_n\}_n\subset \mathcal{F}_{\bb,0}$
satisfying
\begin{align}
\Phi_n\to\Phi,\q U(t',t)\Phi_n \to U(t',t)\Phi=\Psi,
\end{align}
as $n$ tends to infinity. Therefore, we have
\[ U(t,t') U(t',t)\Phi_n = \Phi_n \to \Phi, \q n\to\infty. \] 
Since $U(t,t')$ is closed, we conclude that $\Psi = U(t',t)\Phi \in D(U(t,t'))$ and
\[ U(t,t')\Psi = U(t,t') U(t',t)\Phi  = \Phi= U(t',t)^{-1} \Psi. \]
This proves \eqref{including}.
\end{proof}

\begin{Prop}\label{W-inverse-prop} For all $t\in\R$, the operator $W(t)$ is injective and 
\begin{align}
W(t)^{-1}=W(-t) \label{W-inverse}
\end{align}
holds as an operator equality.
\end{Prop}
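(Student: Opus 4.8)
The plan is to read everything off from Proposition~\ref{U-inj}, which already supplies the invertibility of $U(t,t')$ together with the identity $U(t,t')=U(t',t)^{-1}$, and from the intertwining relation of Theorem~\ref{main-thm2}~(ii). First I would dispose of injectivity: since $W(t)=e^{-itH_\mathrm{fr}}U(t,0)$ is the composition of the bounded bijection $e^{-itH_\mathrm{fr}}$ with the injective operator $U(t,0)$, if $W(t)\Psi=0$ then applying $e^{itH_\mathrm{fr}}$ gives $U(t,0)\Psi=0$, whence $\Psi=0$ by Proposition~\ref{U-inj}. Thus $W(t)$ is injective and $W(t)^{-1}$ exists.

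Next I would compute $W(t)^{-1}$ explicitly. Because $e^{-itH_\mathrm{fr}}$ is everywhere defined and unitary, solving $e^{-itH_\mathrm{fr}}U(t,0)\Psi=\Phi$ amounts to solving $U(t,0)\Psi=e^{itH_\mathrm{fr}}\Phi$, and by Proposition~\ref{U-inj} (so that $U(t,0)^{-1}=U(0,t)$ and $R(U(t,0))=D(U(0,t))$) this is equivalent to $\Psi=U(0,t)e^{itH_\mathrm{fr}}\Phi$. Tracking the domains, $D(W(t)^{-1})=R(W(t))=e^{-itH_\mathrm{fr}}R(U(t,0))=e^{-itH_\mathrm{fr}}D(U(0,t))$, which is exactly $\{\Phi:\ e^{itH_\mathrm{fr}}\Phi\in D(U(0,t))\}=D(U(0,t)e^{itH_\mathrm{fr}})$; hence the operator equality
\begin{align}
W(t)^{-1}=U(0,t)\,e^{itH_\mathrm{fr}}
\end{align}
holds.

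Finally I would identify the right-hand side with $W(-t)=e^{itH_\mathrm{fr}}U(-t,0)$. Applying Theorem~\ref{main-thm2}~(ii) with $s=t$ to $U(-t,0)$ gives the operator equality $e^{itH_\mathrm{fr}}U(-t,0)e^{-itH_\mathrm{fr}}=U(0,t)$, and multiplying on the right by the bounded operator $e^{itH_\mathrm{fr}}$ yields $e^{itH_\mathrm{fr}}U(-t,0)=U(0,t)e^{itH_\mathrm{fr}}$. Combining this with the displayed formula for $W(t)^{-1}$ gives $W(t)^{-1}=U(0,t)e^{itH_\mathrm{fr}}=W(-t)$, which is \eqref{W-inverse}. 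I expect the only delicate point to be the domain bookkeeping in the second step: one must verify that the inverse of the product $e^{-itH_\mathrm{fr}}U(t,0)$ really factors as $U(0,t)e^{itH_\mathrm{fr}}$ on the nose, i.e. with equality of domains rather than a mere inclusion, which is what upgrades the heuristic algebra to a genuine operator equality. The boundedness and bijectivity of the free evolution $e^{\pm itH_\mathrm{fr}}$ is precisely what makes this factorization exact.
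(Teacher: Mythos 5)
Your proof is correct, and it takes a genuinely different route from the paper's. The paper proves the identity by computing $W(t)W(-t)\Psi=e^{-itH_\mathrm{fr}}U(t,0)e^{itH_\mathrm{fr}}U(-t,0)\Psi=U(0,-t)U(-t,0)\Psi=\Psi$ on the core $\F_{\bb,0}$ (using Theorem \ref{main-thm2} (ii) and \eqref{associative}), which only yields the inclusion $W(-t)\subset W(t)^{-1}$ after a core-plus-closedness argument; the reverse inclusion then requires a second closedness argument, which the paper omits as ``very similar to the proof of \eqref{including}''. You instead exploit the fact that Proposition \ref{U-inj} already provides the \emph{operator equality} $U(t,0)^{-1}=U(0,t)$ (i.e.\ the two-inclusion work has been done once and for all there), so that inverting the factorization $W(t)=e^{-itH_\mathrm{fr}}U(t,0)$ gives $W(t)^{-1}=U(0,t)\,e^{itH_\mathrm{fr}}$ exactly, with equality of domains, because $e^{\pm itH_\mathrm{fr}}$ is unitary and everywhere defined; the intertwining relation of Theorem \ref{main-thm2} (ii), right-multiplied by the unitary $e^{itH_\mathrm{fr}}$, then identifies this with $W(-t)=e^{itH_\mathrm{fr}}U(-t,0)$, again on the nose. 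Your domain bookkeeping is sound: $D(W(t)^{-1})=R(W(t))=e^{-itH_\mathrm{fr}}R(U(t,0))=e^{-itH_\mathrm{fr}}D(U(0,t))=D(U(0,t)e^{itH_\mathrm{fr}})$, and $D(e^{itH_\mathrm{fr}}U(-t,0)e^{-itH_\mathrm{fr}}\cdot e^{itH_\mathrm{fr}})=D(U(-t,0))$. What your approach buys is that both inclusions arrive simultaneously and no closure argument needs to be repeated; what the paper's approach buys is that it only invokes the weaker statement \eqref{associative} on the dense subspace $D$ rather than the full strength of Proposition \ref{U-inj}'s operator equality, at the cost of redoing (or omitting) the limiting arguments.
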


\begin{proof}
Since $e^{-itH_\mathrm{fr}}$ is unitary and $U(t,0)$ is injective by Lemma \ref{U-inj}, one finds $W(t)$ is injective. 

We prove \eqref{W-inverse}. Fix $t\in\R$. For all $\Psi\in D(W(t))=D(U(t,0))$, we have by 
Theorem \ref{main-thm2} \eqref{associative}
\begin{align}
W(t)W(-t)\Psi &= e^{-it H_\mathrm{fr}} U(t,0) e^{itH_\mathrm{fr}} U(-t,0) \Psi \no\\
 &=U(0,-t)U(-t,0) \Psi \no\\
 &=\Psi. 
\end{align} 
This means $\Psi\in D(W(t)^{-1})$ and
\[ W(-t)\Psi = W(t)^{-1}\Psi. \]
Since $\mathcal{F}_{\bb,0}$ is a core of $W(t)$, we conclude $W(-t)\subset W(t)^{-1}$.
The proof of the inverse inclusion is very similar to the proof of \eqref{including}, and
we omit it.
\end{proof}

\begin{proof}[Proof of Theorem \ref{eta-unitary}] Let $\Psi\in\F_{\bb,0}$. Then, we have
\begin{align}
\eta W(t)^* \eta \Psi = W(-t) \Psi.
\end{align}
Thus, by taking closure, it follows that
\begin{align}
\eta W(t)^* \eta  \supset W(-t).
\end{align}
But, we know $W(-t)=W(t)^{-1}$ from Proposition \ref{W-inverse-prop}.
This proves \eqref{eta-uni1}.

Moreover, since $W(t)\Phi\in D(W(t)^{-1})\subset D(\eta W(t)^* \eta)$, we obtain
\begin{align}
\Expect{W(t)\Psi | W(t)\Phi} &= \Expect{W(t)\Psi , \eta W(t)\Phi} \no\\
	&=\Expect{\Psi,  W(t)^* \eta W(t)\Phi} \no\\
	&=\Expect{\Psi| W(t)^{-1}W(t)\Phi}\no\\
	&=\Expect{\Psi | \Phi}.
\end{align}
This completes the proof.
\end{proof}

We next consider the Heisenberg equations of motion for quantum fields $ A_\mu $ and $ \psi _l $.

\begin{Lem}\label{QEDass2} 
\begin{enumerate}[(I)]
\item $A_\mu(0,f)$ and $A_\mu(0,f)^*$ are $N_\bb^{1/2}$- bounded and closed.
\item $\psi_l(0,g)$ and $\psi_l(0,g)^*$ are $N_\bb^{1/2}$- bounded and closed.
\item For all $L\ge 0$, $\Psi\in R(E_{N_\bb}([0,L]))$ implies that $ A_\mu(0,f)\Psi$,
$A_\mu(0,f)^*\Psi$, $\psi_l(0,g)\Psi $, and $\psi_l(0,g)^*\Psi $ belong to $R(E_{N_\bb}([0,L+1]))$.
%\item $\mathcal{F}_{\bb,0}\cap D(H_\mathrm{fr})$ is dense in $\mathcal{F}_\mathrm{tot}$. 
\end{enumerate}
\end{Lem}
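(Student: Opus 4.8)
The plan is to verify that Assumption \ref{ass4} holds with $A = N_\bb$ and with $B$ equal to either $A_\mu(0,f)$ or $\psi_l(0,g)$, taking $b_0 = 1$. I would treat the photon field and the Dirac field separately, since the former is genuinely unbounded while the latter will turn out to be bounded. Almost all of the analytic content is already available: the relative bounds in Lemma \ref{a-est} (i), the fermionic norm identities \eqref{fermi-est}, the $\eta$-self-adjointness from Lemma \ref{A-eta}, and the photon-number-shifting argument used in the proof of Lemma \ref{QEDass1} (IV). The proof is therefore mainly a matter of assembling these ingredients.

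For statement (I), I would obtain the $N_\bb^{1/2}$-boundedness of $A_\mu(0,f) = a_\mu(\hat{f^*}/\sqrt{2\omega}) + a_\mu^\dagger(\hat{f}/\sqrt{2\omega})$ by applying the two estimates of Lemma \ref{a-est} (i) to the annihilation and creation parts, bounding $\|N_\bb^{1/2}\Psi\|$ by $\|(N_\bb+1)^{1/2}\Psi\|$; closedness is immediate because $A_\mu(0,f)$ now denotes its own closure. For the adjoint I would use that this closure is $\eta$-self-adjoint (Lemma \ref{A-eta}), so that $A_\mu(0,f)^* = \eta A_\mu(0,f)\eta$; since $\eta$ is unitary and strongly commutes with $N_\bb$, this identity transfers the $N_\bb^{1/2}$-bound verbatim to $A_\mu(0,f)^*$, which is closed as an adjoint.

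For statement (II), the decisive point is that the fermionic operators are \emph{bounded}: by \eqref{fermi-est} each of $b_s$, $d_s$ and their adjoints has operator norm equal to the $\H_\rm{el}$-norm of its argument. Since $\psi_l(0,g)$ is a finite sum of such operators smeared with the fixed $L^2(\R^3_\pp)$ functions appearing in its definition, it is bounded, hence trivially closed and $N_\bb^{1/2}$-bounded with vanishing $A^{1/2}$-coefficient; the same holds for $\psi_l(0,g)^*$.

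For statement (III) I would argue by cases, writing $[L]$ for the largest integer with $[L]\le L$ and recalling from the proof of Lemma \ref{QEDass1} (IV) that $R(E_{N_\bb}([0,L])) = \F_{[L]}$ and $\F_{[L]+1} \subset R(E_{N_\bb}([0,L+1]))$. For the photon field, the creation part raises and the annihilation part lowers the photon number by one, so on the finite-particle vectors of $\F_{[L]}$ the operator $A_\mu(0,f)$ maps into $\F_{[L]+1}$; since $N_\bb$ is bounded on $\F_{[L]}$, the restriction $A_\mu(0,f)\upharpoonright \F_{[L]}$ is bounded, and because $\F_{[L]+1}$ is closed the inclusion extends from the dense finite-particle vectors to all of $\F_{[L]}$ --- exactly the reasoning used for $H_\rm{int}$ before Lemma \ref{int-eta}. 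Because $\eta = \Gamma_\bb(-g)$ preserves photon number, the identity $A_\mu(0,f)^* = \eta A_\mu(0,f)\eta$ yields the same conclusion for the adjoint. For the Dirac field, $\psi_l(0,g)$ and $\psi_l(0,g)^*$ act only on the electron Fock factor, hence commute with $N_\bb$ and leave each $\F_{[L]}$ invariant, so a fortiori map it into $\F_{[L]+1}$. I expect the only delicate point to be this passage in the photon case from finite-particle vectors to the whole spectral subspace $\F_{[L]}$, where closedness of $A_\mu(0,f)$ and of $\F_{[L]+1}$ is essential; everything else is bookkeeping.
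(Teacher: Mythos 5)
Your proof is correct and in substance coincides with the paper's (very terse) one: part (I) from Lemma \ref{a-est}, part (II) from the norm identities \eqref{fermi-est}, part (III) from the fact that $A_\mu(0,f)$ and $A_\mu(0,f)^*$ change the photon number by at most one while the Dirac fields leave it unchanged. The one place you genuinely diverge is the adjoint in (I): you get the bound for $A_\mu(0,f)^*$ from the identity $A_\mu(0,f)^*=\eta A_\mu(0,f)\eta$, i.e.\ from $\eta$-self-adjointness (Lemma \ref{A-eta}), which tacitly requires $f$ to be real-valued; the paper instead lets Lemma \ref{a-est} act directly on $A_\mu(0,f)^*$, which on $D(N_\bb^{1/2})$ is again a sum of one creation and one annihilation operator, so no reality assumption enters. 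Your route is legitimate in the paper's setting (the $\eta$-self-adjoint fields are smeared with real functions) and gives a one-line transfer of the estimate, at the price of that extra hypothesis. Two smaller points. First, when you combine the two estimates of Lemma \ref{a-est} (i) into a relative bound of the form required by Assumption \ref{ass4}, the inequality you need is $\|(N_\bb+1)^{1/2}\Psi\|\le\|N_\bb^{1/2}\Psi\|+\|\Psi\|$, which follows from $\|(N_\bb+1)^{1/2}\Psi\|^2=\|N_\bb^{1/2}\Psi\|^2+\|\Psi\|^2$; the direction you wrote is the reverse one and is not the inequality that does the work. Second, the ``delicate point'' you flag in (III) is vacuous: $R(E_{N_\bb}([0,L]))=\F_{[L]}=\op_{n=0}^{[L]}\bigl(\F_{\mathrm{el}}\ot(\ot_{\mathrm{s}}^{n}\H_{\mathrm{ph}})\bigr)$ is a \emph{finite} direct sum, hence already contained in $\F_{\bb,0}\cap D(N_\bb^{1/2})$, and the creation and annihilation operators map each $n$-photon component directly into the $(n+1)$- and $(n-1)$-photon sectors; the density-plus-closedness extension you borrow from the $H_{\mathrm{int}}$ argument is needed there only because $H_{\mathrm{int}}$ is a strong Bochner integral (a limit), and it is harmless but superfluous here.
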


\begin{proof}
The assertion (I) follows from Lemma \ref{a-est}. The closedness is obvious.

The assertion (II) is an immediate consequence of the fact that Dirac fields are bounded operators.

The claim (III) immediately follows from the fact that photon field operators $A_\mu(0,f)$
and $A_\mu(0,f)^*$ create at most one photon and Dirac fields $\psi_l(0,g)$
and $\psi_l(0,g)$ create no photon.
%To show the claim (IV), we just note 
%\[ \left(\hop_{n=0}^\infty \hot _\rm{a}^n C_0^\infty(\Real^3;\Complex^4)\right)\hot  \left(\hop_{n=0}^\infty \hot _\rm{s}^n C_0^\infty(\Real^3;\Complex^4)\right) \subset \mathcal{F}_{\bb,0}\cap D(H_\mathrm{fr}),\]
%where $\hat{\otimes}_s^n$ (resp. $\hat{\otimes}_a^n$) denotes symmetric
%(resp. antisymmetric) algebraic tensor product.
%The left-hand side is clearly dense, and (IV) is proved.
\end{proof}
Put for $t\in \R$ and $\mu=0,1,2,3$, $l=1,2,3,4$,
\begin{align}
A_{0,\mu} (t,f) \Psi &:=e^{itH_\rm{fr}} A_\mu (0,f) e^{-itH_\rm{fr}} \Psi  , \q f \in \S (\R ^3 ) ,  \\
\psi _{0,l} (t,g) \Psi &:= e^{itH_\rm{fr}} \psi _l (0,g) e^{-itH_\rm{fr}}\Psi , \q g \in \S (\R ^3).
\end{align}
\begin{Lem}\label{for-ass5}
\begin{enumerate}[(I)]
\item For each $ \Psi \in D(N_\bb^{1/2}) $ and each $f,g\in\mathcal{S}(\R^3)$, $ A_{0,\mu} (t,f) \Psi $ and $\psi _{0,l} (t,g) \Psi$ are strongly continuously differentiable in $t\in \R$.

\item The strong derivatives of $ A_{0,\mu} (t,f) \Psi $ and $\psi _{0,l} (t,g) \Psi$, which are denoted by 
$ A'_{0,\mu} (t,f) \Psi $ and $\psi' _{0,l} (t,g) \Psi$, are closable for all $t\in \R$.

\item The operators $ A'_{0,\mu} (t,f) \Psi $ and $\psi' _{0,l} (t,g) \Psi$ are $ A^{1/2} $-bounded uniformly in $t\in\R$. That is, there exist 
constants $ c_0, c_1,d_0,d_1 \ge 0 $ such that for all $ t \in \R $ and $ \xi \in D(A^{1/2}) $,
\begin{align}
 \|  A'_{0,\mu} (t,f) \Psi \| &\le c_0 \| N_\bb^{1/2} \Psi \| + c_1 \| \Psi \| ,\\
  \|  \psi' _{0,l} (t,g)\Psi \| &\le d_0 \| N_\bb^{1/2} \Psi \| + d_1 \| \Psi \| .
 \end{align}
\end{enumerate}
\end{Lem}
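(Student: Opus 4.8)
The plan is to verify the three items of Assumption \ref{ass5} (with $A=N_\bb$, $H_0=H_\rm{fr}$, $B=A_\mu(0,f)$ or $\psi_l(0,g)$) by first writing the freely evolved fields as creation/annihilation operators whose test functions carry a unimodular time factor, and then differentiating at the level of the smearing functions. Since $A_\mu(0,f)$ acts only on the photon sector and $H_\rm{fr}=H_\rm{el}+H_\rm{ph}$, we have $A_{0,\mu}(t,f)=e^{itH_\rm{ph}}A_\mu(0,f)e^{-itH_\rm{ph}}$; the second-quantization relation $e^{itH_\rm{ph}}a_\mu(h)e^{-itH_\rm{ph}}=a_\mu(e^{it\omega}h)$, together with the same identity for $a_\mu^\dagger$ (using that $\eta$ commutes with $e^{itH_\rm{ph}}$), gives
\begin{align}
A_{0,\mu}(t,f)=a_\mu\Big(\frac{e^{it\omega}\hat{f^*}}{\sqrt{2\omega}}\Big)+a_\mu^\dagger\Big(\frac{e^{it\omega}\hat{f}}{\sqrt{2\omega}}\Big).
\end{align}
An identical computation on the electron sector, with $E_M$ replacing $\omega$, expresses $\psi_{0,l}(t,g)$ as a finite sum of $b_s$ and $d_s^*$ smeared against $e^{itE_M}$ times the original functions.

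For item (I) I would pass the $t$-derivative inside the smearing. The $L^2$-valued map $t\mapsto e^{it\omega}\hat{f}/\sqrt{2\omega}$ is differentiable with derivative $ie^{it\omega}\sqrt{\omega/2}\,\hat{f}$, which lies in $L^2(\R^3_\kk)$ because $f\in\mathcal{S}(\R^3)$ forces $\sqrt{\omega}\,\hat{f}\in L^2$ (the factor $\sqrt{|\kk|}$ is harmless at the origin and controlled by the rapid decay of $\hat{f}$). Using Lemma \ref{a-est} (i) to bound $a_\mu$ and $a_\mu^\dagger$ by the $L^2$-norm of their test functions times $\|(N_\bb+1)^{1/2}\Psi\|$, the difference quotient of $A_{0,\mu}(t,f)\Psi$ converges strongly, for every $\Psi\in D(N_\bb^{1/2})$, to
\begin{align}
A'_{0,\mu}(t,f)\Psi=a_\mu\Big(\frac{i\omega e^{it\omega}\hat{f^*}}{\sqrt{2\omega}}\Big)\Psi+a_\mu^\dagger\Big(\frac{i\omega e^{it\omega}\hat{f}}{\sqrt{2\omega}}\Big)\Psi,
\end{align}
and the same estimate shows strong continuity in $t$. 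The Dirac case is easier: since $b_s,d_s^*$ are bounded with norm equal to the $L^2$-norm of their test function by \eqref{fermi-est}, the difference quotients converge in operator norm.

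Item (III) is then immediate from these formulas. Because $|e^{it\omega}|=1$, the test functions of $A'_{0,\mu}(t,f)$ have $L^2$-norms independent of $t$; applying Lemma \ref{a-est} (i) together with $\|(N_\bb+1)^{1/2}\Psi\|\le\|N_\bb^{1/2}\Psi\|+\|\Psi\|$ gives the uniform bound with $c_0,c_1$ proportional to $\|\sqrt{\omega}\,\hat{f}\|$. For $\psi'_{0,l}(t,g)$ the operator is bounded uniformly in $t$, so the bound holds with $d_0=0$ and $d_1\sim\|\sqrt{E_M}\,\hat{g}\|$. For item (II), each derivative is again a finite sum of creation and annihilation operators smeared against $L^2$ functions; its formal adjoint on $\F_{\bb,0}$ (exchanging $a_\mu\leftrightarrow\eta a_\mu^*\eta$, resp. $b_s\leftrightarrow b_s^*$ and $d_s^*\leftrightarrow d_s$) is densely defined, so the operator has a densely defined adjoint and is therefore closable.

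The only genuine subtlety is the interchange of the strong $t$-derivative with the unbounded operators $a_\mu,a_\mu^\dagger$ in item (I); this is exactly where the relative $N_\bb^{1/2}$-bound of Lemma \ref{a-est} (i) and the Schwartz hypothesis (ensuring $\sqrt{\omega}\,\hat{f},\sqrt{E_M}\,\hat{g}\in L^2$) enter. Once the derivatives are identified as field operators with square-integrable test functions, items (II) and (III) become routine applications of the standard Fock-space estimates already recorded in Lemma \ref{a-est} and \eqref{fermi-est}.
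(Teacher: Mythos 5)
Your proposal is correct and takes essentially the same approach as the paper: both write the freely evolved fields as creation/annihilation operators smeared with $e^{it\omega}$ (resp.\ $e^{itE_M}$) times the original test functions, differentiate at the level of the smearing functions (using $\sqrt{\omega}\,\hat f\in L^2$ for Schwartz $f$), obtain (III) from Lemma \ref{a-est} (i) and \eqref{fermi-est}, and obtain (II) from the fact that the derivative has a densely defined adjoint on $\F_{\bb,0}$ (boson case) or is bounded (fermion case). The only difference is one of citation rather than method: where the paper invokes the general differentiability result \cite[Lemma 4-49]{AraiFock} for the boson part, you verify the strong differentiability directly by passing the difference quotient through $a_\mu$, $a_\mu^\dagger$ via the relative $N_\bb^{1/2}$-bound.
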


\begin{proof}
\begin{enumerate}[(I)]
\item It follows from the general theorem \cite[Lemma 4-49]{AraiFock} that
$t\mapsto a_\mu(e^{it\omega}f)\Psi$ and $t\mapsto a_\mu(e^{it\omega}f)^*\Psi$ are strongly differentiable and the strong derivatives become
\begin{align}
\frac{d}{dt}a_\mu(e^{it\omega}f)\Psi&=a_\mu(i\omega e^{it\omega}f)\Psi,\label{a-dif}\\
\frac{d}{dt}a_\mu(e^{it\omega}f)^*\Psi&=a_\mu(i\omega e^{it\omega}f)^*\Psi,\label{a-dif2}
\end{align}
for $\Psi\in D(N_\bb^{1/2})$ and $f\in \mathcal{S}(\R^3)$. From Lemma \ref{a-est} (i) and \eqref{a-dif}, \eqref{a-dif2}, the continuity of the mappings
\[ t\mapsto \frac{d}{dt}a_\mu(e^{it\omega}f)\Psi,\q  t\mapsto \frac{d}{dt}a_\mu(e^{it\omega}f)^*\Psi \]
 are obvious. 
 For the fermion creation and annihilation operators, the operator valued functions
 \begin{align}
t\mapsto B(e^{itE_M}G) ,\q  t\mapsto B(e^{itE_M}G)^* ,\q G\in D(E_M)
 \end{align}
 are continuously differentiable in the operator norm and the derivatives are
 \begin{align}
\frac{d}{dt}B(e^{itE_M}G) &= B(iE_Me^{itE_M}G),\label{psi-dif}\\
\frac{d}{dt}B(e^{itE_M}G)^*&=B(iE_Me^{itE_M}G)^*.\label{psi-dif2}
 \end{align}
  In fact, one finds from the well known estimates for the fermion creation and annihilation operators \eqref{fermi-est} that
 \begin{align}
 \Bigg|\Bigg| \frac{B(e^{i(t+h)E_M}G)-B(e^{itE_M}G)}{h} - B(iE_Me^{itE_M}G)\Bigg|\Bigg|\no
 &\le  \Bigg|\Bigg| \frac{e^{i(t+h)E_M}G-e^{itE_M}G}{h} - iE_Me^{itE_M}G\Bigg|\Bigg|_{\H_\rm{el}}\no\\
& \to 0,
 \end{align}
 as $h\to 0$, and the mappings
 \[ t\mapsto B(iE_Me^{itE_M}G),\q t\mapsto B(iE_Me^{itE_M}G)^* \]
 are continuous in the operator norm. 
 On the other hand, from these facts, the assertion (I) immediately follows.
\item From \eqref{a-dif} and \eqref{a-dif2}, we have
\begin{align}
A'_{0,\mu}(t,f)\Psi = a_\mu\left(\frac{i\omega e^{it\omega} \widehat{f^*}}{\sqrt{2\omega}}\right)\Psi+a^\dagger\left(\frac{i\omega e^{it\omega}\widehat{f}}{\sqrt{2\omega}}\right)\Psi,\q \Psi\in D(N_\bb^{1/2}).
\end{align}
It is clear from this expression that the adjoint operator of $A'_{0,\mu}(t,f)$ is defined on dense subspace $\F_{\bb,0}$ and therefore 
it is closable. From \eqref{psi-dif} and \eqref{psi-dif2}, the closability of $\psi'_{0,l}(t,g)$ is obvious. 
\item This statement clearly follows from Lemma \ref{a-est} (i) and \eqref{fermi-est}.
\end{enumerate}
\end{proof}
Finally, we have arrived at the existence of strong solutions for the Heisenberg equations
of motion for quantum fields, by combining all the results obtained so far. 
We can define from Lemma \ref{QEDass1}, Lemma \ref{QEDass2} and Theorem \ref{scheq-and-heieq} the following filed operators:
\begin{align}
& D(A_\mu (t,f)) := \F _{\bb ,0} , \q A_\mu (t,f) \Psi := W(-t) A_\mu (0,f) W(t) \Psi , \q \Psi \in \F _{\bb ,0} , \q f \in \S (\R ^3 ) ,  \\
& D(\psi _l (t,g)) := \F _{\bb ,0} , \q \psi _l (t,g) \Psi := W(-t) \psi _l (0,g) W(t) \Psi , \q \Psi \in \F _{\bb ,0} , \q g \in \S (\R ^3),
\end{align}
and one concludes from Lemma \ref{for-ass5}: 
\begin{Thm} For all  
$f,g\in\mathcal{S}(\R^3)$, $\mu=0,1,2,3$ and $l=1,2,3,4$, the operator valued functions $ \Real \ni t \mapsto A_\mu (t,f) $ and $ \Real \ni t\mapsto \psi _l (t,g) $ are strong
solutions of the differential equations 
\begin{align}
\frac{d}{dt} A_\mu (t,f) \Psi & = [iH_\rm{QED} , A_\mu (t,f) ] \Psi , \q \Psi \in D(H_\rm{fr}) \cap \F _{\bb ,0}, \\
\frac{d}{dt} \psi _l (t,g) \Psi & = [iH_\rm{QED} , \psi _l (t,g) ] \Psi , \q \Psi \in D(H_\rm{fr}) \cap \F _{\bb ,0} .
\end{align}
\end{Thm}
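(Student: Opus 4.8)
The plan is to apply Theorem~\ref{extra} twice: once with $ B = A_\mu (0,f) $ and once with $ B = \psi _l (0,g) $ (closures understood, as throughout), under the identifications $ A = I \ot N_\bb $, $ H_0 = H_\rm{fr} $, $ H_1 = H_\rm{int} $, $ D = \F _{\bb ,0} $ and $ H = H_\rm{QED} = H_\rm{fr} + \overline{H_\rm{int}} $. With these identifications the abstract Heisenberg-picture operator $ B(t) = W(-t) B W(t) $ of the general theory is exactly $ A_\mu (t,f) $ (resp.\ $ \psi _l (t,g) $), and the conjugated operator $ B_0(t) = e^{itH_0} B e^{-itH_0} $ is exactly $ A_{0,\mu} (t,f) $ (resp.\ $ \psi _{0,l} (t,g) $). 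Hence the entire task reduces to checking that Assumptions~\ref{ass1}, \ref{ass2}, \ref{ass4} and \ref{ass5} hold in the present setting; once this is done, the conclusion \eqref{extra6.5} of Theorem~\ref{extra} is verbatim the asserted Heisenberg equation, since $ D(H_0) \cap D = D(H_\rm{fr}) \cap \F _{\bb ,0} $ and $ H = H_\rm{QED} $.

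First I would invoke Lemma~\ref{QEDass1}, which is precisely the verification of Assumptions~\ref{ass1} and \ref{ass2}: it gives the self-adjointness and non-negativity of $ N_\bb $, its strong commutativity with $ H_\rm{fr} $, the $ N_\bb ^{1/2} $-boundedness of both $ H_\rm{int} $ and $ H_\rm{int} ^* $, and the fact that $ H_\rm{int} $ and $ H_\rm{int} ^* $ carry $ R(E_{N_\bb}([0,L])) $ into $ R(E_{N_\bb}([0,L+1])) $; thus one may take $ b = b' = 1 $.

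Next I would verify Assumption~\ref{ass4} for each field operator, which is exactly the content of Lemma~\ref{QEDass2}: $ A_\mu (0,f) $, $ \psi _l (0,g) $ together with their adjoints are $ N_\bb ^{1/2} $-bounded and closed, and each maps $ R(E_{N_\bb}([0,L])) $ into $ R(E_{N_\bb}([0,L+1])) $, so Assumption~\ref{ass4} holds with $ b_0 = 1 $. Then I would verify Assumption~\ref{ass5}, the key observation being that $ A_{0,\mu} (t,f) $ and $ \psi _{0,l} (t,g) $ are precisely the functions $ B_0(t) $ of the general theory; Lemma~\ref{for-ass5} then supplies all three required properties, namely strong continuous differentiability on $ D(N_\bb ^{1/2}) $, closability of the strong derivative, and uniform $ N_\bb ^{1/2} $-boundedness of that derivative.

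With all four groups of assumptions in place, Theorem~\ref{extra} applies and yields $ \frac{d}{dt} A_\mu (t,f) \Psi = [iH_\rm{QED}, A_\mu (t,f)] \Psi $ and $ \frac{d}{dt} \psi _l (t,g) \Psi = [iH_\rm{QED}, \psi _l (t,g)] \Psi $ for $ \Psi \in D(H_\rm{fr}) \cap \F _{\bb ,0} $, which is the assertion. The proof is thus essentially bookkeeping, with no genuine analytic obstacle remaining; the one point that deserves a word of care is the uniform $ N_\bb^{1/2} $-bound for the derivative required in Assumption~\ref{ass5}~(III), which rests on $ \sqrt{\omega}\,\hat{f} \in L^2 (\R ^3_\kk) $ — automatic for $ f \in \S (\R ^3) $ since $ \hat f $ is Schwartz, and exactly what makes Lemma~\ref{a-est}~(i) applicable to $ A'_{0,\mu}(t,f) $, with the $ t $-independence of the bound coming from $ |e^{it\omega}| = 1 $.
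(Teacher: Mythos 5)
Your proposal is correct and follows essentially the same route as the paper: the paper also defines $A_\mu(t,f)=W(-t)A_\mu(0,f)W(t)$ and $\psi_l(t,g)=W(-t)\psi_l(0,g)W(t)$, checks Assumptions \ref{ass1}, \ref{ass2}, \ref{ass4} and \ref{ass5} via Lemmas \ref{QEDass1}, \ref{QEDass2} and \ref{for-ass5} under the identifications $A=I\ot N_\bb$, $H_0=H_\rm{fr}$, $H_1=H_\rm{int}$, $D=\F_{\bb,0}$, and then concludes by the general theory (Theorem \ref{extra}, whose equation \eqref{extra6.5} is the asserted strong Heisenberg equation). Your closing remark on why Assumption \ref{ass5} (III) holds for Schwartz $f$ matches the paper's justification of Lemma \ref{for-ass5} (III) via Lemma \ref{a-est} (i) and \eqref{fermi-est}.
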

\noindent
We remark that from Theorem \ref{eta-unitary} the above quantum fields are also written as
\begin{align}
 A_\mu (t,f) \Psi &= W(t)^\dagger A_\mu (0,f) W(t) \Psi , \q \Psi \in \F _{\bb ,0} , \\
 \psi _l (t,g) \Psi &= W(t)^\dagger  \psi _l (0,g) W(t) \Psi , \q \Psi \in \F _{\bb ,0} .
\end{align}

\appendix
\section{A little about the function $ t\mapsto U(t,t')\xi$}
In Section 3, we have constructed the solution of the differential equation
\begin{align}\label{diff-eq}
\frac{d}{dt}\phi(t) = -iH_1(t)\phi(t) ,\q \phi(0)=\xi, 
\end{align}
via the evolution operator $U(t,t')$, 
\[ \phi(t)= U(t,t')\xi .\]
In this appendix, we investigate the properties of $\phi(t)=U(t,t')\xi$ ($\xi\in D$) as
a function of $t\in \Real$. 
Let us denote by $I$ a closed interval in $\Real$, fix $t'\in I$ and set
\[ K:=(A+1)^{1/2} ,\]
for short.
Throughout the appendix, we assume only Assumption \ref{ass1},
which is sufficient to ensure that there is a solution of \eqref{diff-eq}.

\begin{Lem}\label{s-conti}
Suppose that Assumption \ref{ass1} holds. Let $u:I\to D$ be a strongly continuous mapping satisfying 
\begin{align}
u(t)\in V_L,\quad t\in I
\end{align}
for some constant $ L\ge 0 $. Then, the mapping
\begin{align}
I\ni t \mapsto H_1(t)u(t) \in\mathcal{H}
\end{align}
is strongly continuous. Moreover, for all $t\in I$, the strong Riemann integral
\begin{align}
\int_{t'}^{t}ds\, H_1(s)u(s).
\end{align}
belongs to $V_{L+b}$.
\end{Lem}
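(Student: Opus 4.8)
The plan is to reduce both assertions to facts already established in Section~\ref{construction}. For the continuity claim, the starting point is the factorization
\[ H_1(t)u(t) = e^{itH_0}H_1(A+1)^{-1/2}e^{-itH_0}\,(A+1)^{1/2}u(t), \]
which makes sense because $u(t)\in V_L\subset D(A^{1/2})$. Setting $T(t):=e^{itH_0}H_1K^{-1}e^{-itH_0}$, I would recall from the proof of Lemma~\ref{well-def-U_n} that $t\mapsto T(t)$ is strongly continuous, and that by Lemma~\ref{rel-bdd} it is uniformly bounded by $C$. It then remains only to show that $t\mapsto Ku(t)$ is strongly continuous.

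For this I would observe, via the spectral calculus, that $K=(A+1)^{1/2}$ commutes with $E_A([0,L])$ and hence restricts to a bounded operator on the closed subspace $V_L$, with norm at most $(L+1)^{1/2}$. Since $u$ is strongly continuous with values in $V_L$, the composition $t\mapsto Ku(t)$ is therefore strongly continuous. Combining these facts, for $s,t\in I$ I would estimate
\[ \|H_1(t)u(t)-H_1(s)u(s)\| \le \|(T(t)-T(s))Ku(t)\| + C\,\|Ku(t)-Ku(s)\|, \]
in which both terms tend to $0$ as $s\to t$, by the strong continuity of $T$ and of $Ku$ respectively. This gives the first assertion.

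For the second assertion I would first check that the integrand itself lies in $V_{L+b}$. Since $A$ and $H_0$ strongly commute (Assumption~\ref{ass1}~(II)), $e^{-isH_0}$ leaves $V_L$ invariant, so $e^{-isH_0}u(s)\in V_L$; Assumption~\ref{ass1}~(IV) then yields $H_1e^{-isH_0}u(s)\in V_{L+b}$, and applying $e^{isH_0}$, which likewise preserves $V_{L+b}$, gives $H_1(s)u(s)\in V_{L+b}$. Because $V_{L+b}$ is a closed subspace, every Riemann sum approximating the strong integral lies in $V_{L+b}$, so the limit $\int_{t'}^t ds\,H_1(s)u(s)$ does as well.

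The two estimates are routine; the only point needing care is the boundedness and invariance of $(A+1)^{1/2}$ on $V_L$, which underlies the strong continuity of $t\mapsto Ku(t)$. This is immediate from the functional calculus on the range of $E_A([0,L])$, so no genuine obstacle arises and the lemma follows by assembling the pieces above.
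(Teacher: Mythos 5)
Your proposal is correct and follows essentially the same route as the paper: you factor $H_1(t)u(t)$ through $K^{-1}K$ (note your $T(t)=e^{itH_0}H_1K^{-1}e^{-itH_0}$ equals $H_1(t)K^{-1}$ by strong commutation, so your two-term estimate is exactly the paper's), use the strong continuity of $t\mapsto e^{itH_0}H_1K^{-1}e^{-itH_0}$ together with the boundedness of $K$ on $V_L$, and conclude the integral statement from Assumption 2.1 (II), (IV) and the closedness of $V_{L+b}$. The only difference is cosmetic: you spell out explicitly the invariance steps ($e^{\pm isH_0}$ preserving $V_L$ and $V_{L+b}$) that the paper compresses into a single appeal to Assumption 2.1.
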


\begin{proof} Let $s,t\in I$.
\begin{align}
||H_1(t)u(t) - H_1(s)u(s) || &\le ||(H_1(t)-H_1(s))u(t)|| + || H_1(s)(u(t)-u(s)) || \no\\
&\le || (H_1(t)-H_1(s)) K^{-1} \cdot Ku(t) || + ||H_1(s) K^{-1} ||\,||K(u(t)-u(s)) || .\label{H_1(t)u(t)}
\end{align}
Firstly, note that the mapping
\begin{align}
t\mapsto e^{itH_0}H_1K^{-1}e^{-itH_0}
\end{align}
is strongly continuous, since $H_1K^{-1}$ is a bounded operator. Thus, the first term
of \eqref{H_1(t)u(t)} tends to zero as $s\to t$.

Secondly, note also that there is some $L\ge0$ such that
\[ u(t)-u(s)\in V_{L} ,\quad s,t\in I \]
Therefore, the second term of \eqref{H_1(t)u(t)} also vanishes as $s$ tends to $t$.

From Assumption \ref{ass1}, $H_1(t)u(t)\in V_{L+b}$ ($t\in I$). Since $V_{L+b}$ is 
closed, one finds 
\[ \int_{t'}^{t}ds\, H_1(s)u(s)\in V_{L+b}. \]
\end{proof}

From Lemma \ref{s-conti}, it follows that under Assumption \ref{ass1} we may define
a linear transformation $J$ in the linear space
\begin{align}
\mathcal{C}_\infty:=\{u:I\to D\,|\,\text{$u$ is strongly continuous and there is an $L$ such that 
$u(t)\in V_L$ for all $t\in I$} \}
\end{align} 
by 
\begin{align}
(Ju)(t):=\int_{t'}^t ds\,H_1(s)u(s), \q t \in I, 
\end{align}
where $ t' \in I $ is fixed ($ [t' , t] \subset I $ if $ t' < t $; $ [t,t'] \subset I $ if $ t<t' $).

\begin{Lem}\label{basic-esti}
Under assumption \ref{ass1}, for all $n=0,1,2,\dots$ and $u\in\mathcal{C}_\infty$, the estimate
\begin{align}
||(J^n u)(t) || \le \frac{|t-t'|^n}{n!}C^n(L+(n-1)b+1)^{1/2}(L+(n-2)b+1)^{1/2}\dots (L+1)^{1/2}\sup_{s\in I} ||u(s)||
\end{align}
holds, where $L\ge 0$ is a constant depending on $u$ satisfying
\[ u(t)\in V_L ,\quad t\in I.\]

Moreover, 
\begin{align}
(J^nu)(t)\in V_{L+nb},\quad t\in I.
\end{align}
\end{Lem}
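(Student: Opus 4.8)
The plan is to prove both assertions simultaneously by induction on $n$, mirroring the argument of Lemma \ref{ess-estimate}. The base case $n=0$ is immediate: $(J^0u)(t)=u(t)$ lies in $V_L=V_{L+0\cdot b}$ by hypothesis, and $\|u(t)\|\le\sup_{s\in I}\|u(s)\|$, which is exactly the claimed estimate under the convention that the empty product of square-root factors equals $1$.

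For the inductive step, suppose both statements hold for some $n\ge 0$. The membership $(J^{n+1}u)(t)\in V_{L+(n+1)b}$ follows directly from Lemma \ref{s-conti}: since the induction hypothesis gives $(J^nu)(s)\in V_{L+nb}$ for all $s\in I$, applying that lemma to the strongly continuous mapping $s\mapsto (J^nu)(s)$ with level $L+nb$ shows that $(J^{n+1}u)(t)=\int_{t'}^t ds\,H_1(s)(J^nu)(s)$ belongs to $V_{L+nb+b}=V_{L+(n+1)b}$.

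First I would record the elementary spectral estimate: if $\psi\in V_M=R(E_A([0,M]))$, then by operational calculus $\|(A+1)^{1/2}\psi\|\le (M+1)^{1/2}\|\psi\|$, because the spectral support of $\psi$ lies in $[0,M]$. Combining this with Lemma \ref{rel-bdd} and the induction hypothesis $(J^nu)(s)\in V_{L+nb}$, I would estimate, for $t'\le t$,
\begin{align}
\|(J^{n+1}u)(t)\|
&\le \int_{t'}^t ds\,\|H_1(s)(A+1)^{-1/2}\|\,\|(A+1)^{1/2}(J^nu)(s)\| \no\\
&\le \int_{t'}^t ds\, C\,(L+nb+1)^{1/2}\,\|(J^nu)(s)\| .
\end{align}
Inserting the induction hypothesis for $\|(J^nu)(s)\|$ and using $\int_{t'}^t |s-t'|^n\,ds=|t-t'|^{n+1}/(n+1)$ then yields precisely the bound with $n$ replaced by $n+1$, the new leading factor being $(L+nb+1)^{1/2}$. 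The case $t<t'$ is handled by the identical computation with the orientation of the integral reversed.

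The argument is routine and presents no genuine obstacle; the only point requiring care is the bookkeeping of the $V_L$-level, so that the factor $(A+1)^{1/2}$ acting on $(J^nu)(s)\in V_{L+nb}$ is controlled by $(L+nb+1)^{1/2}$ rather than by a larger constant. Everything else is a direct transcription of the proof of Lemma \ref{ess-estimate}, with $U_n(t,t')\xi$ replaced by $(J^nu)(t)$ and the constant factor $(-i)^n$ dropped, since only norms are involved.
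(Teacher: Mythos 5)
Your proof is correct and takes essentially the same route as the paper's: induction on $n$, with Lemma \ref{s-conti} supplying the membership $(J^{n+1}u)(t)\in V_{L+(n+1)b}$, and the norm estimate obtained by inserting $(A+1)^{-1/2}(A+1)^{1/2}$, using the uniform bound of Lemma \ref{rel-bdd} together with the spectral inequality $\|(A+1)^{1/2}\psi\|\le (M+1)^{1/2}\|\psi\|$ for $\psi\in V_M$, and integrating $|s-t'|^n$. The only difference is presentational: you spell out the spectral estimate, the empty-product convention at $n=0$, and the $t<t'$ case, all of which the paper leaves implicit.
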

\begin{proof}
We prove by induction. When $n=0$,
the assertion is trivial. 

Suppose the lemma is true for some $n$. Then, by noting that
\[ (J^nu)(t) \in V_{L+nb},\quad t\in I, \]
we have
\begin{align}
||(J\circ J^nu)(t)||&\le \int_{t'}^t ds\, ||H_1(s)(J^nu)(s)|| \no\\
&\le\int_{t'}^t ds\, ||H_1K ^{-1} ||\,||K(J^nu)(s)||\no\\
&\le C(L+nb+1)^{1/2} \int_{t'}^t ds\,||(J^nu)(s)||\no\\
&\le C(L+nb+1)^{1/2} \int_{t'}^t ds\,\frac{|s-t'|^n}{n!}C^n(L+(n-1)b+1)^{1/2}\dots (L+1)^{1/2}\sup_{s\in I} ||u(s)||\no\\
&=\frac{|t-t'|^{n+1}}{(n+1)!}C^{n+1}(L+nb+1)^{1/2}(L+(n-1)b+1)^{1/2}\dots (L+1)^{1/2}\sup_{s\in I} ||u(s)||,
\end{align}
and, by Lemma \ref{s-conti},
\begin{align}
(J^{n+1}u)(t)\in V_{L+(n+1)b},\quad t\in I.
\end{align}
Hence, by induction, the assertion follows.
\end{proof}

 Denote the approximated solution by $\phi_n$:
 \[  \phi_n(t)=\sum_{k=0}^n S_n(t,t')\xi=\sum_{k=0}^n \left((-iJ)^n\xi\right)(t), \q \xi \in D, \]
 where in the last expression, $\xi$ is identified with a constant function which belongs to $\mathcal{C}_\infty$.
 As mentioned above, $\phi_n$ belongs to $\mathcal{C}_\infty$ for all $n\in\Natural$. But,
 the limit function $\phi$ does not have to belong to $\mathcal{C}_\infty$. We identify the set of functions to which $\phi$ belongs.

\begin{Def} Let $T$ be a self-adjoint operator. We say that a function $u:I\to D(T)$ is $T$-uniformly integrable if the following two conditions are satisfied:
\begin{enumerate}[(1)]
\item $t\mapsto Tu(t)$ is strongly continuous.
\item $\lim_{\Lambda\to \infty}\sup_{t\in I}||TE_T([\Lambda,\infty))u(t)||=0$.
\end{enumerate}
\end{Def}
 
For $\alpha\ge 0$, we denote
\begin{align}
\mathcal{C}_\alpha:=\{u:I\to D(K^\alpha)\,|\, \text{$u$ is $K^\alpha$- uniformly integrable}\}.
\end{align}
The set $\mathcal{C}_\alpha$ becomes a vector space by naturally defined addition
and scalar multiplication.
For each $u\in\mathcal{C}_\alpha$, we introduce
\begin{align}
|| u ||_{\alpha,\infty} :=\sup_{t\in I} || K^\alpha u(t)||.
\end{align}
This is finite, since $K^\alpha u(\cdot)$ is strongly continuous.
It is clear that $|| \cdot ||_{\alpha,\infty}$ is a norm of $\mathcal{C}_\alpha$. Further,
we have

\begin{Prop}\label{C_alpha-complete}
The normed space $(\mathcal{C}_\alpha, || \cdot ||_{\alpha,\infty})$ is a Banach space.
\end{Prop}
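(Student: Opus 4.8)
The plan is to verify completeness directly from the definition: given a Cauchy sequence $\{u_n\}_n$ in $(\mathcal{C}_\alpha, \|\cdot\|_{\alpha,\infty})$, I will produce a limit $u \in \mathcal{C}_\alpha$ with $\|u_n - u\|_{\alpha,\infty} \to 0$. The key simplifying observation is that, since $A$ is non-negative, the operator $K = (A+1)^{1/2}$ satisfies $K \ge 1$, so $K^\alpha \ge 1$ and $K^{-\alpha}$ is a bounded operator with $\|K^{-\alpha}\| \le 1$ for every $\alpha \ge 0$. This lets me pass freely between $u$ and $K^\alpha u$.

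First I would translate the Cauchy condition. By definition $\|u_n - u_m\|_{\alpha,\infty} = \sup_{t\in I}\|K^\alpha u_n(t) - K^\alpha u_m(t)\|$, so the sequence of strongly continuous maps $t \mapsto K^\alpha u_n(t)$ is Cauchy in the Banach space $C_b(I, \H)$ of bounded strongly continuous $\H$-valued functions on $I$ with the supremum norm. Hence there is $v \in C_b(I, \H)$ with $\sup_{t\in I}\|K^\alpha u_n(t) - v(t)\| \to 0$. I then define $u(t) := K^{-\alpha} v(t)$; since $K^{-\alpha}$ maps into $D(K^\alpha)$, we have $u(t) \in D(K^\alpha)$ and $K^\alpha u(t) = v(t)$, so condition (1) of $K^\alpha$-uniform integrability holds because $v$ is strongly continuous, and $\|u_n - u\|_{\alpha,\infty} = \sup_{t} \|K^\alpha u_n(t) - v(t)\| \to 0$ is exactly the desired norm convergence.

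The only genuine content, and the step I expect to be the main obstacle, is showing that the limit $u$ again satisfies the tail condition (2), namely $\lim_{\Lambda\to\infty}\sup_{t\in I}\|K^\alpha E_{K^\alpha}([\Lambda,\infty)) u(t)\| = 0$. Writing $P_\Lambda := E_{K^\alpha}([\Lambda,\infty))$, which commutes with $K^\alpha$ and is a contraction, I would split, for fixed $n$,
\[ \|K^\alpha P_\Lambda u(t)\| = \|P_\Lambda v(t)\| \le \|P_\Lambda(v(t) - K^\alpha u_n(t))\| + \|K^\alpha P_\Lambda u_n(t)\| \le \|u_n - u\|_{\alpha,\infty} + \|K^\alpha P_\Lambda u_n(t)\|. \]
Given $\epsilon > 0$, I first choose $n$ so large that $\|u_n - u\|_{\alpha,\infty} < \epsilon/2$ (possible by the previous step), and then, with this $n$ fixed, use that $u_n \in \mathcal{C}_\alpha$ to find $\Lambda_0$ with $\sup_t \|K^\alpha P_\Lambda u_n(t)\| < \epsilon/2$ for all $\Lambda \ge \Lambda_0$. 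Taking the supremum over $t \in I$ then yields $\sup_{t}\|K^\alpha P_\Lambda u(t)\| \le \epsilon$ for $\Lambda \ge \Lambda_0$, which is condition (2). The crux of this $\epsilon/2$--$\epsilon/2$ argument is the order of quantifiers: the threshold $\Lambda_0$ is allowed to depend on the chosen $n$ but must be uniform in $t$, and this uniformity is exactly what the supremum over $t\in I$ in the definition of $\mathcal{C}_\alpha$ supplies. Combining the two displays establishes $u \in \mathcal{C}_\alpha$ and completes the proof.
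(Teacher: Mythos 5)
Your proof is correct and follows essentially the same route as the paper: both verify completeness directly and establish the tail condition (2) by the identical splitting $\| K^\alpha E_{K^\alpha}([\Lambda,\infty))u(t)\| \le \| K^\alpha E_{K^\alpha}([\Lambda,\infty))(u(t)-u_n(t))\| + \| K^\alpha E_{K^\alpha}([\Lambda,\infty))u_n(t)\|$, exploiting that the spectral projection commutes with $K^\alpha$ and is a contraction, with the threshold $\Lambda_0$ chosen after $n$ and uniformly in $t$. The only difference is cosmetic: where the paper constructs the limit pointwise (Cauchyness of both $\{K^\alpha u_n(t)\}_n$ and $\{u_n(t)\}_n$, then identifying $K^\alpha u(t)=v(t)$ and letting $m\to\infty$ in the Cauchy estimate), you obtain it in one step from completeness of the space of bounded strongly continuous $\mathcal{H}$-valued functions together with the bounded inverse $K^{-\alpha}$ --- a tidier packaging of the same content.
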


\begin{proof}
We show only the completeness. Take a Cauchy sequence in $(\mathcal{C}_\alpha, || \cdot ||_{\alpha,\infty})$, $\{u_n\}_n$. Then, for all $\epsilon > 0$, there is an $N\in\Natural$ such that
$n\ge m \ge N$ implies that
\begin{align}\label{cauchy-estimate}
|| K^\alpha (u_n(t)-u_m(t)) || <\epsilon,\quad t\in I.
\end{align}
Since $\{K^\alpha u_n(t)\}_n$ is Cauchy in $\mathcal{H}$, there exists some $v(t)\in\mathcal{H}$ 
such that 
\[ \lim_{n\to\infty} K^\alpha u_n(t) = v(t) .\]
On the other hand, since $K^\alpha\ge 1$, one finds for $n,m\ge N$
\begin{align}
||u_n(t)-u_m(t) || &\le || K^{-\alpha} ||\, || K^{\alpha}(u_n(t) - u_m(t)) || \no\\
 &\le || K^{\alpha}(u_n(t) - u_m(t))|| <\epsilon.
\end{align}
Therefore, $\{u_n(t)\}_n$ is also Cauchy and there is some $u(t)\in\mathcal{H}$
such that 
\[ \lim_{n\to\infty}u_n(t) = u(t). \]
Hence, we have $u(t)\in D(K^\alpha)$ and $K^\alpha u(t)=v(t) $ ($t\in I$).
Take $m\to\infty$ in \eqref{cauchy-estimate}. Then, one has for all $n\ge N$,
\begin{align}
|| K^\alpha (u_n(t)-u(t)) || \le \epsilon,\quad t\in I.
\end{align}
This means 
\[ ||u_n-u||_{\alpha,\infty} \le \epsilon \] 
for all $n\ge N$.

We show $u\in \mathcal{C}_\alpha$, that is, $ u $ is $K^\alpha$- uniformly integrable.
Note that $K^\alpha u(\cdot)$ --- a uniform limit of 
$K^\alpha u_n(\cdot)$ --- is strongly continuous. Take an arbitrary $\epsilon>0$.
We can choose $N\in\Natural$ in such a way that $n\ge N$ implies that
\[ \sup_{t\in I} || K^\alpha(u_n(t)-u(t))|| <\epsilon .\]
Then for an arbitrary $\Lambda\ge 0$, one has
\begin{align}
||K^\alpha E_{K^\alpha} ([\Lambda,\infty ))u(t) || &\le ||K^\alpha E_{K^\alpha} ([\Lambda,\infty ))(u_n(t)-u(t)) ||
+||K^\alpha E_{K^\alpha} ([\Lambda,\infty ))u_n(t) || \no\\
&<  \epsilon + ||K^\alpha E_{K^\alpha} ([\Lambda,\infty ))u_n(t) || .
\end{align}
Taking the limit in which $\Lambda\to \infty$, we have
\begin{align}
\sup_{t\in I_T}||K^\alpha E_{K^\alpha} ([\Lambda,\infty ))u(t) || 
&<  \epsilon + \sup_{t\in I_T}||K^\alpha E_{K^\alpha}([\Lambda,\infty ))u_n(t) || \to \epsilon,\quad \Lambda\to \infty.
\end{align}
Since $\epsilon>0$ is arbitrary, one has
\[ \lim_{\Lambda\to\infty}\sup_{t\in I_T}||K^\alpha E_{K^\alpha} ([\Lambda,\infty ))u(t) || =0, \]
namely, $u\in\mathcal{C}_\alpha$.
\end{proof}

The next lemma shows the relation between $\mathcal{C}_\alpha$ and $\mathcal{C}_\infty$.
\begin{Lem}
For all $\alpha\ge0$, $\mathcal{C}_\infty\subset\mathcal{C}_\alpha$ as normed spaces.
Moreover, $\mathcal{C}_\infty$ is dense in $\mathcal{C}_\alpha$ with respect to the 
$||\cdot || _{\alpha,\infty}$ norm topology.
\end{Lem}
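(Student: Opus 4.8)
The plan is to prove the set inclusion $\mathcal{C}_\infty\subset\mathcal{C}_\alpha$ first, and then to exhibit an explicit approximating sequence obtained by spectral truncation in $A$. For the inclusion, I would take $u\in\mathcal{C}_\infty$, so that $u$ is strongly continuous and $u(t)\in V_L$ for some fixed $L\ge0$ and all $t\in I$. Since $V_L=R(E_A([0,L]))$, each $u(t)$ lies in $D(K^\alpha)$, and one may write $K^\alpha u(t)=\bigl(K^\alpha E_A([0,L])\bigr)u(t)$, where $K^\alpha E_A([0,L])$ is a bounded operator of norm at most $(L+1)^{\alpha/2}$. Hence $t\mapsto K^\alpha u(t)$ is strongly continuous (condition (1) of $K^\alpha$-uniform integrability) and $\|u\|_{\alpha,\infty}\le (L+1)^{\alpha/2}\sup_{t\in I}\|u(t)\|<\infty$, the last supremum being finite because $I$ is compact. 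For condition (2) I would observe that the spectral region $\{K^\alpha\ge\Lambda\}$ corresponds to $\{A\ge\Lambda^{2/\alpha}-1\}$, so $E_{K^\alpha}([\Lambda,\infty))$ annihilates $V_L$ as soon as $\Lambda>(L+1)^{\alpha/2}$; thus the supremum appearing in condition (2) is eventually $0$, and $u\in\mathcal{C}_\alpha$.

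For density, given $u\in\mathcal{C}_\alpha$ I would first note that $u$ itself is strongly continuous: since $K\ge1$ we have $\|K^{-\alpha}\|\le1$, whence $u(t)=K^{-\alpha}\bigl(K^\alpha u(t)\bigr)$ is the image of a strongly continuous map under a fixed bounded operator. I then define the truncations
\[ u_n(t):=E_A([0,n])u(t),\qquad n\in\Natural. \]
Each $u_n$ is strongly continuous and satisfies $u_n(t)\in V_n$ for all $t$, so $u_n\in\mathcal{C}_\infty$. Because $E_A([0,n])$ commutes with $K^\alpha$ and $u(t)\in D(K^\alpha)$, one gets $K^\alpha\bigl(u(t)-u_n(t)\bigr)=K^\alpha E_A((n,\infty))u(t)$, and rewriting the $A$-truncation as a $K^\alpha$-truncation via $E_A((n,\infty))=E_{K^\alpha}((\Lambda_n,\infty))$ with $\Lambda_n:=(n+1)^{\alpha/2}$ yields
\[ \|u_n-u\|_{\alpha,\infty}=\sup_{t\in I}\bigl\|K^\alpha E_{K^\alpha}((\Lambda_n,\infty))u(t)\bigr\|\le\sup_{t\in I}\bigl\|K^\alpha E_{K^\alpha}([\Lambda_n,\infty))u(t)\bigr\|. \]
For $\alpha>0$ we have $\Lambda_n\to\infty$, so the right-hand side tends to $0$ precisely by condition (2) in the definition of $K^\alpha$-uniform integrability, giving $\|u_n-u\|_{\alpha,\infty}\to0$. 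The boundary case $\alpha=0$ is immediate: there $K^\alpha=I$, $\mathcal{C}_0$ is merely the space of strongly continuous maps, and since $I$ is compact the set $u(I)$ is compact, so the uniformly bounded projections $E_A((n,\infty))$ converge to $0$ uniformly on $u(I)$.

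The substantive half is density, and the crux is to make the tail vanish \emph{uniformly in} $t\in I$. This is exactly what the uniform-integrability condition (2) supplies, but only after reconciling the two spectral scales: the natural truncation is by the spectral projections of $A$, whereas $\mathcal{C}_\alpha$ is defined through those of $K^\alpha=(A+1)^{\alpha/2}$. The identity $E_A((n,\infty))=E_{K^\alpha}(((n+1)^{\alpha/2},\infty))$, valid for $\alpha>0$, is what links them and lets condition (2) apply directly (with no appeal to compactness). The degenerate scale $\alpha=0$ must be argued separately, since there condition (2) is vacuous; this is the one place where compactness of $I$ is genuinely invoked, through the fact that a uniformly bounded net of projections converging strongly to $0$ does so uniformly on compact sets.
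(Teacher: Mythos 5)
Your proof is correct and follows essentially the same route as the paper's: the same bound $\|K^\alpha (u(t)-u(s))\|\le (L+1)^{\alpha/2}\|u(t)-u(s)\|$ (equivalently, boundedness of $K^\alpha E_A([0,L])$) for the inclusion, and for density the same truncations $u_n(t)=E_A([0,n))u(t)$ combined with rewriting the $A$-tail projection as a $K^\alpha$-tail projection so that condition (2) of uniform integrability applies. The only divergence is your separate treatment of $\alpha=0$; the paper applies the identity $E_A([n,\infty))=E_{K^\alpha}([(1+n)^{\alpha/2},\infty))$ for all $\alpha\ge0$, which degenerates at $\alpha=0$ (there $(1+n)^{\alpha/2}=1$ does not tend to infinity), so your compactness argument for that boundary case is a small but genuine sharpening of the paper's proof.
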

\begin{proof}
First, we prove that $\mathcal{C}_\infty\subset\mathcal{C}_\alpha$.
Let $u\in \mathcal{C}_\infty$.
To show that $u$ is $K^\alpha$-uniformly integrable, note that
there is an $L\ge 0$ such that
\[ u(t)\in V_L,\quad t\in I ,\]
by definition of $\mathcal{C}_\infty$.
One has
\[ || K^\alpha (u(t)-u(s)) || \le (L+1)^{\alpha/2} || u(t)-u(s) ||, \]
which shows that $K^\alpha u(\cdot)$ is strongly continuous.
Let $\Lambda > (L+1)^{\alpha/2}$. Then,
\begin{align}
|| K^\alpha  E_{K^\alpha}([\Lambda,\infty)) u(t) ||&=|| K^\alpha  E_A([\Lambda^{2/\alpha}-1,\infty)) u(t) ||=0.
\end{align}
Hence, $u$ is $K^\alpha$-uniformly integrable, that is, $u\in \mathcal{C}_\alpha$.

Next, we prove that $\mathcal{C}_\infty$ is dense in $\mathcal{C}_\alpha$. 
Take any $u\in\mathcal{C}_\alpha$.
Define $\{u_n\}_n\subset \mathcal{C}_\infty$ by
\[ u_n(t) := E_A([0,n))u(t) ,\quad n=1,2,\dots . \]
Then it follows that
\begin{align}
|| u_n -u ||_{\alpha,\infty}&= \sup_{t\in I_T} || K^\alpha(E_A([0,n))-1)u (t) || \no\\
&= \sup_{t\in I_T} || K^\alpha E_A([n,\infty))u (t) || \no\\
&= \sup_{t\in I_T} || K^\alpha E_{K^\alpha }([(1+n)^{\alpha/2},\infty))u (t) || \to 0,\quad n\to \infty.
\end{align}
This proves that $\mathcal{C}_\infty$ is dense. 
\end{proof}

This lemma tells us that $\mathcal{C}_\alpha$ is the completion of $\mathcal{C}_\infty$ with 
respect to the norm $|| \cdot ||_{\alpha,\infty}$.

\begin{Lem} Let $0\le\alpha\le \beta$. Then $\mathcal{C}_\beta\subset \mathcal{C}_\alpha$ and the inclusion mapping 
\[ \iota:\mathcal{C}_\beta \to\mathcal{C}_\alpha \]
is continuous.
\end{Lem}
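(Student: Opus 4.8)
The plan is to verify the two requirements separately: first that $\mathcal{C}_\beta \subseteq \mathcal{C}_\alpha$ as sets, and then that the inclusion does not increase the norm, which gives continuity at once. The whole argument rests on the single elementary observation that, because $K = (A+1)^{1/2} \ge 1$, the operator $K^{-(\beta-\alpha)}$ is bounded with norm at most $1$, and on $D(K^\beta)$ one has the factorization $K^\alpha = K^{-(\beta-\alpha)} K^\beta$. The case $\alpha = \beta$ is trivial (the two spaces coincide), so I would assume $0 \le \alpha < \beta$ throughout.

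First I would record the set inclusion and the norm estimate. From $K \ge 1$ and $\alpha \le \beta$, spectral calculus gives $D(K^\beta) \subseteq D(K^\alpha)$ and $\|K^\alpha \psi\| \le \|K^\beta \psi\|$ for every $\psi \in D(K^\beta)$. Applying this pointwise to $u \in \mathcal{C}_\beta$ shows $u(t) \in D(K^\alpha)$ for all $t \in I$ and $\|u\|_{\alpha,\infty} \le \|u\|_{\beta,\infty}$. Once $u \in \mathcal{C}_\alpha$ is established, this inequality says precisely that $\iota$ is bounded with operator norm at most $1$, hence continuous.

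Next I would check that every $u \in \mathcal{C}_\beta$ is $K^\alpha$-uniformly integrable. For condition (1), strong continuity of $t \mapsto K^\alpha u(t)$ follows from the factorization $K^\alpha u(t) = K^{-(\beta-\alpha)}\big(K^\beta u(t)\big)$: the function $t \mapsto K^\beta u(t)$ is strongly continuous since $u \in \mathcal{C}_\beta$, and $K^{-(\beta-\alpha)}$ is a fixed bounded operator. For condition (2), I would again commute the spectral projection past the functions of $K$ and write $K^\alpha E_{K^\alpha}([\Lambda,\infty)) u(t) = K^{-(\beta-\alpha)} E_{K^\alpha}([\Lambda,\infty)) K^\beta u(t)$. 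On the range of $E_{K^\alpha}([\Lambda,\infty))$ one has $K \ge \Lambda^{1/\alpha}$, so the operator norm of $K^{-(\beta-\alpha)} E_{K^\alpha}([\Lambda,\infty))$ is at most $\Lambda^{-(\beta-\alpha)/\alpha}$, and therefore $\sup_{t\in I} \|K^\alpha E_{K^\alpha}([\Lambda,\infty)) u(t)\| \le \Lambda^{-(\beta-\alpha)/\alpha}\,\|u\|_{\beta,\infty} \to 0$ as $\Lambda \to \infty$. When $\alpha = 0$ this last estimate is replaced by the remark that $E_{K^0}([\Lambda,\infty)) = 0$ for $\Lambda > 1$, so condition (2) is automatic. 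This yields $u \in \mathcal{C}_\alpha$ and completes the proof.

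I do not anticipate any real obstacle here; the only points requiring care are the justification of the operator-norm bound through the spectral theorem and the separate treatment of the degenerate case $\alpha = 0$. The argument is essentially a one-line consequence of $K \ge 1$ together with the fact that $K^{-(\beta-\alpha)}$ becomes small on the high-spectrum subspaces of $K^\alpha$.
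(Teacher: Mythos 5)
Your proof is correct, and its first half coincides with the paper's: both establish strong continuity of $t\mapsto K^\alpha u(t)$ through the factorization $K^\alpha u(t)=K^{\alpha-\beta}\bigl(K^\beta u(t)\bigr)$ with $K^{\alpha-\beta}$ bounded, and both get continuity of $\iota$ from the pointwise bound $\|K^\alpha u(t)\|\le\|K^{\alpha-\beta}\|\,\|K^\beta u(t)\|$. Where you genuinely diverge is in the tail condition (2) of $K^\alpha$-uniform integrability. The paper re-indexes the spectral projection, $E_{K^\alpha}([\Lambda,\infty))=E_{K^\beta}([\Lambda^{\beta/\alpha},\infty))$, and then invokes the hypothesis that $u$ is $K^\beta$-uniformly integrable, i.e.\ the tail condition built into $\mathcal{C}_\beta$. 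You instead load the decay onto the operator norm, $\|K^{-(\beta-\alpha)}E_{K^\alpha}([\Lambda,\infty))\|\le\Lambda^{-(\beta-\alpha)/\alpha}$, so that only the finiteness of $\|u\|_{\beta,\infty}$ is used and the tail condition of $\mathcal{C}_\beta$ is never needed. Your route is quantitative (an explicit rate in $\Lambda$) and actually proves slightly more: any $u$ with $t\mapsto K^\beta u(t)$ strongly continuous on the compact interval $I$ is automatically $K^\alpha$-uniformly integrable for every $\alpha<\beta$. The price is that your estimate degenerates when $\alpha=\beta$ (the exponent vanishes), which is exactly why you must split off that trivial case, whereas the paper's argument treats $\alpha\le\beta$ uniformly. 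Conversely, the paper's exponent $\beta/\alpha$ is undefined at $\alpha=0$, a degenerate case it passes over silently and which you handle explicitly and correctly via $E_{K^0}([\Lambda,\infty))=0$ for $\Lambda>1$; on the boundary cases your write-up is the more complete of the two.
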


\begin{proof} Let $u\in \mathcal{C}_\beta$. Then, $K^{\alpha-\beta}$ is bounded and thus the mapping
\[ t\mapsto K^{\alpha}u(t) = K^{\alpha-\beta} K^{\beta}u(t) \]
is strongly continuous. Further, we have
\begin{align}
\sup_{t\in I_T}|| K^{\alpha}E_{K^{\alpha}}([\Lambda,\infty))u(t)||
&\le||K^{\alpha-\beta}||\,\sup_{t\in I_T}|| K^{\beta}E_{K^{\alpha}}([\Lambda,\infty))u(t)||\no\\
&=||K^{\alpha-\beta}||\,\sup_{t\in I_T}|| K^{\beta}E_{K^{\beta}}([\Lambda^{\beta /\alpha },\infty))u(t)||
\to 0\,\quad \Lambda\to\infty.
\end{align}
These imply that $u\in\mathcal{C}_\alpha$. Hence $ C_\beta \subset C_\alpha $.

The second assertion immediately follows from the fact that
\[ ||K^\alpha u(t) || \le ||K^{\alpha-\beta}|| \,||K^\beta u(t)||,\quad u\in\mathcal{C}_\beta. \] 
\end{proof}

\begin{Prop}\label{u_n-cauchy}
The approximated solution 
$\{\phi_n\}_n\subset \mathcal{C}_\infty$ is a Cauchy sequence in $||\cdot || _{\alpha , \infty } $
for all $\alpha\ge0$.
\end{Prop}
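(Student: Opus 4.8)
The plan is to show that the tails of the series defining $\phi_n$ are uniformly small in the $\|\cdot\|_{\alpha,\infty}$-norm by reducing everything to a scalar series and invoking d'Alembert's ratio test, exactly in the spirit of the proofs of Lemmas \ref{ess-estimate} and \ref{U-estimate}. First I would recall that $\phi_n(t)=S_n(t,t')\xi=\sum_{k=0}^{n}\big((-iJ)^k\xi\big)(t)$, where $\xi$ is identified with the corresponding constant element of $\mathcal{C}_\infty$. Hence for $n>m$,
\begin{align}
\phi_n(t)-\phi_m(t)=\sum_{k=m+1}^{n}\big((-iJ)^k\xi\big)(t),
\end{align}
so that, by the triangle inequality,
\begin{align}
\|\phi_n-\phi_m\|_{\alpha,\infty}\le\sum_{k=m+1}^{n}\sup_{t\in I}\big\|K^\alpha (J^k\xi)(t)\big\|.
\end{align}
It therefore suffices to prove that the full series $\sum_{k=0}^{\infty}\sup_{t\in I}\|K^\alpha(J^k\xi)(t)\|$ converges; its tails then vanish and $\{\phi_n\}_n$ is Cauchy.

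The key step is a spectral estimate for $K^\alpha$. By Lemma \ref{basic-esti} we have $(J^k\xi)(t)\in V_{L_\xi+kb}$ for every $t\in I$. Since $A\le L_\xi+kb$ on the range $V_{L_\xi+kb}=R(E_A([0,L_\xi+kb]))$, the operator $K^\alpha=(A+1)^{\alpha/2}$ satisfies $\|K^\alpha\psi\|\le (L_\xi+kb+1)^{\alpha/2}\|\psi\|$ for all $\psi\in V_{L_\xi+kb}$. Combining this with the norm bound of Lemma \ref{basic-esti} (noting that for the constant function $\sup_{s\in I}\|\xi(s)\|=\|\xi\|$, and writing $T:=\sup_{t\in I}|t-t'|<\infty$) gives
\begin{align}
\sup_{t\in I}\big\|K^\alpha(J^k\xi)(t)\big\|\le (L_\xi+kb+1)^{\alpha/2}\,\frac{T^k C^k}{k!}\prod_{j=0}^{k-1}(L_\xi+jb+1)^{1/2}\,\|\xi\|=:a_k.
\end{align}
Finally I would apply the ratio test to $\sum_k a_k$: one computes
\begin{align}
\frac{a_{k+1}}{a_k}=\left(\frac{L_\xi+(k+1)b+1}{L_\xi+kb+1}\right)^{\alpha/2}\cdot\frac{TC}{k+1}\cdot(L_\xi+kb+1)^{1/2},
\end{align}
whose first factor tends to $1$ while the remaining part is asymptotic to $TC\sqrt{b}\,k^{-1/2}\to0$. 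Hence $\sum_k a_k<\infty$, and the desired Cauchy property follows.

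The main obstacle — indeed essentially the only substantive point — is controlling the extra factor $(L_\xi+kb+1)^{\alpha/2}$ produced by $K^\alpha$: one must check that this additional polynomial growth in $k$ does not destroy the convergence already available at $\alpha=0$. The factorial $k!$ in the denominator dominates any power of $k$, so the ratio test still yields limit $0$; and the estimate is uniform in $t$ because the entire $t$-dependence enters only through $|t-t'|^k\le T^k$.
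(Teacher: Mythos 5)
Your proof is correct and is essentially the paper's own argument: both bound the general term $(-iJ)^k\xi$ (equivalently, the consecutive difference $\phi_{k+1}-\phi_k$) in $\|\cdot\|_{\alpha,\infty}$ by combining the spectral bound $\|K^\alpha\psi\|\le (L_\xi+kb+1)^{\alpha/2}\|\psi\|$ on $V_{L_\xi+kb}$ with the norm estimate of Lemma \ref{basic-esti}, and then conclude via d'Alembert's ratio test that the resulting scalar series converges, so the tails vanish and the sequence is Cauchy. No gap to report.
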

\begin{proof}
Fix $\alpha\ge0$. Choose $L$ for $\xi\in D$ in such a way that
\[ \xi \in V_L \]
holds.
Since
\[ \phi_n=\sum_{k=0}^n (-iJ)^k \xi, \]
and from Lemma \ref{basic-esti}, we obtain $\phi_n(t)\in V_{L+nb}$. Therefore, we have
\begin{align}
||\phi_{n+1}-\phi_n ||_{\alpha,\infty} &= \sup_{t\in I_T} ||K^\alpha(\phi_{n+1}(t)-\phi_n(t)) || \no\\
&\le \sup_{t\in I} (L+nb+1)^{\alpha/2}|| \phi_{n+1}(t)-\phi_n(t) || \no\\
&\le 
\frac{|I|^{n+1}}{(n+1)!}C^n (L+nb+1)^{\alpha/2}(L+nb+1)^{1/2}\dots (L+1)^{1/2}||\xi||,
\end{align}
where $ |I| $ denotes the Lebesgue measure of the interval $I$.
By d'Alembert's ratio test, one obtains
\[ \sum_{n=0}^\infty ||\phi_{n+1} - \phi_n ||_{\alpha,\infty} <\infty. \]
Hence, for any $\epsilon>0$,
there is an $N\in \Natural $, such that $n\ge m\ge N$ implies
\[ ||\phi_n-\phi_m||_{\alpha,\infty} \le \sum_{k=m}^{n-1}||\phi_{k+1}-\phi_k||_{\alpha,\infty}<\epsilon .\]
\end{proof}

\begin{Thm}\label{App1}
For any $\xi\in D$, the solution $\phi$, which is the limit of $\phi_n=S_n(\cdot,t')\xi$,  belongs to $\bigcap_{\alpha \ge 0}\mathcal{C}_\alpha$. That is, $U(\cdot,t')\xi$ is $K^\alpha$-uniformly integrable for all $\alpha\ge 0$.
\end{Thm}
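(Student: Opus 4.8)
The plan is to combine the two structural results already in hand. By Proposition \ref{C_alpha-complete}, $(\mathcal{C}_\alpha, \|\cdot\|_{\alpha,\infty})$ is a Banach space, and by Proposition \ref{u_n-cauchy} the approximating sequence $\{\phi_n\}_n$ is Cauchy in $\|\cdot\|_{\alpha,\infty}$ for every fixed $\alpha \ge 0$. Completeness then immediately furnishes a limit $\psi_\alpha \in \mathcal{C}_\alpha$ satisfying $\|\phi_n - \psi_\alpha\|_{\alpha,\infty} \to 0$. Granting this, the whole theorem reduces to identifying $\psi_\alpha$ with $\phi$.

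First I would carry out that identification. Because $A \ge 0$ we have $K = (A+1)^{1/2} \ge 1$, so $K^{-\alpha}$ is bounded with norm at most one; hence for each fixed $t \in I$,
\[ \|\phi_n(t) - \psi_\alpha(t)\| \le \|K^\alpha(\phi_n(t) - \psi_\alpha(t))\| \le \|\phi_n - \psi_\alpha\|_{\alpha,\infty} \longrightarrow 0, \]
so that $\phi_n(t) \to \psi_\alpha(t)$ strongly in $\mathcal{H}$. On the other hand, by the construction of $U(t,t')$ in the proof of Theorem \ref{main-thm1}, we already know $\phi_n(t) = S_n(t,t')\xi \to U(t,t')\xi = \phi(t)$ strongly in $\mathcal{H}$. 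Uniqueness of strong limits in $\mathcal{H}$ then forces $\psi_\alpha(t) = \phi(t)$ for every $t \in I$, whence $\phi = \psi_\alpha \in \mathcal{C}_\alpha$. Since $\alpha \ge 0$ was arbitrary, this yields $\phi \in \bigcap_{\alpha \ge 0}\mathcal{C}_\alpha$.

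The only genuinely delicate step is this reconciliation of two \emph{a priori} different notions of convergence: the limit handed to us abstractly by completeness lives in the $\|\cdot\|_{\alpha,\infty}$ topology, whereas the function $\phi$ is defined as a pointwise strong limit in $\mathcal{H}$. The inequality $K^\alpha \ge 1$ is precisely the bridge, dominating the $\mathcal{H}$-norm by the $\mathcal{C}_\alpha$-norm and thereby guaranteeing that the $\mathcal{C}_\alpha$-limit, evaluated at each $t$, coincides with $\phi(t)$. Beyond this observation, no estimate not already recorded in Propositions \ref{u_n-cauchy} and \ref{C_alpha-complete} is needed, so the proof is short once the limit is correctly matched.
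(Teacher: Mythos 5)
Your proposal is correct and takes essentially the same route as the paper: both combine Proposition \ref{C_alpha-complete} (completeness of $\mathcal{C}_\alpha$) with Proposition \ref{u_n-cauchy} (the Cauchy property of $\{\phi_n\}_n$) to produce an abstract limit in $\mathcal{C}_\alpha$, and both identify that limit with $\phi$ by exploiting the boundedness of $K^{-\gamma}$ for $\gamma\ge 0$, i.e.\ the domination of the $\mathcal{H}$-norm by the weighted norm. The only difference is cosmetic: you identify the limit pointwise with $\phi(t)=U(t,t')\xi$ via uniqueness of strong limits, whereas the paper first shows the limits $\phi_\alpha$ coincide for all $\alpha\le\beta$ and then anchors the identification at $\alpha=0$, where $\|\cdot\|_{0,\infty}$-convergence is exactly the uniform $\mathcal{H}$-norm convergence of $S_n(\cdot,t')\xi$ established in Theorem \ref{main-thm1}.
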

\begin{proof}
Fix $\alpha\ge0$. By Proposition \ref{C_alpha-complete} and Proposition \ref{u_n-cauchy},
there is an element $u_\alpha \in C_\alpha $ satisfying
\[ ||\phi_n-\phi_\alpha||_{\alpha,\infty} \to 0. \]
Take $\beta>0$. Without loss of generality, we may assume $\alpha\le\beta$.
Then, we obtain
\begin{align}
||\phi_\alpha-\phi_\beta||_{\alpha,\infty} &\le  ||\phi_\alpha-\phi_n || _{\alpha , \infty } + || \phi_n-\phi_\beta ||_{\alpha,\infty} \no\\
&\le  ||\phi_\alpha-\phi_n ||_{\alpha,\infty} + ||K^{\alpha-\beta}||\, || \phi_n-\phi_\beta ||_{\beta ,\infty} \to 0,\quad n\to\infty,
\end{align}
which shows that $\phi_\alpha=\phi_\beta$. If $\alpha=0$, then $||\phi_n-\phi||_{0 , \infty} \to 0$. Therefore, 
we conclude that for all $\alpha\ge0$, $\phi=\phi_\alpha\in\mathcal{C}_\alpha$.
\end{proof}

\section*{Acknowledgement}
The authors would like to thank Professor Asao Arai for fruitful comments and discussions and for a critical reading of this manuscript. They also thank Dr. Kazuyuki Wada of Hokkaido University for valuable comments.

\bibliography{ref}

\end{document}